\newif\ifdraft
\newif\ifEA
\newcommand{\remove}[1]{}
\renewcommand{\th}{\ensuremath ^{\mbox{\scriptsize{\it th}}}}
\newcommand{\NOTE}[1]{%
	\ifdraft $^{\textcolor{red} \clubsuit}$
	\marginpar{\setstretch{0.43}\textcolor{red}{\bf\tiny #1}}
	\fi
	}
\newcommand{\GNOTE}[1]{%
	\ifdraft {\bf \color{black}[T: #1 ]}
	\fi
	}
\newcommand{\fnote}[1]{%
	\ifdraft {\color{green}[F: #1 ]}
	\fi
	}
\newcommand{\tnote}[1]{%
	\ifdraft {\color{red}[T: #1 ]}
	\fi
	}
	\newcommand{\lfnote}[1]{%
	\ifdraft {\color{gray}[F: Consider this after arxiv.  #1 ]}
	\fi
	}
\newcommand{\irregular}{approximative regular\xspace}
\newcommand{\gap}{\ensuremath{\Gamma}\xspace}	
\newcommand{\irrgap}{\gap-\irregular}
\newcommand{\ie}{\textit{i.e.,}\xspace}
\newcommand{\eg}{\textit{e.g.,}\xspace}
\newcommand{\etal}{\textit{et al.}\xspace}
\newcommand{\F}{{\mathcal F}}
\def\naturals{\mathbb{N}}
\newcommand{\tvdist}[1]{\Vert #1 \Vert_{\mbox{\normalfont \tiny TV}}}
\newcommand{\twonorm}[1]{\Vert #1 \Vert_{\mbox{\normalfont\scriptsize 2}}}
\renewcommand{\Pr}[1]{\mathbb{P}\left[\,#1\,\right]}
\newcommand\E[1]{\mathbb{E}\left[\,#1\,\right]}
\newcommand{\tmeet}{t_{\operatorname{meet}}}
\newcommand{\tavgmeet}{t_{\operatorname{meet}}^{\pi}}
\newcommand{\tcoal}{t_{\operatorname{coal}}}
\newcommand{\coal}{T_{\operatorname{coal}}}
\newcommand{\tmix}{t_{\operatorname{mix}}}
\newcommand{\trel}{t_{\operatorname{rel}}}
\newcommand{\thit}{t_{\operatorname{hit}}}
\newcommand{\hit}{T_{\operatorname{hit}}}
\newcommand{\tcov}{t_{\operatorname{cov}}}
\newcommand{\tsep}{t_{\operatorname{sep}}}
\newcommand{\tavghit}{t_{\operatorname{avg-hit}}}
\newcommand{\cross}{\mathrm{intersect}}
\newcommand{\diam}{\operatorname{diam}}
\DeclarePairedDelimiter{\ceil}{\lceil}{\rceil}
\DeclarePairedDelimiter{\floor}{\lfloor}{\rfloor}
\newcommand{\dist}{\operatorname{dist}}
\newcommand{\vol}{\operatorname{vol}}
\newcommand{\newprocess}{\operatorname{immortal\ process}}
\newcommand{\pimortal}{P_{\operatorname{imm}}}
\newcommand{\pinter}{P_{\operatorname{int}}}
\newcommand{\Timortal}{T_{\operatorname{imm}}}
\newcommand{\norm}[1]{\left\lVert#1\right\rVert}
\newcommand{\id}{\mathsf{id}}
\newcommand{\ids}{\mathsf{IDs}}
\renewcommand{\leq}{\leqslant}
\renewcommand{\geq}{\geqslant}
\renewcommand{\tilde}{\widetilde}
\renewcommand{\epsilon}{\varepsilon}
\newcommand{\thmref}[1]{\autoref{thm:#1}}
\newcommand{\lemref}[1]{\autoref{lem:#1}}
\newcommand{\eq}[1]{\eqref{eq:#1}}
\renewcommand{\tilde}{\widetilde}
\newcommand\ifrac[2]{#1/#2}
\renewcommand\ifrac\nicefrac
\def\NewTheorem#1#2{%
  \newaliascnt{#1}{theorem}
  \newtheorem{#1}[#1]{#2}
  \aliascntresetthe{#1}
  \expandafter\def\csname #1autorefname\endcsname{#2}
}
 \newtheorem{theorem}{Theorem}[section]
\theoremstyle{remark}
\newcommand\blfootnote[1]{%
  \begingroup
  \renewcommand\thefootnote{}\footnote{#1}%
  \addtocounter{footnote}{-1}%
  \endgroup
}
\providecommand{\keywords}[1]{\textbf{\textit{keywords---}} #1}
\definecolor{darkred}{rgb}{0.5,0,0}
\definecolor{lightblue}{rgb}{0,0.4,0.8}
\definecolor{darkgreen}{rgb}{0,0.5,0}
\definecolor{grey}{rgb}{0.5, 0.5, 0.5}
\begin{document}

\title{On coalescence time in graphs\\{\large When is coalescing as fast as
meeting?}\ifEA~\medskip \\ \small{Extended Abstract}\footnote{The full paper appears at the end of the extended abstract in the submission file.}\fi}

\author{Varun Kanade\thanks{This work was supported in part by The Alan Turing Institute under the EPSRC grant EP/N510129/1.} \\ University of Oxford  \\  varunk@cs.ox.ac.uk \and
Frederik Mallmann-Trenn\thanks{This work was carried out while a student at the \'{E}cole normale sup\'{e}rieure and Simon Fraser University. This work was supported in part by NSF Award Numbers  CCF-1461559, CCF-0939370, and CCF-1810758.} \\ MIT \\  mallmann@mit.edu \and
Thomas Sauerwald\thanks{This work was supported by the ERC Starting Grant (DYNAMIC MARCH).} \\  University of Cambridge
\\  thomas.sauerwald@cl.cam.ac.uk}
\date{}
\thispagestyle{empty}
\setcounter{page}{0}
\pagenumbering{gobble}
\maketitle
 
\begin{abstract}
	Coalescing random walks is a fundamental stochastic process, where a set of
	particles perform independent discrete-time random walks on an undirected
	graph. Whenever two or more particles meet at a given node, they merge and
	continue as a single random walk. The {\em coalescence time} is defined as
	the expected time until only one particle remains, starting from one
	particle at every node. Despite recent progress such as by Cooper,
	Els\"asser, Ono, Radzik~\cite{CEOR13} and Cooper, Frieze and
	Radzik~\cite{CFR09}, the coalescence time for graphs such as binary trees,
	$d$-dimensional tori, hypercubes and more generally, vertex-transitive
	graphs, remains unresolved.
	
	We provide a powerful toolkit that results in tight bounds for various
	topologies including the aforementioned ones. The meeting time is defined as
	the worst-case expected time required for two random walks to arrive at the
	same node at the same time. As a general result, we establish that for
	graphs whose meeting time is only marginally larger than the mixing time (a
	factor of $\log^{2} n$), the coalescence time of $n$ random walks equals the
	meeting time up to constant factors. This upper bound is complemented by the
	construction of a graph family demonstrating that this result is the best
	possible up to constant factors. For almost-regular graphs, we bound the
	coalescence time by the hitting time, resolving the discrete-time variant of
	a conjecture by Aldous for this class of graphs. Finally, we prove that for
	any graph the coalescence time is bounded by $O(n^3)$ (which is tight for
	the Barbell graph); surprisingly even such a basic question about the
	coalescing time was not answered before this work. By duality, our results
	give bounds on the voter model and therefore give bounds on the consensus
	time in arbitrary undirected graphs.
	
	We also establish a new bound on the hitting time and cover time of regular
	graphs, improving and tightening previous results by Broder and
	Karlin~\cite{BK89}, as well as those by Aldous and Fill~\cite{AF14}.
	\blfootnote{An extended abstract based on this work appeared in SODA 2019.}
\end{abstract}
\vfill
\keywords{coalescing time, meeting time, random walks, voter model}\\

\ifEA
\else
\newpage
\tableofcontents 
\fi
\newpage
\pagenumbering{arabic}

\section{Introduction}

Coalescing random walks is a fundamental stochastic process on \emph{connected}
and \emph{undirected} graphs. The process begins with particles on some subset
of the nodes in the graph. At discrete time-steps, every particle performs one
step of an independent   random walk.%
\footnote{Throughout this paper, we use random walk and particle
interchangeably, assuming that every random walk has an identifier.}
Whenever two or more particles arrive at the same node at the same time-step,
they merge into a single particle and continue as a single random walk. The
\emph{coalescence time} is defined as the first time-step when only one
particle remains. The coalescence time depends on the number and starting
positions of the particles.

Studying the coalescence time is of substantial importance in distributed computing:
At the heart of many distributed computing applications lie consensus protocols and leader election \eg data consistency, consolidation of replicated states, synchronization of processes and devices \cite{Pel02, DGMSS11} and communication networks \cite{PVV09}). 
 Other applications  of the coalescence process appear in robotics \cite{GORN17}; here, robots perform random walks to gather samples from their environment and need to communicate these samples to all other robots.
Studying the coalescence time also implies results for other interaction types of random walks including predator and prey particles as well has annihilating particles \cite{interacting}.

\paragraph{Relationship to consensus protocols}
 Arguably the simplest consensus protocol achieving consensus on any undirected graph is the voter model.
%
%
Initially, every node has  a distinct opinion. At every round, each node
chooses synchronously one of its neighbors at random and adopts that node's
opinion.  The \emph{consensus time} is defined as the time it takes until only
one opinion remains. The voting process viewed backwards is exactly the same as
the coalescence process starting with a random walk on every node; thus, the
coalescence time and consensus time have the same distribution.  
Despite  recent progress by Cooper~\etal~\cite{CEOR13,CFR09} and
Berenbrink~\etal~\cite{BGKM16}, the coalescence time and consensus time are far from being well-understood---even for certain fundamental graphs as we describe below. 
Recently, there have been several studies on variants of the voter
model, most notably 2-Choices and 3-Majority which received ample attention~\cite{CER14, BCNPS15, CERRS15, BCNPT16, CRRS16, EFKMT16, BCEKMN17, GL17}. However, the behavior of these
processes is fundamentally different and despite their efficiency in reaching consensus on expanders and cliques, they are unsuitable on more general undirected graphs as the consensus time is exponential in some graphs.

In this paper, we follow the approach of \citet{CEOR13} and \citet{HP01} \fnote{more?} and study the consensus time through 
the more tangible analysis of the coalescence time.
When starting with two particles, the coalescence time is referred to as the
\emph{meeting time}. Let $\tmeet$ denote the worst-case expected meeting time over
all pairs of starting nodes and let $\tcoal$ denote the expected coalescence
time starting from one particle on every node. It is clear that $\tmeet \leq
\tcoal$; as for an upper bound, it can be shown that $\tcoal = O(\tmeet \log
n)$, where $n$ is the number of nodes in the graph. The main idea used to
obtain the bound is that the number of surviving random walks halves roughly
every $\tmeet$ steps. A proof of the result appears implicitly in the work of
Hassin and Peleg~\cite{HP01}.

Aldous~\cite{Ald:1991} showed in \emph{continuous-time} that the meeting time
is bounded by the maximum hitting time, $\thit := \max_{u, v} \thit(u, v)$,
where $\thit(u, v)$ denotes the expected time required to hit $v$ starting from
vertex $u$.
%
%
%
We observe that the result of Aldous also holds in discrete time. Thus, this
gives a bound of $O(\thit \log n)$ for the coalescing time; however, in general
$O(\thit)$ may be a loose upper bound on $\tmeet$.  In recent work, Cooper
\etal~\cite{CEOR13} provide results that are better than $O(\tmeet \log n)$ for
several interesting graph classes, notably expanders and power-law graphs.
They show that $\tcoal = O(({\log^4 n +
\twonorm{\pi}^{-2}})\cdot({1-\lambda_2})^{-1})$, where $\lambda_2$ is the
second largest eigenvalue of the transition matrix of the random walk and $\pi$
is the stationary distribution. Berenbrink \etal~\cite{BGKM16} show that
$\tcoal = O(m/(d_{\min} \cdot \Phi))$, where $m$ is the number of edges,
$d_{\min}$ is the minimum degree and $\Phi$ is the conductance. Their result
improves on that of Cooper \etal for certain graph classes, \eg cycles.

As mentioned before, despite the recent progress due to Cooper~\etal~\cite{CEOR13} and
Berenbrink~\etal~\cite{BGKM16},  for many fundamental graphs such as the binary
tree, hypercube and the ($d$-dimensional) torus, the coalescing time in the
discrete setting remains unsettled. We provide a rich toolkit allowing us to
derive tight bounds for many graphs including all of the aforementioned ones.
One of our main results establishes a relationship between the ratios
$\tcoal/\tmeet$ and $\tmeet/\tmix$, where $\tmix=\tmix(1/e)$ denotes the mixing
time.%
\footnote{The \emph{mixing time} is the first time-step at which the
distribution of a random walk starting from an arbitrary node is close to the
stationary distribution.} 
In particular, the result shows that if $\tmeet/\tmix = \Omega(\log^2 n)$, then
$\tcoal = O(\tmeet)$; however, we also provide a more fine-grained tradeoff.
For almost-regular graphs,\footnote{We call a graph \emph{almost-regular} if
$\deg(u)=\Theta(\deg(v))$ for all $u,v\in V$.} we bound the coalescence time
by the hitting time.  For vertex-transitive graphs we show that the coalescence
time, the meeting time, and the hitting time are equal up to constant factors.
Finally, we prove that for any graph the coalescence time is bounded by
$O(n^3)$; it can be easily verified that this is tight by considering the barbell graph. Surprisingly, the right bound on this fundamental quantity was not known prior to this work. Unlike in the analogous case of the cover time~\cite{AF14} where such a bound can be easily derived, the argument in the case of coalescence time appears significantly involved.\footnote{Cooper~\etal~\cite{CEOR13} mistakenly stated, as a side remark, that
this last result was a simple consequence of their main result.} Prior to this work, \citet{HP01} had shown a worst-case upper bound of $O(n^3 \log n)$. We
also give worst-case upper and lower bounds on the meeting time and coalescence
time that are tight for general graphs and regular (or nearly-regular) graphs.

In the process of establishing bounds on the coalescence time, we develop techniques to give tight bounds on the meeting time. We apply these to
various topologies such as the binary tree, torus and hypercube. We believe
that these techniques might be of more general interest.

The process of coalescing random walks was first  studied in
\emph{continuous time}; in this case, particles jump to a random neighboring
node when activated according to a Poisson clock with mean $1$.  As
\citet{CN16} recently pointed out \emph{``It is however, not clear whether the
continuous-time results apply to the discrete-time setting''}, and to the best
of our knowledge, there is no general way in which results in \emph{continuous
time} can be transferred to \emph{discrete time} or vice versa, even when the
random walks in discrete-time are \emph{lazy}.  In the continuous time setting,
\citet{C89} show that the coalescence time is bounded by $\Theta(\thit)$ for
tori. Oliveira~\cite{O12} \fnote{One of the reviewers asked us how the results in section 4 (the hitting time section) related to \cite{O12} }showed that the coalescence time is $O(\thit)$ in
general.  In a different work, \citet{Oli13} derived so-called mean field
conditions, which are sufficient conditions for the coalescing process on a
graph to behave similarly to that on the complete graph up to scaling by the expected meeting time. His main result (for
non vertex-transitive graphs) in \cite[Theorem 1.2]{Oli13}, implies that
$\tcoal=O(\tmeet)$ whenever $\tmix \cdot \pi_{\max} = O(1 / \log^4 n)$. One of
our main results, Theorem~\ref{thm:mixtradeoff}, implies $\tcoal=O(\tmeet)$ whenever
$\tmix/\tmeet = O(1 / \log^2 n)$. Notice that since $\tmeet \geq 1/(\| \pi
\|_2^2) \geq 1/\pi_{\max}$, our condition is considerably more
general---however, the results in \cite{Oli13} also establish mean-field
behavior (that is, when suitably scaled, the distribution of the coalescence time is similar to that on a complete graph), while ours are only concerned with the expected coalescence time,
$\tcoal$. On the other hand, our result also applies to graphs where $\tcoal \gg \tmeet$ such as the star graph,
and together with Theorem~\ref{thm:lowerboundgraph},  demonstrate that the trade-off between meeting and mixing time is the best possible.

\def\vsp{1ex} 
\newcolumntype{C}[1]{>{\centering\arraybackslash}p{#1}}
\newcolumntype{L}[1]{>{\arraybackslash}p{#1}}
\newcommand{\twolines}[2]{\begin{tabular}{@{}l@{}} #1  \hfil \\#2 \end{tabular}}
\newcommand{\ripref}[2]{ \it \hyperref[#2]{\it #1}~\ref{#2}    }
\definecolor{testcolor}{HTML}{AAD3F7}
%

\begin{table}
	\rowcolors{2}{testcolor!50}{}
	\vspace{0.5em} 
	\resizebox{\linewidth}{!}{ 
	\begin{tabular}{L{2.4cm}L{2.8cm}L{1.8cm}L{3.9cm}llL{1.5cm}L{1.0cm}L{1.5cm}} 
		\toprule
		Graph & \twolines{$\tmix$}{} & \twolines{$\tmeet$}{} & \twolines{}{} &
		\twolines{$\tcoal$}{} & \twolines{}{} & \twolines{$\thit$}{} \\
		\midrule
		Binary tree  & $\Theta(n)$ & $\Theta(n \log n)$ & 
		\ripref{Thm.}{thm:hittingtime}$\&$\ripref{Thm.}{thm:treelower}& 
		$\Theta(n \log n)$ & \ripref{Thm.}{thm:hittingtime}$\&$\ripref{Thm.}{thm:treelower}&
		$\Theta(n \log n)$  \\
		Clique & $\Theta(1)$ & $\Theta(n)$ &
		\cite{CEOR13,BGKM16} $\&$\ripref{Thm.}{thm:mixtradeoff} & $\Theta(n )$ &
		\cite{CEOR13,BGKM16} $\&$\ripref{Thm.}{thm:mixtradeoff} & $\Theta(n )$ \\
		Cycle  & $\Theta(n^2)$ &  $\Theta(n^2)$ &
		\cite{BGKM16} $\&$\ripref{Thm.}{thm:hittingtime} & $\Theta(n^2)$ &
		\cite{BGKM16} $\&$\ripref{Thm.}{thm:hittingtime} & $\Theta(n^2 )$ \\
		Rand. $r$-reg.  & $\Theta(\log n)$ & $\Theta(n)$ &
		\cite{CFR09,CEOR13,BGKM16} $\&$\ripref{Thm.}{thm:mixtradeoff} &
		$\Theta(n)$ & \cite{CFR09,CEOR13,BGKM16}
		$\&$\ripref{Thm.}{thm:mixtradeoff} & $\Theta(n )$ \\
		Hypercube  & $\Theta(\log n \log\log n)$ & $\Theta(n)$
		& \ripref{Thm.}{thm:hittingtime}  & $\Theta(n )$ &
		\ripref{Thm.}{thm:mixtradeoff}&  $\Theta(n) $              \\
		Path  & $\Theta(n^2)$ & $\Theta(n^2)$ & \cite{BGKM16} $\&$\ripref{Thm.}{thm:hittingtime} & $\Theta(n^2)$ &
		\cite{BGKM16} $\&$\ripref{Thm.}{thm:hittingtime} & $\Theta(n^2)$ \\
		Star & $\Theta(1)$ & $\Theta(1)$ & folklore & $\Theta(\log n)$
		&\cite{HP01},\ripref{Prop.}{lem:beer} $\&$ \ripref{Thm.}{thm:graphclasses} & $\Theta(n)$ \\ 
		Torus $(d=2)$& $\Theta(n)$ & $\Theta(n \log n)$ &
		\ripref{Thm.}{thm:hittingtime}  & $\Theta(n \log n)$  &
		\ripref{Thm.}{thm:hittingtime} & $\Theta(n\log n)$ \\
		Torus $(d>2)$ & $\Theta(n^{2/d})$ & $\Theta(n)$ &
		\ripref{Thm.}{thm:mixtradeoff} & $\Theta(n)$ &
		\ripref{Thm.}{thm:mixtradeoff} & $\Theta(n)$ \\ 
		\bottomrule
	\end{tabular}}%
	\begin{flushleft}
		\begin{small}
			\caption{\label{mastertable}A summary of bounds on the mixing,
			meeting, coalescence and hitting times for fundamental topologies for
			discrete-time random walks. All bounds on the mixing and hitting
			times appear directly or implicitly in \cite{AF14}.\ifEA~The cited results may refer to those appearing in the full paper.\fi}
		\end{small}
	\end{flushleft}
\end{table}
\subsection{Contributions}

In this work, we provide several results relating the coalescence and meeting
times to each other and to other fundamental quantities of random walks
on undirected graphs. In particular, our focus is on understanding for which
graphs the coalescence time is the same as the meeting time, as we know that
$\tcoal$ is always in the rather narrow interval of $[\tmeet, O(\tmeet
\cdot \log n)]$.  As a consequence of our results, we derive new and re-derive existing
bounds on the meeting and coalescence times for several graph families of
interest. These results are summarized in \autoref{mastertable} and discussed
in greater detail in \autoref{sec:special}\ifEA~of the full paper\fi.  Formal definitions of all
quantities used below appear in \autoref{sec:notation}\ifEA~of the full paper\fi.
Throughout this paper, we assume that random walks are \emph{lazy} meaning that w.p. $1/2$ the walk stays put.

Our first main result relates $\tcoal$ to $\tmeet$ and $\tmix$. As already
mentioned in the introduction, the  crude bound $\tcoal = O(\tmeet \log n)$ is
well-known. However, this bound is not in general tight, as demonstrated by our
result below.

\begin{theorem} 
	\label{thm:mixtradeoff} 
	For any graph $G$, we have
	\[ 
				\tcoal = O\left( \tmeet \left( 1 + \sqrt{\frac{\tmix}{\tmeet}} \cdot \log n \right) \right),
			\]
			\fnote{Page 3. In connection with Theorem 1.1, could you indicate the range of $\tmeet/\tmix$ - I don't see the point. Am I wrong?} 
			Consequently, when $\tmeet \geq \tmix \log^2 n$, $\tcoal =
			O(\tmeet)$.
\end{theorem}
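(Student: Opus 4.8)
The plan is to run the process in $O(\log n)$ \emph{phases}, the $i$-th phase reducing the number of surviving particles from roughly $k_i:=n/2^i$ to at most $k_i/2$, and to show that a phase begun with $k$ particles is completed (i.e., the count first drops to $\le k/2$) within $L(k):=O\bigl(\max\{\tmeet/k,\ \sqrt{\tmeet\,\tmix}\,\}\bigr)$ steps in expectation. Granting this, the claimed bound follows by summation: the phases in which $\tmeet/k_i$ dominates (those with $k_i\le\sqrt{\tmeet/\tmix}$) contribute a geometric series in $\tmeet/k_i$ bounded by $O(\tmeet)$, while the at most $\log n$ remaining phases cost $O(\sqrt{\tmeet\,\tmix})$ each, so $\tcoal = O\bigl(\tmeet+\sqrt{\tmeet\,\tmix}\,\log n\bigr)=O\bigl(\tmeet(1+\sqrt{\tmix/\tmeet}\,\log n)\bigr)$. (If $\tmix>\tmeet$ the mixing cost would swamp everything, but then the elementary bound $\tcoal=O(\tmeet\log n)$ already lies below the stated estimate, so one simply invokes that.) To pass from ``halves in expectation'' to a bound on $\E{\tcoal}$, strengthen the per-phase statement to $\E{N_{L(k)}}\le k/4$ (with $N_t$ the number of particles at time $t$), apply Markov's inequality so that an expected $O(1)$ repetitions of a length-$L(k_i)$ block bring the count below $k_i/2$, and note that the final phase — two particles down to one — costs exactly $\tmeet$.

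The heart of the matter is the per-phase lemma: \emph{from any configuration of $k$ particles, after $T:=C\max\{\tmeet/k,\ \sqrt{\tmeet\,\tmix}\}$ steps, $\E{N_T}\le k/2$.} I would establish it in three moves. First, run the process for $\Theta(\tmix)$ steps; this costs only $O(\sqrt{\tmeet\,\tmix})$ since $\tmix\le\sqrt{\tmeet\,\tmix}$ when $\tmix\le\tmeet$, and afterwards each surviving walk is within a constant of stationarity and the surviving walks are close to mutually independent, so they may be treated as independent stationary walks up to a controlled coupling error. Second, use the elementary fact that from \emph{any} pair of starting vertices two walks meet within $s$ steps with probability at least $1-2^{-s/(2\tmeet)}\ge c\,s/\tmeet$ for $s\le 2\tmeet$; this is Markov's inequality applied to the worst-case meeting time, combined with sub-multiplicativity of the ``not yet met'' probability. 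Third — the decisive step — turn these pairwise bounds into a lower bound on the expected number of coalescences during the phase: write $N_T=k-(\#\text{merges})$, bound $\#\text{merges}$ below by half the number of particles that meet at least one other particle, and estimate the latter by fixing a particle and analysing the events, conditionally independent given that particle's trajectory, that each of the other walks intersects it.

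The main obstacle is precisely this third step, and the genuine difficulty is dependence. A single particle's $k-1$ meeting events become conditionally independent only relative to a fixed ``target'' trajectory, and in the coalescing dynamics a particle keeps following its own trajectory only while it remains the representative of its cluster; hence one can use at most a pool of $\Theta(\tmeet/T)$ particles as reliable coalescence partners for a given particle, and that pool must itself stay essentially uncoalesced for the duration of the phase. The tension between wanting the pool large (so each particle meets someone with constant probability) and keeping it from collapsing is what forces the phase length up from the naive $\Theta(\tmix)$ to $\Theta(\sqrt{\tmeet\,\tmix})$ in the regime $\sqrt{\tmeet/\tmix}\le k\le\tmeet/\tmix$ — which is exactly the source of the $\sqrt{\tmeet\,\tmix}\,\log n$ term, and (as the companion lower-bound construction shows) cannot be removed. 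Carrying this out rigorously — propagating the near-stationarity and near-independence errors across the mixing step, and bookkeeping which ``ghost'' walk witnesses each actual coalescence — is the technical core of the proof.
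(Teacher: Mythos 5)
Your high-level architecture — halve the count in phases, bound each phase length, and sum — is a legitimate route and aligns in spirit with the paper's two-stage decomposition (Corollary~\ref{lem:firstpart} brings $n$ walks down to $O(\tmeet/\tmix)$ in $O(\sqrt{\tmeet\tmix}\log n)$ time, and Lemma~\ref{lem:secondpart} finishes in $O(\tmeet)$ via geometrically growing blocks). Your per-phase formula is slightly too optimistic, however: the worst-case achievable is $L(k)=O(\max\{\tmeet/\sqrt{k},\ \sqrt{\tmeet\tmix}\})$, not $O(\max\{\tmeet/k,\ \sqrt{\tmeet\tmix}\})$ — the extra $\sqrt{k}$ comes from the fact that a mortal walk is eliminated in a block only with probability $\Theta(\sqrt{p_\tau})$ rather than $\Theta(p_\tau\cdot k)$, and $\tmeet/k$ is achievable only when the collision probability $q$ is essentially deterministic (as on the complete graph). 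Fortunately the geometric summation $\sum_k \tmeet/\sqrt{k} = O(\tmeet)$ still closes, so this slip does not sink the plan.

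The genuine gap is in your ``decisive third step.'' Fixing a particle and exploiting conditional independence of the intersection events given its trajectory is, by itself, not strong enough. Conditioning on the trajectory $(x_t)_t$ of one walk, each of the other $k-1$ walks intersects it independently with some probability $q=q((x_t))$; over trajectories, $\E[q]=p_\tau$, and by Jensen/convexity the unconditional elimination probability $\E[1-(1-q)^{k-1}]$ is bounded \emph{above}, not below, by $1-(1-p_\tau)^{k-1}$. A reverse-Markov (or Paley–Zygmund) lower bound on $\Pr[q\ge p_\tau/3]$ gives only $\Omega(p_\tau)$, which plugged back into the phase count recovers exactly the trivial $\tcoal=O(\tmeet\log n)$ and nothing better. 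The paper's improvement comes from Lemma~\ref{lem:classes2}: for the classes $C_1=\{(z_t)\colon \Pr[\exists s, Y_s=z_s]\ge\sqrt{p}\}$ and $C_2=\{(z_t)\colon \Pr[\exists s,Y_s=z_s]\ge p/3\}$, one of $\Pr[(X_t)\in C_1]\ge p/3$ or $\Pr[(X_t)\in C_2]\ge\sqrt{p}/3$ always holds. This dichotomy lets the argument ``choose which side to pool'': in the first case a pool of $\Theta(1/p)$ immortal walks contains, with constant probability, one $C_1$-trajectory that each mortal walk then hits with probability $\sqrt{p}$; in the second case a mortal walk's own trajectory is in $C_2$ with probability $\Omega(\sqrt{p})$, and then a pool of $\Theta(1/p)$ immortal walks catches it. Either way the per-block elimination probability is $\Omega(\sqrt{p_\tau})$, which is precisely what buys the $\sqrt{\tmix/\tmeet}$ savings over the naive $\log n$. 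Your sketch, as written, never extracts this square-root gain and would stall at the trivial bound.

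A secondary issue is the ``ghost walk'' bookkeeping, which you flag but do not resolve. The paper handles it via a formal domination (Proposition~\ref{lem:majormotal}) of the true coalescence process by the immortal process $\pimortal$, in which walks in a designated group $\mathcal{G}_1$ can never die. Establishing this domination is itself delicate — the paper constructs a three-process coupling through an intermediate process $\pinter$ with an alive/dead/phantom labeling — and it is what licenses exposing the $\mathcal{G}_1$-trajectories first and treating the $\mathcal{G}_2\to\mathcal{G}_1$ meeting events as independent. Without this (or an equivalent mechanism) there is no rigorous sense in which your ``pool'' of $\Theta(\tmeet/T)$ partners ``stays essentially uncoalesced.''
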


The proof of \autoref{thm:mixtradeoff} appears in
\autoref{sec:upperboundtcoal}\ifEA~of the full paper\fi. One interesting aspect about this bound is that
it can be used to establish $\tcoal=\Theta(\tmeet)$ even without having to know
the quantities $\tmeet$ or $\tmix$. This flexibility turns out to be particularly
useful when dealing with random graph models for ``real world'' networks, where
we establish (nearly-)tight and sublinear bounds (w.r.t. to the number of vertices)  in~\autoref{sec:realworld}\ifEA~of the full paper\fi.

Another interesting feature of our theorem is that the main result of Cooper~\etal~\cite[Theorem~1]{CEOR13} can be reproven by combining~\cite[Theorem~2]{CEOR13} with \autoref{thm:mixtradeoff} (see~ \autoref{pro:frederik}\ifEA~in the full paper\fi).

Our next main result shows that the bound in \autoref{thm:mixtradeoff} is tight up to a constant factor, which we establish by constructing an explicit family
of graphs. Interestingly, for this family of almost-regular graphs we also have $\thit \gg
\tmeet$, thus showing that $\thit$ may be a rather loose upper bound for
$\tcoal$ in some cases.\footnote{Note that the star also exhibits  $\thit \gg \tmeet$. However, the star is not almost-regular.}

\begin{theorem}\label{thm:lowerboundgraph}
	For any sequence $(\alpha_n)_{n \geq 0}$, $\alpha_n \in [1,\log^2 n]$ there
	exists a family of almost-regular graphs $ (G_n)$, with $G_n$ having $\Theta(n)$ nodes and
	satisfying $\frac{\tmeet}{\tmix}= \Theta(\alpha_n)$ such that
	\[
	  \tcoal = \Omega \left(\tmeet \cdot \Bigl(1+\sqrt{ \frac{\tmix}{\tmeet}}
	  \cdot  \log n \Bigr)\right).
	\]
\end{theorem}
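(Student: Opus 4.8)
The plan is to exhibit the family by an explicit construction, verify the relation $\tmeet/\tmix=\Theta(\alpha_n)$, and then prove a matching lower bound on $\tcoal$. For the construction, set $\beta:=\lceil\alpha_n\rceil$ and build $G_n$ from $\Theta(\beta)$ cliques of size $\Theta(n/\beta)$ — one of them designated a \emph{hub} — attached to the hub by bounded-size connectors calibrated so that: (i) every vertex has degree $\Theta(n/\beta)$, hence $G_n$ is almost-regular and has $\Theta(n)$ vertices; (ii) a lazy walk splits into a fast ``within-clique'' coordinate mixing in $O(\log n)$ steps and a ``which clique'' coordinate that, each time it returns to the hub, is re-spread roughly uniformly over the cliques within $\Theta(n/\beta)$ steps, so $\tmix=\Theta(n/\beta)$; (iii) two independent walks coincide only inside a common clique, and are simultaneously in a common clique only once per $\Theta(n)$ steps (with a shared episode yielding a genuine meeting with probability $\Theta(1)$), so $\tmeet=\Theta(n)$ and $\tmeet/\tmix=\Theta(\beta)=\Theta(\alpha_n)$. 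The connectors should moreover make the hub a \emph{coalescence bottleneck}: a surviving particle merges essentially only after being funneled into the hub, which for each particle happens at rate $\Theta\bigl(1/\sqrt{\tmeet\tmix}\bigr)$ irrespective of how many particles remain. Informally $G_n$ is a regular, ``slowed-down'' star, with the star's center — which absorbs a constant fraction of the free particles per step — replaced by the hub, which absorbs only an $\Theta(1/\sqrt{\tmeet\tmix})$-fraction of the survivors per step.

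Pinning down $\tmix$ and $\tmeet$ is a routine (if lengthy) random-walk computation with the two-scale decomposition above: estimate clique-sojourn times, use coupon-collector bounds for the ``which clique'' coordinate to get $\tmix$, and estimate the rate of shared-clique episodes together with the $\Theta(1)$ per-episode meeting probability to get $\tmeet$. The same picture gives, as a by-product, $\thit=\omega(\tmeet)$ (reaching a remote clique from the hub takes $\omega(n)$ steps), so this family also witnesses that $\thit$ can wildly overestimate $\tcoal$.

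For the lower bound, first note that $\alpha_n\le\log^2 n$ forces $\sqrt{\tmix/\tmeet}\cdot\log n\ge 1$, so it is enough to prove $\tcoal=\Omega\bigl(\sqrt{\tmeet\tmix}\,\log n\bigr)$; the other term is handled by the trivial $\tcoal\ge\tmeet$. Set $L:=c\sqrt{\tmeet\tmix}$ for a small constant $c$, cut time into consecutive windows of length $L$, and let $N_j$ be the number of survivors at the end of window $j$, so $N_0=\Theta(n)$. The crux is the claim: whenever $k$ particles survive at the start of a window, with probability $1-e^{-\Omega(k)}$ at most $k/2$ of them merge during that window. Granting this, consider the first $\varepsilon\log_2 n$ windows (for a small constant $\varepsilon$); inductively, as long as all windows so far were ``good'' the survivor count is $\ge N_0\,2^{-j}=n^{\Omega(1)}$, so the next window is good with failure probability $e^{-n^{\Omega(1)}}$, and a union bound makes all $\varepsilon\log_2 n$ windows good with probability $\ge 1/2$. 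Then $N_{\lfloor\varepsilon\log_2 n\rfloor}\ge 2$, so the coalescing process is unfinished after $\varepsilon\log_2 n$ windows with constant probability, and therefore $\tcoal=\Omega(L\log n)=\Omega\bigl(\sqrt{\tmeet\tmix}\,\log n\bigr)$.

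Proving the crux is where the real work lies, and it relies on the bottleneck structure of $G_n$. A merge in a window occurs either (a) at the hub — where by construction particles arrive at per-particle rate $\Theta(1/L)$, so among $k$ survivors the expected number of hub-merges over $L$ steps is $O(k)$, and since distinct particles reach the hub essentially independently a Chernoff bound upgrades this to ``$\le k/2$ with probability $1-e^{-\Omega(k)}$''; or (b) ``spontaneously'' inside a non-hub clique, which must be shown negligible. For (b) the plan is to prove that the surviving walks stay jointly spread over the $\Theta(\beta)$ cliques — a second-moment estimate on their joint occupation measure, exploiting that conditioning on non-coalescence only decreases pairwise coincidence probabilities — so that the per-step rate of spontaneous coalescences is $O(k^2/n)$, contributing only $O(k^2L/n)=O(k)$ over a window in the relevant regime. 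One must also absorb the transient just after each window-merge, when the freshly merged particle momentarily sits on an occupied vertex: this follows from a union bound over the $O(k)$ such merges together with the fact that a walk exits its clique within $O(n/\beta)\le L$ steps. The delicate point — and the main obstacle I anticipate — is making the hub-arrival-rate and ``spread-out'' estimates rigorous on the concrete $G_n$, equivalently dominating the true coalescing system by one in which coalescence happens only at the hub and survivors never re-meet elsewhere, which in turn pins down exactly what the connectors of $G_n$ must look like.
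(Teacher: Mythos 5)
Your construction cannot support the halving argument you build on it, because it has far too few cliques. With only $\Theta(\beta)\le\log^{2}n$ cliques, each of size $\Theta(n/\beta)$, the $\Theta(n)$ starting particles sit $\Theta(n/\beta)$ to a clique, and a clique of size $m$ coalesces its occupants to $O(1)$ in $O(m)$ steps. Since your window length $L=c\sqrt{\tmeet\tmix}=\Theta(n/\sqrt{\beta})$ exceeds the within-clique coalescing time $\Theta(n/\beta)$ once $\beta$ exceeds a constant, a \emph{single} window reduces the population from $\Theta(n)$ to $O(\beta)=O(\log^{2}n)$, not to $\Theta(n/2)$; the crux claim that at most $k/2$ particles merge per window is badly false in the first window. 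Your own estimate makes the mismatch visible: you bound the spontaneous merges in a window by $O(k^{2}L/n)$ and call it $O(k)$ ``in the relevant regime,'' but $k^{2}L/n=O(k)$ forces $k=O(n/L)=O(\sqrt{\beta})=O(\log n)$, whereas the induction over $\varepsilon\log_{2}n$ windows needs $k\ge n^{1-\varepsilon}$ survivors (for the Chernoff tail $e^{-\Omega(k)}$ to be usable in a union bound). These two requirements on $k$ are incompatible, so the argument has no regime in which it closes. The construction itself is also underdetermined: with ``bounded-size connectors'' the escape time from a clique of size $n/\beta$ is $\Theta((n/\beta)^2)$, not the $\Theta(n/\beta)$ you need for $\tmix$, and the hub-arrival rate $\Theta(1/\sqrt{\tmeet\tmix})=\Theta(\sqrt{\beta}/n)$ you quote is neither the inverse of $\tmix$ nor of the escape time, so the calibration does not cohere.

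The paper sidesteps all of this by choosing a different scale and a different style of lower bound. It uses $\kappa=\sqrt{n}$ cliques of size $\sqrt{n}$ (not $\Theta(\alpha_n)$ cliques), plus a separate $\sqrt{n}$-regular Ramanujan graph $G_2$ of size $n/\sqrt{\alpha'}$ on which meetings actually occur, joined through a single bottleneck vertex $\widehat{z}$. Because the number of cliques is $n^{1/2}$, a sublinear number $n^{1/5}$ of stationarity-started walks all land in \emph{distinct} cliques w.h.p.\ (a birthday-paradox count). The argument of \autoref{lem:coal} then shows that these walks do not collide outside $G_2$ during the relevant time scale and that, with constant probability, the \emph{slowest} of the $n^{1/5}$ walks needs $\Omega(\sqrt{\alpha'}\,n\log n)$ steps to escape its clique chain through $\widehat{z}$ and reach $G_2$. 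This is a ``last particle standing'' argument over a small, well-separated sample: no population dynamics across $\log n$ windows, no concentration over dependent merge events, and no dominating process in which ``coalescence happens only at the hub'' has to be constructed. If you want to salvage a halving-style proof you would have to blow up the number of cliques to $n^{\Omega(1)}$ so that a polynomial population can remain spread out, at which point the mixing/meeting calibration and the role of the separate meeting place $G_2$ essentially force you back to the paper's construction.
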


The above two results show that that $\tmeet/\tmix$ should be $\Omega(\log^2
n)$ to guarantee that $\tcoal = O(\tmeet)$.

A natural question is therefore whether in the case of structured
sub-classes such as regular graphs, or vertex-transitive graphs, or special
graphs such as grids, tori, binary trees, cycles, real-world (power-law)
graphs, \textit{etc.}, better bounds can be obtained through other methods. We provide results that
are tight or nearly tight in several of these cases; some of these results were
previously known using other methods, some are novel to the best of our
knowledge. 

\begin{theorem}
	\label{thm:hittingtime} 
The following hold for graphs of the stated kind
	\begin{enumerate}[(i)]
	
	\item  For any graph $G$,
        \[
          \tcoal = O\left( \thit \cdot \log \log n \right).
        \]
        \item For any graph $G$ with maximum degree $\Delta$ and average degree $d$,
        \[ \tcoal = O\left( \thit + \tmeet \cdot \log (\Delta/d) \right). \]
        Hence for any almost-regular graph $G$, 
			$ \tcoal = O\left( \thit\right). $		
			\item For any  vertex-transitive $G$, 
			\[ \tcoal = \Theta\left( \tmeet\right) = \Theta\left( \thit \right). \]

		\item In the case of 		
		binary trees, $d$-dimensional tori/grids, paths/cycles, expanders, hypercubes, random power law graphs,\footnote{The exact model is specified in \autoref{sec:realworld}\ifEA~of the full paper\fi.} we have $\tcoal = \Theta(\tmeet)$.
\end{enumerate}
\end{theorem}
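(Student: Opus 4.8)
The plan is to derive all four parts from a single quantitative handle on how fast the number of live particles decreases, built on two facts recalled in the introduction (and the standing laziness assumption): the discrete-time Aldous inequality $\tmeet \le \thit$, and the crude bound that $k$ particles coalesce in $O(\tmeet \log k)$ steps. Write $N_t$ for the number of particles alive at time $t$, so $N_0=n$. The engine I would build is a \emph{drift lemma}: from a configuration with $N_t = k$, the expected time until the next coalescence is $O\!\left(1 + \thit/k^2\right)$, provided $k$ exceeds a threshold $K_0 = K_0(G)$ that grows with the irregularity $\Delta/d$ (and is polylogarithmic in $n$ in the worst case); below $K_0$ one simply invokes the crude bound on the remaining particles. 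The intuition behind the rate $\thit/k^2$ is that among the $\binom{k}{2}$ live pairs, each meets within $\thit$ in expectation, so some pair should meet within roughly $\thit/\binom{k}{2}$ steps; making this rigorous is the heart of the matter, and the threshold $K_0$ is precisely the price paid for the fact that on an irregular graph a handful of particles may sit on high-$\pi$ vertices and be slow to meet relative to $\thit/k^2$, whereas many particles cannot all be hiding in this way.

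Granting the drift lemma, part~(ii) follows by telescoping the conditional expectations: draining from $n$ down to $K_0 = \Theta(\Delta/d)$ particles costs $\sum_{k=K_0+1}^{n} O\!\left(1 + \thit/k^2\right) = O(n) + O\!\left(\thit \cdot d/\Delta\right) = O(\thit)$, where I use the elementary fact $\thit = \Omega(n)$ (pick $w$ with $\pi_w \le 1/n$; its expected return time is at least $n$ and at most $1+\thit$); the surviving $\Theta(\Delta/d)$ particles then coalesce in a further $O(\tmeet \log(\Delta/d))$ steps by the crude bound, giving $\tcoal = O(\thit + \tmeet \log(\Delta/d))$, which for almost-regular $G$ (where $\Delta/d = O(1)$ and $\tmeet \le \thit$) is $O(\thit)$. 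For part~(i) I run the same drift only down to $k = \log^{O(1)} n$, a regime where the abundance of particles makes the $\thit/k^2$ rate robust to irregularity, again at total cost $O(\thit + n) = O(\thit)$, and finish the remaining $\log^{O(1)} n$ particles with the crude bound at cost $O(\tmeet \log\log n) = O(\thit \log\log n)$; the extra $\log\log n$ is exactly the cost of coalescing a polylogarithmic number of particles with nothing better than the crude bound available.

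Part~(iii) is then short: a vertex-transitive graph is regular, hence almost-regular, so part~(ii) yields $\tcoal = O(\thit)$; combining this with the trivial $\tmeet \le \tcoal$ and the classical fact that vertex-transitivity forces $\thit = O(\tmeet)$ (the meeting problem inherits the symmetry of the hitting problem; cf.\ \cite{AF14}) gives $\tmeet \le \tcoal \le O(\thit) \le O(\tmeet)$, so all three quantities agree up to constants. For part~(iv) I would, for each listed family, pair the lower bound $\tcoal \ge \tmeet$ with an upper bound from part~(ii) or from \thmref{mixtradeoff}, together with a direct determination of $\tmeet$ (and, where needed, of $\thit=\Theta(\tmeet)$) via the meeting-time techniques developed elsewhere in the paper: binary trees, $d$-dimensional tori/grids, paths and cycles are (almost-)regular with $\thit = \Theta(\tmeet)$, so part~(ii) gives $\tcoal = \Theta(\tmeet)$ (and this is the only route for, e.g., binary trees and $2$-dimensional tori, where $\tmeet/\tmix$ is only $\Theta(\log n)$), while for hypercubes, higher-dimensional tori, expanders and the random power-law model one checks $\tmeet/\tmix = \Omega(\log^2 n)$ and applies \thmref{mixtradeoff}.

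The main obstacle is the drift lemma: turning the ``$\binom{k}{2}$ pairs, each meeting within $\thit$'' heuristic into a theorem in the face of (a) adversarial particle configurations at the moment $N_t=k$; (b) the strong positive correlation among coalescence events (the star already shows that pairwise-meeting intuition can be wildly off, since many particles there merge simultaneously); and (c) the discrepancy between $\pi$ and the uniform distribution on irregular graphs. It is (c) that blocks a clean $O(\thit)$ bound in general and is responsible for both the $\log(\Delta/d)$ term in part~(ii) and the $\log\log n$ term in part~(i); pinning down the exact threshold $K_0$ and the concentration needed to replace the expectation bound on $N_t$ is where the real work lies.
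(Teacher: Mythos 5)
Your overall architecture — a per-configuration drift lemma controlling the rate at which the particle count decreases, telescoped down to a threshold below which the crude $O(\tmeet\log k)$ bound takes over — is a reasonable and natural plan, and parts (iii) and (iv) are correctly reduced to (ii), $\tmeet \le \tcoal$, and $\thit=\Theta(\tmeet)$ for vertex-transitive graphs together with the per-family bounds. But the entire technical content of the theorem sits inside the drift lemma, and you have neither proved it nor given a plan that would lead to a proof. Moreover, the drift lemma as you state it, with a per-coalescence rate of $O(1+\thit/k^2)$, is \emph{strictly stronger} than what the paper establishes, and is not of the same shape: the paper's two reduction theorems (\autoref{thm:mostgeneral} and \autoref{thm:keylemma}) prove bulk reductions, not per-event drift. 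The first shows that in a single window of $O(\thit)$ steps the number of walks drops from $n$ to $O(\log^3 n)$ w.h.p.; the second shows that when $k=\kappa^{100}$ walks are alive, a constant fraction are eliminated in $O(\thit/\kappa)=O(\thit/k^{1/100})$ steps. This halving rate is vastly weaker than $\thit/k^2$ — it only telescopes to $O(\thit)$ because the work per phase decays geometrically — and the authors were evidently unable to prove anything stronger.

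The deeper point is that your obstacle (b), the positive correlation among coalescence events, is not a wrinkle to be absorbed into a threshold $K_0$: it is the central difficulty, and it blocks the $\binom{k}{2}$-pairs heuristic even on regular graphs. The paper's resolution is structural. It replaces the coalescing process by a dominated process $\pimortal$ in which walks are split into an \emph{immortal} group $\mathcal{G}_1$ and a \emph{mortal} group $\mathcal{G}_2$; one exposes the $\mathcal{G}_1$ trajectories first, and meetings of distinct $\mathcal{G}_2$ walks with a fixed exposed trajectory are then genuinely independent. This decoupling device, together with the identity $\Pr{Z\ge 1}=\E{Z}/\E{Z\mid Z\ge 1}$, a degree-bucketing argument to find a near-regular, densely visited subset $S$, new concentration inequalities parameterized by $\thit$ rather than $\tmix$, and (for \autoref{thm:keylemma}) a balls-into-bins scheduling argument over ``unsurprising'' visits, is what actually makes the rate bounds go through. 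None of this appears in your outline, and ``pinning down $K_0$ and the concentration'' understates the gap: what is needed is a different way of tracking which walks can eliminate which, not a sharper constant.

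Two smaller inconsistencies are worth flagging. First, your threshold $K_0$ cannot simultaneously be $\Theta(\Delta/d)$ (as part (ii) needs) and ``polylogarithmic in $n$ in the worst case'' (as part (i) needs): $\Delta/d$ can be polynomial. In the paper these are two separate reductions, one degree-oblivious down to $\log^{O(1)} n$ and one degree-aware down to $(\Delta/d)^{O(1)}$, and both are used in sequence for part (ii). Second, if your drift lemma held with rate $\thit/k^2$ for all $k\ge 2$ on regular graphs, part (ii) would give $\tcoal=O(\thit)$ with no residual term and part (i) would need no $\log\log n$; that the paper cannot remove the $\log\log n$ for general graphs, and that $\tcoal=O(\thit)$ in full generality is deferred to the later work of Oliveira and Peres, should make you skeptical that the clean quadratic rate is available.
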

The proof of the first three statements of~\autoref{thm:hittingtime} appear in \autoref{sec:hittingtime} and the last statement follows from the results in \autoref{sec:special}\ifEA~of the full paper\fi.
We point out that since $\tmeet=O(\thit)$ for any graph,\footnote{In \autoref{pro:relatingmeetandhit}\ifEA~of the full paper\fi, we prove this formally by following the proof for the continuous setting \cite[Proposition 14.5]{AF14}.}
\autoref{thm:hittingtime} implies the bound $\tcoal=O(\thit)$ not only for
almost-regular graphs, but also for dense graphs where $|E|=\Theta(n^2)$.
This settles the discrete-time analogue of a conjecture by Aldous~\cite[Open
Problem 14.13]{AF14} for these graph classes.
 In very recent work, Oliveira and Peres
improve on these results and establish that $\tcoal = O(\thit)$ holds for all
undirected graphs~\cite{OP:2018}.

Another natural question is to express $\tmeet$ or $\tcoal$ solely in terms of
$\tmix$, the spectral gap $1-\lambda_2$ or other connectivity properties of
$G$. We derive several such bounds on $\tmeet$, $\thit$ and $\tcoal$.

As a by-product of our techniques, we also derive new bounds on $\thit$ and
$\tcov$, the cover-time. The detailed results are given in
\autoref{sec:meeting}\ifEA~of the full paper\fi, but we highlight the results for regular graphs here:
\begin{theorem}\label{thm:spectral}
Let $G$ be any graph with $\gap=\Delta/\delta$, where $\Delta$ is the maximum degree and $\delta$ the minimum degree.
 It holds that \[\thit = O(\gap n/ \sqrt{1-\lambda_2})=O(\gap n/\Phi),\]
 where $\Phi$ is the conductance of the graph and  $\lambda_2$ is the  second largest eigenvalue of the transition matrix $P$ of a lazy random walk.
	Consequently, $\tmeet \leq \tcoal = O(\gap n \log(\gap)/ \sqrt{1-\lambda_2})=O(\gap n \log(\gap)/ \Phi)$  and $\tcov = O(\gap n \log n/\Phi)$.
\end{theorem}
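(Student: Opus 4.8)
\outline The plan is to bound $\thit$ through the expected hitting time from stationarity together with the spectral decomposition. First I would reduce the worst-case hitting time to $\max_w\Ex{\pi}{\tau_w}$: from the fundamental-matrix identity $\thit(u,v)=(Z_{vv}-Z_{uv})/\pi(v)$ for $u\ne v$ (with $Z_{xy}=\sum_{t\ge 0}(P^t(x,y)-\pi(y))$; see e.g.\ \cite{AF14}) and the Cauchy--Schwarz estimate $|Z_{uv}|\le\sqrt{\tfrac{\pi(v)}{\pi(u)}\,Z_{uu}Z_{vv}}$, which follows by writing $Z_{xy}=\pi(y)\sum_{i\ge 2}\tfrac{f_i(x)f_i(y)}{1-\lambda_i}$ in a $\pi$-orthonormal eigenbasis $\{f_i\}$ of the lazy walk, one obtains $\thit(u,v)\le\Ex{\pi}{\tau_v}+\sqrt{\Ex{\pi}{\tau_u}\,\Ex{\pi}{\tau_v}}\le 2\max_w\Ex{\pi}{\tau_w}$, where $\Ex{\pi}{\tau_w}=Z_{ww}/\pi(w)=\sum_{i\ge 2}\tfrac{f_i(w)^2}{1-\lambda_i}$ is the expected hitting time of $w$ from $\pi$. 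Since $1/\pi(w)=2m/\deg(w)\le 2m/\delta\le n\Delta/\delta=\gap\,n$, it then suffices to prove $\Ex{\pi}{\tau_w}=O\bigl(\gap\,n/\sqrt{1-\lambda_2}\bigr)$ for every vertex $w$.

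The naive estimate $\sum_{i\ge 2}\tfrac{f_i(w)^2}{1-\lambda_i}\le\tfrac1{1-\lambda_2}\sum_{i\ge 2}f_i(w)^2\le\tfrac1{\pi(w)(1-\lambda_2)}$ overshoots by a factor $1/\sqrt{1-\lambda_2}$, and is tight only if the spectral mass of $\delta_w$ concentrates at $\lambda_2$. The gain rests on a \emph{diffusive} bound precluding this:
\[
  N_w(\epsilon):=\sum_{i\ge 2\,:\,1-\lambda_i\le\epsilon}f_i(w)^2\;=\;O\bigl(\gap\,n\,\sqrt{\epsilon}\bigr)\qquad\text{for all }\epsilon\in(0,1].
\]
Granting this, the layer-cake identity $\sum_{i\ge 2}\tfrac{f_i(w)^2}{1-\lambda_i}=\int_{0}^{\infty}\epsilon^{-2}\,N_w(\epsilon)\,d\epsilon$, together with the trivial $N_w(\epsilon)\le 1/\pi(w)$ and the fact that $N_w$ vanishes for $\epsilon<1-\lambda_2$, gives (split the integral at the crossover of $\gap n\sqrt\epsilon$ and $1/\pi(w)$) the bound $\int_{1-\lambda_2}^{\infty}\epsilon^{-2}\min\bigl(\gap n\sqrt\epsilon,\,1/\pi(w)\bigr)d\epsilon=O\bigl(\gap\,n/\sqrt{1-\lambda_2}\bigr)$, as needed. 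The diffusive bound is in turn equivalent, via $N_w(\epsilon)\le(1-\epsilon)^{-t}\sum_{i\ge 2}\lambda_i^{\,t}f_i(w)^2\le(1-\epsilon)^{-t}\,P^t(w,w)/\pi(w)$ with $t\asymp 1/\epsilon$, to a return-probability bound $P^t(w,w)=O\bigl(\pi(w)/(\pi_{\min}\sqrt{t})\bigr)$ valid for all $t\ge 1$; this, I expect, is the real content. Such a bound says that a \emph{connected} graph is never ``more bottlenecked than a one-dimensional one'' at the relevant scale: every nonempty proper vertex subset $S$ has at least one outgoing edge, so $\Phi(S)\,\pi(S)\ge 1/(4m)$, and feeding this automatic one-dimensional isoperimetric input into the evolving-set / Nash-inequality machinery for heat-kernel decay produces exactly the $t^{-1/2}$ rate. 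Pinning down the normalisation and constants in this last step is where I expect the bulk of the work to lie.

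Given $\thit=O\bigl(\gap\,n/\sqrt{1-\lambda_2}\bigr)$, the remaining claims follow from results already available. Since $\tmeet=O(\thit)$ (\autoref{pro:relatingmeetandhit}) and $\Delta/d\le\Delta/\delta=\gap$, \autoref{thm:hittingtime}(ii) yields $\tmeet\le\tcoal=O\bigl(\thit+\tmeet\log(\Delta/d)\bigr)=O\bigl(\thit\,(1+\log\gap)\bigr)=O\bigl(\gap\,n\log\gap/\sqrt{1-\lambda_2}\bigr)$, with $\tcoal=O(\thit)$ for almost-regular $G$. Matthews' bound $\tcov\le\thit\cdot H_n=O(\thit\log n)$ gives $\tcov=O\bigl(\gap\,n\log n/\sqrt{1-\lambda_2}\bigr)$. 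Finally the conductance forms of all three bounds follow from the hard direction of Cheeger's inequality, $1-\lambda_2=\Omega(\Phi^2)$, i.e.\ $1/\sqrt{1-\lambda_2}=O(1/\Phi)$.
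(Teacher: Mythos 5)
Your overall skeleton matches the paper's: reduce $\thit$ to $\max_w\Ex{\pi}{\tau_w}$ and then beat the crude $1/(1-\lambda_2)$ bound by exploiting short-time diffusive decay of return probabilities. The reduction step is a valid alternative to the paper's (it cites the identity $\thit\le 2\max_w\thit(\pi,w)$ from Levin--Peres--Wilmer directly; your fundamental-matrix plus Cauchy--Schwarz route gives the same conclusion). The paper's proof of the hitting-time bound (\autoref{thm:awesomeregularhit}) then works in the \emph{time domain}, splitting $\sum_{t\ge 0}(p^t_{w,w}-\pi(w))$ at $\trel$ and controlling the head via the Aldous--Fill diffusive estimate $\sum_{t<\tau}p^t_{u,u}=O(\gap\sqrt{\tau})$ (\autoref{lem:AF}) and the tail via $L^2$-contraction, whereas you work in the \emph{spectral domain} via the layer-cake identity for $N_w(\epsilon)$. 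These are Laplace-transform translations of one another --- your step $N_w(\epsilon)\lesssim p^{1/\epsilon}_{w,w}/\pi(w)$ is precisely that --- and the layer-cake version is a clean, genuinely different packaging of the same idea.

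The real divergence, and the gap, is in how the diffusive input is obtained. You posit the per-vertex bound $p^t_{w,w}=O\!\left(\pi(w)/(\pi_{\min}\sqrt{t})\right)$, i.e.\ $O(\deg(w)/(\delta\sqrt{t}))$, which is \emph{stronger} than what the paper's \autoref{lem:AF} supplies, namely $p^t_{w,w}=O(\gap/\sqrt{t})=O(\Delta/(\delta\sqrt{t}))$. If you feed the weaker Aldous--Fill form into your reduction you get $N_w(\epsilon)=O(\gap\sqrt{\epsilon}/\pi(w))$, hence $\Ex{\pi}{\tau_w}=O(\gap/(\pi(w)\sqrt{1-\lambda_2}))$, and after $1/\pi(w)\le\gap n$ you land at $O(\gap^2 n/\sqrt{1-\lambda_2})$ --- a factor $\gap$ over the claim. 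The paper evades exactly this by a different device: for non-regular $G$ it passes to the symmetric chain $Q$ with loop probability $1-\deg(u)/\Delta$ (whose stationary law is uniform, so the regular-graph case applies with no $\gap$ factor) and transfers back by a Dirichlet-form comparison of $1-\lambda_2(Q)$ to $1-\lambda_2(P)$, obtaining $\thit=O(\sqrt{\Delta d}/\delta\cdot n/\sqrt{1-\lambda_2})\le O(\gap n/\sqrt{1-\lambda_2})$. Your plan instead pushes the sharpening into the heat-kernel bound itself via Nash/evolving-set machinery, and you expressly flag that step as the bulk of the unfinished work --- rightly so, because it is where the claim's validity currently rests. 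Either prove that refined per-vertex decay or import the paper's chain-modification device; without one of these, the argument is short by a factor $\gap$. The remaining deductions --- $\tcoal$ from \autoref{thm:hittingtime}(ii), $\tcov$ from Matthews, and the conductance forms from Cheeger --- are exactly the paper's.
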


We point out that so far the best possible bound on $\tcoal$ for regular graphs
has been $\tcoal = O( n / (1-\lambda_2))$ from \cite{CEOR13}.\footnote{Alternatively, the
same bound as the known bound can also be derived from the bound on the conductance in \cite{BGKM16} together with
Cheeger's inequality.} The best possible bound on $\thit$ (and $\tcov$) in terms
of $1-\lambda_2$, was $\thit = O(n/ (1-\lambda_2))$ and $\tcov = O( n \log n /
(1-\lambda_2))$ due to Broder and Karlin~\cite{BK89} from 1989. In all four
cases, $\tmeet$, $\tcoal$, $\thit$, and $\tcov$, \autoref{thm:spectral} improves the
dependency on $1/(1-\lambda_2)$ (or, equivalently $\tmix$), by almost a
square-root (we refer the reader to \autoref{thm:cooperimproved} and \autoref{thm:nonreg}\ifEA~of the full paper\fi~for further details). As a result of this improvement, we get a bound of $O(n/\Phi)$ on the hitting time which is the best known bound on the hitting time (and cover time) in terms of the conductance  and improves the bound of \cite[Corollary 6.2.1]{AF14} by a factor of $1/\Phi$.

We also derive a general lower bound on $\tmeet$ that combines the trivial
bound, $1/\| \pi \|_2^2$, with the minimum number of
collisions~(see~\autoref{thm:meetingtime}.$(iii)$\ifEA~of the full paper\fi). Although this bound does not directly yield the correct lower bound for binary trees, it forms the basis of a later analysis in \autoref{thm:treelower}\ifEA~of the full paper\fi.

Finally, we also provide asymptotically tight worst-case bounds on $\tmeet$ and $\tcoal$.  We show
that on any graph the coalescence time must be at least $\Omega(\log n)$ and is
no more than $O(n^3)$. For regular (and in particular vertex-transitive) graphs these bounds
become $\Omega(n)$ and $O(n^2)$ (See also~\autoref{lessertable} on
page~\pageref{lessertable}\ifEA~of the full paper\fi, which also contains an explanation why these bounds are asymptotically tight.) These two new upper bounds for general and regular
graphs complete the picture of worst-case bounds:

\begin{theorem}\label{thm:graphclasses}
The following hold for graphs of the stated kind.
\begin{enumerate}[(i)]
	\item For any graph $G$ we have $\tmeet \in [\Omega(1), O(n^3)]$ and $\tcoal \in [\Omega(\log n),O(n^3)]$. 
	\item For any regular graph $G$ we have $\tmeet, \tcoal \in [\Omega(n), O(n^2)]$.
\end{enumerate}
\end{theorem}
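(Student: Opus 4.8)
The plan is to treat the four inequalities separately; the two lower bounds and the upper bound on $\tmeet$ are short, while the two upper bounds on $\tcoal$ carry the weight. The bound $\tmeet=\Omega(1)$ is trivial, since two particles starting at distinct nodes need at least one step. For $\tcoal=\Omega(\log n)$, the key point is that laziness keeps the particle count from collapsing too fast: write $N_t$ for the number of occupied nodes at time $t$ (one particle per occupied node after coalescing); in step $t+1$ the particles that stay put — a $\operatorname{Bin}(N_t,\tfrac12)$ number of them — remain at pairwise distinct nodes, each of which is therefore occupied at time $t+1$, so conditionally on the history $\mathcal F_t$ the variable $N_{t+1}$ stochastically dominates $\operatorname{Bin}(N_t,\tfrac12)$. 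Iterating this thinning shows $N_t$ stochastically dominates $\operatorname{Bin}(n,2^{-t})$, hence $\Pr{N_t\ge 2}\ge\tfrac12$ for all $t\le\log_2 n-2$, and therefore $\E{\tcoal}=\sum_{t\ge 0}\Pr{N_t\ge 2}=\Omega(\log n)$. For a regular $G$ the stationary distribution is uniform, so $\tmeet\ge 1/\twonorm{\pi}^2=n$ and $\tcoal\ge\tmeet\ge n$.

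For the upper bound on $\tmeet$: we have $\tmeet=O(\thit)$ (as already noted), and $\thit=O(n^3)$ is classical — the commute-time identity gives $\thit(u,v)\le 2|E|\,R_{\mathrm{eff}}(u,v)\le n^2\cdot(n-1)$, using $R_{\mathrm{eff}}(u,v)\le\dist(u,v)\le n-1$. This settles the $\tmeet$ part of (i). For a $d$-regular graph a standard BFS-layer count gives $\dist(u,v)=O(n/d)$, hence $R_{\mathrm{eff}}(u,v)=O(n/d)$ and $\thit=2|E|\,R_{\mathrm{eff}}=dn\cdot O(n/d)=O(n^2)$; since $\Delta=d$, \thmref{hittingtime}(ii) then yields $\tcoal=O(\thit)=O(n^2)$, and $\tmeet\le\tcoal$, which is part (ii).

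The remaining bound $\tcoal=O(n^3)$ is the involved part: $\tcoal=O(\tmeet\log n)=O(n^3\log n)$ is too weak, and even \thmref{hittingtime}(i) leaves a $\log\log n$ factor. The plan is to run the coalescing process in phases according to the current number $k$ of surviving particles and to show that the phase with $k$ particles takes expected time only $O\!\left((n/k)^3\right)$; summing over $k=n,\tfrac n2,\dots,2$ then gives $\sum_k(n/k)^3=O(n^3)$, the last few phases ($k=O(1)$) providing the dominant $\Theta(n^3)$ term. To bound a phase: if there are $k$ particles, two of them must lie in a common vertex set that confines a random walk for time only $O\!\left((n/k)^3\right)$, because a vertex set confining the walk for time $\ge\tau$ must span $\Omega(\tau^{1/3})$ vertices (the worst-case hitting time of any $v$-vertex graph is $O(v^3)$, attained by lollipop/barbell shapes), so at most $O(n/\tau^{1/3})$ pairwise-disjoint such sets fit into $n$ vertices; once $k$ exceeds this, two particles share a fast region and coalesce within it. The duality with the voter model is used to pass from expectations to the full distribution of phase lengths: $\Pr{\tcoal>t}$ is the probability that the time-$t$ voter configuration is non-constant, which is at most $\sum_{(u,v)\in E}\Pr{M_{u,v}>t}$ for meeting times $M_{u,v}$ of independent walks. (For regular $G$, the same scheme with the sharper estimate "$\thit=O(v^2)$ on $v$-vertex regions" re-proves $\tcoal=O(n^2)$.)

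I expect the crux to be exactly this trap-counting step: one must rule out the configuration in which, across many phases, the survivors sit in many mutually bottlenecked regions each requiring near-maximal escape time, and this needs a clean trade-off between the number of disjoint "$\tau$-slow" vertex sets and their total vertex count — one that remains valid for path-like traps (as in the barbell), not only clique-with-one-exit traps, which is why the relevant exponent is $\tfrac13$ rather than $\tfrac12$. Equivalently, the naive union bound over all $\binom n2$ particle pairs — precisely the source of the spurious $\log n$ in the crude argument — must be replaced by this structural estimate, and setting that up carefully is where most of the difficulty lies.
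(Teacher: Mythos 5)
Your lower bounds are sound and essentially coincide with the paper's. The $\Omega(\log n)$ argument via binomial thinning of the stayers is the same idea as \autoref{lem:lognlower} (the paper tracks the per-node laziness indicators $Z_{u,t}$ and argues two walks never moved; your $N_{t+1}\stochleq$ dominating $\operatorname{Bin}(N_t,1/2)$ is the same phenomenon repackaged). The $\Omega(n)$ regular lower bound via $1/\|\pi\|_2^2$ matches \autoref{thm:meetingtime}.(iii). The chain $\tmeet = O(\thit) = O(n|E|) = O(n^3)$ and the regular-graph chain $\tcoal = O(\thit) = O(n^2)$ via \autoref{thm:hittingtime}.(ii) are both exactly what the paper does (the specific route to $\thit=O(n^2)$ for regular graphs is a standard fact; the paper just cites it).

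The gap is in the general-graph bound $\tcoal = O(n^3)$, and it is not a minor one. Your phase decomposition posits that when $k$ walks remain, some pair coalesces within $O((n/k)^3)$ steps because two of them "share a fast region," and you support this by a packing argument: a subgraph that "confines" a walk for $\tau$ steps must have $\Omega(\tau^{1/3})$ vertices, so only $O(n/\tau^{1/3})$ disjoint such traps exist. The difficulty you already flag is real, but it is deeper than a counting technicality. The walks are not confined to any subgraph; they perform independent walks on all of $G$. A particle that happens to sit inside a small high-conductance pocket at the start of a phase may leave it long before meeting any other particle, and the partner particle in the same pocket may leave before they meet each other. There is no mechanism forcing two particles that start "near" one another in a fast region to stay near, let alone to collide, within the time you want to allocate to that phase. (The barbell itself already illustrates this: while one walk is climbing the handle, the other can freely diffuse out of whatever pocket you assigned it to.) Making "two particles in the same region meet quickly" rigorous is roughly as hard as the reduction lemmas the paper proves in \autoref{sec:hittingtime}, because one has to control the interaction between an unexposed walk and the trajectory of another walk over many short windows; the paper's balls-into-bins machinery (\autoref{lem:structure}, \autoref{lem:stocknockout}) exists precisely because a naive pigeonhole over regions does not close.

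Fortunately the theorem does not need the phase argument at all. You cite \autoref{thm:hittingtime}.(i) and note the residual $\log\log n$ factor, but part (ii) of the same theorem gives
\[
  \tcoal \;=\; O\!\left( \thit + \tmeet \cdot \log(\Delta/d) \right),
\]
which, combined with $\tmeet \le 4\thit$ (\autoref{pro:relatingmeetandhit}), $\thit = O(n\,|E|)$, and $\Delta/d \le n^2/(2|E|)$, yields
\[
  \tcoal \;=\; O\!\Bigl( n\,|E| \cdot \bigl(1 + \log(n^2/|E|)\bigr) \Bigr) \;=\; O(n^3),
\]
since $x\mapsto (1+\log x)/x$ is bounded on $[2,\infty)$. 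This is exactly the paper's proof of \autoref{thm:worstcasecoal}: the $\log(\Delta/d)$ correction from the final clean-up phase trades off against $|E|$ in $\thit=O(n|E|)$, so the product never exceeds $n^3$. No per-phase estimate is needed, and the worst case (barbell) is recovered at $|E|=\Theta(n^2)$ where the log vanishes.
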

The proof of \autoref{thm:graphclasses} appears in \autoref{sec:worst-case-bounds}\ifEA~of the full paper\fi. \medskip

\subsubsection*{Summary of Technical Contributions} 

Our work also makes several technical contributions, which might be of interest for future research on coalescing walks and other stochastic processes; these are explained in greater detail in \autoref{sec:intro:proofideas}. Below we give a very brief summary.
\begin{itemize}
	\item {\bf Conditional Expectation Approach}. Most of our results make use of the conditional expectation approach given in \eqref{eq:central}, a very simple yet extremely powerful tool, which to the best of our knowledge has not been used in the context of meeting and coalescing times before.
	\item {\bf Division of Particles into two Groups}. One basic ingredient in our proof is a domination result that allows us to divide random walks into a group of ``destroyers'' ($\mathcal{G}_1$), which are particles that cannot be eliminated, and a group of remaining particles ($\mathcal{G}_2$), which can be eliminated by any other random walk. This domination result might be helpful to analyze other stochastic processes involving different types of particles, e.g.~\cite{CFR09}.
	\item {\bf New Concentration Inequalities}. We derive a new concentration inequality for random walks on graphs in \autoref{sec:conc}\ifEA~of the full paper\fi. Unlike previous approaches which are based on the mixing time (or the closely related spectral gap), our new inequality depends only on the hitting time and improves on the existing bounds when the mixing time is close to the hitting time. These tighter inequalities are required to derive worst-case upper bounds on the colaescence time.
\end{itemize}

\subsection{Proof Ideas and Technical Contributions}\label{sec:technical}
\label{sec:intro:proofideas}

When dealing with processes involving concurrent random walks, a significant
challenge is to understand the behavior of ``short'' random walks.
This challenge appears in several settings, \eg in the context of cover time
of multiple random walks~\cite{Alon11,ER09}, where \citet[Section 6]{ER09}
highlight the difficulty in analyzing the hitting time distribution before its
expectation. In the context of concentration inequalities for Markov chains,
\citet[p.~863]{Lez89} points out the requirement to spend at least mixing time
steps before taking any samples.  Related to that, in property testing, dealing
with graphs that are far from expanders has been mentioned as one of the major
challenges to test the expansion of the graph by~\citet{CS10}.

In our setting, we also face these generic problems and devise different
methods to get a handle on the meeting time distribution before its
expectation. Despite our focus being on coalescing and meeting times, several of
our approaches can be leveraged to derive new bounds on other random walk
quantities such as hitting times or cover times (see~\autoref{sec:meeting}\ifEA~of the full paper\fi).

\subsection*{Bounds on $\tcoal$ in terms of $\tmix$ and $\tmeet$}

The key ingredient in the proof of \autoref{thm:mixtradeoff}, where we express
$\tcoal$ as a tradeoff between $\tmeet$ and $\tmix$ is a better understanding of meeting events prior to the meeting time. More precisely, we derive a tight
bound on the probability $p_\ell$ that two random walks meet before $\ell$
time-steps, for $\ell$ in the range $[\tmix, \tmeet]$. Arguing about meeting
probabilities of walks that are much shorter than $\tmeet$ allows us to
understand the rate at which the number of \emph{alive} 
random walks is decreasing.

Optimistically, one may hope that starting with $k$ random walks, as there are
${k \choose 2}$ possible meeting events, roughly ${k \choose 2} \cdot p_\ell$
meetings may have occurred after $\ell$ time-steps. However, the
non-independence of these events turns out to be a serious issue and we require
a significantly more sophisticated approach to account for the dependencies.
We divide the $k$ random walks into disjoint groups $\mathcal{G}_1$ and $\mathcal{G}_2$ (with
$|\mathcal{G}_1|$ usually being much smaller than $|\mathcal{G}_2|$) and walks of $\mathcal{G}_1$ can't be eliminated.  The domination
of the real process by the group-restricted one is established by introducing a formal 
concept called $\newprocess$ at the beginning of 
\autoref{sec:process}\ifEA~of the full paper\fi.
In this  stochastic process, we can expose the random walks
of $\mathcal{G}_1$ first and consider meetings with random walks in $\mathcal{G}_2$
(for an illustration, see \autoref{fig:selector} on page~\pageref{fig:selector}\ifEA~of the full paper\fi). 
Conditioning
on a specific exposed walk in $\mathcal{G}_1$, the events of the different walks in $\mathcal{G}_2$
meeting this exposed walk are indeed independent. 
In fact, we will also use the symmetric case where the roles of $\mathcal{G}_1$ and $\mathcal{G}_2$ are switched. 
Thus, the problem then
reduces to calculating the probability of a random walk in $\mathcal{G}_2$ having a `good
trajectory', \ie one which many random walks in $\mathcal{G}_1$ would meet with large
enough probability.

Surprisingly, it suffices to divide trajectories into only two categories\ifEA\else~(\autoref{lem:classes2})\fi. Although, one may expect that a more fine-grained
classification of trajectories would result in better bounds, this turns out not
to be the case. In fact, the bound that we derive on the coalescing time in
\autoref{thm:mixtradeoff} is tight, and this is precisely due to the
tightness of~\autoref{lem:classes2}\ifEA~of the full paper\fi. The tightness is established by the
following construction (cf.~\autoref{fig:holygraph2}).  The graph is designed
such that the vast majority of meetings (between any two random walks) occur in
a relatively small part of the graph ($G_2$ in \autoref{fig:holygraph2}). On
average, it takes a considerable number of time-steps before random walks
actually get to this part of the graph. What this implies is that for
relatively short trajectories (of length significantly smaller than $\tmeet$),
it is quite likely that other random walks will not meet them\ifEA\else~(cf.~\autoref{lem:classes2})\fi.
There is a bit of a dichotomy here, once a walk reaches $G_2$ it is likely that
many random walks will meet it; however, a random walk not reaching $G_2$ is
unlikely to be met by any other random walk.

\begin{figure}

	\centering
\includegraphics[page=2]{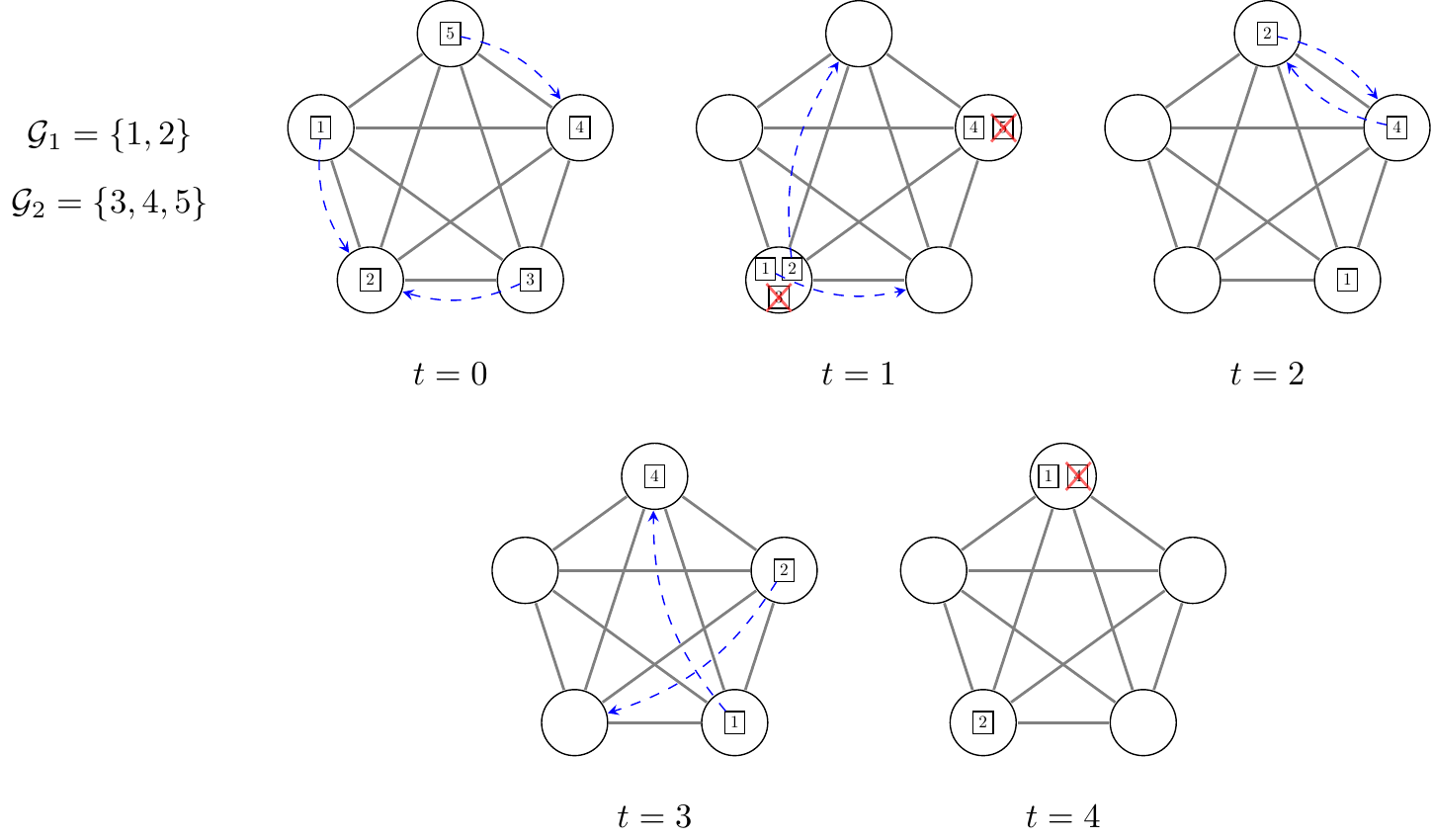}
\caption{The graph\ifEA~that exhibits\else~described in \autoref{sec:lowerbound}
with\fi~$\tcoal = \Omega(\tmeet + \sqrt{\tmeet/\tmix}\cdot \log n \cdot \tmix)$. 
}
\label{fig:holygraph2}
\end{figure}

 Equipped with \autoref{thm:mixtradeoff}, we can
  bound $\tcoal=\Theta(\tmeet)$ for all graphs satisfying $\tmeet /\tmix \geq \log^2 n$.
 Therefore, the problem of bounding $\tcoal$ reduces to bounding $\tmeet$.

For some of the other results including \autoref{thm:lowerboundgraph} and
\autoref{thm:hittingtime}, we will need a more fine-grained approach to derive
lower (or upper bounds) on the probability that two walks meet during a certain
number of steps, which may or may not be smaller than the mixing time or meeting
time. The starting point is the following simple observation. If we have two
random walks $(X_t)_{t \geq 0}$ and $(Y_t)_{t \geq 0}$, and count the number of
collisions $Z:= \sum_{t=0}^{\tau-1} \mathbf{1}_{X_t=Y_t}$ before time-step
$\tau \in \naturals$, %
then 
\begin{align}\label{eq:central}
	\Pr{ Z \geq 1} &= \frac{ \E{Z}    }{ \E{Z \, \mid \, Z \geq 1} }.
\end{align}
If we further assume that both walks start from the stationary distribution, then we have
\begin{align*}
  \Pr{ Z \geq 1} &= \frac{  \tau \cdot \| \pi \|_2^2  }{ \E{Z \, \mid \, Z \geq 1} }.
\end{align*}
To the best of our knowledge, this is the first application of this formula to meeting (and coalescence) times. However, we should mention that variants of this formula have been used by Cooper and
Frieze in several works (\eg~\cite{CF05}) to derive accurate bounds on the hitting (and cover time) on various classes of random graphs, and in \citet{BPS12} to bound the collisions of random walks on infinite graphs. Using~\eqref{eq:central}, we are able to obtain several improvements to
existing bounds on the meeting time, and as a consequence for coalescing time.
We believe that our work further highlights the power of this basic identity. 

The crux of \eqref{eq:central} is that in order to lower (or upper) bound the
probability that the two walks meet, we need to derive a corresponding bound on
$\E{Z \, \mid \, Z \geq 1}$, \ie the number of collisions conditioning on the
occurrence of at least one collision. Our results employ various tools to get a handle on this quantity, but here we
mention one that is quite intuitive: 
\begin{align}
  \E{Z \, \mid \, Z \geq 1} \leq \max_{u \in V} \sum_{t=0}^{\tau-1} \sum_{v \in V} \left( p_{u,v}^t \right)^2. \label{eq:simp}
\end{align}
The inner summand $\sum_{v \in V} (p^t_{u, v})^2$ is the probability that two
walks starting from the same vertex $u$ will meet after a further $t$ steps. Thus,
summing over  $t$ and conditioning on the first meeting happening
(\ie the condition $Z \geq 1$) at some vertex $u$ before time-step $\tau$ yields
the bound in~\eqref{eq:simp}. Despite the seemingly crude nature of this
bound, it can be used to derive new results for $\thit, \tmeet$ and $\tcoal$
that significantly improve over the state-of-the-art for regular graphs (see
\autoref{sec:meeting}\ifEA~of the full paper\fi, or the last paragraph in this section for a summary).

\subsubsection*{Bounds on $\tcoal$ in terms of $\thit$}

The derivation of our bounds on $\tcoal$ in terms of $\thit$ (\autoref{thm:hittingtime}) are based on two general reduction results, that might be useful in other applications:
\begin{theorem}[Reduction Results]\label{thm:lastintro}
The following results hold for any graph $G$:
\begin{enumerate}
\item The coalescence process reduces the number of walks from $n$ to $O(\log^3 n)$ in $O(\thit)$ steps with probability at least $1-n^{-1}$. (see \autoref{thm:mostgeneral}\ifEA~of the full paper\fi)
\item The coalescence process reduces the number walks from $\log^4 n$ to $(\Delta/d)^{O(1)}$ in $O(\thit)$ steps in expectation, where $\Delta$ is the maximum degree and $d$ is the average degree (see \autoref{thm:keylemma}\ifEA~of the full paper\fi)
\end{enumerate}
\end{theorem}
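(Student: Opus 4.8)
Both statements are ``one-window'' estimates: the plan is to run the coalescing process from one walk on every vertex for a window of $T=\Theta(\thit)$ steps and bound the \emph{expected} number of survivors at the end (and, for the first statement, also the tail). By monotonicity of coalescence in the starting configuration this is the worst case, and since $\tmeet=O(\thit)$, Markov's inequality applied to the meeting time shows that after $T$ steps any prescribed pair of walks has met with probability at least $\tfrac12$, regardless of their starting vertices. I would first let the walks run for $\tmix=O(\thit)$ steps so that every survivor is within total-variation distance $e^{-1}$ of $\pi$; this only decreases the count and keeps the window of size $\Theta(\thit)$.

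For the first reduction, the engine is the $\newprocess$ with the walks split into a set $\mathcal G_1$ of $m$ immortal ``destroyers'' and the remaining set $\mathcal G_2$. The domination statement for this process gives that the number of walks alive after $T$ steps is at most $m+|\{Y\in\mathcal G_2:\ Y\text{ meets no }\mathcal G_1\text{-walk within }T\}|$. Exposing the trajectory $\mathbf y$ of a prey walk $Y\in\mathcal G_2$ first, the events ``destroyer $j$ meets $\mathbf y$'' are independent over $j$, so with $q(\mathbf y):=\Pr{\text{one (stationary) destroyer meets }\mathbf y\text{ within }T\text{ steps}}$ we get $\Pr{Y\text{ survives}}=\E{(1-q(\mathbf y))^{m}}$. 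This is exactly where only \emph{two} classes of trajectories are needed: call $\mathbf y$ \emph{central} if $q(\mathbf y)\ge\theta$ and \emph{peripheral} otherwise, with $\theta=\Theta(1/\log^2 n)$. A central $\mathbf y$ contributes at most $(1-\theta)^m$, which is at most $n^{-2}$ once $m=\Theta(\theta^{-1}\log n)=\Theta(\log^3 n)$, while a peripheral $\mathbf y$ contributes at most $1$. Hence $\E{\#\text{alive after }T}\le m+n\cdot\Pr{\mathbf y\text{ peripheral}}+o(1)$, and to obtain $O(\log^3 n)$ it suffices to establish the \emph{rarity estimate} $\Pr{q(\mathbf y)<\theta}=O(\log^3 n/n)$. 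The high-probability refinement $1-n^{-1}$ then follows from a Chernoff bound applied to $\sum_{Y\in\mathcal G_2}\mathbf 1_{\{Y\text{ survives}\}}$, whose terms become independent once the (few) $\mathcal G_1$-trajectories are exposed.

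The rarity estimate is the heart of the matter, and I would prove it with the conditional-expectation identity \eqref{eq:central}. For a stationary walk $X$ and $Z:=\sum_{t<T}\mathbf 1_{\{X_t=y_t\}}$ we have $q(\mathbf y)=\Pr{Z\ge1}=\E{Z\mid\mathbf y}/\E{Z\mid\mathbf y,\,Z\ge1}$ with $\E{Z\mid\mathbf y}=\sum_{t<T}\pi(y_t)$, and by the deterministic-trajectory analogue of \eqref{eq:simp}, $\E{Z\mid\mathbf y,\,Z\ge1}\le M:=\max_{u}\sum_{t<T}\max_v p^t_{u,v}$, so $q(\mathbf y)\ge(\sum_{t<T}\pi(y_t))/M$. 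Thus a peripheral $\mathbf y$ has $\sum_{t<T}\pi(y_t)<\theta M$, whereas $\E{\sum_{t<T}\pi(y_t)}=T\| \pi \|_2^2=\Omega(1)$, so peripherality is a lower-tail deviation of the additive functional $t\mapsto\pi(y_t)$ of the stationary walk $\mathbf y$, to be killed by a concentration inequality. Two things must go right, and together they are the main obstacle. First, the naive mixing-time bound $M=O(\tmix)$ is hopeless when $\tmix$ is close to $\thit$ (think of paths or cycles); one genuinely needs the sharper, hitting-time-flavoured bound $M\le\E{Z\mid Z\ge1}=O\!\big(T\| \pi \|_2^2\,\operatorname{polylog} n\big)$ — precisely the new concentration inequality announced among our technical contributions — so that $\theta M$ sits a factor $\operatorname{polylog} n$ \emph{below} the mean. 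Second, the resulting lower-tail bound for $\sum_{t<T}\pi(y_t)$ must have failure probability $\le\operatorname{polylog}(n)/n$, which again requires a deviation inequality that behaves well in the low-mixing regime rather than the usual spectral-gap one.

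For the second reduction we start from $k\le\log^4 n$ walks and no longer play a single window; instead I would iterate the destroyer/prey scheme over phases $i=1,2,\dots$, with $k_i\approx\log^4 n/2^i$ walks, using $\tfrac12 k_i$ of them as destroyers and a phase length $s_i=\Theta\!\big(\| \pi \|_2^{-2}/k_i\big)\cdot(\Delta/d)^{O(1)}=O(\tmeet/k_i)\cdot(\Delta/d)^{O(1)}$ chosen so that a single destroyer meets a given trajectory within $s_i$ steps with probability $\Omega(1/k_i)$; then each phase halves the count in expectation. The role of the degree ratio is that $\pi_{\max}/\pi_{\min}\le(\Delta/d)^{O(1)}$, so along \emph{any} trajectory the relevant local-time sums are within $(\Delta/d)^{O(1)}$ of their averages and no separate ``peripheral'' case arises — which is also why the recursion is stopped at $k=(\Delta/d)^{O(1)}$, below which the per-destroyer meeting probability can no longer be guaranteed to be $\Omega(1/k)$. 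Since $\sum_i s_i$ is a geometric series dominated by its last term, the total length is $O(\| \pi \|_2^{-2})\cdot(\Delta/d)^{O(1)}=O(\tmeet)=O(\thit)$ (for almost-regular $G$ the $(\Delta/d)^{O(1)}$ factors are $O(1)$), and the reduction holds only ``in expectation'' because each phase contracts the count by a constant factor merely in expectation. The difficulty mirrors the first reduction: at the short time scale $s_i\ll\tmeet$ one must show that $\E{Z\mid Z\ge1}$ — the early spreading of a walk — is small enough (this is where both the $(\Delta/d)$-powers and $\thit$ re-enter) for the claimed per-phase meeting probability $\Omega(1/k_i)$ to hold.
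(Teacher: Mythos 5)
Your sketch of the first reduction captures the right architecture (immortal process, destroyer/prey split, exposing the prey trajectory, and the identity~\eqref{eq:central}), and the central/peripheral dichotomy at threshold $\theta=\Theta(1/\log^2 n)$ is indeed essentially the paper's ``good trajectory'' condition. But two things in your Part 1 argument are off. First, to obtain the $1-n^{-1}$ probability via the route the paper takes you need $\Pr{\mathbf y\text{ peripheral}}\le n^{-2}$ per prey so that a union bound over $|\mathcal G_2|\le n$ walks applies; your rarity target $O(\log^3 n/n)$ only controls the \emph{expected} number of survivors, and the Chernoff fix you propose does not close the gap. After exposing the $\mathcal G_1$ trajectories the prey events are conditionally independent, so Chernoff concentrates $X=\sum_{Y}\mathbf 1_{\{Y\text{ survives}\}}$ around $\E{X\mid\mathcal G_1}$ --- but $\E{X\mid\mathcal G_1}$ is itself a random variable depending on how ``useful'' the destroyer trajectories happen to be, and nothing you have said rules out that it exceeds $\log^3 n$ with probability much larger than $n^{-1}$. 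Second, your bound $\E{Z\mid\mathbf y,Z\ge1}\le M:=\max_u\sum_{t<T}\max_v p^t_{u,v}$ is unusable on non-regular graphs: $\max_v p^t_{u,v}$ is not controlled by $p^t_{u,u}$ without a degree restriction, and this is exactly why the paper first partitions $V$ into $\Theta(\log n)$ buckets of comparable degree, finds a bucket receiving $\Omega(\thit/\log n)$ visits, and only then applies the concentration inequality (Lemma~\ref{lem:deterministicpath}) on that bucket. Your pathwise target $M=O\big(T\|\pi\|_2^2\operatorname{polylog} n\big)$ also conflates the unconditional and conditional expectations of $Z$ and is not what the concentration machinery delivers; the quantity the paper actually controls is the bucket-restricted $\tilde Z_j(s)$ for all $s$, uniformly along the exposed trajectory.

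For the second reduction the gap is more structural. You treat it as an iterated, rescaled version of Part 1 with phase lengths $s_i=O(\tmeet/k_i)\cdot(\Delta/d)^{O(1)}$, but the paper uses phases of length $\Theta(\thit/k_i^{1/100})$ --- which for $k_i$ in the relevant range is \emph{much longer} than $\tmeet/k_i$ --- and the harder problem is not the sum of the phase lengths but what happens inside a single phase when $k_i$ is small. At that point the walks have not mixed, their distributions can be far from stationary, and a visit by walk $i$ to some vertex $u$ at time $t$ is useless unless there are sufficiently many other walks $j$ that could plausibly be at $u$ at the very same time $t$. This ``when'' (not just ``how often'') question is precisely what the paper's notions of surprising visits $t(i,v)$, good/bad time-steps, and the deterministic balls-into-bins configuration are built to handle, culminating in Lemmas~\ref{lem:structure}, \ref{lem:stocknockout}, and~\ref{lem:thelastone}. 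Your sketch does not mention any temporal alignment between the walks, and the remark that one ``must show $\E{Z\mid Z\ge1}$ is small enough'' addresses the spreading rate but not the scheduling issue, which is the genuinely new and difficult part of the proof. Without it, even if each destroyer and prey spend plenty of time on the same small set of vertices, you have not ruled out the possibility that they systematically miss each other.
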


A basic ingredient are new concentration inequalities, which are derived in
\autoref{sec:conc}\ifEA~of the full paper\fi. Our concentration inequalities yield sufficiently strong
bounds for upper tails of returns (or other, possibly more complex random
variables) by a random walk of length $\thit$, while most of the existing
bounds (\eg~\cite{CLLM12,Lez89}) require that the expectation of the random
variable is at least as large as $\tmix$. While $\tmix \leq \thit$ in general, the challenging case in our analysis is when $\tmix \approx \thit$ and in this cases our concentration inequalities provide stronger upper tails than the existing ones.

Equipped with these concentration results, the proof of~\ifEA\autoref{thm:lastintro} (Part 1)\else\autoref{thm:mostgeneral}\fi~is surprisingly simple and rests again on
\eq{central}. First, by a straightforward bucketing argument on the degree
distribution, we show that with high probability, we can find for each random
walk $(X_t)_{t \geq 0}$ with label $i$ a set $S$ (depending on the trajectory
of $X_t$), so that with high probability, (i) each vertex in $S$ is visited
frequently during $O(\thit)$ steps, and (ii) each vertex in $S$ has the same
degree up to constant factors. Conditioning on this, it follows that a second random
walk $(Y_t)_{t \geq 0}$ will have sufficient collisions with $(X_t)_{t \geq 0}$
in expectation, \ie $\E{Z}$ is large enough.  To bound $\E{Z \, \mid \, Z \geq
1}$, we use the concentration inequalities to establish that with high
probability, the trajectory $(X_t)_{t \geq 0}$ will be good in the sense that
$\E{Z \, \mid \, (x_0,x_1,\ldots), Z \geq 1}$ is not too large.  Combining
these bounds yields $\Pr{Z \geq 1} = \Omega(1/\log^3 n)$, and a straightforward
division into groups $\mathcal{G}_1$ and $\mathcal{G}_2$ of sizes $\Theta(\log^3 n)$ and $n -
|\mathcal{G}_1|$ shows that all random walks in $\mathcal{G}_2$ can be eliminated in $O(\thit)$
steps.

The proof of the second reduction result (\ifEA\autoref{thm:lastintro} (Part 2)\else\autoref{thm:keylemma}\fi) is more
involved, although it again revolves around \eq{central}. The issue is that we
can no longer repeat the simple bucketing argument\ifEA~described above\else~from \autoref{thm:mostgeneral}\fi~about the degree distribution, since the number of
buckets may vastly exceed the number of walks. Furthermore, we may no longer
obtain ``w.h.p.''-bounds on the probability for certain good events. For all
these reasons, a refined approach is needed.

Our analysis allocates small phases of length $O(\thit/\kappa)$ in order to
halve the number of random walks, where $k=\kappa^{c}$ is the number
of walks at the beginning of the phase, for some suitably large constant $c$. The
first step is to show that starting from any vertex, there exists a large set
of vertices, so that each vertex is visited the ``right'' amount of time, but
also that it was not too unexpected to visit that vertex.  The latter condition
is quite subtle, but it allows us to arrange a proper scheduling of the walks
to show that, regardless of which vertices the random walk $i$ decides to visit
in that set, there are enough walks that are able to reach these vertices by
then. In other words, it rules out the possibility that, despite two random
walks visiting the same set of vertices, they never collide (for an
illustration, see \autoref{fig:ballsandbins} on page
\pageref{fig:ballsandbins}\ifEA~of the full paper\fi).  Using our concentration bounds with a careful
choice of the slackness parameters in terms of $\kappa$, the above approach can
eventually be shown to reduce the number of random walks  $k$ by a constant
fraction within $O(\thit/\kappa)$ steps. Repeating this iteratively yields the
bound $O(\thit)$. 

\subsubsection*{Bounds on $\thit$ and Worst-Case Bounds} 

With the two reduction results, \autoref{thm:hittingtime} follows immediately.
Furthermore, the aforementioned results can be also used to derive worst-case
upper and lower bounds on meeting and coalescing time on general and regular
graphs that are tight up to constant factors. Some of these were known, or
follow directly from existing results, the others are novel to the best of our
knowledge.
\tnote{What is the connection between the two paragraphs. Isn't the next sentence just making the previous sentence more precise??}

We proceed by establishing that $\tcoal = O(n^3)$ on all graphs. The proof of
$\tcoal=O(n^3)$ (\autoref{thm:graphclasses}) follows by first applying both
reductions (\ifEA\autoref{thm:lastintro} (Parts 1 \& 2)\else\autoref{thm:mostgeneral} and \autoref{thm:keylemma}\fi) to reduce the
number of walks from $n$ to $(\Delta/d)^{O(1)}\leq (n^2/|E|)^{O(1)}$  in
$O(\thit)$.  We have, by \autoref{pro:relatingmeetandhit}\ifEA~(see full paper)\fi, $\tmeet \leq 4 \thit
= O(n \cdot |E|)$, where this last bound follows from \cite{AKLLR79}.

Finally, combining the bound $\tmeet  = O(n \cdot |E|)$ together
with $\tcoal(S_0) =  O(\tmeet \cdot \log(|S_0|))$ (\autoref{lem:beer}\ifEA~of the full paper\fi) for any
set of start vertices $S_0$, yields that after additional
\[
   O(\tmeet \cdot \log(|S_0|)) = O(n \cdot |E| \cdot \log( n^2/|E|)) = O(n^3)
\]
steps the coalescing terminates. 
The fact that this is tight can be easily
verified by considering the Barbell graph.%
\footnote{This $n$-vertex graph is constructed by taking two cliques of size
$n/4$ each, and connecting them through a path of length $n/2$.} 

For regular graphs, the same argument as before shows that $\tcoal = O(n^2)$,
and this is matched by the cycle, for instance. The proofs of the other results
are straightforward, and we refer the reader to \autoref{sec:trivialproof}\ifEA~of the full paper\fi.

\subsubsection*{Bounds on $\tmeet$ and Other Results}

In \autoref{sec:meeting}\ifEA~of the full paper\fi, we derive several bounds on $\tmeet$.  These bounds
are derived more directly by \eqref{eq:central} and/or (\ref{eq:simp}), and
involve other quantities such as $\| \pi \|_2^2$ or the eigenvalue gap
$1-\lambda_2$.  One important technical contribution is to combine routine
spectral methods involving the spectral representation and fundamental matrices
that have been used in previous works, \eg Cooper et al.~\cite{CEOR13} with
some short-time bounds on the $t$-th step probabilities. This allows us to
improve several bounds, not only on $\tmeet$ and $\tcoal$ but also $\thit$ and
$\tcov$, by significantly reducing the dependency on the spectral gap or mixing
time---by almost a square root factor. As a corollary, we also derive a new
bound on the cover time for regular graphs that considerably improves over the best known
bound by Broder and Karlin~\cite{BK89} from 1989.

\subsubsection*{Concrete Topologies}

Finally, in \autoref{sec:special}\ifEA~of the full paper\fi, we apply the derived upper and lower bounds on
$\tmeet$ and $\tcoal$ on various fundamental topologies including grids, expanders and hypercubes. In most cases, these
results follow immediately from the general bounds by plugging in corresponding
values for $\|\pi\|_2^2$, $\thit$ or $\tmix$. One exception is the binary tree,
for which it seems surprisingly non-trivial to derive a lower bound of
$\tmeet=\Omega(n \log n)$. Here again we use a refinement of
(\ref{eq:central}) that restricts the vertices to leaf-nodes $u$, for which
$\sum_{t=1}^{\tmix} (\sum_{v \in V} p_{u,v}^t )^2 = \Omega(\log n)$. The matching upper bound $\tmeet=O(n \log n)$ follows from $\tcoal = O(\thit)$ for almost-regular graphs (\autoref{thm:hittingtime}).

Of particular interest might be the analysis of ``real-world'' graph models\ifEA\else~given in~\autoref{sec:realworld}\fi. There we show how to utilize our bounds from earlier sections to establish $\tcoal = \Theta(\tmeet)$ on two random graph models, leading to bounds on $\tcoal$ that are sublinear in the number of vertices.

\subsection{Discussion and Future Work}	

In this work we derived several novel bounds on $\tcoal$. Our first main result
implies that a gap of just $\Omega(\log^2 n)$ between $\tmix$ and $\tmeet$ is
sufficient to have $\tcoal=\Theta(\tmeet)$. We also proved that this result is
essentially tight. Further, we derived several new bounds on $\tcoal$ based on
$\thit$. For almost-regular-graphs, our new result implies the following hierarchy for the discrete-time setting,
\[
  \tmeet \leq \tcoal = O(\thit),
\] 
which refines the already known result $\tmeet = O(\thit)$. Finally, we also determined tight worst-case lower and upper bound for $\tcoal$.

For future work, an obvious problem is to extend the $\tcoal=O(\thit)$ result
to all graphs (so far, we only know $\tcoal=O(\thit \cdot \log \log n)$). Even
more ambitious would be to try to prove that the continuous-time variant and the
discrete-time process are (asymptotically) equivalent, as this would immediately resolve the $\tcoal=O(\thit)$
problem. A different direction may be to further explore lower bounds on
$\tmeet$\ifEA.\else; in this work we only derived one lower bound on $\tmeet$ in
\autoref{thm:meetingtime}.\fi

\ifEA
\else

\section{Notation and Preliminaries}\label{sec:notation}

Throughout the paper, let $G = (V, E)$ denote an undirected, connected graph
with $|V| = n$ and $|E| = m$. For a node $u \in V$, $\deg(u)$ denotes the
degree of $u$ and  $N(u)=\{ v \colon (u,v) \in E\}$ the \emph{neighborhood} of $u$. By $\Delta$, $\delta$ and $d=\frac{1}{n} \sum_{u \in V} \deg(u)$, we denote the maximum, minimum and average degree, respectively. We say $G$ is \irrgap if $\Delta/\delta = \Gamma$. 

Unless stated otherwise, all random walks are assumed to be discrete-time
(indexed by natural numbers) and lazy, \ie if $P$ denotes the $n \times n$
transition matrix of the random walk, $p_{u,u} = \ifrac{1}{2}$, $p_{u,v} = \ifrac{1}{(2\deg(u))}$ for any edge $(u,v) \in E$ and $p_{u,v}= 0$ otherwise.  We define $p_{u,v}^t$
to be the probability that a random walk starting at $u \in V $ is at node $v
\in V$ at time $t\in \naturals$.  Furthermore, let $p_{u,\cdot}^t$ be the
probability distribution of the random walk after $t$
time steps starting at $u$.
By $\pi$ we denote the \emph{stationary distribution}, which satisfies $\pi(u)=\ifrac{\deg(u)}{(2m)}$ for all $u\in V$.

Let $d(t)=\max_u \tvdist{p_{u,\cdot}^t - \pi} $ and $\bar d(t)=\max_{u,v}
\tvdist{p_{u,\cdot}^t - p_{v,\cdot}^t} $, where $\tvdist{\cdot}$ denotes the
total variation distance. Following Aldous and Fill~\cite{AF14}, we define the
\emph{mixing time} to be $\tmix(\varepsilon) = \min \{ t\geq 0 : 	\bar d(t)
\leq\varepsilon \}$ and for convenience we will write $\tmix=\tmix(1/e)$.  We define separation from stationarity to be $s(t) = \min\{ \varepsilon  \colon p_{u,v}^t \geq (1- \varepsilon) \pi(v) \mbox{ for all $u,v \in V$} \}$. Then $s(\cdot)$ is submultiplicative, so in particular, non-increasing~\cite{AF14}, and we can define the \emph{separation threshold time} $\tsep = \min \{ t\geq 0 : s(t) \leq e^{-1} \}$ and, by
\cite[Lemma 4.11]{AF14}, $\tsep \leq 4 \tmix$. 
We write $\hit(u,v)$ to denote the first time-step $t \geq 0$ at which a random walk starting at $u$ hits $v$. In particular, $\hit(u,u)=0$.
The \emph{hitting time} $\thit(u,v)=\E{\hit(u,v)}$ of any pair of nodes $u,v\in V$ is the expected time required for
a random walk starting at $u$ to hit $v$. Thus, $\thit(u,v)$ is the expectation of $\hit(u,v)$.
The hitting time of a graph $\thit=\max_{u,v}
\thit(u,v)$ is the maximum over all such pairs.

For $A \subseteq V$, we use
$\thit(u,A)$, to denote the expected time required for a random walk starting
to $u$ to hit some node in the set $A$.  Furthermore, we define
$\thit(\pi,u)=\sum_{v\in V}\thit(v,u)\cdot \pi(v)$.
Furthermore, we define $\tavghit= \sum_{u,v\in V} \pi(u)\cdot\pi(v)\cdot \thit(u,v)$.

Let $\tmeet(u,v)$ denote the expected time when two random walks starting at
$u$ and $v$ first arrive at the same node at the same time, and we write
$\tavgmeet$  for the expected meeting time of two random walks starting at two
independent samples from the stationary distribution. Finally, let $\tmeet =
\max_{u,v} \tmeet(u,v)$ denoted the worst-case expected meeting time.

We define the coalescence process as a stochastic process as follows: Let $S_0
\subseteq V  $ be the set of nodes for which there is initially one random walk on it, and for all $v\in S_{t}$ let 
\[ Y_v(t) =
\begin{cases}
	u \in N(v) & \mbox{w.p. $\frac{1}{2|N(v)|}$} \\
	v					& \mbox{w.p. $\frac{1}{2}$}	
\end{cases}
\]
The set of \emph{active} nodes in step $t+1$ is given by
$S_{t+1} = \{ Y_v(t) ~|~ v \in S_{t} \}$. 
The process satisfies the Markov property, \ie
\begin{equation}\label{eq:MarkovProp}
	\Pr{S_{t+1}~|~\mathcal{F}_t}=\Pr{S_{t+1}~|~S_{t}},
\end{equation}
where $\mathcal{F}_t$ is the filtration up to time $t$, which, informally speaking, is the history of all random decisions up to time $t$.
Finally, we define the \emph{time of coalescence} as $\coal(S_0)= \min \{t \geq 0~|
~ |S_t|=1\}$. 
%
Throughout this paper, the expression w.h.p. (\emph{with high probability})
means with probability at least $1 - n^{-\Omega(1)}$ and the expression w.c.p.
(\emph{with constant probability}) means with probability $c >0$ for some constant $c$.
We use $\log n$ for the natural logarithm.
\autoref{app:lazyrw} contains some known results about Markov Chains that
we frequently use in our proofs.

\section{Bounding $\tcoal$ for large $\tmeet/\tmix$} 
\label{sec:upperboundtcoal} 

In this section we prove~\autoref{thm:mixtradeoff}, one of our main results.
We refer the reader to \autoref{sec:technical} for a high-level description of
the proof ideas.

\subsection{Stochastic Process}\label{sec:process} 

In order to prove our first main result, it is helpful to consider a more
general stochastic process, $\pimortal$, called the $\newprocess$, involving
multiple independent random walks. In the $\newprocess$, whenever several
random walks arrive at the same node at the same time a subset of them (rather
than just one) may survive, while the remaining are merged with one of the
surviving walks. To identify the random walks, we assume that each walk has a
natural number (in $\naturals$) as an identifier. In order to define this
process formally, we introduce some additional notation and definitions; then
we state and prove some auxiliary lemmas.  A related  concept was introduced in
\cite[Section 3.4]{O12} under the name of ``allowed killings''.

\begin{figure}[!ht]

\begin{center}
\includegraphics[page=1]{tikzmasterpieces}
\end{center}

\caption{Illustration of the process $\pimortal$.}\label{fig:selector}
\end{figure}

As mentioned before, we assume that every random walk $r$ has a unique
identifier $\id(r) \in \naturals$.%
\GNOTE{T: I understand that it makes sense here not too restrict the labels to
$1,2,\ldots,k$, but it looks a bit inconsistent to what we are doing later.} 
We divide the $\id$s into two groups $\mathcal{G}_1$, the group of immortal walks and
$\mathcal{G}_2$ the group of the remaining (mortal) walks. Whenever two or more walks
collide at a node and at least of of these walks is in $\mathcal{G}_1$, then all walks with $\id$s in $\mathcal{G}_1$ survive, while all walks
with $\id$s in $\mathcal{G}_2$ are killed (merged with some walk with $\id$ in $\mathcal{G}_1$).
Furthermore, if all walks have $\id$s in $\mathcal{G}_2$, \ie there are no walks with
$\id$ in $\mathcal{G}_1$, then the walk with the minimum $\id$ among these walks
survives. The $\id$s along with the assignment of $\id$s to groups determine
which of the random walks that arrive at a given node at the same time survive. 

Formally, let $\pimortal$ denote the following process: 
\begin{enumerate}
	\item At time $0$, $S_0= \{  (u_r, \id(r)) \}$, where $u_r$ is the starting
		node of random walk $r$ and $\id(r)$ is its identifier.
	\item At time $t$, several random walks may arrive at the same node. The
		process $\pimortal$ allows some subset of them to survive, while the rest
		`coalesce' with one of the surviving walks. Formally, $S_{t + 1}$ is
		defined using $S_t$ as follows.  Define the (random) next-step position
		of the random walk with $\id$ $i \in \naturals$ which is on node $v \in
		V$ to be		
		\[ Y_{v,i}(t)  :=
			\begin{cases}
				u~ ~\text{ where $u \in N(v)$ }& \mbox{w.p. $\frac{1}{2|N(v)|}$} \\
				v~ ~				& \mbox{w.p. $\frac{1}{2}$}	,
			\end{cases}
		\]
		Let $R_v(t) := \{ ( Y_{v,i}(t),i) ~|~ (v,i) \in S_t  \}, v\in V$ be the set
		of next-step positions (before merging happens) for random walks that
		were at node $v$ at time $t$. Let 
		\[\hat{R}_v(t) := \{ (v, i) ~|~ \exists
		u \in V, (v, i) \in R_u(t) \}\]
		 be the random walks that have arrived at
		node $v$ at time-step $t + 1$, just before merging happens. Then, merging
		happens w.r.t. the $\id$s as follows: 
		\begin{enumerate}
			\item If there exists $i \in \mathcal{G}_1$ such that $(v, i) \in \hat{R}_v(t)$ (at least one walk with $\id$ in $\mathcal{G}_1$ arrives at $v$), then 
				\[ S_{v}(t+1) := \{ (v, j) ~|~ (v, j) \in \hat{R}_v(t), j \in \mathcal{G}_1 \} \]
			\item If there is no $i \in \mathcal{G}_1$, such that $(v, i) \in \hat{R}_v(t)$ and $\hat{R}_v(t) \neq \emptyset$ (no walk with $\id$ in $\mathcal{G}_1$ arrives at $v$, but at least one walk arrives at $v$), then
				\[ S_{v}(t+1) := \{ (v, j) \}, \]
				where $j = \min \{ i~|~ (v, i) \in \hat{R}_v(t) \}$. 
			\item Otherwise, $S_{v}(t+1) := \emptyset$, \ie no walk arrived at $v$.
		\end{enumerate}
Finally, let
		\[ S_{t + 1} := \bigcup_{v \in V}  S_{v}(t+1) . \]
\end{enumerate}

We now relate this more general process, $\pimortal$, to the coalescing process
defined in \autoref{sec:notation}.  Let $P$ be regarded as a special instance
of $\pimortal$ with $\mathcal{G}_1=\{ 1\}$.  In process $P$, only one of several walks
arriving at the same node survives and by convention the one having the
smallest $\id$ is chosen. Let $(S_t)_{t = 0}^{\infty}$
denote the stochastic process $P$. If we define $\bar{S}_t := \{ v ~|~ (v, i)
\in S_t \}$, then $(\bar{S}_t)_{t = 0}^{\infty}$ is a coalescence process as
defined in \autoref{sec:notation}. Moreover, $P$ represented by $(S_t)_{t =
0}^{\infty}$  is the coalescence process which additionally keeps track of the
$\id$s. Throughout this paper, we assume that every random walk of $S_0$ is on
a distinct node. 

In the following we show that the time it takes to reduce to $k$ random walks in the original process $P$ is majorized by the time it takes in $\pimortal$ to reduce to $k$ random walks.
While this might be intuitive, one needs to be very careful about the dependencies between the meetings of different random walks: For instance a random walk which is 
immortal in $\pimortal$ might eliminate many other random walks whereas the corresponding coupled random walk  in $P$ might be eliminated early and therefore cannot eliminate said random walks. \fnote{feel free to improve this}

\begin{proposition}\label{lem:majormotal}
	Consider the following two processes:
	\begin{enumerate}
		\item Process $P$ is the standard process of coalescing random walks,
			viewed as a special case of $\pimortal$ with $\mathcal{G}_1 = \{1 \}$ as
			described above.
		\item Process $\pimortal$ is the process defined above using groups $\mathcal{G}_1$
			and $\mathcal{G}_2$, where $1 \in \mathcal{G}_1$. 
	\end{enumerate}
	Let $T^k$, $\Timortal^k$ be the stopping times given by the condition that
	fewer than $k$ random walks remain for the two processes respectively.
	Assume both processes start with the same initial configuration, \ie the
	vertices occupied by walks in both processes are identical and  there is
	only one walk per vertex in either process.  Then, there exists a coupling
	such that \GNOTE{T: Do we want to add that this holds with probability $1$? I think if you don't write that your coupling only holds w.p. p, then it's understood that it holds w.p. 1}
	\[ T^k \leq \Timortal^k . \]
\end{proposition}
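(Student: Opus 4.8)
The plan is to construct an explicit coupling of the two processes step by step, maintaining as an invariant a surjection from the walks alive in $\pimortal$ onto the walks alive in $P$ that is consistent with positions. Concretely, after $t$ steps, I would maintain a map $\phi_t$ from $S_t$ (the immortal-process configuration, tracking $(v,i)$ pairs) to $\bar S_t^P$ (the positions of $P$'s walks) with the property that (i) $\phi_t$ is surjective, (ii) if $(v,i)\in S_t$ then the $P$-walk $\phi_t(v,i)$ is currently at $v$, and (iii) the identifier $1$ (which lies in $\mathcal G_1$) always maps to the single surviving $P$-walk when $|S_t^P|=1$, and more generally $\mathrm{id}$'s are handled consistently. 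The surjectivity of $\phi_t$ immediately gives $|S_t^P|\le |S_t|$ for all $t$ under the coupling, and hence $T^k\le \Timortal^k$ pointwise, which is the claimed inequality (with probability $1$).

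The coupling itself is built by driving both processes with the same underlying randomness: for each identifier $i$ still alive in $\pimortal$, use the increment $Y_{v,i}(t)$ to move that walk, and use the \emph{same} increment to move the $P$-walk $\phi_t(v,i)$ — this is well-defined because (ii) says they sit at the same vertex $v$, so the transition distributions coincide. For identifiers alive in $P$ but corresponding to several identifiers alive in $\pimortal$ (a fiber of $\phi_t$ of size $>1$), one just picks, say, the minimum such identifier to supply the shared increment; the other walks in the fiber move according to their own independent increments. First I would check that this prescription produces valid marginals: the $\pimortal$-walks move by definition, and each $P$-walk moves by one of the $Y_{v,i}(t)$'s which is a legitimate lazy-random-walk step from its current vertex, and distinct $P$-walks use increments that are mutually independent (they come from distinct identifiers), so $(\bar S_t^P)$ is faithfully a coalescing process.

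The substantive step is verifying that the invariant is preserved across a merge. Suppose walks with identifiers $i_1,\dots,i_r$ all arrive at vertex $v$ in $\pimortal$ at step $t+1$. By the invariant, their $\phi$-images are $P$-walks at $v$ just before the $P$-merge, so in $P$ they all coalesce into the single $P$-walk with the smallest id among $\{\phi_t(v,i_1),\dots\}$. In $\pimortal$ the survivors are either all the $\mathcal G_1$-members among $i_1,\dots,i_r$ (case (a)) or the single minimum id (case (b)). In either case the set of survivors in $\pimortal$ is nonempty, and I define $\phi_{t+1}$ on each survivor to be the (unique) surviving $P$-walk at $v$; this keeps (i) surjective onto $\bar S^P_{v}(t+1)$, keeps (ii) because both sit at $v$, and I must argue (iii): since $1\in\mathcal G_1$, whenever the $P$-walk numbered $1$ is involved, its $\pimortal$-preimage is a $\mathcal G_1$-walk and hence survives in case (a); and in case (b) there are no $\mathcal G_1$-walks present at all, so in particular no preimage of the global minimum id unless that id itself equals the minimum — one checks this bookkeeping is consistent because $P$'s convention also keeps the minimum id. For vertices where no merge occurs the invariant is trivially transported along the single moved walk. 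The main obstacle, and the point to be careful about, is exactly the scenario flagged in the text: an immortal $\mathcal G_1$-walk in $\pimortal$ can keep killing others, whereas its coupled partner in $P$ might have \emph{itself} been absorbed earlier into a lower-id walk; the map $\phi_t$ is precisely what absorbs this difficulty, since then two distinct $\pimortal$-walks share one $P$-walk as image, and the fiber structure (rather than a bijection) is what makes surjectivity — and therefore the stopping-time inequality — robust. Once preservation of the invariant is established, a routine induction on $t$ finishes the proof.
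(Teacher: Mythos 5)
Your plan takes a genuinely different route from the paper: the paper introduces an intermediate process $\pinter$ with an explicit third label (\emph{phantom}) and proves two separate inclusions $\bar S^P_t\subseteq\bar S^{\pinter}_t\subseteq\bar S^{\pimortal}_t$, whereas you go directly from $\pimortal$ to $P$ via a position-consistent surjection $\phi_t$. The core of your approach is sound and, done carefully, even yields the same strong conclusion $\bar S^P_t\subseteq\bar S^{\pimortal}_t$: position-consistency forces $\phi_t$ to be (partial) projection, so surjectivity is just the set inclusion. This is arguably more direct than the $\pinter$ construction, which pays for its cleanliness in bookkeeping. (Two small points: $\phi_t$ cannot be total on $S_t$ while being both position-consistent and surjective unless the two position sets coincide, so you should declare it partial; and invariant (iii) about the id-$1$ walk is unnecessary — surjectivity plus position-consistency suffice, vertex by vertex.)

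However, your verification of the merge step contains an error. You assert: ``Suppose walks with identifiers $i_1,\dots,i_r$ all arrive at vertex $v$ in $\pimortal$; by the invariant, their $\phi$-images are $P$-walks at $v$ just before the $P$-merge.'' This is false in general. Under your coupling, only the \emph{minimum} id in each fiber $\phi_t^{-1}(p)$ supplies the increment to the $P$-walk $p$; the other walks in the fiber move under their own independent increments and may land at different vertices. So there is no guarantee that the $\phi$-images of $i_1,\dots,i_r$ all arrive at $v$, and hence no guarantee that a $P$-walk is present at $v$ at all. The good news is that the invariant is still preserved, but by a different argument: fix a surviving $P$-walk at vertex $u$ after the merge. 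Each $P$-walk that arrived at $u$ moved there using the increment of its designated $\pimortal$ preimage, and by position-consistency that preimage was co-located and therefore also arrived at $u$. Some $\pimortal$-walk at $u$ survives the $\pimortal$-merge (merges never vacate a vertex), so $u\in\bar S^{\pimortal}_{t+1}$ and $\phi_{t+1}$ can be defined by projection at $u$. This restores the induction. I recommend replacing the flawed ``all $\phi$-images arrive at $v$'' step with this argument; without the fix, the inductive step as written does not close.
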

\begin{proof}
\fnote{reviewers asks us to write why $T^k \leq \Timortal^k $ is not obvious}
	We will give a coupling between the moves of walks in $\pimortal$ and
	$\pinter$, a new process that is essentially intermediate between $P$ and
	$\pimortal$; furthermore, we will show that the original process $P$ is
	essentially a restricted view of the process $\pinter$.  The process
	$\pinter$ will label the walks \emph{dead}, \emph{alive}, and
	\emph{phantom}.  We emphasize that a phantom walk is not considered alive.  Note that the
	processes $P$ and $\pimortal$ can be viewed as processes which assign labels
	to each random walk of the type alive and dead. 
	
	Let $S^{\operatorname{Q}}_t$ denote the set of tuples of alive  walks in
	process $Q\in \{P, \pinter, \pimortal\}$ at time $t$.  Let
	$\bar{S}^{\operatorname{Q}}_t = \{ v ~|~ (v, i) \in S^{\operatorname{Q}}_t
	\}$ for $Q\in \{P, \pinter, \pimortal\}$ be the set of nodes which are
	occupied by at least one alive walk (there might be several in $\pimortal$
	at $t\geq 1$). In order to prove the proposition, we show that there exists
	a coupling, such that for any $t\in \naturals$
	\begin{equation}\label{ABN14r}
		\bar{S}^{\operatorname{P}}_t \subseteq  \bar{S}^{\pinter}_t
	\end{equation}
	\begin{equation}\label{BYD}
		\bar{S}^{\pinter}_t \subseteq  \bar{S}^{\pimortal}_t 
	\end{equation} 
	implying that $|\bar{S}^{\operatorname{P}}_t| \leq |\bar{S}^{\pimortal}_t|
	$ which yields the claim since
	\[ T^k = \min \{ t\geq 0 \colon |\bar{S}^{P}_t| \leq k  \} \leq \min \{
		t\geq 0 \colon |\bar{S}^{\pimortal}_t| \leq k  \}= \Timortal^k. \]
	
	We now define $\pinter$.  As mentioned above, the walks in $\pinter$ will be
	given three kinds of labels alive, dead, or phantom; the dead walks  do not
	continue ahead in time; alive and phantom walks do. 

	Formally, $\pinter$ using the groups $\mathcal{G}_1$ and $\mathcal{G}_2$ is defined as follows.
	We say that walk $r$ is of type $\mathcal{G}_i$, if $\id(r) \in \mathcal{G}_i$ for $i \in \{1,
	2\}$.  Whenever at least one walk arrives\footnote{Throughout, by arrive we
	take into account that walks may arrive at a node from the same node through
	laziness.} on a node, then the following happens.
	\begin{enumerate}
				\item  At least one of the walks is  of type $\mathcal{G}_1$
						\begin{enumerate}

		\item At least one walk of type $\mathcal{G}_1$ is alive
		\begin{enumerate}
		\item the walk of $\mathcal{G}_1$ with the smallest $\id$ is labeled as alive (even if it was labeled phantom before)
		\item all other walks of  type $\mathcal{G}_1$ (if there are any) are labeled as phantom
		\item alive  walks of type $\mathcal{G}_2$ are labeled dead (if present).  
		\end{enumerate}
		
		\item All walks of type $\mathcal{G}_1$ are phantom walks
		\begin{enumerate}
			\item 
		 There is no walk of type $\mathcal{G}_2$
		\begin{enumerate}
		\item No label is changed
		\end{enumerate}

\item There is at least one walk  of  type $\mathcal{G}_2$
		\begin{enumerate}
		\item the walk of type $\mathcal{G}_1$ with the smallest $\id$ is labeled as alive 
		\item all other walks of type $\mathcal{G}_1$ (if there are any) are labeled as phantom
		\item alive  walks of type $\mathcal{G}_2$ are labeled dead.  
		\end{enumerate}

\end{enumerate}
					\end{enumerate}

		\item All walks are of type $\mathcal{G}_2$
		\begin{enumerate}
		\item the walk of $\mathcal{G}_2$ with the smallest $\id$ is labeled as alive 
				\item all other walks  are labeled as dead.
		\end{enumerate}
						\end{enumerate}
	Note, that walks of $\mathcal{G}_1$ are either alive or phantom and walks of $\mathcal{G}_2$ are
	either alive or dead. Also, note that in the process $\pinter$, there is at
	most one \emph{alive} walk at any given node. Throughout the proof we regard
	the processes in two stages: First, each random walk selects a destination
	(possibly the same node it was on) and moves there.  In the second phase the
	walks are merged according to the process. See \autoref{fig:the3processes} for an illustration.

\begin{figure}
	\centering
	\includegraphics[width=0.5\textwidth]{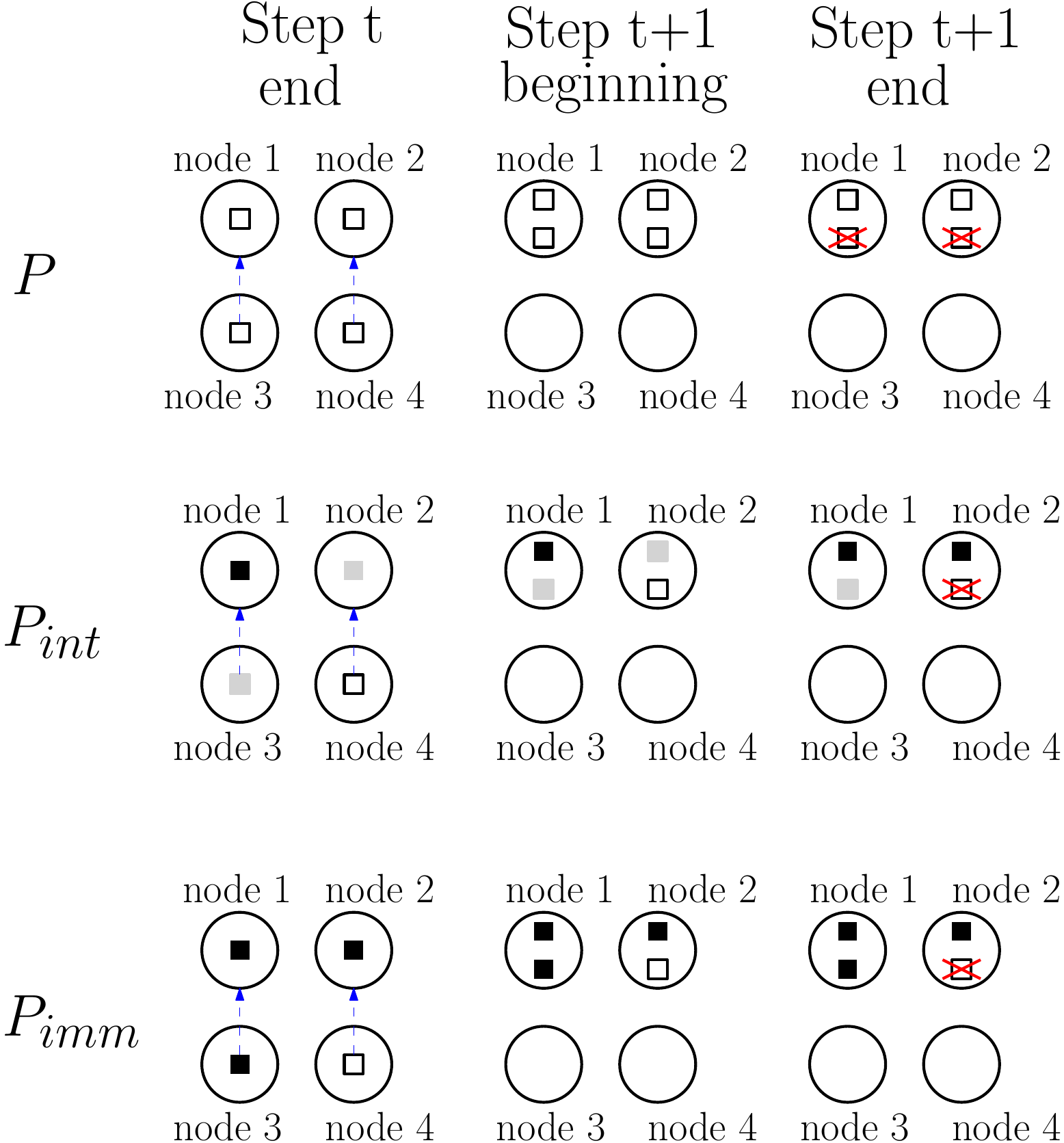}
	\caption{An illustration of couplings between the processes.  The squares depict the random walks. Walks of $\mathcal{G}_1$ are colored black and grey (phantom) and the nodes of $\mathcal{G}_2$ are white. The blue arrows denote the moving decisions. Observe that in $\pinter$ a phantom becomes alive (and a walk of $\mathcal{G}_2$ is labeled dead).    }\label{fig:the3processes}
\end{figure}

	We prove \eqref{ABN14r} by induction on $t$ starting from the same initial
	configuration: if $v\in \bar{S}^{\operatorname{P}}_t $, then $v\in
	\bar{S}^{\pinter}_t $. Consider the inductive step from $t$ to $t+1$ and
	assume that the claim holds at the end of round $t$ (after merging
	happened). For the (unique) random walk at $v\in
	\bar{S}^{\operatorname{P}}_t$ under process $P$, we couple its transition to
	node $Y_v({t+1})$ (where we possibly have $Y_v({t+1})=v$) with the corresponding
	alive walk of $\bar{S}^{\pinter}_t $ (there might be several walks of
	$\pinter$, however only one is alive and we couple with this alive walk).
	Let $S$ be the set of nodes to which a random walk in $P$ moved, \ie $S=\{
		Y_v({t+1}) \colon v \in \bar{S}^{\operatorname{P}}_t  \}$. Observe, that
	before the merging takes place in round $t+1$ (but moves have been made),
	there is, by induction hypothesis and the coupling, at least one alive walk
	of $\pinter$ on each node of $S$. 
	Furthermore, the definition of $\pinter$ ensures that whenever an alive
	random walk moves to a node, then after merging takes place, at
	least\footnote{By definition, there is actually exactly one alive walk.} one
	alive walk remains.  Thus, our coupling ensures that if $v\in
	\bar{S}^{\operatorname{P}}_{t+1} $, then $v\in \bar{S}^{\pinter}_{t+1} $.
	In words, if one looks at the subsets where there is an alive walk of
	$\pinter$, this is essentially the standard coalescence process.
	%
	This finishes the proof of \eqref{ABN14r} and we turn to proving \eqref{BYD}.
	
	When starting from the same initial configuration, we will provide a
	coupling that satisfies the following invariants.
	\begin{enumerate}
		\item There is a bijective map from the alive and phantom walks of $\pinter$
			to the alive walks of $\pimortal$, such that the following holds.
			All  walks of $\pinter$ of type $\mathcal{G}_i$ are mapped to walks of $\pimortal$ of type $\mathcal{G}_i$, for $i\in \{ 1, 2\}$.
			\item
			Whenever a walk of type $\mathcal{G}_2$ is labeled dead in $\pimortal$, then
			it is also labeled dead in $\pinter$ and vice versa.
	\end{enumerate}

	At the beginning there are no dead or phantom walks in $\pinter$, there are
	no dead walks in $\pimortal$, all walks are alive and as the starting
	positions in $\pimortal$ and $\pinter$ are the same, an arbitrary bijective
	mapping may be chosen, so long as it respects node positions and walk types. 
	
	Assume the invariant holds at time $t$. We take one random walk step for
	each alive or phantom random walk in $\pinter$. These are coupled with the
	corresponding walks in $\pimortal$, under the chosen map. Walks that are
	already dead are neither simulated  in $\pinter$ nor in $\pimortal$.  Hence,
	we can ensure the bijection between the walks of $\mathcal{G}_1$ in both processes
	holds at time $t+1$.
	
	We now prove the second invariant. Note that whenever a walk $r$ of type
	$\mathcal{G}_2$ in $\pimortal$($\pinter$) is labeled dead, this implies there must have been
	another walk $r'$ on the same node at the same time.  Since there is a
	bijective map, $r'$ must be on the same node in $\pinter$($\pimortal$).  We
	have that either $r'$ is of type $\mathcal{G}_1$ or $r'$ is of type  $\mathcal{G}_2$ and that
	$\id(r^\prime) < \id(r)$. In either case, $r$ is also killed (labeled dead) in
	$\pinter$($\pimortal$). Hence, we can ensure the bijection between the walks
	of $\mathcal{G}_2$ in both processes holds at time $t+1$.  Thus, the invariant holds
	at time $t+1$.  By induction, and since the alive walks of $\pinter$ are a
	subset of the alive walks of $\pimortal$ the invariant holds throughout the
	process and yielding \eqref{BYD}. 
	This  finishes the proof.
\end{proof}

\subsection{Meeting Time Distribution Prior to $\tmeet$ }

Let $(X_t)_{t\geq 0}$ and $(Y_t)_{t\geq 0}$ be independent random walks
starting at arbitrary positions. For $\tau$ a multiple of $\tmix$, the
following lemma gives a lower bound on the probability of intersection of the
two random walks in $\tau$ steps.

\begin{lemma} \label{lem:ape}
	Let $(X_t)_{t\geq 0}$ and $(Y_t)_{t\geq 0}$ be two independent random walks
	starting at arbitrary positions. Let $\cross(X_{t },Y_{t}, \tau)$ be the
	event that there exists $0 \leq s \leq \tau$, such that $X_s = Y_s$.  Then
	 \[ \Pr{\cross(X_{t},Y_{t}, 5 \tmix )} \geq \frac{1}{32\alpha},
	 \]
	where $\alpha = \tmeet/\tmix$. Furthermore, there exists a constant $c > 0$, such that for any $1 \leq b \leq \frac{e -
	1}{e} \cdot \alpha$, we have
	\[ \Pr{\cross(X_{t},Y_{t}, c \, b \, \tmix )} \geq \frac{b}{\alpha},
	\]
\end{lemma}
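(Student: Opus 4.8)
The plan is to prove both inequalities by first reducing to the case where the two walks start from the stationary distribution $\pi$, and then applying the conditional‑expectation identity~\eqref{eq:central} together with~\eqref{eq:simp}. For the reduction I would use that $\tsep\le 4\tmix$ and that, by definition of $\tsep$, $p^{\tsep}_{u,v}\ge(1-e^{-1})\pi(v)$ for all $u,v$. Discarding any meeting before time $\tsep$ and conditioning on the positions $(X_{\tsep},Y_{\tsep})=(a,b)$, the probability of a meeting within the next $\tmix$ steps is $g(a,b):=\Pr{\cross(\cdot,\cdot,\tmix)\text{ from }(a,b)}$; averaging over $(a,b)$ and using $p^{\tsep}_{u,a}\ge(1-e^{-1})\pi(a)$ pointwise (independently for the two walks) gives
\[
  \Pr{\cross(X_t,Y_t,5\tmix)}\ \ge\ \sum_{a,b}p^{\tsep}_{u,a}\,p^{\tsep}_{v,b}\,g(a,b)\ \ge\ (1-e^{-1})^2\,\Pr{\cross(X'_t,Y'_t,\tmix)},
\]
where $X',Y'$ are independent walks from $\pi$ and I used $\tsep+\tmix\le 5\tmix$. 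So it suffices to lower‑bound $\Pr{\cross(X',Y',\tmix)}$ by $\Omega(1/\alpha)$ with a good enough constant.

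For the stationary case set $Z=\sum_{s=0}^{\tmix}\mathbf{1}_{X'_s=Y'_s}$. Since both walks are stationary, $\Pr{X'_s=Y'_s}=\|\pi\|_2^2$ for every $s$, so $\E{Z}=(\tmix+1)\|\pi\|_2^2$, and~\eqref{eq:central} gives $\Pr{\cross(X',Y',\tmix)}=\E{Z}/\E{Z\mid Z\ge 1}$. The heart of the proof is the bound $\E{Z\mid Z\ge 1}\le 1+\max_{u\in V}\sum_{r=0}^{\tmix}\sum_{v\in V}(p^r_{u,v})^2=O(\tmeet\,\|\pi\|_2^2)$, the first step being~\eqref{eq:simp}. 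To obtain the second step I would extend the sum monotonically to $r\le 2\tmeet$ and split it: the terms with $r\ge\tmix$ are each within a constant factor of $\|\pi\|_2^2$ and contribute $O(\tmeet\,\|\pi\|_2^2)$, while the pre‑mixing terms $r<\tmix$ — the expected number of re‑collisions of two walks that have just met at $u$ — I would control by comparing with the expected return time of the product chain $(X_t,Y_t)$ on $G\times G$ to its diagonal, whose stationary mass under $\pi\otimes\pi$ is exactly $\|\pi\|_2^2$, and then invoking $\tmeet\ge 1/\|\pi\|_2^2$. Feeding this into~\eqref{eq:central} and using $\tmeet\,\|\pi\|_2^2\ge 1$ gives $\Pr{\cross(X',Y',\tmix)}=\Omega(\tmix/\tmeet)=\Omega(1/\alpha)$; keeping track of the constants (including the $(1-e^{-1})^2$ loss from the reduction) yields the stated $\tfrac1{32\alpha}$, which is the first inequality.

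The inequality for general $b$ then follows by bootstrapping from the first: partition $[0,cb\tmix]$ into $\Theta(b)$ consecutive blocks of length $5\tmix$ and observe that, conditioned on the walks not having met before a block — whatever their positions are at that point — the first part applies from those positions and gives meeting probability at least $\tfrac1{32\alpha}$ during the block. Hence the probability of never meeting is at most $(1-\tfrac1{32\alpha})^{\Theta(b)}\le e^{-\Theta(b/\alpha)}$, and since $b\le\tfrac{e-1}{e}\alpha$ keeps the exponent at most $1$ for a suitable choice of $c$, the elementary inequalities $1-e^{-x}\ge x/2$ on $[0,1]$ and $1-e^{-x}\ge 1-e^{-1}=\tfrac{e-1}{e}$ for $x\ge 1$ give a meeting probability of at least $b/\alpha$ once $c$ is chosen large enough. (Equivalently, one can rerun the collision‑count argument of the previous paragraph directly with horizon $\Theta(b\tmix)$ in place of $\tmix$: then $\E{Z}=\Theta(b\tmix)\|\pi\|_2^2$ while the bound on $\E{Z\mid Z\ge 1}$ is unaffected because $\Theta(b\tmix)\le\Theta(\tmeet)$, which is precisely why the probability scales linearly in $b$.) Throughout, I expect the estimate $\E{Z\mid Z\ge 1}=O(\tmeet\,\|\pi\|_2^2)$ — and in particular controlling the pre‑mixing re‑collision sum $\sum_{r<\tmix}\sum_v(p^r_{u,v})^2$ uniformly over all starting vertices $u$ and all graphs $G$ — to be the main obstacle; the rest is the two reductions and a bookkeeping of constants.
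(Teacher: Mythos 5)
Your outer structure matches the paper: the reduction from arbitrary starting positions to stationary ones via $\tsep\le 4\tmix$ and \autoref{lem:randomwalkcoupling} (losing the factor $(1-e^{-1})^2$) is exactly what the paper does, and so is the bootstrap for the second inequality (run $\Theta(b)$ independent blocks of length $5\tmix$, each succeeding with probability $\ge 1/(32\alpha)$, and do a two-case analysis according to whether $b/\alpha$ is small or of order $1$). What differs --- and where the gap sits --- is the middle step, namely showing $\Pr{\cross(X',Y',\tmix)}=\Omega(1/\alpha)$ for walks started from $\pi$.

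You route this through \eqref{eq:central}, $\Pr{Z\ge 1}=\E{Z}/\E{Z\mid Z\ge 1}$ with $\E{Z}\approx\tmix\|\pi\|_2^2$, and the whole argument hinges on
\[
\E{Z\mid Z\ge 1}\ \le\ \max_{u\in V}\sum_{r=0}^{\tmix}\sum_{v\in V}\bigl(p_{u,v}^r\bigr)^2 \ =\ O\bigl(\tmeet\,\|\pi\|_2^2\bigr).
\]
That is precisely the quantity $C_{\max}$ of \autoref{thm:meetingtime}, and that theorem only gives the \emph{lower} bound $C_{\max}\ge\tmeet\|\pi\|_2^2/(5e^2)$; the upper bound you need is not established anywhere, and your sketch does not establish it either. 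Kac's formula says the expected return time of the product chain to the diagonal, \emph{started from $\pi\otimes\pi$ conditioned on the diagonal}, is $1/\|\pi\|_2^2$; it says nothing about the number of diagonal visits of length-$\tmix$ excursions from a worst-case state $(u,u)$, which is exactly what the $\max_u$ asks for and exactly where graphs with strong short-time localization make trouble. You flag this yourself as ``the main obstacle,'' which is an accurate self-assessment: as written the proposal has a genuine hole at the single place that carries the whole estimate.

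The paper avoids this entirely by exploiting stationarity in a different, softer way. With $\tilde X,\tilde Y$ started from $\pi$ and $\ell:=2\lceil\alpha\rceil\lceil\tmix\rceil\ge 2\tmeet$, Markov's inequality gives $\Pr{\cross(\tilde X,\tilde Y,\ell)}\ge 1/2$. Splitting $[1,\ell]$ into $2\lceil\alpha\rceil$ consecutive windows of length $\lceil\tmix\rceil$, stationarity makes the meeting probability in each window identical, so the union bound gives $\Pr{\cross(\tilde X,\tilde Y,\ell)}\le 2\lceil\alpha\rceil\,\Pr{\cross(\tilde X,\tilde Y,\tmix)}$, hence $\Pr{\cross(\tilde X,\tilde Y,\tmix)}\ge 1/(4\lceil\alpha\rceil)$. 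No collision counting, no conditional expectation, no control of $C_{\max}$ is needed. If you want to keep your route, the honest way to get the upper bound on $\E{Z\mid Z\ge 1}$ is in fact to \emph{first} prove the lemma by the paper's union-bound argument and then read the bound off from \eqref{eq:central} --- but that is circular as a derivation of the lemma, so the union-bound argument is the one you want to run.
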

\begin{proof}
	First, let $(\tilde{X}_t)_{t \geq 0}$ and $(\tilde{Y}_t)_{t \geq 0}$ be two
	random walks that start from two independent samples drawn from the stationary
	distribution and are run for $\ell := 2 \lceil \alpha \rceil \lceil\tmix\rceil$ steps.
	Notice that $\ell \geq 2 \tmeet$, and hence, by Markov's inequality,

	\begin{align}
		\Pr{\cross(\tilde{X}_{t},\tilde{Y}_{t}, \ell)} \geq \frac{1}{2}.
		\label{eq:nike}
	\end{align}
	Furthermore, if we divide the interval $[1,\ell]$ into $2 \lceil \alpha
	\rceil$ consecutive sections of length $\lceil \tmix \rceil$ each
	, the probability for
	a collision in each of these section is identical and therefore the union
	bound implies
	\begin{align}
		\Pr{\cross(\tilde{X}_{t},\tilde{Y}_{t}, \ell)} \leq 2 \lceil \alpha
		\rceil \cdot \Pr{\cross(\tilde{X}_{t},\tilde{Y}_{t}, \tmix)},
		\label{eq:puma}
	\end{align}
	and hence combining equation (\ref{eq:nike}) and (\ref{eq:puma}) yields
	\[
		\Pr{\cross(\tilde{X}_{t},\tilde{Y}_{t}, \tmix)} \geq \frac{1}{4 \lceil
		\alpha \rceil}.
	\] 
	Consider now two independent random walks $(X_t)_{t\geq 0}$ and
	$(Y_t)_{t\geq 0}$ starting at arbitrary positions. By applying
	\autoref{lem:randomwalkcoupling} to both walks, with probability at least
	$(1 - e^{-1})^2$ both $X_{4 \tmix}$ and $Y_{4 \tmix}$ are drawn
	independently from the stationary distribution since $4\tmix \geq \tsep$. Therefore, 
	\[
		\Pr{\cross(X_{t},Y_{t}, 5 \tmix)} \geq (1 - e^{-1})^2 \cdot
		\Pr{\cross(\tilde{X}_{t},\tilde{Y}_{t}, \tmix)} \geq (1 - e^{-1})^2 \cdot
		\frac{1}{4 \lceil \alpha \rceil}.
	\] 
Observing that for any $\alpha \geq 1$, the RHS above
	expression is greater than $1/(32 \alpha)$ completes the proof of the first
	part. 	
	For the second part, we consider $k$ blocks of length $ 5 \tmix$.  
	Due to independence of different blocks, the probability of that the two walks meet in at least one of the $k$ blocks is at least $1-(1-\frac{1}{32\alpha})^k$.
		We set $k:= \ceil*{32b/(1-e^{-1})}$, $x := 1/(32\alpha)$.
	We distinguish between two cases.
	
	Case $k \cdot x < 1$:
		We use the fact that $(1 - x)^k \leq e^{-xk} \leq 1 -
	(1 - e^{-1})xk$ for $0 \leq x < 1$, $k \geq 0$ and  $xk \leq 1$.
	We
	derive that the probability of  intersecting after $k$ blocks is at least
	$1-(1-\frac{1}{32\alpha})^k \geq (1-e^{-1})k/(32\alpha)=b/\alpha$.
	
	Case $k \cdot x \geq 1$: We have
	$1-(1-\frac{1}{32\alpha})^k \geq 1-(1-\frac{1}{32\alpha})^{32 \alpha} \geq 1-1/e \geq b/\alpha $.
	In both cases the second part follows.
\end{proof}

At the heart of the proof of \autoref{thm:mixtradeoff} lies the following
lemma that analyses the marginal distribution of the meeting time
distribution.  That is, we only expose the first random walk
$(X_t)_{t=0}^{\tau}$, and look at how this affects the probability of
meeting. In essence, we show that at least one of the two ``orthogonal''
cases hold. In Case 1 (corresponding to set $C_1$), there is at least a
modest probability that after exposing $(X_t)$, $(Y_t)$ will intersect
with significant probability. Otherwise, in Case 2  (corresponding to set $C_2$), there is a
significant probability that after exposing $(X_t)$, $(Y_t)$ will
intersect with at least a modest probability.  
\fnote{reviewer: difficult to understand. In particular it is a bit strange that $Y_t$ sis a trajectory and not a vertex} 
\begin{lemma}\label{lem:classes2}
	Fix $\tau \in \mathbb{N}$ and a graph $G$. Let $(X_t)_{t = 0}^{\tau}$
	and $(Y_t)_{t = 0}^{\tau}$ be independent random walks, where the
	starting nodes $X_0$ and $Y_0$ are drawn independently from the
	stationary distribution $\pi$ (w.r.t. to $G$),
	\GNOTE{T: Why do make assumptions about the starting nodes, if it
	doesn't seem to be used in the proof?F: I tend to agree} 
	and the walks are run for $\tau$ steps. Let $p=\Pr{\cross(X_{t},Y_{t},
	\tau)}$ and let $ \mathcal{T}_\tau$ denote the set  all possible
	trajectories of a walk of length $\tau$ in $G$ (including possible
	self-loops).  We define the following two categories $C_1$ and $C_2$
	with  $C_1 \subseteq C_2$ 
	\GNOTE{T: I think notation would be more consistent with the remaining
	paper if $(z_0,\ldots,z_{\tau})$ would be $(x_0,\ldots,x_{\tau})$. F:
	I actually changed it to $z_i$ because I wanted to make sure that the
	reader doesn't confuse these trajectories with the walk $X_t$!}
	\begin{align*}
		C_1 &:= \{ (z_0, \ldots, z_{\tau}) \in \mathcal{T}_\tau \colon \Pr{
			\exists 0 \leq s \leq \tau, Y_s = z_s} \geq \sqrt{p} \} \\
		C_2 &:= \{ (z_0, \ldots, z_{\tau}) \in \mathcal{T}_\tau \colon \Pr{
			\exists 0 \leq s \leq \tau, Y_s = z_s} \geq {p}/{3} \}. 
	\end{align*}
	Then, $\Pr{ (X_t)_{t=0}^{\tau} \in C_1 } \geq \frac{p}{3}$ or $\Pr{
		(X_t)_{t=0}^{\tau} \in C_2} \geq \frac{\sqrt{p}}{3}$.
\end{lemma}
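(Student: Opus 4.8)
The plan is to treat $(X_t)_{t=0}^{\tau}$ as a random element of the (finite) trajectory space $\mathcal{T}_\tau$ and run a one-line first-moment argument. For $z=(z_0,\dots,z_\tau)\in\mathcal{T}_\tau$ write $f(z):=\Pr{\exists\,0\le s\le\tau,\ Y_s=z_s}$ for the probability that the independent walk $(Y_t)$ hits the fixed trajectory $z$. Conditioning on the realisation of $(X_t)$ and using independence of the two walks (there are no measurability subtleties, since $\mathcal{T}_\tau$ is finite and $f$ is defined pointwise), the tower rule gives the key identity
\[
 p=\Pr{\cross(X_t,Y_t,\tau)}=\E{f\big((X_t)_{t=0}^{\tau}\big)}=\sum_{z\in\mathcal{T}_\tau}\Pr{(X_t)_{t=0}^{\tau}=z}\,f(z).
\]
Moreover $C_1=\{z:f(z)\ge\sqrt p\}\subseteq C_2=\{z:f(z)\ge p/3\}$ because $p\le 1$ forces $\sqrt p\ge p\ge p/3$; this is exactly the containment claimed in the statement.

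I would then prove the contrapositive-style version: assuming $\Pr{(X_t)_{t=0}^{\tau}\in C_1}<p/3$, deduce $\Pr{(X_t)_{t=0}^{\tau}\in C_2}\ge\sqrt p/3$. Writing $X:=(X_t)_{t=0}^{\tau}$ and splitting the identity above over the three events $\{X\in C_1\}$, $\{X\in C_2\setminus C_1\}$, $\{X\notin C_2\}$, I bound $f(X)\le 1$ on the first, $f(X)<\sqrt p$ on the second, and $f(X)<p/3$ on the third, obtaining
\[
 p\le\Pr{X\in C_1}+\sqrt p\cdot\Pr{X\in C_2\setminus C_1}+\tfrac{p}{3}\cdot\Pr{X\notin C_2}<\tfrac{p}{3}+\sqrt p\cdot\Pr{X\in C_2}+\tfrac{p}{3}.
\]
Rearranging gives $\sqrt p\cdot\Pr{X\in C_2}>p/3$, i.e.\ $\Pr{X\in C_2}>\sqrt p/3$, which is the second alternative; hence at least one of the two stated inequalities always holds.

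Because this is a pure averaging argument, I do not expect a substantive obstacle. The only points that need a little care are (i) justifying $p=\E{f(X)}$ cleanly via independence and the tower rule, and (ii) the bookkeeping that the three ``bad'' contributions sum to exactly $\tfrac{p}{3}+\sqrt p\cdot\tfrac{\sqrt p}{3}+\tfrac{p}{3}=p$, which is precisely what makes the final inequality strict and forces one of the alternatives. As the marginal notes in the statement already observe, the hypothesis $X_0,Y_0\sim\pi$ is never used in this argument — it matters only for the later results that invoke \autoref{lem:classes2} — so I would keep it in the statement for consistency with those applications but flag in the proof that it is inessential here.
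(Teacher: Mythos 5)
Your proposal is correct and uses the same first-moment decomposition the paper uses, namely bounding $p = \E{f(X)}$ by $1$ on $C_1$, by $\sqrt{p}$ on $C_2\setminus C_1$, and by $p/3$ outside $C_2$; the only cosmetic difference is that you derive the second alternative directly from the negation of the first, whereas the paper nests a second contradiction to reach $p<p$. The one small wrinkle is that your final rearrangement divides by $\sqrt{p}$ and thus tacitly assumes $p>0$ (the case $p=0$ makes the conclusion vacuous, and the paper's contradiction form sidesteps the division), so it is worth a one-line remark, but the argument is sound.
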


While the actual lower bounds on the probabilities appear rather crude, it turns out that the ``significant probability'' $\sqrt{p}/3$ is best possible, as we demonstrate in our lower bound construction later. Remarkably, the fact that the ``modest probability'' is only $p/3$ and much smaller than $\sqrt{p}/3$ does not affect the tightness of our bound, since in \autoref{claim:cat}, we can make up for this gap in both cases through a simple amplification argument over the unexposed random walks.

\begin{proof}
	Let us suppose  that $\Pr{(X_t)_{t=0}^\tau \in C_1 }
	< \frac{p}{3}$. We show that this implies $\Pr{(X_t)_{t=0}^\tau \in C_2} \geq
	\frac{\sqrt{p}}{3}$. Assume for the sake of contradiction $\Pr{(X_t)_{t=0}^\tau \in C_2} <
	\frac{\sqrt{p}}{3}$. We have 
	\begin{align*}
		p &= \Pr{\cross(X_t, Y_t, \tau)} \\
		&\leq \Pr{ (X_t)_{t=0}^\tau \in C_1} \cdot 1 + \Pr{ (X_t)_{t=0}^\tau \in
		(C_2 \setminus C_1) } \cdot \sqrt{p} + \Pr{ (X_t)_{t=0}^\tau \not\in C_2}
		\cdot \frac{p}{3} \\
		&< p/3 + \sqrt{p}/3 \cdot \sqrt{p} + p/3 \leq p,
	\end{align*}
	a contradiction. This completes the proof.
\end{proof}

It is well-known that starting with $k$ random walks, the coalescence
time is bounded by $O(\tmeet \log k)$, this can be deduced from the proof
presented in \cite{HP01}. For the sake of completeness, we give a
self-contained proof\footnote{One might be tempted to pair random walks in groups of two and run them for $2\tmeet$ time steps so that, by Markov inequality, they meet with probability at least $1/2$. Repeating this iteratively would yield the claim. To formalize such an argument one would need to disallow coalescence between different pairs of random walk which differs from the stochastic process we reduce to in \autoref{sec:process}. }.

\begin{proposition} \label{lem:beer} 
	We have 
	$\tcoal(S_0) = O(\tmeet \log|S_0|)$. 
\end{proposition}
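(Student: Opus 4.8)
The plan is to show that over a window of $\Theta(\tmeet)$ steps the number of surviving walks shrinks by a constant factor in expectation, and then iterate this $O(\log |S_0|)$ times. The natural way to make this rigorous, given the machinery just developed, is to invoke the immortal-process domination from \autoref{lem:majormotal}: instead of reasoning about the genuine coalescing process directly, I will bound the time for the process $\pimortal$ to drop below half its current count, and then use $T^{k} \le \Timortal^{k}$ to transfer the bound back. Concretely, suppose at some time there are $k$ alive walks occupying distinct vertices; arbitrarily designate roughly $k/2$ of them as the immortal group $\mathcal{G}_1$ (with $1 \in \mathcal{G}_1$) and the remaining $\lceil k/2 \rceil$ as $\mathcal{G}_2$. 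For a fixed walk $Y$ in $\mathcal{G}_2$ and a fixed walk $X$ in $\mathcal{G}_1$, run both for $\tau := 2\tmeet$ steps; by Markov's inequality applied to the meeting time $\tmeet(X_0, Y_0) \le \tmeet$, the probability that $X$ and $Y$ meet within $\tau$ steps is at least $1/2$. Since the walks in $\mathcal{G}_1$ are immortal and independent of one another, and each fixed $Y \in \mathcal{G}_2$ meets at least one designated partner in $\mathcal{G}_1$ with probability $\ge 1/2$ (it suffices to pair each $\mathcal{G}_2$ walk with a distinct $\mathcal{G}_1$ walk), each walk in $\mathcal{G}_2$ is eliminated within $2\tmeet$ steps with probability at least $1/2$. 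Hence the expected number of $\mathcal{G}_2$-walks surviving after $2\tmeet$ steps is at most $|\mathcal{G}_2|/2 \le k/4$, so the expected total surviving count is at most $|\mathcal{G}_1| + k/4 \le 3k/4$.

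From here a standard supermartingale / iterated-expectation argument finishes the job. Let $N_j$ denote the number of walks alive after the $j$-th block of $2\tmeet$ steps (so $N_0 = |S_0|$). The computation above, together with the Markov property \eqref{eq:MarkovProp} and \autoref{lem:majormotal}, gives $\E{N_{j+1} \mid N_j} \le (3/4) N_j$ whenever $N_j \ge 2$, and $N_{j+1} = N_j$ once $N_j = 1$. Therefore $\E{N_j} \le (3/4)^j |S_0|$, so after $j^\ast := \lceil \log_{4/3}(2|S_0|) \rceil = O(\log |S_0|)$ blocks we have $\E{N_{j^\ast}} < 1$, which forces $\Pr{N_{j^\ast} \ge 2} < 1$. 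To turn the "expected number $<1$" into an actual expected stopping time, restart the argument: from any configuration with $k \ge 2$ walks, within $O(\tmeet \log k)$ blocks the process reaches a single walk with probability at least $1/2$ (by Markov's inequality on $N_{j^\ast}$), and if it has not, we are in a configuration with at most $k$ walks and repeat. This geometric retrial bounds $\E{\coal(S_0)}$ by $O(\tmeet \log |S_0|) \cdot \sum_{i \ge 0} 2^{-i} = O(\tmeet \log |S_0|)$.

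The one place that requires genuine care — and the reason the footnote in the statement flags the naive pairing argument as problematic — is ensuring that the "each $\mathcal{G}_2$ walk dies with probability $\ge 1/2$" claim survives the dependencies introduced by other walks. This is exactly what the immortal process is designed to handle: in $\pimortal$ a walk in $\mathcal{G}_1$ can never be killed, so the event "$Y$ meets its designated partner $X$ within $2\tmeet$ steps" depends only on the trajectories of $X$ and $Y$, which are independent, and the Markov inequality bound applies cleanly; coalescences among other pairs can only help. The domination \autoref{lem:majormotal} then guarantees the genuine process $P$ reaches $\le k$ walks no later than $\pimortal$ does, so the upper bound transfers. I would expect the write-up to spend most of its length setting up this block decomposition carefully and citing \autoref{lem:majormotal} and the Markov property at the right points; the arithmetic with the constant $3/4$ and the geometric retrial is routine.
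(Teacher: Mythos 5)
Your argument is correct and takes a genuinely simpler route than the paper's. The paper's own proof of this proposition re-uses the machinery it builds for \autoref{thm:mixtradeoff}: it mixes the distinguished walk with $\id$ $1$ for $4\tmix$ steps via \autoref{lem:randomwalkcoupling}, invokes \autoref{lem:classes2} to condition on that trajectory landing in class $C_2$, and only then deduces that every other walk is eliminated per block of length $\Theta(\tmeet)$ at the (rather painful) constant rate $(e-1)^4/(9e^4)$, after which \autoref{lem:drift} closes the argument. You instead observe that \autoref{lem:majormotal} together with a bare Markov-inequality bound already suffices: from \emph{any} pair of start vertices the expected meeting time is $\le\tmeet$, so within $2\tmeet$ steps the hypothetical extension of a $\mathcal{G}_2$-walk meets its immortal $\mathcal{G}_1$-partner with probability $\ge 1/2$, hence the $\mathcal{G}_2$-walk survives the block with probability $\le 1/2$. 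The partner is immortal by construction and the meeting event depends only on the two walks' independent trajectories, which is exactly what neutralizes the pitfall the footnote flags (in the naive pairing argument a designated partner could be absorbed first; in $\pimortal$ it cannot). Your route is more elementary, needs neither $C_1/C_2$ nor the coupling-to-stationarity step, and gives better constants. Two small corrections to the write-up: first, the split $|\mathcal{G}_1|\approx k/2$ is unnecessary, and in fact $\mathcal{G}_1=\{1\}$ alone gives the slightly stronger per-block reduction $\E{N_{j+1}\mid N_j}\le\tfrac{1}{2}(N_j+1)$; second, you cannot assert $\E{N_{j^*}}<1$, since $N_j\ge 1$ always forces $\E{N_{j}}\ge 1$. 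Track $M_j:=N_j-1$ instead, for which $\E{M_{j+1}\mid\mathcal{F}_{t_j}}\le\tfrac{3}{4}M_j$ holds for all values of $M_j$ (check $N_j=1$ and $N_j\ge 2$ separately), so $\E{M_j}\le(3/4)^j(|S_0|-1)$; then either apply Markov to $M_{j^*}$ for your geometric retrial, or more cleanly invoke \autoref{lem:drift} directly as the paper does.
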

\begin{proof}
	Let $P$ be the coalescing process (with $\id$s) defined in
	\autoref{sec:process}. Recall that $\mathcal{G}_1=\{ 1 \}$.  Let  $S_t$ be set
	of  coalescing random walks at an arbitrary time-step $t$.  In the
	following we show the slightly stronger claim  that the expected time
	to reduce the number of random walks by a constant factor is
	$O(\tmeet)$.

	Formally, we fix an arbitrary time-step $t_0$. With $T:= \min \{  t
	\geq t_0 \colon |S_t| \leq 99/100 \cdot |S_{t_0}|, |S_{t_0}|\geq 100
	\}$ denoting the first time-step the number of coalescing random walks
	reduces by a factor of $99/100$, we will prove that $\E{T}=O(\tmeet)$.
	Iterating the argument $O(\log|S_0|)$ times  implies that the expected
	time it takes to reduce to $100$ random walks is $ O(\tmeet
	\log|S_0|)$.  Note that the expected time to reduce from $100$ random
	walks to $1$ is bounded by $O(\tmeet)$. Hence, the claim $\tcoal(S_0)
	= O(\tmeet \log|S_0|)$ follows.

	It remains to show that the expected number of time steps it takes to
	reduce the number of random walks  by a factor of $99/100$ is indeed
	$O(\tmeet)$.

	We divide time into blocks of length  $\tau:=c\frac{e-1}{e}\tmeet
	+4\tmix$, where $c$ is the constant of \autoref{lem:ape}, \ie
	$\Pr{\cross(X_t, Y_t, c \frac{e - 1}{e} \tmeet)} \geq \frac{e -
	1}{e}$. We are primarily interested in what happens at the end of the
	blocks, \ie at time steps $t_0, t_0 + \tau, t_0 + 2 \tau, \ldots$. For
	simplicity, we will start counting time from $0$ at the beginning of
	each block.
	Let $(X_t)_{t \geq 0}$ be the random walk with $\id$ 1.  After
	$4\tmix$ steps, we can couple the state of the random walk
	$(X_t)_{t\geq  4\tmix}$ with a node drawn from $\pi$ with probability
	at least $(1 - e^{-1})$, since $4\tmix \geq \tsep$ (see
	\autoref{lem:randomwalkcoupling}). Further, note that conditioned on
	this coupling, the statement of \autoref{lem:classes2} implies that
	$(X_t)_{t\geq 4\tmix} \in C_2$ w.p. at least $p/3$, where we used $C_2
	\subseteq C_1$, and where $p := \Pr{\cross(\tilde{X}_t, \tilde{Y}_t, c
	\cdot \frac{e - 1}{e} \cdot \tmeet)} \geq \frac{e - 1}{e}$ for
	$\tilde{X}_0, \tilde{Y}_0 \sim \pi$. 

	We condition on the successful coupling of $X_{4 \tmix}$ with a node
	drawn from $\pi$ and that $(X_t)_{t \geq 4 \tmix} \in C_2$, which
	happens with probability at least $(1-e^{-1})p/3=\frac{(e - 1)^2}{3 e^2}$ (called
	event $\mathcal{E}$). Finally, consider any random walk $(Y_t)_{t\geq 0}b$ with $\id$ other
	than $1$. Again with probability at least $1 - e^{-1}$ we can couple
	$Y_{4 \tmix}$ with a node drawn from $\pi$ and conditioned on
	successful coupling, $(Y_t)_{t \geq 4\tmix}$ meets $(X_t)_{t \geq 4
	\tmix}$ between time-steps $[4 \tmix , \tau]$ with probability at
	least $p/3$, by definition of $C_2$. 
	Thus, conditioned on event $\mathcal{E}$, 
	each walk of $\mathcal{G}_2$ vanishes w.p. $(1-e^{-1})p/3=\frac{(e - 1)^2}{3 e^2}$ and thus
	the expected
	fraction of walks killed in the $\tau$ time-steps is at least
	$\frac{(e - 1)^2}{3 e}$. 

		Let $Z_\ell=|S_{t_0+\ell\cdot \tau}|$ denote the number of random
	walks alive at the beginning of block $\ell$.
	\begin{align*}
		\E{Z_{\ell} \, \mid \, \mathcal{F}_{t_0+(\ell-1)\cdot \tau}} &\leq
		Z_{\ell -1 } - (Z_{\ell-1}-1)\cdot \frac{(e-1)^4}{9e^4} \leq
		Z_{\ell - 1} -   \frac{Z_{\ell-1}}{100}.
	\end{align*}

	The above holds as long as $Z_{\ell - 1} \geq 100$. We can therefore
	apply \autoref{lem:drift} with parameters $g=99/100 \cdot S_0 $ and
	$\beta=99/100$ to obtain that $\E{T} = O\left(  \tau
	\right)=O(\tmeet)$, which completes the proof.
\end{proof}

\subsection{Upper Bound - Proof of \autoref{thm:mixtradeoff}}
\label{theProcess}

We commence by considering the process $\pimortal$ defined in
\autoref{sec:process}.  This allows us to establish  \autoref{claim:cat}
providing us with the following tradeoff.  For a given period $\tau$ of
length at least $\tmix$  we obtain a bound on the required number of
periods to reduce the number of random walks by an arbitrary factor.  The
proof relies heavily on \autoref{lem:classes2} which divides the walks of
$\mathcal{G}_1$ into two groups allowing us to expose the walks of $\mathcal{G}_1$ {first}
and {then} to calculate the probability of the walks of $\mathcal{G}_2$ to
intersect with them.  In fact, we will also use the symmetric case where
the roles of $\mathcal{G}_1$ and $\mathcal{G}_2$ are switched.  These probabilities are
derived from the time-probability tradeoff presented in
\autoref{lem:ape}.  We then use \autoref{claim:cat} to derive a bound on
the number of time steps it takes to reduce the number of walks to
$\ceil{2\alpha}$, where $\alpha=\tmeet/\tmix$ (\autoref{lem:firstpart}).
From there on we employ \autoref{claim:cat} to reduce the number of walks
to $1$ in $O(\tmeet)$ time steps.  Melding both phases together yields
the bound of \autoref{thm:mixtradeoff}.

We now define a process $\pimortal(S_0, k)$ with $k < |S_0|$, which is a parameterized version of the process $\pimortal$  defined in \autoref{sec:process}:
\begin{itemize}
	\item Let $|S_0| = k^\prime$; there are $k^\prime$ random walks with $\id$s
		$1, \ldots, k^\prime$ and starting nodes $v_1, \ldots, v_{k^\prime}$.
		Thus, $S_0 = \{(v_i, i) ~|~ 1 \leq i \leq k^\prime \}$.
	\item Let $\mathcal{G}_1 = \{1, \ldots, k \}$ and $\mathcal{G}_2 = \{k + 1, \ldots, k^\prime
		\}$. 
		Recall that, by definition of $\pimortal$, we have that if some random walks with $\id$s in $\mathcal{G}_1$ and some with $\id$s
		in $\mathcal{G}_2$ are present on the same node at the same time, only the ones
		with $\id$s in $\mathcal{G}_1$ survive. 
		If all the random walks have
		$\id$s in only in $\mathcal{G}_1$, then all of them survive. 
		If all random walks have $\id$s only in $\mathcal{G}_2$, then only the one with the smallest $\id$ survives.
\end{itemize}

We define \[\ids(S_t) := \{ \id(r) ~|~ (u_r, \id(r)) \in S_t \}, t\in
\naturals .\] The following lemma gives the expected time it takes to
reduce the number of random walks in $\mathcal{G}_2$ from $k' - k$ to some arbitrary
integer $g \geq k$: given a period of length $\tau$ and integer $g$,
assuming that $k=|\mathcal{G}_1|$ is large enough, we derive a bound on the number
of periods of length $\tau$ until the walks in $\mathcal{G}_2$ are reduced to $g$.
The required size of $k$ is a function of the probability for two random
walks drawn from $\pi$ intersecting after $\tau$ time steps.
\begin{claim} \label{claim:cat}
	Let $\tau \in \mathbb{N}$, let $(X_t)_{t = 0}^{\tau}$ and $(Y_t)_{t =
	0}^{\tau}$ be independent random walks run for $\tau$ steps, with $X_0$ and $Y_0$ drawn
	independently from $\pi$. Let $p_\tau \leq
	\Pr{\cross(X_{t},Y_{t}, \tau)}$ be a lower bound on the probability of the
	intersection of the two walks during the $\tau$ steps.
	Consider an instantiation of $\pimortal(S_0, k)$. Suppose that $k \geq \frac{3}{(1
	- e^{-1}) \cdot p_{\tau}}$.  For some $1 \leq g \leq |S_0| - k$, define the
	stopping condition $T_g = \min\{ t \geq 0 ~|~ |\ids(S_t) \cap \mathcal{G}_2| \leq g
	\}$. Then the expected stopping time satisfies
	\[ \E{T_g} = O \left( (4 \tmix + \tau) \cdot
	\sqrt{\frac{1}{p_{\tau}}} \cdot (\log |\mathcal{G}_2| - \log g)\right).
	\]
\end{claim}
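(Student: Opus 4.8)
The plan is to run the process $\pimortal(S_0,k)$ in consecutive blocks of length $\tau':=4\tmix+\tau$ and to show that, irrespective of the history, every walk of $\mathcal{G}_2$ that is alive at the start of a block is killed by the end of it with probability at least $c\sqrt{p_\tau}$ for some absolute constant $c>0$; a multiplicative drift argument then bounds the number of blocks. Within a block I would first spend the opening $4\tmix$ steps bringing the walks towards stationarity: exactly as in the proof of \autoref{lem:beer}, since $4\tmix\geq\tsep$, the coupling of \autoref{lem:randomwalkcoupling} applied individually to each walk makes it coincide with an independent sample from $\pi$ with probability at least $1-e^{-1}$, and conditioned on this its remaining $\tau$ steps are distributed as a $\tau$-step walk started from $\pi$. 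I then apply \autoref{lem:classes2} with this $\tau$; writing $p\geq p_\tau$ for the meeting probability there, at least one of the following holds: (Case~1) $\Pr{(X_t)_{t=0}^\tau\in C_1}\geq p/3$, or (Case~2) $\Pr{(X_t)_{t=0}^\tau\in C_2}\geq\sqrt p/3$.

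The link between meetings and deaths, which is where the immortality of $\mathcal{G}_1$ in $\pimortal$ (rather than in the plain coalescing process, cf.\ \autoref{lem:majormotal}) is used, is the following: if $j\in\mathcal{G}_2$ is alive at the start of a block and the raw, un-coalesced trajectory of walk $j$ over the block meets the raw trajectory of some walk of $\mathcal{G}_1$ over the block --- call this event $A_j$ --- then walk $j$ is dead by the end of the block. Indeed, either walk $j$ is still alive when its trajectory meets that $\mathcal{G}_1$-trajectory, in which case it is killed at that step because the $\mathcal{G}_1$-walk is always alive, or it was already killed earlier by a smaller-$\id$ walk of $\mathcal{G}_2$. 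Hence $\Pr{j\text{ dies in the block}\mid\mathcal{F}}\geq\Pr{A_j\mid\mathcal{F}}$, and since $A_j$ is an event about mutually independent raw random walks I can estimate it by exposing one group of walks at a time. In Case~1 I would expose the $k$ immortal walks first: each independently has a ``good'' trajectory --- coupled to $\pi$ after $4\tmix$ steps and then lying in $C_1$ --- with probability at least $(1-e^{-1})p/3\geq(1-e^{-1})p_\tau/3$, so by the hypothesis $k\geq 3/((1-e^{-1})p_\tau)$ the probability that no immortal walk is good is at most $(1-(1-e^{-1})p_\tau/3)^k\leq e^{-1}$; conditioned on some good immortal walk and on walk $j$ being coupled to $\pi$, the definition of $C_1$ forces walk $j$ to meet that immortal walk with probability at least $\sqrt{p_\tau}$, giving $\Pr{A_j\mid\mathcal{F}}\geq(1-e^{-1})^2\sqrt{p_\tau}$. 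In Case~2 I would instead expose walk $j$ first: its block-trajectory is ``good'' (coupled to $\pi$ and in $C_2$) with probability at least $(1-e^{-1})\sqrt{p_\tau}/3$, and conditioned on this each of the $k$ immortal walks --- coupled to $\pi$ with probability $\geq 1-e^{-1}$ and then meeting this fixed $C_2$-trajectory with probability $\geq p_\tau/3$ by the definition of $C_2$, independently across the immortal walks --- kills walk $j$, and the same computation with $k\geq 3/((1-e^{-1})p_\tau)$ gives that one of them does so with probability $\geq 1-e^{-1}$, whence $\Pr{A_j\mid\mathcal{F}}\geq\tfrac{(1-e^{-1})^2}{3}\sqrt{p_\tau}$. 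Setting $c:=\tfrac{(1-e^{-1})^2}{3}$ covers both cases.

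To conclude, let $Z_\ell:=|\ids(S_{t_0+\ell\tau'})\cap\mathcal{G}_2|$ count the alive $\mathcal{G}_2$-walks after $\ell$ blocks, so $Z_0=|\mathcal{G}_2|$. Deaths are permanent, and by the previous paragraph each $\mathcal{G}_2$-walk alive at the start of a block dies within it with conditional probability at least $c\sqrt{p_\tau}$; summing these indicators over the alive $\mathcal{G}_2$-walks (their deaths need not be independent, but only the sum of probabilities enters) gives
\[
  \E{Z_\ell\mid\mathcal{F}_{t_0+(\ell-1)\tau'}}\;\leq\;\bigl(1-c\sqrt{p_\tau}\bigr)\,Z_{\ell-1}
\]
as long as $Z_{\ell-1}\geq 1$. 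Applying \autoref{lem:drift} with multiplicative drift $\beta=1-c\sqrt{p_\tau}$ and target $g$ shows that the stopping time $T_g$ is reached within $O\bigl((\log|\mathcal{G}_2|-\log g)/\sqrt{p_\tau}\bigr)$ blocks in expectation, and multiplying by the block length $\tau'=4\tmix+\tau$ yields the asserted bound.

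The step I expect to be most delicate is the dependency bookkeeping in the second paragraph: one has to be careful that ``walk $j$ dies'' genuinely dominates the clean event $A_j$ about raw trajectories --- this is exactly where working inside $\pimortal$, with $\mathcal{G}_1$ immortal, is essential --- and that exposing the correct group first (the immortal walks in Case~1, the victim in Case~2) really does render the relevant meeting events conditionally independent of everything already revealed. The stationarity coupling over the first $4\tmix$ steps only costs an extra constant factor $1-e^{-1}$ in each case and does not cause further trouble; likewise the passage from $p$ to the possibly smaller lower bound $p_\tau$ is harmless since every estimate above is monotone in that probability.
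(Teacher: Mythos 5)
Your proof is correct and follows essentially the same route as the paper's: the same block length $4\tmix+\tau$, the same coupling-to-stationarity over the first $4\tmix$ steps, the same two-case exposure argument driven by \autoref{lem:classes2} (immortal walks exposed first in the $C_1$ case, the victim exposed first in the $C_2$ case), and the same drift-lemma finish. The only cosmetic difference is that you isolate the domination step as an explicit event $A_j$ about raw trajectories, whereas the paper phrases it in terms of the ``hypothetical extension'' of the possibly-already-killed walk; these are the same argument, and your version makes the use of immortality slightly more transparent.
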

	We first describe the high-level proof idea, before delving into the
	formal proof. We divide time into blocks of size  $ 4\tmix + \tau$.
	For any random walk $(Z_t)_{t=0}^{4\tmix +\tau}$ we can couple its
	position after $4\tmix \geq \tsep$  w.c.p. with a node drawn from
	$\pi$. Thus, conditioning on the success of this coupling we have, by
	\autoref{lem:classes2},   $\Pr{ (Z_t)_{t=4\tmix}^{4\tmix +\tau} \in
	C_1} \geq \frac{p_\tau}{3}$ or $\Pr{ (Z_t)_{t=4\tmix}^{4\tmix+\tau}
	\in C_2} \geq \frac{\sqrt{p_\tau}}{3}$.  In the former case we have
	that w.c.p. there is at least one random walk $r$ in $\mathcal{G}_1$ which is,
	due to independence of the walks, in class $C_1$. The hypothetical
	extension  of the trajectory of any random walk in $r' \in \mathcal{G}_2$
	intersects with $r$  w.p. $ c \sqrt{p_{\tau}}/3$, where the constant
	arises due to the fact that we also need to couple the state of $r'$
	at time $4 \tmix$ to a node drawn according to $\pi$. (We need to
	consider the hypothetical extension because the walk $r'$ may get
	eliminated sooner--this only helps us.) Thus, $r'$ gets eliminated
	w.p. at least $c\sqrt{p_{\tau}}$ for a suitable constant $c$.

	In the latter case we have that w.p. at least $c \sqrt{p_{\tau}}/3$ a
	random walks of $\mathcal{G}_2$ is in class $C_2$. Every random walk in that
	class intersects w.c.p. with at least one of the walks of $\mathcal{G}_1$.
	Thus, in both cases, we have that in each block a random walk of $\mathcal{G}_2$
	is eliminated w.p. a least $c\sqrt{p_\tau}$  for some constant $c$. 
Thus, the number of random walks in $\mathcal{G}_2$ decrease in expectation by a factor of $c\sqrt{p_\tau}$. 
\begin{proof}
	We will consider the process in \emph{blocks} each consisting of $ 4\tmix +
	\tau$ time-steps. For convenience in the proof, we'll restart counting
	time-steps from $0$ at the beginning of each block; we keep track of the
	total number of time-steps by counting the number of blocks. Let $C_1$ and
	$C_2$ be as defined in \autoref{lem:classes2}. Then we perform a case
	analysis by considering the two possible outcomes described in
	\autoref{lem:classes2} separately. 
	We define $Z_j = |\ids(S_{j \cdot (4 \tmix + \tau)}) \cap \mathcal{G}_2|$,
	\ie the number of walks remaining in $\mathcal{G}_2$ after $j$ blocks of time have
	passed. For any $j \geq 1$, we will show  that there exists a constant $c > 0 $ such
	that, 
	\[ 
		\E{Z_j ~|~ \F_{j - 1} } \leq Z_{j - 1} \cdot \left(1 - c \sqrt{p_\tau}
		\right). 
	\] 
	By using \autoref{lem:drift}, we get $\E{T_g} = O\left((4 \tmix + \tau)
	\cdot \frac{1}{\sqrt{p_{\tau}}} \cdot (\log |\mathcal{G}_2| - \log g)\right)$ (the
	factor $(4 \tmix + \tau)$ appears as the size of the block).  Recall
	that $\mathcal{F}_j$ is the filtration up to end of the $j$th block.
	In the remainder we show that we have indeed $\E{Z_j ~|~ \F_{j - 1} }
	\leq Z_{j - 1} \cdot \left(1 - c \sqrt{p_\tau} \right).$ 
	\medskip

	\noindent{\bf Case 1.} {\it $\Pr{ (X_t)_{t=0}^{\tau} \in C_1} \geq
	\frac{p_\tau}{3}$}: \smallskip \\
	Consider any random walk $r$ in $\mathcal{G}_1$ at the beginning of a \emph{block}.
	Using \autoref{lem:randomwalkcoupling}, after $4 \tmix$ steps we can couple
	the state of the random walk with a node drawn from $\pi$ with probability
	at least $(1 - e^{-1})$. Furthermore, conditioned on this coupling,
	the portion of the random walk between time-steps
	$4\tmix$ and $4\tmix + \tau$ of the walk is in class $C_1$ with probability
	at least $\frac{p_\tau}{3}$. Since $k \geq \frac{3}{p_\tau \cdot (1 - e^{-1})}$,
	w.p. $c_1>0$, in any block, there exists a walk in $\mathcal{G}_1$ that has the portion
	between time-steps $4 \tmix$ and $4 \tmix + \tau$ in $C_1$. 

	Fix a block and condition on the event that there is a walk in $\mathcal{G}_1$,
	denoted by $r_1$, whose portion between time-steps $4 \tmix$ and $4 \tmix +
	\tau$ is in $C_1$. Consider any walk in $\mathcal{G}_2$, denoted by $r_2$, at the
	beginning of the block.  We want to argue that this walk $r_2$ has a
	reasonable probability of intersecting some walk in $\mathcal{G}_1$ in this block of
	time-steps. First, consider (the possibly hypothetical continuation of $r_2$ ) walk $r_2^\prime$
	for the entire length of the block. The reason for this is that if $r_2$ and
	some walk from $\mathcal{G}_1$ are at the same node at the same time sometime in the
	block, $r_2$ will be eliminated in the process $\pimortal(S_0, k)$; however, we can
	consider its hypothetical extension to the entire length of the block.
	Using \autoref{lem:randomwalkcoupling} the state of the walk $r_2^\prime$ at
	time-step $4 \tmix$ can be coupled with a node drawn from $\pi$ with
	probability at least $c_2:=1 - e^{-1}$. Then conditioned on successful coupling,
	the probability that $r_2^\prime$ and $r_1$ collide during time-steps
	$4\tmix$ and $4\tmix + \tau$  is at least $\sqrt{p_\tau}$ (by definition of
	$C_1$ in \autoref{lem:classes2}). Thus, the probability that $r_2$ hits
	at least one
	walk in $\mathcal{G}_1$ is at least $c_1\cdot c_2\cdot \sqrt{p_\tau}$.
	Note that it is also possible for $r_2'$ to be eliminated by another walk from $\mathcal{G}_2$.
	In any case, we have that $r_2$ is eliminated w.p. at least $c \sqrt{p_\tau}$ and we get
		\[ \E{Z_j ~|~ \F_{j - 1} } \leq Z_{j - 1} \cdot \left(1 - c_1\cdot c_2 
	\sqrt{p_{\tau}} \right). \] 

	\medskip 
	\noindent{\bf Case 2.} {\it $\Pr{ (X_t)_{t=0}^{\tau} \in C_2} \geq
	\frac{\sqrt{p_\tau}}{3}$}: \smallskip \\
	Consider a walk in $\mathcal{G}_2$, denoted by $r_2$,
	at the beginning of a block; as in the previous case, we will consider a
	possibly hypothetical continuation $r_2^\prime$ of $r_2$. Using
	\autoref{lem:randomwalkcoupling} we can couple the state of $r^\prime_2$ at
	time-step $4\tmix$ with a node drawn from $\pi$ with probability at least $1
	- e^{-1}$. Furthermore, conditioned on the successful coupling, with
	probability at least $\frac{\sqrt{p_\tau}}{3}$ the trajectory of
	$r^\prime_2$ between the time-steps $4\tmix$ to $4 \tmix + \tau$ is in
	$C_2$.  Thus, with probability at least $p:=(1 - e^{-1})
	\frac{\sqrt{p_{\tau}}}{3}$, $r^\prime_2$ has a trajectory between time-steps
	$4 \tmix$ and $4 \tmix + \tau$ that lies in $C_2$.   Now consider any random walk $r_1 \in \mathcal{G}_1$ at the
	beginning of the block.  
	\fnote{we could possible get rid of the following if we change the $C_1$ $C_2$ lemma do work for any starting distribution}
	Again, using \autoref{lem:randomwalkcoupling} with
	probability at least $1 - e^{-1}$, we can couple the state of the random
	walk at time $4 \tmix$ with a node drawn from $\pi$. Conditioned on this
	between time-steps $4 \tmix$ to $4 \tmix + \tau$, this random walk hits any
	trajectory whose portion between time-steps $4 \tmix$ to $4 \tmix + \tau$
	lies in $C_2$ with probability at least $p_{\tau}/3$ (by definition of $C_2$
	in \autoref{lem:classes2}). Since $k = |\mathcal{G}_1| \geq \frac{3}{(1 - e^{-1})
	\cdot p_{\tau}}$, with at least constant probability $c_1 > 0$ there is some walk in
	$\mathcal{G}_1$ that intersects any fixed trajectory whose portion between time-steps
	$4 \tmix$ to $4 \tmix + \tau$ lies in $C_2$.  Since the random walks in
	$\mathcal{G}_1$ are independent, by the definition of the $\newprocess$, we have
	that any walk in $\mathcal{G}_2$ is eliminated by the end of the block with probability
	at least $c_1 \cdot p =c \sqrt{p_{\tau}}$ for some constant $c > 0$.
	Similarly as before, it is possible that $r_2$ is eliminated by at least one of the walks of $\mathcal{G}_2$, which only increases the probability for $r_2$ of being eliminated.
	We get
	\[ \E{Z_j ~|~ \F_{j - 1} } \leq Z_{j - 1} \cdot \left(1 - c
	\sqrt{p_{\tau}} \right). \] 
\end{proof}

In the following we bound the time $T$ required to reduce to
$2\ceil{\alpha}$ random walks.  The claim follows by applying
\autoref{claim:cat} to derive a bound on $\Timortal$ for
process$\pimortal$, and using the majorization of $T$ by $\Timortal$
(\autoref{lem:majormotal}).  

\begin{corollary} \label{lem:firstpart} 
	Consider the coalescence process starting with set $S_0$ and let $\alpha =
	\tmeet/\tmix$. Let $T_1 = \min \{ t \geq 0 ~|~ |S_t| \leq 2 \lceil \alpha
	\rceil \}$.  Then $\E{T_1} = O(\tmix \cdot \sqrt{\alpha} \cdot \log|S_0|)$. 
\end{corollary}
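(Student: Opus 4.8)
The plan is to derive the bound from a single application of \autoref{claim:cat} to a suitable instantiation of $\pimortal(S_0,k)$, and then to transfer it to the genuine coalescence process via the majorization of \autoref{lem:majormotal}. We may assume $|S_0| > 2\lceil\alpha\rceil$, since otherwise $T_1 = 0$ and there is nothing to prove. It is convenient to split on the magnitude of $\alpha$: there is an absolute constant $\alpha_0$ (one may take $\alpha_0 = 8$) above which the construction below goes through, while the bounded regime $\alpha \le \alpha_0$ is handled separately and cheaply.

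For $\alpha \le \alpha_0$ the bound is immediate from \autoref{lem:beer}: since $T_1 \le \coal(S_0)$ pathwise, we get $\E{T_1} \le \tcoal(S_0) = O(\tmeet \log|S_0|)$, and as $\tmeet = \alpha\,\tmix = O(\tmix)$ and $\sqrt{\alpha}\ge 1$ this is $O(\tmix\sqrt{\alpha}\log|S_0|)$, as required.

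For $\alpha > \alpha_0$, I would pick a constant $b$ with $3e/(e-1) \le b \le (e-1)\alpha/e$ (such a $b$ exists precisely because $\alpha$ exceeds the absolute constant $\alpha_0$), set $\tau := \lceil c\,b\,\tmix \rceil = \Theta(\tmix)$ with $c$ the constant of \autoref{lem:ape}, and take $p_\tau := b/\alpha$; by \autoref{lem:ape}, averaged over the stationary starting points (starting from $\pi$ is a mixture of fixed starts, so the uniform bound of \autoref{lem:ape} passes through), two independent walks from $\pi$ meet within $\tau$ steps with probability at least $p_\tau$. I then set $k := \lceil 3/((1-e^{-1})\,p_\tau) \rceil = \Theta(\alpha)$ and $g := 2\lceil\alpha\rceil - k$. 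The choice $b \ge 3e/(e-1)$ ensures $k \le 2\lceil\alpha\rceil - 1$, so $g \ge 1$, while $|S_0| > 2\lceil\alpha\rceil$ gives $k < |S_0|$ and $g \le |S_0| - k$; hence all hypotheses of \autoref{claim:cat} — in particular $k \ge 3/((1-e^{-1})p_\tau)$ — hold for $\pimortal(S_0,k)$. \autoref{claim:cat} then yields, for its stopping time $\Timortal := T_g$,
\[
  \E{\Timortal} = O\!\left( (4\tmix + \tau)\cdot \sqrt{\frac{1}{p_\tau}} \cdot (\log|\mathcal{G}_2| - \log g) \right) = O\!\left( \tmix \cdot \sqrt{\alpha} \cdot \log|S_0| \right),
\]
using $4\tmix + \tau = \Theta(\tmix)$, $\sqrt{1/p_\tau} = \sqrt{\alpha/b} = \Theta(\sqrt{\alpha})$, and $\log|\mathcal{G}_2| - \log g \le \log|\mathcal{G}_2| \le \log|S_0|$.

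Finally, I would note that at time $\Timortal$ the process $\pimortal(S_0,k)$ has at most $k + g = 2\lceil\alpha\rceil$ surviving walks: the $k$ walks of $\mathcal{G}_1$ never die, and by definition of $T_g$ at most $g$ walks of $\mathcal{G}_2$ remain. Thus the stopping time ``fewer than $2\lceil\alpha\rceil+1$ walks remain'' for $\pimortal$ is at most $\Timortal$; applying \autoref{lem:majormotal} (valid since $1 \in \mathcal{G}_1$ and both processes start with one walk per vertex) the same stopping time for the coalescence process $P$ — which is exactly $T_1$ — is dominated by $\Timortal$ under the coupling, so $\E{T_1} \le \E{\Timortal} = O(\tmix\sqrt{\alpha}\log|S_0|)$. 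The one genuinely delicate point is the parameter bookkeeping in the main case: $b$ (hence $p_\tau$, hence $\tau$) must be chosen large enough that the mandatory size $k = \Theta(1/p_\tau)$ of the immortal group still leaves room to reach the target $2\lceil\alpha\rceil$, yet small enough (a constant) that $\tau = \Theta(\tmix)$, so that $(4\tmix+\tau)\sqrt{1/p_\tau}$ stays $\Theta(\tmix\sqrt{\alpha})$ rather than growing with $\tau = \Theta(\tmeet)$; it is exactly this tension that forces invoking \autoref{lem:beer} in the bounded-$\alpha$ regime.
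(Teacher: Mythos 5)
Your proof is correct and follows essentially the same route as the paper's own argument: dispose of the small-$\alpha$ regime with \autoref{lem:beer}; for $\alpha$ above an absolute constant apply \autoref{lem:ape} with a constant $b$ to get $p_\tau = \Theta(1/\alpha)$ at $\tau = \Theta(\tmix)$; choose $k = \Theta(\alpha)$ large enough for the hypothesis of \autoref{claim:cat} and $g$ so that $k+g \le 2\lceil\alpha\rceil$; and transfer the resulting bound on the $\pimortal$ stopping time to $T_1$ via \autoref{lem:majormotal}. The only deviations are cosmetic constant choices (you take $b \in [3e/(e-1), (e-1)\alpha/e]$, threshold $\alpha_0 = 8$, and $g = 2\lceil\alpha\rceil - k$; the paper fixes $b=6$ and $g = \alpha$), which do not affect the asymptotics.
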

\begin{proof}
	We consider the process $P$ (defined in \autoref{sec:process}), which is
	identical to the coalescence process, but in addition also keeps track of
	$\id$s of random walks and that  allows only the walk with the smallest $\id$ to survive. We assume that the $\id$s are from the set $\{1, 2, \ldots, |S_0|
	\}$. Let  $S_0 = \{(v_1, 1),
	\ldots, (v_{|S_0|}, |S_0|) \}$ and $\bar{S}_0=\{ i \colon (v,i) \in S_0\}$.  We consider the process $\pimortal(S_0,
	k)$ and $k = \lceil \alpha \rceil$. Let $T^*_1$
	be the stopping time defined by $|\ids(\bar{S}_t) \cap \mathcal{G}_2| \leq \alpha$
	for the process $\pimortal(S_0, k)$. By definition of $\pimortal$ and \autoref{lem:majormotal}, it
	follows that $\Timortal$ stochastically dominates $T$. Thus, it suffices to
	bound $\E{\Timortal}$.
	W.l.o.g. we assume that $\alpha \geq 6\frac{e-1}{e}$, otherwise the claim follows directly from \autoref{lem:beer}.
	We apply \autoref{lem:ape} with $b = 6$ and derive that for some suitable constant $c$,
	\[ p=\Pr{\cross(X_{t\geq 0},Y_{t\geq 0}, 6 c \tmix)} \geq \frac{6}{\alpha},\] 
	Thus, we have 
	\[ \frac{3}{(1 - e^{-1}) \cdot p} \leq \frac{3}{\frac{1}{2} \cdot p} \leq \alpha \leq k \]
	Applying \autoref{claim:cat} with $g=\alpha$, $\tau=6c\tmix$ (where $c$ is a
	constant as given by \autoref{lem:ape}), $p_{\tau}=6/\alpha$, and observing
	that $k \geq \frac{3}{(1 - e^{-1}) \cdot p_{\tau}}$, we get the required
	result.  
	\end{proof}	

In the following we bound the time $T$ required to reduce from
$2\ceil{\alpha}$ random walks to a single random walk.  The proof uses
the same ideas as before (\autoref{lem:firstpart}) however, this time we
consider several phases and in each we reduce the number of random walks
by a constant factor. The expected time per phase is geometrically
increasing as the number of walks decreases and the overall time is
essentially dominated by the time for a constant number of random walks
to meet, which is $O(\tmeet)$. 

\begin{lemma} \label{lem:secondpart} 
	Consider the coalescence process starting with set $S_0$, satisfying $|S_0|
	\leq 4 \alpha \log \alpha$, where $\alpha = \tmeet/\tmix$.  Let $T_2 :=
	\min\{ t \geq 0 ~|~ |S_t| \leq 1 \}$. Then $\E{T_2} = O(\tmeet)$. 
\end{lemma}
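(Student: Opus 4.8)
The plan is to follow the phase structure of \autoref{lem:firstpart} and \autoref{lem:beer}, but to adapt the block length used in each phase to the \emph{current} number of walks. Throughout we run the immortal process $\pimortal$ on the current configuration and transfer bounds back to the coalescence process via \autoref{lem:majormotal}, exactly as in the proof of \autoref{lem:firstpart}: each time we instantiate $\pimortal(S,k)$ on a configuration $S$ with one walk per vertex and run \autoref{claim:cat} until $|\ids(S_t)\cap\mathcal{G}_2|\le g$, the coalescence process has reached at most $|\mathcal{G}_1|+g$ walks by that (random) time, hence in expectation within the bound of \autoref{claim:cat}. The phases are chained using the Markov property of the coalescence process and linearity of expectation. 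We may assume $\alpha=\tmeet/\tmix$ exceeds a large absolute constant, since otherwise $|S_0|=O(1)$ and \autoref{lem:beer} already gives $\E{T_2}=O(\tmeet\log|S_0|)=O(\tmeet)$.

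\emph{First phase (down to $O(\alpha)$ walks).} If $|S_0|>C_0\alpha$ for a suitable absolute constant $C_0$, I apply \autoref{claim:cat} once with $\tau=5\tmix$ and $p_\tau=1/(32\alpha)$ (legitimate by the first part of \autoref{lem:ape}), taking $|\mathcal{G}_1|=\lceil C_0\alpha\rceil$ and $g=1$; the hypothesis $|\mathcal{G}_1|\ge 3/((1-e^{-1})p_\tau)$ holds once $C_0$ is large enough. Since $|\mathcal{G}_2|=|S_0|-\lceil C_0\alpha\rceil\le 4\alpha\log\alpha$, this reduces the number of walks to $\le\lceil C_0\alpha\rceil+1=O(\alpha)$ in expected time $O(\tmix\sqrt{\alpha}\,\log|S_0|)$, which is $O(\tmeet)$ because $\log|S_0|=O(\log\alpha)=O(\sqrt{\alpha})$ and $\tmix\alpha=\tmeet$.

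\emph{Second phase (geometric reduction from $O(\alpha)$ walks).} Now I run phases $i=1,2,\dots$, where phase $i$ begins with $k_i$ walks and ends with at most $\tfrac34 k_i$ of them, $k_1=O(\alpha)$, until $k_i$ drops below an absolute constant $c_1$, after which \autoref{lem:beer} finishes in $O(\tmeet)$ further steps. In phase $i$ I split the walks evenly, $|\mathcal{G}_1|=|\mathcal{G}_2|=k_i/2$, so \autoref{claim:cat} only needs $p_\tau=\Theta(1/k_i)$. For $k_i\le\alpha$ this is supplied by the second part of \autoref{lem:ape} with $b=\Theta(\alpha/k_i)\in[1,\tfrac{e-1}{e}\alpha]$, giving $\tau=\Theta(b\tmix)=\Theta(\tmeet/k_i)$, block length $4\tmix+\tau=\Theta(\tmix+\tmeet/k_i)$, and expected phase length $O((\tmix+\tmeet/k_i)\sqrt{k_i})=O(\tmix\sqrt{k_i}+\tmeet/\sqrt{k_i})$; the finitely many phases with $\alpha<k_i\le k_1=O(\alpha)$ are handled with $p_\tau$ a constant and $\tau=\Theta(\tmeet)$, costing $O(\tmeet)$ altogether. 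Summing, $\sum_i\tmix\sqrt{k_i}=O(\tmix\sqrt{\alpha})=O(\sqrt{\tmix\tmeet})=O(\tmeet)$, while $\sum_i\tmeet/\sqrt{k_i}$ is geometric with ratio $\sqrt{4/3}>1$, hence dominated by its last term (the phase with the fewest walks, which costs $\Theta(\tmeet)$), so $\sum_i\tmeet/\sqrt{k_i}=O(\tmeet)$. Adding the $O(\tmeet)$ from \autoref{lem:beer} yields $\E{T_2}=O(\tmeet)$.

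The main obstacle is the bookkeeping: one must verify that in every phase the parameters can be chosen so that \emph{both} the range restriction $1\le b\le\tfrac{e-1}{e}\alpha$ in \autoref{lem:ape} \emph{and} the size restriction $|\mathcal{G}_1|\ge 3/((1-e^{-1})p_\tau)$ in \autoref{claim:cat} are met, the delicate regime being $k_i$ comparable to $\alpha$, where one switches from the $\Theta(\tmix)$-block regime of the first phase to the $\Theta(\tmeet/k_i)$-block regime of the second. Everything else — the coupling to $\pi$ at the start of each block, the domination via \autoref{lem:majormotal}, and the chaining of the per-phase expectations — is already available from the proofs of \autoref{claim:cat} and \autoref{lem:firstpart}.
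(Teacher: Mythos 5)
Your proof is correct and follows essentially the same strategy as the paper's: a sequence of geometrically shrinking phases, each one an application of \autoref{claim:cat} with the parameter $b$ of \autoref{lem:ape} (and hence the block length $\tau$) scaled inversely with the current walk count, domination via \autoref{lem:majormotal}, and a geometric sum of phase lengths bounded by $O(\tmeet)$. The only difference is cosmetic: you carve out a separate first phase to bring the count from $|S_0|$ to $O(\alpha)$, whereas the paper folds this into a single $(4/3)^{\ell-j+1}$-indexed family of phases by choosing $b_j$ with an extra $(\ell-j+1)$ factor, so that the resulting series absorbs the initial $\log\alpha$ factor on its own.
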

\begin{proof}
	We will consider the coalescence process in phases. Let $\ell$ be the
	largest integer such that $|S_0| \geq \left(\frac{4}{3} \right)^{\ell}$. For
	$j \geq 1$, the $j\th$ phase ends when $|S_t| < \left(\frac{4}{3}
	\right)^{\ell - j + 1}$. The $(j + 1)\th$ phase begins as soon as the $j\th$
	phase ends. Note that it may be the case that some phases are empty. Let
	$T_2(j)$ denote the time for phase $j$ to last. We will only consider phases
	up to which $\ell - j + 1 \geq 32$.

	Now we focus on a particular phase $j$. Let $t_j$ be the time when the
	$j\th$ phase begins and let $S_{t_j}$ denote the corresponding set at that time. Thus, we have
	\begin{equation}\label{bbbiotech} \left(\frac{4}{3} \right)^{\ell - j + 1} \leq | S_{t_j} | < \left(\frac{4}{3} \right)^{\ell - j + 2} \end{equation}
	We consider the process $\pimortal$ defined in \autoref{theProcess} as
	follows. 
Define $n_j = |S_{t_j}|$.
Fix a phase $j$ and define 
	$S'_0 = \{ (v_1, 1), \ldots, (v_{n_j}, n_j) \}$ and
	$\bar{S'}_{0} = \{v_1, \ldots, v_{n_j} \}$. Then, consider again the set of occupied vertices (ignoring the labels) $\bar{S}_{t_j + t} = \{ v ~|~ \exists i \in \naturals, (v,
	i) \in S'_t \}$ with $t\in \naturals$. Thus, phase $j$ ends when $|S'_t| =|\bar{S}_{t_j + t} |< \left( \frac{4}{3}
	\right)^{\ell - j + 1}$. 	Let \[k_j := \left\lceil \frac{|S'_0|}{2} \right\rceil\] 
	be the size of $\mathcal{G}_1$ and consider the process
	$\pimortal(S'_0, k_j)$ as defined in \autoref{theProcess}. Let \[g_j
	:= \left\lfloor\frac{|S^\prime_0| - k_j}{3} \right\rfloor\] and 
	\[T^*_2(j) := \min\{ t ~|~ |\ids(S^\prime_t) \cap \mathcal{G}_2| \leq g_j \}.\]
	 We note that as long as
	$\ell - j + 1 \geq 32$, $g_j \geq 1$ and at time $T^*_2(j)$, 
	\[|S^\prime_t| \leq g_j +k_j \leq \frac{|S^\prime_0| - k_j}{3} +k_j =
	\frac{|S^\prime_0| }{3}  + \frac{2 k_j}{3} \leq  \frac{|S^\prime_0| }{3}
	+ \frac{|S^\prime_0| }{3}+\frac23 < \frac{3}{4} \cdot |S^\prime_0|.\] 
	By \autoref{lem:majormotal}, $T^*_2(j)$ stochastically dominates
	$T_2(j)$ and hence it suffices to bound $\E{T^*_2(j)}$.  In order to
	bound $\E{T^*_2(j)}$,  we define 
	\[ b_j := 32 \alpha \log(4/3)
	(\ell - j + 1) (3/4)^{\ell - j + 1} .\]
	Since we only consider phases with $j$ respecting $\ell - j + 1 \geq 32$ we have $b_j \leq b_{\ell - 31}
	\leq ((e-1)/e)\alpha$.
	Furthermore, we have $b_j \geq b_0 \geq 4\alpha \log \alpha (3/4)^\ell \geq 1$,
	where the last inequality follows from $(4/3)^\ell \leq |S_0| \leq 4\alpha \log \alpha,$ which in turn follows from definition of $\ell$ and the assumed bound on $|S_0|$.
	 Applying \autoref{lem:ape} with this value of $b_j$,
	we get that for 
	\[\tau_j := c b_j \tmix,\] for independent random walks
	$(X_t)_{t = 0}^{\tau_j}$, $(Y_t)_{t = 0}^{\tau_j}$,  $\Pr{\cross(X_t, Y_t,
	\tau_j) }\geq p_j$, where \[p_j := 32 \log(4/3)(\ell - j + 1) (3/4)^{\ell - j +
	1}.\] 
	We seek to apply \autoref{claim:cat} to bound $\E{T^*_2(j)}$. We first verify that the conditions of \autoref{claim:cat} are fulfilled. In particular, we 
	verify that $k_j \geq \frac{8 }{p_j}$; 
	 to see this consider the
	following: 
	\begin{align}
		\frac{8  }{p_j} 
		&=
		\frac{ 8  }{ 32 \log(4/3)(\ell - j + 1) }   (4/3)^{\ell - j +	1}\nonumber
\leq
		\frac{  1 }{ 4}  \cdot \left(
		\frac{4}{3} \right)^{\ell - j + 1} \nonumber
		\leq \frac{1}{2} \cdot |S^\prime_0| \leq k_j, \label{eqn:applyclaim}
	\end{align}
	where we used \eqref{bbbiotech} and $ |S^\prime_0|=| S_{t_j} |$ in the second-last inequality. 
	Thus we can apply \autoref{claim:cat} and derive  
	\[ \E{T^*_2(j)} \leq (\tau_j + 4\tmix)\cdot\frac{1}{\sqrt{p_j}}\cdot\left( \log|\ids(S^\prime_0) \cap \mathcal{G}_2|- \log g_j  \right) \]
	 and we continue by dissecting that bound.
	Since $b_j \geq 1$, there exists a suitably large constant $c_1$, so that $\tau_j + 4
	\tmix \leq c_1 b_j \tmix$.
	Furthermore, 
	\[ \frac{b_j}{\sqrt{p_j}} = \frac{32 \alpha \log(4/3)
	(\ell - j + 1) (3/4)^{\ell - j + 1}}{ \sqrt{32 \log(4/3)(\ell - j + 1) (3/4)^{\ell - j +
	1}}} =  O\left(  \alpha  \sqrt{\ell - j + 1 }  \cdot \left( \frac{3}{4} \right)^{(\ell - j + 1)/2}  \right).\] 
	Observe that, by definition, $|\ids(S^\prime_0) \cap \mathcal{G}_2| /g_j \leq 3$, hence $\log|\ids(S^\prime_0) \cap \mathcal{G}_2|- \log g_j \leq \log (3)$. Putting everything together, we get that there is a constant $c_2$ such that,
	\begin{align} 
		\E{T^*_2(j)} &\leq c_2 \cdot \tmix \cdot \alpha \cdot \sqrt{\ell - j + 1}
		\left( \frac{3}{4} \right)^{(\ell - j + 1)/2} \label{eqn:phase-j-bound}
	\end{align}

	Note that since we stop when $\ell - j + 1 < 32$, there are at most $\ell -
	30$ phases considered. Let $\tilde{T}$ be the random variable denoting the
	time-step when the last phase ends; at this point $|S_{\tilde{T}}| = O(1)$.
	Therefore, using \autoref{lem:beer}, $\E{T_2 - \tilde{T} \,|\,
	\tilde{T}} = O(\tmeet)$. But, clearly $\tilde{T}$ is stochastically
	dominated by $\sum_{j = 0}^{\ell - 30} T_2^*(j)$. Thus, we have
	\begin{align}
		\E{T_2} &= \E{\tilde{T}} + \E{\E{T_2 - \tilde{T} \,\mid\, \tilde{T}}}  \nonumber \\
		&\leq c_2 \cdot \tmix \cdot \alpha \sum_{j = 0}^{\ell - 30} \sqrt{ \ell - j  +
		1} \left(\frac{3}{4}\right)^{(\ell -j + 1)/2} + c_3 \tmeet
		\label{eqn:finalstep2} \\
		&\leq c_2 \cdot \tmix \cdot \alpha + c_3 \tmeet = O(\tmeet) \label{eqn:finalstep3}
	\end{align}
	Above, in~(\ref{eqn:finalstep2}) we used~(\ref{eqn:phase-j-bound}) and the
	fact that $\E{T_2 - \tilde{T} ~|~ \tilde{T}} \leq c_3 \tmeet$ for some
	constant $c_3 > 0$ and in step~(\ref{eqn:finalstep3}), we used the fact that
	$\sum_{j = 32}^{\infty} j c^j < 1$ for $c \leq \sqrt{3/4}$. 
\end{proof}

Thus, the first phase (\autoref{lem:firstpart}) and the second phase (\autoref{lem:secondpart}) take together $O(\sqrt{\alpha} \cdot \log n \cdot \tmix + \tmeet)$ time-steps, which yields
\autoref{thm:mixtradeoff}.

\subsection{Lower Bound - Proof of \autoref{thm:lowerboundgraph}} 
\label{sec:lowerbound}
	
In this section we give a construction of a graph family in order to establish lower bounds on $\tcoal(G)$ in terms of $\tmeet(G)$ and $\tmix(G)$ demonstrating that
 \autoref{thm:mixtradeoff} is asymptotically tight.
Additionally, our construction generalizes a claim of Aldous and Fill~\cite[Chapter 14]{AF14}:
 They mention that it is possible to construct regular graphs that mimic the $n$-star in the sense that the $\tmeet = o(\tavghit)$, without giving further details of the construction.
 Our construction shows that even the coalescence time can be significantly smaller than the average hitting time for almost-regular graphs.
 For our family of almost-regular graphs, there is a polynomial gap between $\tmeet$ and $\tavghit$. More importantly, we show that these almost-regular graphs have a gap  of $\sqrt{{\tmix}/{\tmeet}} \cdot \log n$ between coalescing and meeting time.
 This shows that the bound in \autoref{thm:mixtradeoff} is best possible, even if we constrain $G$ to be almost-regular.
We refer the reader to \autoref{sec:technical} for a high-level description  of the proof ideas.

More precisely, in the proof of \autoref{thm:lowerboundgraph} we shall give 
an explicit construction of a graph family $G=G_n$ with 
 $\tcoal=\Omega(\sqrt{\alpha_n}\cdot \log n \cdot \tmix)$, where $\alpha_n={\tmeet}/{\tmix}$.
For the remainder of this section, we will drop the dependence on $n$ and will simply use $G$ instead of $G_n$ and $\alpha$ instead $\alpha_n$. 


\begin{figure}
\centering
\includegraphics[page=2]{tikzmasterpieces}
\caption{The graph described in \autoref{sec:lowerbound}
with  $\tcoal = \Omega(\tmeet + \sqrt{\tmeet/\tmix}\cdot \log n \cdot \tmix)$.
}
\label{fig:holygraph}
\end{figure}

The construction of $G$ (see \autoref{fig:holygraph} for an illustration) is based on two building blocks, $G_1$ and $G_2$.
First, let $G_1=(V_1,E_1)$ be a clique of size $\sqrt{n}$.
	Let $G_2=(V_2,E_2)$ be a $\sqrt{n}$-regular bipartite Ramanujan Graph on $n/\sqrt{\alpha'}$ nodes \cite{MSS15}, where $\alpha' = \max\{\alpha,2^{20}\cdot C^2\}$, where $C>1$ is the universal constant of \autoref{cor:universialconstant}. The graph $G$ is made of one copy of $G_2$, $\kappa=\sqrt{n}$ copies of $G_1$ (denoted by $G_1^1, G_1^2, \dots, G_1^\kappa$), and a node $\widehat{z}$,
	which has an edge to  $\sqrt{n/\alpha'}$ distinct nodes of $G_2$ and to each of   
	the designated nodes $z^i \in V_1^i$ in  $G_1^i$  for $i\in[1,\kappa]$. It is not difficult to see that this graph is almost-regular, \ie maximum and minimum degree differ by at most a constant factor.

	In \autoref{lem:mix}, \autoref{lem:meet},  \autoref{lem:coal} and \autoref{lem:avghit} respectively we show that $\tmix = \Theta(n)  $, $\tmeet = \Theta(\alpha' n)$,  $\tcoal = \Omega(\sqrt{\alpha'}\cdot n  \log n)$, and 
	$\tavghit = \Omega(n^{3/2})$.
	We start with the following auxiliary lemma which shows that the walk restricted to $V_2$
	behaves similarly to the walk restricted to $V_2 \cup \{ \widehat{z}\}$, meaning that the walks have very similar $t$-step probabilities.

\begin{lemma}\label{lem:huehue}
Let $P$ denote the transition matrix of the random walk on $G$, $Q$ the transition matrix of the random walk on $G_2$
and $\widehat Q$ be the transition matrix of the random walk on the subgraph 
of $G$ induced by $V_2 \cup \{ \widehat{z} \}$. 
Let $p^t_{u,v},q^t_{u,v},\widehat q^t_{u,v}$ denote the corresponding transition probabilities for a walk starting at $u$ to end up at node $v$ after $t$ steps. Let $S^*= \{ u \in V_2 \cap N(\widehat{z})  \}$. 
Then the following statements hold:
\begin{enumerate}[(i)]
	\item For any $u, v\in V_2$ we have $\tvdist{p^t_{u,\cdot}-q^t_{u,\cdot} } \leq \sum_{i=1}^{t-1} p^i_{u,S^*}/(2\sqrt{n}) \leq  t/(2\sqrt{n})$.
	\item  For any $u, v\in V_2$ we have $\tvdist{ \widehat q^t_{u,\cdot}-q^t_{u,\cdot}}  \leq \sum_{i=1}^{t-1} p^i_{u,S^*}/(2\sqrt{n})  \leq t/(2\sqrt{n})$.
	\item 
For any $u,v\in V_2$ we  have that after $t=\tmix(G_2)$ time steps 
	$\tvdist{p^t_{u,\cdot} -p^t_{v,\cdot}  } \leq o(1) + 2/e.  $
\end{enumerate}
\end{lemma}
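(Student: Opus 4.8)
The plan is to prove all three parts by coupling arguments resting on one structural observation: the graph $G$ and the subgraph of $G$ induced by $V_2\cup\{\widehat z\}$ induce \emph{identical} one-step (lazy) transition probabilities at every vertex of $V_2$. Indeed, each $w\in V_2\setminus S^*$ has degree $\sqrt n$ in both, with all neighbours in $V_2$, and each $w\in S^*$ has degree $\sqrt n+1$ in both, namely its $\sqrt n$ neighbours in $G_2$ (which all lie in $V_2$) together with $\widehat z$. Consequently a lazy walk on $G$ and a lazy walk on $G[V_2\cup\{\widehat z\}]$, both started at some $u\in V_2$, can be coupled so as to coincide up to and including the first time-step at which they visit $\widehat z$.

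For part (i), I would couple a lazy walk $(X_s)$ on $G$ from $u$ with a lazy walk $(Y_s)$ on $G_2$ from $u$, letting them move identically while $X_s\in V_2\setminus S^*$ and using the maximal coupling of the one-step laws whenever $X_s=Y_s\in S^*$. The only discrepancy between those one-step laws is the probability $\tfrac{1}{2(\sqrt n+1)}\le\tfrac{1}{2\sqrt n}$ that the $G$-walk steps to $\widehat z$, so the coupling can fail only at a step where $X$ sits on $S^*$ and then traverses the edge to $\widehat z$. A union bound over steps, together with the fact that $X$ is marginally a genuine $G$-walk and hence occupies $S^*$ at step $i$ with probability exactly $p^i_{u,S^*}$, yields $\tvdist{p^t_{u,\cdot}-q^t_{u,\cdot}}\le\sum_i p^i_{u,S^*}/(2\sqrt n)$; bounding each summand by $1/(2\sqrt n)$ gives the clean bound $t/(2\sqrt n)$. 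Part (ii) is the same argument with one extra layer: couple simultaneously the $G$-walk $(X_s)$, the $G[V_2\cup\{\widehat z\}]$-walk $(\widehat X_s)$ and the $G_2$-walk $(Y_s)$, all from $u$, so that $X_s=\widehat X_s$ until the first visit to $\widehat z$ (possible by the observation above) while $\widehat X_s$ and $Y_s$ can separate only at a step where $\widehat X$ leaves $S^*$ towards $\widehat z$. Any discrepancy between $\widehat X_t$ and $Y_t$ then forces $X$ to have occupied $S^*$ at that step, so the same union bound over $\Pr{X_i\in S^*}=p^i_{u,S^*}$ delivers the claimed inequality, again expressed through the $G$-walk probabilities.

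For part (iii), I would use the triangle inequality
\[
  \tvdist{p^t_{u,\cdot}-p^t_{v,\cdot}}\le\tvdist{p^t_{u,\cdot}-q^t_{u,\cdot}}+\tvdist{q^t_{u,\cdot}-q^t_{v,\cdot}}+\tvdist{q^t_{v,\cdot}-p^t_{v,\cdot}}
\]
with $t=\tmix(G_2)$. By part (i) the two outer terms are at most $\tmix(G_2)/(2\sqrt n)$; since $G_2$ is a $\sqrt n$-regular Ramanujan graph on $n/\sqrt{\alpha'}\le n$ vertices, its lazy walk has spectral gap $\Omega(1)$, whence $\tmix(G_2)=O(\log n)$ and these terms are $o(1)$. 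The middle term is at most $\bar d_{G_2}(\tmix(G_2))\le 1/e$ by the definition of the mixing time as the first time at which $\bar d_{G_2}(\cdot)\le 1/e$. Adding up gives $\tvdist{p^t_{u,\cdot}-p^t_{v,\cdot}}\le o(1)+1/e\le o(1)+2/e$.

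The step I expect to be most delicate is the three-fold coupling in part (ii): one must place all three walks on a common probability space so that $X$ remains a bona fide $G$-walk (so that $\Pr{X_i\in S^*}=p^i_{u,S^*}$ holds with no conditioning correction) while the $G_2$-walk's deviation is charged only to time-steps spent on $S^*$. Once this bookkeeping is set up, parts (i)--(iii) follow by routine union bounds and the triangle inequality. A negligible subtlety is the exact lower limit of the summation index — the $i=0$ term appears only when $u$ itself lies in $S^*$ — which alters the bound by at most an additive $1/(2\sqrt n)$ and is immaterial for every later use of the lemma.
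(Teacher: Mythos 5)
Your proposal is correct and rests on the same coupling idea as the paper's proof, but two small differences are worth noting. For part (ii), the paper simply says to ``redefine $(X_t)$ to use $\widehat Q$ and the proof is identical''; read literally, that yields the bound $\sum_i \widehat q^i_{u,S^*}/(2\sqrt{n})$ rather than $\sum_i p^i_{u,S^*}/(2\sqrt{n})$ as stated in the lemma, and it is not a priori clear that $\widehat q^i_{u,S^*}\le p^i_{u,S^*}$ (the $\widehat Q$-walk is reflected back into $V_2\cup\{\widehat z\}$, so it could visit $S^*$ \emph{more} often than the $P$-walk). Your three-fold coupling plugs this gap cleanly: since a lazy $P$-walk and a lazy $\widehat Q$-walk from $u\in V_2$ coincide up to the first visit to $\widehat z$, and the $\widehat Q$/$Q$ pair can only decouple \emph{before} that first visit (at a step where the coupled walk is on $S^*$), the decoupling probability at step $i$ is at most $\Pr{X_i\in S^*}=p^i_{u,S^*}$ exactly as claimed. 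For part (iii), you route the triangle inequality through $q^t_{v,\cdot}$ and bound the middle term by $\bar d_{G_2}(\tmix(G_2))\le 1/e$ directly, whereas the paper goes through $\pi^{G_2}$; both are valid (you actually obtain the marginally stronger $o(1)+1/e$), and your estimate $\tmix(G_2)=O(\log n)$, though weaker than the paper's $O(1)$, is still enough to make the outer terms $o(1)$.
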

\begin{proof}

Let $(X_t)_{t\geq 0}$ be the Markov chain with transition matrix $P$ and let $(Y_t)_{t\geq 0}$ be the Markov chain with transition matrix $Q$. We will inductively couple these two random walks starting from $X_0=Y_0=u$.
Given that we coupled both chains up to time $t-1$, we can couple $(X_t,Y_t)$ such that $X_t=Y_t$ with an error probability
\begin{align*}
	\Pr{X_t \neq Y_t~|~X_{t-1}= Y_{t-1}}&=\Pr{X_t\neq Y_t~|~X_{t-1}= Y_{t-1}, X_{t-1}\in S^*} \cdot \Pr{ X_{t-1}\in S^*}\\
	&\phantom{000}+\Pr{X_t\neq Y_t~|~X_{t-1}= Y_{t-1}, X_{t-1} \in V_2\setminus S^*} \cdot \Pr{ X_{t-1}\in V_2\setminus S^*}\\
	&\leq p^{t-1}_{u,S^*}/(2\sqrt{n})+ 0.
\end{align*}
We have, by \cite[Proposition 4.7]{LPW06},
$
	\tvdist{p^t_{u,\cdot} -p^t_{v,\cdot}  } = \inf\{ \Pr{ X \neq Y} ~|~(X,Y)\text{ is  a coupling of $ p^t_{u,\cdot}$ and $p^t_{v,\cdot} $}\}.
$
Hence, by a union bound over $t$ steps,
\begin{align*}
	\tvdist{p^t_{u,\cdot} -p^t_{v,\cdot}  } &= \inf\{ \Pr{ X \neq Y} ~|~(X,Y)\text{ is  a coupling of $ p^t_{u,\cdot}$ and $p^t_{v,\cdot} $}\} \leq \Pr{X_t \neq Y_t} \\
	&\leq \sum_{i=1}^{t-1} p^i_{u ,S^*}/(2\sqrt{n}) \leq \frac{t}{2\sqrt{n}}.
\end{align*}	
To prove the second part we redefine
$(X_t)_{t\geq 0}$ to be the Markov chain with transition matrix $\widehat Q$ and the proof is identical.

We proceed with the last part.
For $u,v\in V_2$ we  have that after $t=\tmix(G_2)$ time steps, by the triangle inequality and using that $\tmix(G_2)=O(1)$, by \autoref{lem:BASF}, we get 
\begin{align*}
	\tvdist{ p^t_{u,\cdot} - \pi^{G_2}(\cdot) } &\leq  \tvdist{ p^t_{u,\cdot} - q^t_{u,\cdot} }  +  \tvdist{q^t_{u,\cdot} - \pi^{G_2}(\cdot) }  \\
	&\leq    \frac{\tmix(G_2)}{2\sqrt{n}}  + \tvdist{q^t_{u,\cdot} - \pi^{G_2}(\cdot) }\\
	&\leq o(1)  + \tvdist{q^t_{u,\cdot} - \pi^{G_2}(\cdot) } \leq o(1) + 1/e, 
\end{align*}
where the last inequality follows form the definition of mixing time.
	 Again, by the triangle inequality, 
	 $ \tvdist{ p^t_{u,\cdot} -p^t_{v,\cdot} } \leq o(1) + 2/e.  $
\end{proof}

Based on \autoref{lem:huehue}, we can now bound the hitting time to reach $\widehat{z}$, which will later be used to establish the bounds on the 
	mixing and meeting time of the whole graph $G$. But first, we prove that the mixing time of the graph $\widehat{G}$ induced by $V_2 \cup \{ \widehat{z}\}$ is constant and that after mixing on $\widehat{G}$, the random walk has a probability of $\Omega(1/n)$ to hit $\widehat{z}$ in a constant number of time steps.
	\begin{lemma}\label{lem:holyhit}
	 The following three statements hold.
			\begin{enumerate}[(i)]
			\item  Let $\widehat{G}$ be the induced graph by the vertices $V_2 \cup \{ \widehat{z}\}$. Then $\tmix(\widehat{G})=O(1)$.
				\item Let $u\in V \setminus \{ \widehat{z}\}$. Then there exists a constant $c \geq 1$ such that $\Pr{\hit(u,\widehat{z}) \geq n/c } \geq 1/2$. 

			\item Let $u\in V \setminus \{ \widehat{z}\}$. Then $   \thit(u,\widehat{z})=O({n})$.

		\end{enumerate}
	\end{lemma}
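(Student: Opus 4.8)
The plan rests on the \emph{region structure} of $G$. Deleting $\widehat{z}$ disconnects $G$ into the copy of $G_2$ (vertex set $V_2$) and the $\kappa$ cliques $G_1^1,\dots,G_1^{\kappa}$, and every edge joining two of these pieces is incident to $\widehat{z}$; hence a walk started at $u\neq\widehat{z}$, say inside the region $R$ containing $u$ (so $R=V_2$ or $R=V_1^i$), stays in $R$ until it first hits $\widehat{z}$, and it can only reach $\widehat{z}$ by stepping there from an \emph{exit vertex} of $R$ --- the designated node $z^i$ if $R=V_1^i$, or a node of $S^*=V_2\cap N(\widehat{z})$ if $R=V_2$. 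Moreover, up to the first visit to $\widehat{z}$ the walk's (sub-stochastic) kernel on $R$ is dominated entrywise by the kernel of the lazy walk on $R$ alone --- the only change being that at each exit vertex a probability $\tfrac{1}{2\deg_G(\cdot)}=\Theta(1/\sqrt n)$ leaks out to $\widehat{z}$. I will use this, together with $\tmix(G_2)=O(1)$ from \autoref{lem:BASF} (and $\tmix(G_1^i)=O(1)$, trivial for a clique), and the facts that the exit set $E_R\in\{S^*,\{z^i\}\}$ has stationary mass $\Theta(1/\sqrt n)$ in $R$ while $|E_R|=O(\sqrt n)$.

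\emph{Part (i).} For $u\in V_2$ I would bound $\tvdist{\widehat q^{\,t}_{u,\cdot}-\pi^{\widehat G}}\leq\tvdist{\widehat q^{\,t}_{u,\cdot}-q^t_{u,\cdot}}+\tvdist{q^t_{u,\cdot}-\pi^{G_2}}+\tvdist{\pi^{G_2}-\pi^{\widehat G}}$: the first term is $\leq t/(2\sqrt n)$ by \autoref{lem:huehue}(ii), the second is controlled since $\tmix(G_2)=O(1)$, and a one-line degree count gives $\tvdist{\pi^{G_2}-\pi^{\widehat G}}=O(1/n)$ (the vertex $\widehat{z}$ carries stationary mass $\Theta(1/n)$, and only the $|S^*|$ degrees of $V_2$ in $N(\widehat{z})$ are perturbed, each by $1$). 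Hence for a suitable constant $t_0=\Theta(\tmix(G_2))$ and all $u\in V_2$, $\tvdist{\widehat q^{\,t_0}_{u,\cdot}-\pi^{\widehat G}}\leq 1/(8e)$. For the start $u=\widehat{z}$ I condition on the first time $\xi$ the walk enters $V_2$: since $\Pr{\xi>s}=2^{-s}$, decomposing $\widehat q^{\,2t_0}_{\widehat z,\cdot}$ over $\{\xi\leq t_0\}$ and $\{\xi>t_0\}$ and using the bound just obtained on the former event bounds the total variation by $2^{-t_0}+\max_{t_0\leq\tau\leq 2t_0}\max_{v\in V_2}\tvdist{\widehat q^{\,\tau}_{v,\cdot}-\pi^{\widehat G}}\leq 1/(8e)+1/(8e)$ once $t_0$ and $n$ are large enough. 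Since $\bar d_{\widehat G}(2t_0)\leq 2\max_u\tvdist{\widehat q^{\,2t_0}_{u,\cdot}-\pi^{\widehat G}}$, this yields $\bar d_{\widehat G}(2t_0)\leq 1/e$ and $\tmix(\widehat G)=O(1)$.

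\emph{Part (ii).} Fix $u\neq\widehat{z}$ in region $R$. Each visit to $\widehat{z}$ is preceded by a step from an exit vertex, so for any $T$,
\[
\Pro{u}{\hit(u,\widehat z)\leq T}\;\leq\;\sum_{t=0}^{T-1}\Pro{u}{X_t\in E_R,\ \hit(u,\widehat z)>t}\cdot\frac{1}{2\sqrt n}\;=\;\frac{\Ex{u}{N_T}}{2\sqrt n},
\]
where $N_T$ counts the visits to $E_R$ made before hitting $\widehat{z}$. By the entrywise domination above, $\Pro{u}{X_t\in E_R,\ \hit(u,\widehat z)>t}$ is at most the probability $p^t_{u,E_R}$ that the lazy walk on $R$ alone is in $E_R$ at time $t$; summing the standard $\ell^2$ bound $|p^t_{u,E_R}-\pi_R(E_R)|\leq\sqrt{|E_R|}\,\lambda_2(R)^t$ over $t$ (valid since $\pi_R$ is uniform on $R$) and using $\pi_R(E_R)=\Theta(1/\sqrt n)$, $|E_R|=O(\sqrt n)$, and $1-\lambda_2(R)=\Omega(1)$ gives $\Ex{u}{N_T}=O(T/\sqrt n)+O(n^{1/4})$. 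Taking $T=n/c$ makes the right-hand side of the display at most $\Theta(1/c)+o(1)$, which is $<1/2$ for a suitable constant $c$; hence $\Pro{u}{\hit(u,\widehat z)\geq n/c}\geq 1/2$.

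\emph{Part (iii).} With $h=\max_{v\in E_R}\thit(v,\widehat z)$ and $H=\max_{v\in R}\thit(v,\widehat z)$, one step from an exit vertex gives $h\leq 1+(1-\Theta(1/\sqrt n))H$, while reaching $\widehat{z}$ from an arbitrary vertex of $R$ costs first reaching $E_R$, so $H\leq\max_{v\in R}\thit(v,E_R)+h$; and $\max_{v\in R}\thit(v,E_R)=O(\sqrt n)$ since $E_R$ has stationary mass $\Theta(1/\sqrt n)$ in a reversible chain of relaxation time $O(1)$ (the clique, resp.\ $G_2$), by the standard bound on hitting times of sets (cf.~\cite{AF14}). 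Substituting the second inequality into the first, $h\cdot\Theta(1/\sqrt n)=O(\sqrt n)$, i.e.\ $h=O(n)$ and hence $H=O(n)$, which is the claim. The main obstacle is the first-moment estimate in Part (ii): the mixing error of the walk on $G_2$ is $\Theta(1)$ whereas $\pi^{G_2}(S^*)=n^{-1/2}$, so the crude bound $\Ex{u}{N_T}\leq T$ is useless and one genuinely needs the short-time ($\ell^2$/Ramanujan) estimate for reaching the very small set $S^*$; once that estimate is in place — and given Part (i) and \autoref{lem:huehue} — the remaining steps are routine.
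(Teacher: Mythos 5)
Your proposal is correct, and for parts (ii) and (iii) it takes a genuinely different and cleaner route than the paper does. The paper treats $u\in V_2$ and $u\in V_1^i$ by two separate arguments --- for $V_2$ a direct spectral bound on $\sum_t\widehat q^t_{u,\widehat z}$, and for $V_1^i$ an explicit coupling of the walk on $G$ with the walk restricted to the clique, controlling the number of returns to $z^i$; and for part (iii) the paper iterates the estimate ``hit $\widehat z$ within $O(1)$ steps w.p.\ $\Omega(1/n)$''. You instead extract the single structural fact --- deleting $\widehat z$ disconnects $G$ into regions, each with an exit set $E_R$ of stationary mass $\Theta(1/\sqrt n)$, and the killed kernel on $R$ is entrywise dominated by the stand-alone lazy kernel on $R$ --- and from it derive (ii) by a first-moment bound on visits to $E_R$ and (iii) by a two-line recursion between $h=\max_{v\in E_R}\thit(v,\widehat z)$ and $H=\max_{v\in R}\thit(v,\widehat z)$. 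This unifies the two regions and replaces the paper's coupling argument by a domination that is almost immediate to check. (Part (i) is essentially the same triangle-inequality argument as the paper's; your handling of the start $u=\widehat z$ by conditioning on the geometric exit time $\xi$ is a small streamlining of the paper's $t=7$ computation.) What the paper's version buys is that it does not need to invoke the ``$\max_u\thit(u,A)=O(\tmix/\pi(A))$'' bound for sets $A$; what yours buys is that parts (ii) and (iii) become short, symmetric, and visibly driven by a single piece of geometry. One tiny bookkeeping note: the correction term you write as $O(n^{1/4})$ (from $\sqrt{|E_R|}/(1-\lambda_2)$) is an overestimate of what the calculation gives, but it is still $o(\sqrt n)$ so the conclusion is unaffected.
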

\begin{proof}

We prove the statements one by one.
\begin{enumerate}[(i)]

\item
Let $Q$ be the transition matrix of a random walk restricted to $G_2$.  Let $d^Q(t)$ be the total variation distance w.r.t. the transition matrix $Q$.
Further, let $\widehat Q$ be the transition matrix of a random walk restricted to $\widehat{G}$.
Recall that $\tmix(G_2)=O(1)$, by \autoref{lem:BASF}.

Fix an arbitrary  $t \in [ 2\tmix(G_2), 2\tmix(G_2)+7]$.
In the following we show $\tvdist{\widehat q_{u,\cdot}^{t} - \pi^{\widehat{G}}(\cdot )} \leq 1/e$. We first consider any start vertex $u \in V_2\setminus \{\widehat{z}\}$  
and afterwards the vertex $u=\widehat{z}$.
Let $\mathcal{D}$ be the set of distributions over $V(\widehat{G})=V_2 \cup \{ \widehat{z}\}$ assigning no probability mass to $\widehat{z}$, \ie 
\begin{align}\label{eq:thatnicedistribution}
\mathcal{D}= \{ D' \colon \text{ for $u~\sim D'$ we have $\Pr{ u = \widehat{z}}=0$} \}. 
\end{align}

For any such $D' \in \mathcal{D}$, we have, by definition of the total variation distance,
\begin{align*}
	\tvdist{\widehat q_{u\sim D',\cdot}^t -   \pi^{\widehat{G}}(\cdot) } &= 0+ \frac12 \sum_{v\in V_2} \left |\widehat q^t_{u\sim D',v} - \pi^{\widehat{G}}(v) \right| + \frac{1}{2} \left| \widehat q^t_{u\sim D',\widehat{z}} - \pi^{\widehat{G}}(\widehat{z})   \right| .
	\end{align*}
	
	For   $u\in V_2$ observe that $\pi^{\widehat{G}}(u) \in [\pi^{G_2}(u)(1-\zeta),\pi^{G_2}(u)(1+\zeta)]$ for some $\zeta = o(1)$.
	By \cite[Exercise 4.1]{LPW06} we have the following identity for $d^Q(t)$.
Let $\mathcal{D^*}$ be the set of all distributions over $V(G_2)$, then
\begin{align*}
	d^Q(t) = \max_{D\in \mathcal{D^*}} \tvdist{q_{u\sim D,\cdot}^t- \pi^{G_2}(\cdot)} \geq 
 \max_{D'\in \mathcal{D}} 	\tvdist{q_{u\sim D',\cdot}^t- \pi^{G_2}(\cdot)}.
\end{align*}
Thus, for   $\delta_v :=|\widehat q^t_{u,v}-q^t_{u,v}|$, we get by using triangle inequality,   
\begin{align}\label{eq:dagstuhl}
	 \frac12 \sum_{v\in V_2} \left| \widehat q^t_{u\sim D',v} - \pi^{\widehat{G}}(v) \right| 
	  &\leq \frac12 \sum_{v\in V_2} \left|\widehat q^t_{u\sim D',v} - \pi^{G_2}(v) \right| +
\frac12 \sum_{v\in V_2}|\pi^{G_2}(v)-\pi^{\widehat{G}}(v)| \notag\\
	 &\leq \frac12 \sum_{v\in V_2} \left|\widehat q^t_{u\sim D',v} - \pi^{G_2}(v) \right| +
\frac12 \sum_{v\in V_2}\pi^{G_2}(v)\zeta\notag\\
&\leq \frac12 \sum_{v\in V_2} \left|q^t_{u\sim D',v} - \pi^{G_2}(v) \right| 
+ \frac12 \sum_{v\in V_2} |\delta_v| 
+ 
\frac12 \sum_{v\in V_2}\pi^{G_2}(v)|\zeta| \notag\\
&\leq d^Q(t)+1/32+\frac{\zeta}{2}, \notag\\
&\leq d^Q(t)+1/32+1/32,
\end{align}
where the second-last inequality is due to \autoref{lem:huehue}.(ii), $\frac{1}{2}\sum_{v\in V} |\delta_v| \leq t/(2\sqrt{n})\leq \frac1{32}$. 
By definition of the $\tmix(G_2)$ and by sub-multiplicativity we have
$
d^Q(t)\leq d^Q(2\tmix(G_2))  \leq 1/e^2.
$

The above equation \eqref{eq:dagstuhl} only consider the variation distance w.r.t. $V_2$.
For $\widehat{z}$ we have 	$ \frac{1}{2} |\widehat q^t_{u\sim D',\widehat{z}} - \pi^{\widehat{G}}(\widehat{z})| \leq (2\tmix(G_2)+7)/\sqrt{n} \leq 1/32 $.

	Putting everything together we get
we get	
	\begin{align}\label{eq:firstbity}
	\tvdist{ \widehat q_{u\sim D',\cdot} ^t - \pi^{\widehat{G}}(\cdot) } &= \frac12 \sum_{v\in V_2} \left|\widehat q^t_{u\sim D',v}-\pi^{\widehat{G}}(v) \right| + \frac{1}{2} \left|\widehat q^t_{u\sim D',\widehat{z}} - \pi^{\widehat{G}}(\widehat{z}) \right|   \notag\\
	&\leq d^Q(t)+1/32+1/32 + 1/32\leq 1/e^2 + 3/32 \\
	&\leq 1/e. \label{eq:secondbity}
	\end{align}

Consider the random walk  starting at $\widehat{z}$ and let $(X_0,X_1, \dots)$ denote its trajectory.
Observe that at time $7$ we have \[\widehat q^7_{\widehat{z},\widehat{z}}\leq  \frac{1}{2^7} + \sum_{i\leq 7} \sum_{v \in N(\widehat{z})} \widehat q^{i-1}_{\widehat{z},v} \cdot \frac{1}{2(\sqrt{n}+1)} \leq \frac{1}{2^7}+ \frac{7^2}{\sqrt{ n}} \leq 1/32
.\]

The set of distribution for the position of the random walk at time $7$ conditioning on  $X_7\neq \widehat{z}$ gives
the same distribution $\mathcal{D}$ as defined in \eqref{eq:thatnicedistribution}.
Let $D_{\widehat{z}}\in \mathcal{D}$ be distribution of the random at time $7$ starting at $\widehat{z}$. 
 Hence, by\eqref{eq:firstbity}, we get 
\begin{align}
\tvdist{\widehat q_{\widehat{z},\cdot}^{ 2\tmix(G_2)+7} - \pi^{\widehat{G}}(\cdot )} &\leq \widehat q_{\widehat{z},V(\widehat{G})\setminus \{\widehat{z}\}} \cdot \tvdist{\widehat q_{u\sim D_{\widehat{z}},\cdot}^{ 2\tmix(G_2)} - \pi^{\widehat{G}}(\cdot)} + \widehat q_{\widehat{z},\widehat{z}}\cdot 1 \\
&\leq 1\cdot (1/e^2 + 3/32)  + 1/32\leq 1/e.	
\end{align}

Thus, for $t'=2\tmix(G_2)+7$ we have  $\tvdist{\widehat q_{\widehat{z},\cdot}^{t'} - \pi^{\widehat{G}}(\cdot )} \leq 1/e$.
Together with \eqref{eq:firstbity}, we conclude that for all $u\in V'$, $\tvdist{\widehat q_{u,\cdot}^{t'} - \pi^{\widehat{G}}(\cdot )} \leq 1/e$  and by definition of $\tmix$ and  we get $\tmix(\widehat{G})\leq 2\tmix +7 =O(1)$. 

\item
To prove $\Pr{\hit(u,\widehat{z}) \geq n/c }\geq 1/2$  for $u\in V_2$ we show that the random walk restricted to $\widehat{G}$ does not hit $\widehat{z}$ after $n/c_1$ steps w.c.p. for some large enough constant $c_1$.
By the Union bound, for some large  constants $c_1,c_2$ that 
\begin{align*}
\Pr{\hit^G(u,\widehat{z}) \leq n/c_1}
&=\Pr{\hit^{\widehat{G}}(u,\widehat{z}) \leq n/c_1}
 \leq \sum_{t= 1}^{n/c_1} \widehat q^{t}_{u,\widehat{z}}  \leq  
\sum_{t= 1}^{c_2 \log n} 1/\sqrt{n} 
+  \sum_{t= c_2 \log n}^{n/c_1} \widehat q^{t}_{u,\widehat{z}}\\
&\leq  o(1)+ n/c_1 \cdot (\pi^{\widehat{G}}(\widehat{z})+1/n^2) \leq 1/2,
\end{align*}
where we used 
$\widehat q_{u,\widehat{z}}^{t} \leq \pi^{\widehat{G}}(\widehat{z}) + \sqrt{\frac{\pi^{\widehat{G}}(\widehat{z})}{\pi^{\widehat{G}}(u)}} \lambda_2(\widehat{G})^{t}$ (\autoref{lem:loop}).

We proceed by bounding that $\Pr{\hit(u,\widehat{z}) \geq n/c_1 } \geq 1/2$ for $u\in V_1$.
Consider first a random walk $(\tilde{X}_t)_{t \geq 0}$ restricted to $G_1^1=G_1$ that starts at vertex $z^1$ and let $\tilde P$ denote the transition matrix.
Furthermore, in order to couple the random walk $\tilde{X}_t$ restricted to $G_1$ with a random walk in $G$, we will consider the random variable
$
  \tilde{Z}:= \sum_{t=0}^{\tsep^{G_1}} \mathbf{1}_{\tilde{X}_t = z_1}.
$ Since $G_1$ is a clique, $\tsep^{G_1} = O(1)$, and $\tilde p_{z_1,z_1}^t \leq \frac{1}{\sqrt{n}} + \lambda_2(G_1) ^ t$ by \autoref{lem:loop}, where $\lambda_2(G_1)$ is some constant bounded away form $1$. Therefore,
$
 \E{ \tilde{Z} } = \sum_{t=0}^{ n /c_1 } \tilde p_{z_1,z_1}^t \leq 2\sqrt{n}/c_1.
$
Let $\gamma := 4\cdot \E{ \tilde{Z} }$. Then, by Markov's inequality
\begin{align*}
  \Pr{ \tilde{Z} \geq \gamma } \leq  1/4. 
\end{align*}

Consider now the straightforward coupling between a random walk $(X_t)_{t \geq 1}$ in $G$ that starts at vertex $z^{1}$ and the random walk $(\tilde{X}_t)_{t \geq 1}$ restricted to $G_1^i$ that starts at the same vertex. Whenever the random walk $\tilde{X}_t$ is at a vertex different from $z^{1}$, then the random walk $X_t$ makes the same transition. If the random walk $\tilde{X}_t$ is at vertex $z^1$, then there is a coupling so that the random walk $X_t$ makes the same transition as $\tilde{X}_t$ with probability $ \frac{2\sqrt{n}-1}{2 \sqrt{n}}$. Conditional on the event $\tilde{Z} \leq \gamma   $ occurring, the random walk $\tilde{X}_t$ follows the random walk $X_t$ up until step $n/c_1$ with probability at least
\[
 p := \left(  \frac{2\sqrt{n}-1}{2 \sqrt{n}} \right)^{\gamma} \geq 3/4,
\] 
 since the random walk $\tilde{X}_t$ has at most $\gamma$ visits to $z^1$. Therefore, by the Union bound,
\[
 \Pr{\hit^G(u,\widehat{z}) \geq n/c_1} \geq \Pr{ \cup_{t=0}^{n/c_1} X_t = \tilde{X}_t} \geq 
 1-  \Pr{ \tilde{Z} \geq \gamma } - (1-p) \geq 1/2
\]
and the proof is complete.

\item 
We proceed by showing $\thit(u,\widehat{z}) = O(n)$ for $u\in V_2$.

Let $Q$ be the transition matrix of the random walk restricted to $G_2$.
Let $u\in V_2$ and $S^* = N(\widehat{z})$ be the neighbors of $\widehat{z}$ in $G_2$.
For every $v\in S^*$ we have $\pi^{G_2}(v)=\frac{\sqrt{n}+1}{\frac{n}{\sqrt{\alpha'}} \sqrt{n}+ \frac{\sqrt{n}}{\sqrt{\alpha'}} } \geq  \frac{\sqrt{\alpha'}}{1.2n }. $
Hence, after $t=\tsep(G_2)$ we have that 
\[q^t_{u ,S^*} := \sum_{v\in S^*} q^t_{u,v} \geq \sum_{v\in S^*}\pi^{G_2}(v)(1-e^{-1}) \geq \frac{\sqrt{n}}{\sqrt{\alpha'}} \cdot  \frac{\sqrt{\alpha'}}{1.2n }(1-e^{-1}) =\frac{1-e^{-1}}{1.2\sqrt{n}}. \]
By \autoref{lem:huehue}, we have for any $u\in V_2$ that $ \tvdist{ p^t_{u,\cdot}-q^t_{u,\cdot} } \leq \tsep(G_2)/(2\sqrt{n})$. 
To bound $\hit^{G}(u,\widehat{z})$ we show that after $\tsep+1=O(1)$ steps  the random walk hits $\widehat{z}$ w.p. $\Omega(1/n)$.

We distinguish between two cases.
\begin{enumerate}
\item For all $i\leq t$ we have $p^t_{u ,S^*} \leq 1/\tsep(G_2)$.
Thus, by \autoref{lem:huehue}.(i) 
\begin{align*}
p^t_{u ,S^*}=\sum_{v\in S^*} p^t_{u,v} &\geq 
q^t_{u,S^*} - \tvdist{ p^t_{u,\cdot}-q^t_{u,\cdot} } \\
&\geq 
\frac{1-e^{-1}}{1.2\sqrt{n}} - \sum_{i=1}^{t-1} p^i_{u,S^*}/(2\sqrt{n})\\
&\geq 
\frac{1-e^{-1}}{1.2\sqrt{n}} - \frac{\tsep(G_2)}{\tsep(G_2)2\sqrt{n}} 
= \Omega(1/\sqrt{n}). 
\end{align*}
Hence, the random walk hits $\widehat{z}$ after $\tsep(G_2)+1$ w.p. at least
$p^{t}_{u ,S^*}\cdot \min_{v \in S^*} \{  p_{v , \widehat{z}} \}=\Omega(1/n)$.
\item Otherwise there exists a $t^*$ such that $p^{t^*}_{u ,S^*} > 1/\tsep(G_2)$.
Thus  the random walk hits $\widehat{z}$ after $\tsep(G_2)+1$ w.p. at least
$p^{t^*}_{u ,S^*}\cdot \min_{v \in S^*} \{  p_{v , \widehat{z}} \}=\Omega(1/n)$.
\end{enumerate}
Thus after $O(1)$ steps the random walk hits $\widehat{z}$ w.p. $\Omega(1/n)$.

We now show a similar statement if $u\in V_1$. Let $(X_t)_{t\geq 0}$ be a random walk on $G$ starting on $u$. 
Observe that  $X_t$ (the walk on $G$) hits $\widehat{z}$ with probability  $p^1_{u,z^1} \cdot p^1_{z^1,\widehat{z}}=\Omega(1/n)$ in $2$ time steps.
Hence, for any $u\in V$ we $\Pr{\hit(u,\widehat{z}) = O(1) } = \Omega(1/n)$.
Thus, repeating this iteratively and using independence yields $\thit(u,\widehat{z}) = O(n)$ for $u\in V$.
\end{enumerate}
\end{proof}

To establish a bound on the mixing time of $G$, we will make use of the following result of Peres and Sousi.
\begin{theorem}[\cite{PS15}]\label{lem:PS15} 
	For any $\beta < 1/2$, let $\thit(\beta)= \max_{u, A: \pi(A)\geq \beta} \thit(u,A)  $.
	 Then there exist positive constants $c_\beta$ and $c'_\beta$ such that
	 \[ c'_\beta \cdot \thit(\beta) \leq \tmix(1/4) \leq c_\beta \cdot \thit(\beta). \]
\end{theorem}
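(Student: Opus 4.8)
This is the main theorem of Peres and Sousi~\cite{PS15}, so in the paper one would simply invoke~\cite{PS15}; here is how I would prove it. The two inequalities are of quite different difficulty. The easy one is $\thit(\beta)=O_\beta(\tmix(1/4))$: fix a vertex $u$ and a set $A$ with $\pi(A)\geq\beta$, and put $L:=\lceil\log_4(2/\beta)\rceil\cdot\tmix(1/4)$. Submultiplicativity of $\bar d(\cdot)$ gives $\bar d(L)\leq\beta/2$, and since $\tvdist{p^t_{u,\cdot}-\pi}\leq\bar d(t)$ by stationarity of $\pi$, from every starting state $\Pr{X_L\in A}\geq\pi(A)-\beta/2\geq\beta/2$. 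Running the chain in consecutive blocks of length $L$ and applying the Markov property at block endpoints, the number of blocks needed to hit $A$ is stochastically dominated by a geometric random variable of parameter $\beta/2$, so $\thit(u,A)\leq(2/\beta)L$; maximising over $u$ and $A$ finishes this direction.

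For the substantive inequality $\tmix(1/4)=O_\beta(\thit(\beta))$ I would first pass to the stopping-time formulation of mixing: for a reversible chain, $\tmix(1/4)$ is equivalent up to universal constants to $t_{\mathrm{stop}}:=\max_u\min\{\E{\Lambda}\colon\Lambda\text{ a randomised stopping time with }\Pr{X_\Lambda=\cdot}=\pi(\cdot)\}$ (classical; see~\cite{AF14} and the references in~\cite{PS15}). It then suffices to build, from an arbitrary starting vertex, such a stationary stopping time of expected length $O_\beta(\thit(\beta))$, which I would do by a filling rule: keep track of the un-stopped mass and of the shortfall $\pi(v)-(\text{mass already stopped at }v)\geq 0$ at each $v$, repeatedly target the set $U$ of vertices with positive shortfall --- which satisfies $\pi(U)\geq$ (un-stopped mass) $\geq\beta$ as long as a $\beta$-fraction of the mass remains, since the shortfalls sum exactly to the un-stopped mass --- run the residual chain until it hits $U$ at expected cost at most $\thit(\beta)$, and stop there as much of the incoming mass as the local shortfall permits; a careful round-by-round accounting then bounds the total expected length by $O_\beta(\thit(\beta))$, which by the equivalence above gives the claim.

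The hard part is entirely this second inequality, and inside it the filling-rule accounting: one must show that running the chain from the (sub-probability) residual measure still reaches the under-filled set in $O(\thit(\beta))$ expected steps, and that the per-round costs telescope to $O_\beta(\thit(\beta))$ rather than accumulating a spurious extra factor when many rounds each deposit only a little mass --- this is precisely the delicate point worked out in~\cite{PS15}. The reduction to $t_{\mathrm{stop}}$ and the easy direction above are routine by comparison.
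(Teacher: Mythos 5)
The paper does not prove this statement --- it imports it directly from \cite{PS15}, as you correctly note at the outset --- so there is no internal argument to compare against. Your sketch of the direction $\thit(\beta)=O_\beta(\tmix(1/4))$ is carried out correctly and in full, via submultiplicativity of $\bar d$ and geometric iteration over blocks of length $O_\beta(\tmix(1/4))$, each of which hits a set of stationary mass at least $\beta$ with probability at least $\beta/2$. The hard direction $\tmix(1/4)=O_\beta(\thit(\beta))$ is only outlined, and you are transparent about that: you name the right two ingredients (the Aldous / Lov\'asz--Winkler equivalence $\tmix\asymp t_{\mathrm{stop}}$ for lazy reversible chains, and a filling-rule construction of a stationary stopping time whose target set $U$ of under-filled vertices always satisfies $\pi(U)\geq$ unstopped mass, hence $\pi(U)\geq\beta$ while a $\beta$-fraction remains) and you correctly flag the round-by-round cost accounting as where the actual theorem lives. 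One small remark on attribution: that filling-rule route is closer to the Lov\'asz--Winkler stopping-rule machinery and to Oliveira's contemporaneous independent proof of the same equivalence than to what Peres and Sousi do in \cite{PS15}, which instead passes through a geometrically-timed notion of mixing and a maximal-inequality (Starr-type) comparison back to deterministic-time mixing. Since the paper only cites the result, leaving the crux of the hard direction as a pointer is appropriate, and your summary of where the difficulty sits is accurate.
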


In the following we show for any $\beta$ close enough to $1/2$, that any $A\subseteq V$ satisfying $\pi(A)\geq \beta$ must include at least a constant fraction of nodes from a constant fraction of  copies of $G_1$. 
\begin{claim}\label{lem:hashtagconstantfrac} 
	 Let $\beta=1/2-10^{-3}$. For any $A \subseteq V$ with $\pi(A) \geq \beta$, define 
		$
		H(A) = \{ i~|~|G_1^i \cap A| \geq |V_1|/(2e)  \}.
		$	
		Then,  $|H(A)| \geq \kappa/(2e)$.
\end{claim}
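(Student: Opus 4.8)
The plan is to exploit that, in $G$, almost all of the stationary mass sits on the union of cliques $U:=\bigcup_{i=1}^{\kappa}V_1^i$ — far more than $\beta\approx\tfrac12$ — while $V_2\cup\{\widehat z\}$ carries only a negligible fraction; a one-line averaging argument over the $\kappa$ copies then finishes the proof.

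First I would compute $2m=\sum_{v}\deg(v)$ by adding the $\binom{\sqrt n}{2}$ internal edges of each of the $\kappa=\sqrt n$ cliques, the $\tfrac12\sqrt n\cdot n/\sqrt{\alpha'}$ edges of $G_2$, and the $\kappa+\sqrt{n/\alpha'}$ edges incident to $\widehat z$; this yields $2m=n^{3/2}(1+1/\sqrt{\alpha'})+O(n)$, so in particular $n^{3/2}(1-o(1))\le 2m\le 2n^{3/2}$ and $2m/\sqrt n\ge(1-o(1))n$. Since every vertex of $G_2$ has degree $\sqrt n$ except the $\sqrt{n/\alpha'}$ neighbours of $\widehat z$, we get $\pi(V_2)=\bigl(n^{3/2}/\sqrt{\alpha'}+\sqrt{n/\alpha'}\bigr)/(2m)\le 1/\sqrt{\alpha'}+o(1)$, and since $\alpha'\ge 2^{20}$ this is below $2^{-10}+o(1)<10^{-3}$ for $n$ large; likewise $\pi(\widehat z)=(\sqrt n+\sqrt{n/\alpha'})/(2m)=o(1)$. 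Hence, from $\pi(A)\ge\beta=\tfrac12-10^{-3}$,
\[ \pi(A\cap U)=\pi(A)-\pi(A\cap V_2)-\pi(A\cap\{\widehat z\})\ \ge\ \tfrac12-2\cdot 10^{-3}-o(1)\ \ge\ 0.49 \]
once $n$ is large enough.

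Next I would convert this mass bound into a vertex-count bound. Every vertex of $V_1^i$ has degree at most $\sqrt n$ (only $z^i$ attains it; the others have degree $\sqrt n-1$), so with $a_i:=|A\cap V_1^i|$ we have $\pi(A\cap V_1^i)\le a_i\sqrt n/(2m)$ and therefore
\[ \sum_{i=1}^{\kappa}a_i\ \ge\ \pi(A\cap U)\cdot\frac{2m}{\sqrt n}\ \ge\ 0.49(1-o(1))n\ \ge\ 0.49\,n. \]
Now assume for contradiction $|H(A)|<\kappa/(2e)$. The indices $i\in H(A)$ contribute at most $|H(A)|\cdot|V_1|<\tfrac{\sqrt n}{2e}\cdot\sqrt n=\tfrac{n}{2e}$, while each of the at most $\kappa=\sqrt n$ indices $i\notin H(A)$ has, by definition, $a_i<|V_1|/(2e)=\sqrt n/(2e)$, contributing at most $\kappa\cdot\tfrac{\sqrt n}{2e}=\tfrac{n}{2e}$ in total. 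Thus $\sum_i a_i<n/e<0.37\,n$, contradicting the previous display, so $|H(A)|\ge\kappa/(2e)$.

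All the steps are elementary; the only place needing care is that the constants line up — specifically that $\pi(V_2\cup\{\widehat z\})$ is genuinely below $10^{-3}$ (which is exactly why the construction takes $\alpha'\ge 2^{20}C^2$, forcing $1/\sqrt{\alpha'}\le 2^{-10}<10^{-3}$) and that $0.49\,n$ comfortably exceeds $n/e\approx 0.368\,n$, leaving ample slack in the averaging step.
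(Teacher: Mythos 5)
Your proof is correct and follows essentially the same pigeonhole-by-contradiction strategy as the paper's. The paper stays entirely in stationary-mass terms and exploits the trivial fact $\kappa\pi(V_1)\le 1$ (together with the crude bound $\pi(V_2)+\pi(\widehat z)<1/20$), arriving at $\pi(A)\le \kappa\pi(V_1)/e + o(1) + 1/20 < \beta$; you instead compute $2m$ explicitly, bound $\pi(V_2\cup\{\widehat z\})$ more tightly ($<10^{-3}$ via $\alpha'\ge 2^{20}$), convert mass to a vertex count $\sum_i a_i$, and then count. Both are correct, and the slack is comfortable either way. One nit: the chained inequality $0.49(1-o(1))n\ \ge\ 0.49\,n$ is backwards — you should keep the weaker bound $0.49(1-o(1))n\ge 0.48\,n$ (say), which still exceeds $n/e\approx 0.368\,n$ by a wide margin, so the contradiction is unaffected.
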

\begin{proof}
	This follows from a simple pigeon-hole argument: Suppose $|H(A)| < \kappa/(2e)$ was true. Then,
			\begin{align*}
			\pi(A)&\leq  |H(A)|\cdot  \pi(V_1) + (\kappa-|H(A)|)\cdot \left(\frac{\pi(V_1) }{2e} + \pi(z^i) \right) + \pi(V_2) + \pi(\widehat{z}) \\
			&<\frac{\kappa}{2e}   \cdot  \pi(V_1) + \kappa\cdot \left(\frac{\pi(V_1) }{2e} + \pi(z^i) \right) + 1/20 
< \beta \leq \pi(A),
			\end{align*}
		which is a contradiction and hence choice of $A$ must fulfill $|H(A)| \geq \kappa/(2e)$.
\end{proof}

We are now ready to determine the mixing time of $G$. The lower bound is a simple application of Cheeger's inequality, while the upper bound combines the previous lemmas with \autoref{lem:PS15}.
\begin{lemma}\label{lem:mix}
Let  $G$ be the graph described at the beginning of \autoref{sec:lowerbound}. We have 
$\tmix(G) = \Theta(n)$. 
\end{lemma}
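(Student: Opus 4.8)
\emph{Plan.} We prove the two bounds separately: the lower bound is a conductance estimate, and the upper bound goes through the Peres--Sousi characterization \autoref{lem:PS15}.

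\textbf{Lower bound $\tmix(G)=\Omega(n)$.} The idea is to use one clique as a sparse cut. Take $S:=V_1^1$. The only edge leaving $S$ is $\{z^1,\widehat z\}$, so the ergodic flow out of $S$ is $Q(S,V\setminus S)=\pi(z^1)\,p_{z^1,\widehat z}=\frac{\deg(z^1)}{2m}\cdot\frac{1}{2\deg(z^1)}=\frac{1}{4m}$, while $\pi(S)=\frac{1}{2m}\sum_{v\in S}\deg(v)=\frac{n-\sqrt n+1}{2m}=\Theta(n^{-1/2})\le\tfrac12$. Hence the bottleneck ratio of $S$, and therefore the conductance $\Phi$ of $G$, is at most $\frac{Q(S,V\setminus S)}{\pi(S)}=\frac{1}{2(n-\sqrt n+1)}=\Theta(1/n)$. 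Cheeger's inequality gives $1-\lambda_2\le 2\Phi=O(1/n)$, so $\trel=(1-\lambda_2)^{-1}=\Omega(n)$, and the standard bound $\tmix(\varepsilon)\ge(\trel-1)\ln\tfrac{1}{2\varepsilon}$ applied with $\varepsilon=1/e$ yields $\tmix(G)=\Omega(n)$. (Alternatively: a walk from any $u\in V_1^1$ is confined to $V_1^1$ until it hits $\widehat z$, which by \autoref{lem:holyhit}(ii) takes $\ge n/c$ steps with probability $\ge 1/2$, and since $\pi(V_1^1)=o(1)$ this forces $\bar d(t)\ge d(t)\ge 1/2-o(1)>1/e$ for all $t<n/c$.)

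\textbf{Upper bound $\tmix(G)=O(n)$.} By \autoref{lem:PS15} it suffices to show $\thit(\beta):=\max_{u,\,A:\pi(A)\ge\beta}\thit(u,A)=O(n)$ for $\beta=1/2-10^{-3}$, since then $\tmix(G)=\tmix(1/e)\le\tmix(1/4)=O(\thit(\beta))$. Fix such an $A$ and a start vertex $u$. By \autoref{lem:hashtagconstantfrac}, the index set $H(A)=\{i:|G_1^i\cap A|\ge|V_1|/(2e)\}$ has size $\ge\kappa/(2e)=\Omega(\sqrt n)$. The plan is to bound $\thit(u,A)$ through excursions via $\widehat z$: (a) from any vertex the walk reaches $\widehat z$ in expected time $O(n)$ by \autoref{lem:holyhit}(iii); (b) from $\widehat z$ a single step lands at some $z^i$ with $i\in H(A)$ with probability $\frac{|H(A)|}{2\deg(\widehat z)}=\Omega(1)$, since $\deg(\widehat z)=\sqrt n+\sqrt{n/\alpha'}=O(\sqrt n)$; (c) once at such a $z^i$, every subsequent step inside the clique hits $A$ with probability $\ge\frac12\cdot\frac{|V_1|/(2e)-1}{\sqrt n}=\Omega(1)$, whereas the walk leaves $G_1^i$ (only possible from $z^i$, back to $\widehat z$) with probability $\le\frac{1}{2\sqrt n}=o(1)$, so it hits $A$ before leaving $G_1^i$ with probability $\ge\tfrac12$. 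Chaining (a)--(c), each ``attempt'' (reach $\widehat z$, take one step, possibly perform a clique excursion) costs $O(n)$ in expectation and hits $A$ with probability $\Omega(1)$; a geometric-trials argument then gives $\thit(u,A)=O(n)$, as required.

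\textbf{Main obstacle.} The lower bound is routine once the cut $V_1^1$ is identified. The delicate point in the upper bound is step (c): having paid $\Theta(n)$ steps to enter a target-rich clique, the walk must actually hit $A$ rather than immediately escaping to $\widehat z$ and wasting this progress. Making this precise requires comparing the $\Theta(1)$ per-step hitting probability inside $G_1^i$ against the $O(1/\sqrt n)$ escape probability, while keeping the estimates of $\deg(\widehat z)$, $|H(A)|$ and the relevant $\pi$-masses mutually consistent; once this is done, the bound $\thit(\beta)=O(n)$, and hence via \autoref{lem:PS15} the bound $\tmix(G)=O(n)$, follows.
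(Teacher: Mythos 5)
Your proof is correct and takes essentially the same route as the paper: conductance of the cut $V_1^1$ plus Cheeger for the lower bound, and the Peres--Sousi characterization with the same three-stage chaining (reach $\widehat z$, step to some $z^i$ with $i\in H(A)$, then hit $A$) for the upper bound. The only cosmetic difference is in step~(c), where the paper simply takes \emph{one} step from $z^i$ into the clique and hits $A$ with probability $\geq\frac{|V_1|/(2e)}{2\sqrt n}=\Omega(1)$, making your excursion-versus-escape comparison unnecessary (though it is of course also valid).
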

\begin{proof}
First we show $\tmix = \Omega( n )$. The \emph{conductance} of $G=(V,E)$ is defined  by 
$\Phi(G)=\underset{
\substack{U\subseteq V,\\
0 < \vol(U) \leq \vol(V)/2}
}{\min} \frac{|E(U,V \setminus U)|
}{\vol(U)}.$
In particular, for $U=V_1$ we get that $\Phi(G)\leq \frac{ 4 }{ n }$.
Hence, by Cheeger's inequality and $\left(\frac{1}{1-\lambda_2(G)} -1\right)\cdot \log(\frac{e}{2})\leq \tmix(1/e)$  (see, \eg \cite[Chapter 12]{LPW06}),
\[
\frac{ n } {4 } \leq \frac{1}{\Phi(G)} \leq \frac{2}{1-\lambda_2(G)} =
\frac{2}{1-\lambda_2(G)}-2 +2
 \leq \frac{2\tmix}{ \log \left( \frac{e}{2}\right)} +2.
\]
Rearranging the terms yields $\tmix = \Omega(n )$.

We proceed with the upper bound on the mixing time. Let $\beta = 1/2-10^{-3}$ and let $A\subseteq V$ be an arbitrary set satisfying $\pi(A)\geq \beta$. 
First, we apply
\autoref{lem:hashtagconstantfrac} to conclude
that $|H(A)| \geq \kappa/(2e)$. This immediately implies that with
 $Z := \{ z^i \colon i \in H(A)\}$, $|Z| \geq \kappa/(2e)$.
The remainder of the proof is divided into the following three parts:
\begin{enumerate}[(i)]
 \item Starting from any vertex $u \in V$, with probability at least $1/2$, the random walk hits $z^{*}$ after  $2 \max_{u \in V} \thit(u,\widehat{z})=O(n)$ steps.
 \item With constant probability $p_1 > 0$, the random walk moves from $z^{*}$ to a vertex in $Z$.
 \item With constant probability $p_2 > 0$ a random walk starting from a vertex in $Z$ will hit $A$ after one step.
\end{enumerate}

It is clear that combining these three results shows that with constant probability $\frac{1}{2} p_1 p_2 > 0$, a random walk starting from an arbitrary vertex $u \in V$ hits a vertex in $A$ after $O(n) + 1 + 1$ time-steps. Iterating this and using independence shows that $\thit(u,A) = O({n})$, and hence by \autoref{lem:PS15}, $\tmix = O({n})$ as needed.  

\textbf{Part (i).} Consider  $\max_{u \in V} \thit(u,\widehat{z})$. For $u \in V$, \autoref{lem:holyhit}.$(iii)$ implies $\thit(u,\widehat{z}) = O({n})$.

\textbf{Part (ii).} If the random walk is on $z^{*}$, then since $\deg(z^{*}) = \kappa + \sqrt{n/\alpha'}$, $|Z| \geq \kappa/(2e)$, it follows that the random walk hits a vertex in $Z$ after one step with constant probability $p_1 := \frac{|Z|}{2(\kappa + \sqrt{n/\alpha'})}  > 0$.

\textbf{Part (iii).} Finally,
for any $z\in Z$ we have that $p_2= p_{z,A}=\frac{|V_1|/(2e)}{2\sqrt n} > 0$ 
and the proof is complete.
\end{proof}

In the following we establish the bound on the meeting time. As it turns out, any meeting is very likely to happen on $V_2$ and it takes about $\Theta(\alpha'n)$ time-steps until both walks reach $V_2$ simultaneously. The lower bound then follows from our common analysis method \eqref{eq:central}. The upper bound combines the mixing time bound of $O(n)$ (\autoref{lem:mix}), and that once a random walk reaches a copy of $G_1$, it says there for $\Theta(n)$ steps with constant probability \autoref{lem:holyhit}.$(ii)$.

\begin{lemma}\label{lem:meet}
Let  $G$ be the graph described at the beginning of \autoref{sec:lowerbound}. We have 
	$\tmeet(G)=\Theta(\alpha' {n})$.
\end{lemma}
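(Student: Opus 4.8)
The plan is to prove the matching bounds $\tmeet(G)=\Omega(\alpha' n)$ and $\tmeet(G)=O(\alpha' n)$ separately, both through the identity~\eqref{eq:central} applied to the collisions that occur inside $V_2$. First I would record the structural facts to be used: by \autoref{lem:mix}, $\tmix=\Theta(n)$ and hence $\tsep\le 4\tmix=O(n)$; every vertex of $G$ has degree $\Theta(\sqrt n)$ and $2m=\Theta(n^{3/2})$ (the clique edges dominate), so $\pi(v)=\Theta(1/n)$ for all $v$, and with $|V_2|=n/\sqrt{\alpha'}$ and $n$ vertices in the $\kappa$ cliques this gives $\sum_{w\in V_2}\pi(w)^2=\Theta(1/(\sqrt{\alpha'}n))$, $\sum_{w\in\cup_iV_1^i}\pi(w)^2=\Theta(1/n)$ and $\pi(\widehat z)=\Theta(1/n)$; moreover $\widehat z$ is a cut vertex separating $V_2$ from the $\kappa$ cliques and the cliques from each other, so a walk started in $V_2$ (resp.\ in a clique $V_1^i$) can leave $V_2$ (resp.\ $V_1^i$) only by visiting $\widehat z$ — in particular, by \autoref{lem:holyhit}(ii), $\Pr{X_{t}\in V_2\mid X_0=v}\ge\Pr{\hit(v,\widehat z)>t}\ge\tfrac12$ for all $v\in V_2$ and $t\le n/c$, and the same with any $V_1^i$ in place of $V_2$.

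For the lower bound, since $\tmeet\ge\tavgmeet$ it suffices to show $\tavgmeet=\Omega(\alpha'n)$. I would run independent walks $X,Y$ from independent stationary starts, fix $T:=c_0\alpha'n$ for a suitably small constant $c_0$, and bound $\Pr{\cross(X_t,Y_t,T)}$ by summing, over $A\in\{V_2,\ \cup_iV_1^i,\ \{\widehat z\}\}$, the probability that a collision occurs in $A$ within $[0,T]$ (walks in distinct cliques never collide); each such term I bound with~\eqref{eq:central} applied to the collision counter over the window $[0,2T]$, conditioning on the first collision of that type, which occurs by time $T$, so that the $n/c$ steps following it still lie in $[0,2T]$. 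For $V_2$: the numerator is $(2T+1)\sum_{w\in V_2}\pi(w)^2=\Theta(c_0\sqrt{\alpha'})$, while the strong Markov property, Cauchy--Schwarz and $\Pr{X_{t'}\in V_2\mid X_0=v}\ge\tfrac12$ for $t'\le n/c$ give conditional expectation $\ge\sum_{t'=0}^{n/c}\frac{(p^{t'}_{v,V_2})^2}{|V_2|}\ge\frac{n/c}{4}\cdot\frac{\sqrt{\alpha'}}{n}=\Omega(\sqrt{\alpha'})$, so this probability is $O(c_0)$. The identical estimate with $V_1^i$ gives conditional expectation $\Omega(\sqrt n)$ against numerator $\Theta(c_0\alpha')$, so the clique term is $O(c_0\alpha'/\sqrt n)=o(1)$ since $\alpha'=O(\log^2 n)$; and the $\widehat z$ term is $\le(2T+1)\pi(\widehat z)^2=\Theta(c_0\alpha'/n)=o(1)$. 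Choosing $c_0$ small makes the total $<\tfrac12$, whence $\tavgmeet\ge T/2=\Omega(\alpha'n)$.

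For the upper bound, I would first reduce to stationary starts: using $\bar d(\tmix)\le 1/e$, couple each of $X,Y$ with an independent stationary walk with which it coincides from time $\tmix$ onwards, which succeeds with probability at least a constant; it then suffices to prove that two walks from independent stationary positions meet within $\tau:=C\alpha'n$ steps with probability $\ge c_1$ for constants $C,c_1>0$, since a restart argument then gives $\tmeet\le(\tmix+\tau)/\text{const}=O(\alpha'n)$. Starting $X,Y$ stationary, I would lower-bound the meeting probability by $\Pr{Z_{V_2}\ge1}=\E{Z_{V_2}}/\E{Z_{V_2}\mid Z_{V_2}\ge1}$ with $Z_{V_2}$ counting $V_2$-collisions in $[0,\tau]$: the numerator is $(\tau+1)\sum_{w\in V_2}\pi(w)^2=\Theta(C\sqrt{\alpha'})$, and by the strong Markov property $\E{Z_{V_2}\mid Z_{V_2}\ge1}\le\sum_{t'=0}^{\tau}\sum_{w\in V_2}(p^{t'}_{v,w})^2$ for the vertex $v\in V_2$ of the first $V_2$-collision, so everything reduces to showing $\sum_{t'=0}^{\tau}\sum_{w\in V_2}(p^{t'}_{v,w})^2=O(C\sqrt{\alpha'})$ for $v\in V_2$; granting this, $\Pr{Z_{V_2}\ge1}=\Omega(1)$ and we are done.

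The hard part will be exactly this last bound. Because $\widehat z$ is a cut vertex, the trajectory from $v\in V_2$ (after it first hits $\widehat z$) is a sequence of i.i.d.\ excursions from $\widehat z$, each of expected length $O(n)$ (\autoref{lem:holyhit}(iii) for $V_2$-excursions, plus the analogous clique estimate), each independently a "$V_2$-excursion" with probability $\Theta(1/\sqrt{\alpha'})$; and inside $V_2$ the walk equilibrates over $V_2$ in $O(1)$ steps (\autoref{lem:huehue} with $\tmix(G_2)=O(1)$). From these I would extract, for $t'$ past a constant, $\sum_{w\in V_2}(p^{t'}_{v,w})^2\le O(\sqrt{\alpha'}/n)\cdot\Pr{X_{t'}\in V_2\mid X_0=v}^2$ up to a total $o(1)$ contribution from the first few steps after each re-entry into $V_2$, together with $\Pr{X_{t'}\in V_2\mid X_0=v}\le C_6(e^{-t'/(C_6n)}+1/\sqrt{\alpha'})$ — the first excursion being a $V_2$-excursion whose length has a sub-exponential tail on scale $n$, while all later excursions contribute $O(1/\sqrt{\alpha'})$ uniformly; summing over $t'\le\tau$ then gives $O(1)+O(\sqrt{\alpha'})+O(\tau/(\sqrt{\alpha'}n))=O(C\sqrt{\alpha'})$. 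Making the excursion decomposition precise (the excursion lengths are only $O(n)$ in expectation, not w.h.p., and the transient near each re-entry to $V_2$ must be controlled) is the one genuinely delicate step; everything else is bookkeeping around~\eqref{eq:central} and the already-proved \autoref{lem:huehue}, \autoref{lem:holyhit} and \autoref{lem:mix}. An alternative route to the upper bound is a renewal argument: over $\Theta(\alpha'n)$ steps each walk completes $\Theta(\sqrt{\alpha'})$ "long" $V_2$-excursions with constant probability, a second-moment computation shows two of them overlap for $\Omega(n)$ steps with constant probability, and during such an overlap the walks are mixed in $V_2$ and hence meet with probability $\ge\tfrac12$ because the meeting time on $G_2$ is $O(n/\sqrt{\alpha'})\ll n$.
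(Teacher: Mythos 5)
Your lower bound is sound and is essentially the argument the paper gives: decompose collisions into $Z_{V_2}$, $Z_{\text{cliques}}$, $Z_{\widehat z}$, bound each $\Pr{Z_A\geq 1}$ via \eqref{eq:central} with numerator $\Theta(\ell\sum_{w\in A}\pi(w)^2)$ and a denominator lower-bounded by the minimum $\ell^2$-mass of a distribution supported on $A$. The paper achieves this last step by explicitly conditioning on the event $\hit(\cdot,\widehat z)\ge n/c$ so the walk is confined to $A$, whereas you use Cauchy--Schwarz together with $\Pr{X_{t'}\in V_2\mid X_0=v}\ge\frac12$; these give the identical bound $\ge 1/(4|A|)$, so the difference is cosmetic. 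Your bookkeeping (which term dominates, and that the clique and $\widehat z$ terms are $o(1)$ for $\alpha'=O(\log^2n)$) matches the paper's \eqref{allianz} computation.

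Your upper bound takes a genuinely different route and there is a real gap in it. The paper never touches \eqref{eq:central} for this direction: it fixes a single time $\tsep$, observes that both walks are in $V_2$ at that instant with probability $\Omega(1/\alpha')$ (this is $p_0$ in the paper), and then shows that conditional on this the two walks meet within the next $O(n)$ steps with constant probability. That second step is done by coupling each walk on $G$ with a walk restricted to $G_2$ --- the coupling only breaks when the restricted walk hits $N(\widehat z)$, and by Markov's inequality that happens $O(\sqrt n)$ times in $n/c$ steps, yielding constant coupling success --- and then using $\tmix(G_2)=O(1)$ plus $\|\pi^{G_2}\|_2^2=\Theta(\sqrt{\alpha'}/n)$ to get a per-block meeting rate that yields a constant meeting probability over $n/c$ steps precisely because $\alpha'\ge 2^{20}C^2\tsep(G_2)^2$ was baked into the construction. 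Iterating blocks of length $O(n)$, each succeeding with probability $\Omega(1/\alpha')$, gives $\tmeet=O(\alpha' n)$ with no $t$-step probability estimates needed. Your proposal instead reduces the upper bound to $\sum_{t'=0}^{\tau}\sum_{w\in V_2}(p^{t'}_{v,w})^2=O(C\sqrt{\alpha'})$ for $v\in V_2$, and you yourself flag that the two ingredients of your excursion-decomposition sketch --- the tail bound $\Pr{X_{t'}\in V_2\mid X_0=v}\le C_6(e^{-t'/(C_6n)}+1/\sqrt{\alpha'})$, and the claim that the per-step conditional $\ell^2$-mass on $V_2$ is $O(\sqrt{\alpha'}/n)$ up to controllable transients near each re-entry --- are not established. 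The second of these is the sharper issue: immediately after re-entry to $V_2$ the conditional distribution on $V_2$ has $\ell^2$-norm $\Theta(1)$, so one needs to show the contribution of the $O(1)$-length transients, summed over a trajectory-dependent set of re-entry times and weighted by the (random) re-entry probabilities, is still $O(\sqrt{\alpha'})$; this is exactly the kind of second-moment excursion accounting the paper's coupling argument was engineered to avoid. Your alternative renewal sketch (second-moment overlap of $V_2$-excursions) is closer in spirit to the paper's proof but is still strictly more complicated than what the paper does, since fixing a single deterministic time $\tsep$ removes the need for any overlap second-moment computation.
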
 
\begin{proof}
	We start by proving $\tmeet = \Omega(\alpha' {n})$:  
	Consider two non-interacting\lfnote{replace these by something}, random walks with starting positions drawn from the stationary distribution $\pi$. 
	Let $\ell=c'\alpha' n$, for some small enough constant $c'>0$.
	Let $Z_1$ be the number of collisions of the two random walks on the nodes in $V_1^1 \cup V_1^2 \cup \dots \cup V_1^\kappa$.
	Let $Z_2$  be the number of collisions of the two random walks on the nodes in $V_2$.
	Let $Z_*$  be the number of collisions of the two random walks on the node $\widehat{z}$.
	
	Let $Z$ be the number of collisions of the two walks during the first $\ell$ time steps, \ie $Z=Z_1 + Z_2+Z_*$.
	Using the Union bound  we derive
	
	\begin{align}\label{allianz}
		\Pr{Z\geq 1} &\leq \Pr{Z_1\geq 1} + \Pr{Z_2\geq 1} + \Pr{Z_*\geq 1} \notag\\
		&\leq \frac{ \E{Z_1} }{  \E{Z_1 \,|\, Z_1 \geq 1} } 
	+ \frac{ \E{Z_2} }{  \E{Z_2 \,|\, Z_2 \geq 1} } 
	+ \frac{ \E{Z_*} }{  \E{Z_* \,|\, Z_* \geq 1} }. 
	\end{align}
	
	We have $\E{Z_1} \leq \ell n \left(\frac{2}{n}\right)^2$, $\E{Z_2} \leq \ell \frac{n}{\sqrt{\alpha'}} \left(\frac{2}{n}\right)^2$, and
	$\E{Z_*} \leq \ell  \left(\frac{2}{n}\right)^2$, since $\max_{u} \pi(u) \leq 2/n$.
	Conditioning on $Z_1\geq 1$ and since both random walks start from the stationary distribution, we have, by \autoref{lem:fresenius}, that the first meeting 	happens in the first $\ell/2$ time steps w.p. at least $1/2$.
	
	Consider $\E{Z_1 \,|\, Z_1 \geq 1}$.
	Suppose the meeting occurred at node $u\in V_1$.
	Let $\mathcal{E}_1$ be the event that for  $u\in V_1$ we have $\hit({u,\widehat{z}}) \geq n/c$ for both walks, where $c > 0$ is a large enough constant.
	By \autoref{lem:holyhit}.$(ii)$, we have that $\Pr{\mathcal{E}_1} \geq (1/2)^2=1/4$ due to independence of the walks. 
	 For any $t < n/c$ let $\widehat p^t_{u,\cdot}$ be the distribution of 
	the random walk on $G_1$ starting on $u$ after $t$ time steps under the conditioning $\mathcal{E}_1$.
	Observe that  $\sum_{v\in V_1}  \widehat p^t_{u,v}=1$ implying that $\sum_{v\in V_1}   (\widehat p^t_{u,v})^2 \geq \sum_{v\in V_1} \left( \frac{1}{|V_1|}\right)^2 = 1/|V_1|$.
	Hence, we get  
	\[ \E{Z_1 \,|\, Z_1 \geq 1}  \geq \E{Z_1 \,|\, Z_1 \geq 1,\mathcal{E}_1} \cdot \Pr{\mathcal{E}_1} \geq \frac12 \min_{u \in V_1} \sum_{t=0}^{n/c-1} \sum_{v\in V_1} (\widehat p^t_{u,v})^2 \geq 
	\frac14  \sum_{t=0}^{n/c-1} 1/|V_1| 	= \frac{\sqrt{n}}{4c}.\]
	
	Using an exactly analogous  analysis for $Z_2$ we can upper bound $\E{Z_2 \,|\, Z_2 \geq 1}$ as follows:
		\[ \E{Z_2 \,|\, Z_2 \geq 1}  \geq \E{Z_2 \,|\, Z_2 \geq 1,\mathcal{E}_2} \cdot\Pr{\mathcal{E}_2} \geq \frac14 \min_{u \in V_2} \sum_{t=0}^{n/c-1} \sum_{v\in V_2} (\widehat p^t_{u,v})^2 \geq 
	\frac14  \sum_{t=0}^{n/c-1} 1/|V_2| 	=\frac{\sqrt{\alpha'}}{4c},\] 
		where $\mathcal{E}_2$ is the event that for  $u \in V_2$ we have $\hit({u,\widehat{z}}) \geq n/c$ for some large enough constant $c$.
	Plugging everything into \eqref{allianz} and using $\ell=c'\alpha' n$ yields
	\begin{align*}
		\Pr{Z\geq 1} &
		\leq \frac{ \E{Z_1} }{  \E{Z_1 \,|\, Z_1 \geq 1} } 
	+ \frac{ \E{Z_2} }{  \E{Z_2 \,|\, Z_2 \geq 1} } 
	+ \frac{ \E{Z_*} }{  \E{Z_* \,|\, Z_* \geq 1} }\\
	&\leq \frac{ \ell n \left(\frac{2}{n}\right)^2 }{ \frac{\sqrt{n}}{4c}} +
	\frac{\ell \frac{n}{\sqrt{\alpha'}} \left(\frac{2}{n}\right)^2}{  \frac{\sqrt{\alpha'}}{4c}  } + \frac{\ell \left(\frac{2}{n}\right)^2}{1}\\
	&\leq o(1) + 16 c\cdot  c' + o(1) \leq 1/2,
	\end{align*}
for any constant $c' \in (0, \frac{1}{33c}]$.  This finishes the proof of $\tmeet = \Omega(\alpha' {n})$.
	In the remainder we prove	
	 $\tmeet = O(\alpha' {n})$. 
	 Consider two independent walks $(X_t)_{t \geq 0}$ and $(Y_t)_{t \geq 0}$  on $G$, both starting from arbitrary nodes. Note $\tsep=\tsep(G) \leq 4 \tmix = O({n})$ by \autoref{lem:mix}, and
	 \[
	 p_0 :=\Pr{ \left\{ X_{\tsep} \in V_2 \right\} \cap \left\{ Y_{\tsep}\in V_2 \right\} } \geq \left(\sum_{u\in V_2} (1-e) \pi(u)  \right)^2 = \Omega\left(\left(\ifrac{1}{\sqrt{\alpha'}}\right)^2\right)=\Omega\left(\ifrac{1}{\alpha'}\right). 
	 \]
%
%
%
We assume in the following that
$\left\{ X_{\tsep} \in V_2 \right\} \cap \left\{ Y_{\tsep}\in V_2 \right\} $.
We have $\tmix(G_2)=O(1)$, by \autoref{lem:BASF}.
	 Consider  a random walk $(\tilde{X}_t)_{t \geq \tsep}$ restricted to $G_2$ that starts at vertex $X_{\tsep}\in V_2$ and let $\tilde P$ denote the transition matrix.
Furthermore, in order to couple the random walk $\tilde{X}_t$ restricted to $G_2$ with a random walk in $G$, we will consider the random variable
\[
  \tilde{Z}:= \sum_{t=\tsep}^{\tsep+n/c-1} \sum_{z\in N(\widehat{z})} \mathbf{1}_{\tilde{X}_t = z},
\]
for  $c=32$.
Thus, for any $z \in N(\widehat{z})$,
\begin{align*}
  \E{\tilde{Z}} &\leq \tmix(G_2) +   \sum_{t=\tsep+\tmix(G_2)+1}^{\tsep+n/c-1} |N(\widehat{z})| (\pi^{G_2}(z)+ d^{\tilde P}(t)   ) \\
  &\leq  \tmix(G_2) + |N(\widehat{z})| (n/c) + O(1) 
   	\leq (1+1/e)\sqrt{n}/c .	
\end{align*}
Let $\gamma := 8(1+1/e)\sqrt{n}/c $. Then, by Markov's inequality
\begin{align*}
  \Pr{ \tilde{Z} \geq \gamma } \leq  1/8. 
\end{align*}

Consider now the straightforward coupling between a random walk $(X_t)_{t \geq \tsep}$ in $G$ that starts at vertex $\tilde X_{\tsep}\in V_2$ and the random walk $(\tilde{X}_t)_{t \geq \tsep}$ restricted to $G_2$ that starts at the same vertex. Whenever the random walk $\tilde{X}_t$ is at a vertex in $V_2\setminus\{ N(\widehat{z})\}$, then the random walk $X_t$ makes the same transition. If the random walk $\tilde{X}_t$ is at vertex $z'\in N(\widehat{z})$, then there is a coupling so that the random walk $X_t$ makes the same transition as $\tilde{X}_t$ with probability $ \frac{2\sqrt{n}}{2 \sqrt{n}+2}$. Conditional on the event $\{ \tilde{Z} \leq \gamma \} $ occurring, the random walk $\tilde{X}_t$ follows the random walk $X_t$ up until step $n/c$ with probability at least
\[
 p_1 := \left(  \frac{2\sqrt{n}}{2 \sqrt{n}+2} \right)^{\gamma}=\left( 1- \frac{1}{ \sqrt{n}+1} \right)^{\gamma} \geq \frac{3}{4},
\] 
 since the random walk $\tilde{X}_t$ has at most $\gamma$ visits to $N(\widehat{z})$.
 Consider now the random walk $(\tilde{Y}_t)_{t\geq \tsep}$ using $\tilde{P}$ (\ie restricted to $V_2$) starting at  $Y_{\tsep}$, i.e., $\tilde{Y}_{\tsep}=Y_{\tsep}$.
 By an analogous argument as before we can couple
 $({Y}_t)_{t\geq \tsep}$ and $(\tilde{Y}_t)_{t\geq \tsep}$ for $n/c$ time steps w.p. at least $p_1$.
 
 Furthermore, after $\tsep(G_2)=O(1)$ steps we can couple $\tilde{X}_t$ and $\tilde{Y}_t$
 with nodes drawn independently from $\pi^{G_2}$. 
 Hence, 
 \[
 p_2 := \Pr{ \tilde{X}_{t+\tsep(G_2)} =\tilde{Y}_{t+\tsep(G_2)} ~|~ \mathcal{F}_{t}   } \geq  (1-1/e)^2 \twonorm{\pi^{G_2}}^2 \geq \frac{\sqrt{\alpha'}}{8n}.
 \]
 Recall that $\alpha' \geq 2^{20}\tsep(G_2)^2 $ by definition.
 Therefore, the probability that  
 $\tilde{X}_t$ and $\tilde{Y}_t$ do not meet in the time-interval $[\tsep(G_1), \tsep(G_1)+n/c-1]$ is  at most
 \[
 p_3 := (1- p_2)^{\floor{n/(\tsep(G_2) c)}} \leq  (1- p_2)^{\floor{2^{10} n/( \sqrt{\alpha'} c)}} \leq 1/4.
 \]
  Therefore, by the Union bound,
\[
 \Pr{ \cup_{t=0}^{\tsep(G_1)+n/c-1} X_t = {Y}_t } \geq 
p_0\cdot \left(1-   \Pr{ \tilde{Z} \geq \gamma } - 2\cdot(1- p_1) - p_3 \right) =\Omega(\alpha').
\]
Repeating this  $O(1/p_3)$ times and using the independence  yields that the expected meeting time is
$O((\tsep(G_1)+n/c-1)/p_3)=O(\alpha' n)$
and the proof is complete.


\end{proof}

Finally, we analyze the coalescing time of $G$. The proof idea is to consider $\sqrt[5]{n}$ random walks starting from $\pi$ and show that meetings only occur on $V_2$ and that at least one random walk requires $\Omega( \sqrt{\alpha' }\cdot n\log n)$ time-steps to reach $V_2$.
\begin{lemma}\label{lem:coal}
Let  $G$ be the graph described at the beginning of \autoref{sec:lowerbound}. We have 
	$\tcoal(G)=\Omega( \sqrt{\alpha' }\cdot n\log n)$.
\end{lemma}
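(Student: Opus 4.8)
The plan is to produce, with probability $1-o(1)$, two particles that are still alive at time $T:=c_0\sqrt{\alpha'}\,n\log n$ for a suitably small constant $c_0>0$; since $\tcoal(G)=\E{\coal(V)}\geq T\cdot\Pr{\coal(V)>T}$ this gives the bound. Run the coalescing process from one particle per vertex, give the particles distinct identifiers, and place identifiers $1,\dots,k$ (with $k:=\lceil n^{1/5}\rceil$) on vertices $v_1,\dots,v_k$ chosen with $v_i\in V_1^i$ (possible since $k\le\sqrt n$). Because a particle dies only when it collides with a particle of \emph{smaller} identifier, the trajectory of particle $i$ while alive is an honest lazy random walk $W_i$ from $v_i$, the walks $W_1,\dots,W_k$ are independent, and at any time $t$ the number of distinct live particles among $\{1,\dots,k\}$ is at least $\#\{i\in[k]\colon W_i(s)\neq W_j(s)\text{ for all }s\le t,\ j\in[k]\setminus\{i\}\}$ (a merge of $W_i$ with an identifier outside $[k]$ has higher identifier, so does not kill particle $i$). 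So it suffices to show that w.p.\ $1-o(1)$ at least two of $W_1,\dots,W_k$ collide with no other $W_j$ during $[0,T]$.

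Call the set $\bigcup_i V_1^i\cup\{\widehat z\}$ the \emph{cloud}. The first ingredient is that, w.p.\ $1-o(1)$, no two of the $W_i$ meet inside the cloud during $[0,T]$. A walk can move between two copies of $G_1$, or between a copy and $V_2$, only by passing through $\widehat z$. Since $\pi(\widehat z)=\Theta(1/n)$ and, for a lazy reversible chain, $\sum_{t\le T}p^t_{v_i,\widehat z}\le\sum_{t\le T}p^t_{\widehat z,\widehat z}\le (T+1)\pi(\widehat z)+Z_{\widehat z\widehat z}$ with $Z_{\widehat z\widehat z}=\pi(\widehat z)\,\thit(\pi,\widehat z)=O(1)$ by \autoref{lem:holyhit}.$(iii)$, the expected number $M_i$ of visits of $W_i$ to $\widehat z$ in $[0,T]$ is $O(T/n)=O(\sqrt{\alpha'}\log n)$. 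Now condition on the ``shape'' of the trajectories of $W_i$ and $W_j$, i.e.\ on all their randomness except which $G_1$-copy is entered at each step $\widehat z\to z^\ell$; by the symmetry of the $\kappa=\sqrt n$ copies these copy labels are i.i.d.\ uniform on $[\sqrt n]$, independent of the shape and of each other. In $[0,T]$ the walk $W_i$ has at most $M_i+1$ maximal time-intervals during which it sits in a single $G_1$-copy, and during any one of them $W_j$ visits at most $M_j+1$ distinct copies; hence the probability that $W_i$ and $W_j$ are ever simultaneously in a common $G_1$-copy is at most $(M_i+1)(M_j+1)/\sqrt n$, which on taking expectations (and using $W_i\perp W_j$) is $\E{M_i+1}\E{M_j+1}/\sqrt n=O(\alpha'\log^2 n)/\sqrt n$; a similar, easier, estimate bounds the probability that $W_i,W_j$ are ever simultaneously at $\widehat z$. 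Summing over the $\binom k2$ pairs gives total bad probability $O\!\big(n^{2/5}\alpha'\log^2 n/\sqrt n\big)=O(n^{-1/10}\log^4 n)=o(1)$, using $\alpha'=O(\log^2 n)$.

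The second ingredient is that w.p.\ $1-o(1)$ at least two of the $W_i$ never reach $V_2$ during $[0,T]$. A walk starting in $V_1^i$ alternates ``cycles'': a sojourn of expected length $\Theta(n)$ inside a clique until it hits $\widehat z$ (using $\thit(u,\widehat z)=O(n)$ and $\Pr{\hit(u,\widehat z)\ge n/c}\ge1/2$ from \autoref{lem:holyhit}), followed by one step that enters $V_2$ with probability $q=\frac{\sqrt{n/\alpha'}}{\sqrt n+\sqrt{n/\alpha'}}=\Theta(1/\sqrt{\alpha'})$ and otherwise re-enters a uniformly random $G_1$-copy. By $\Pr{\hit(u,\widehat z)\ge n/c}\ge1/2$ and a Chernoff bound, $W_i$ completes only $O(T/n)=O(\sqrt{\alpha'}\log n)$ cycles in $[0,T]$ w.p.\ $1-o(1)$, so $W_i$ reaches $V_2$ by $T$ only if one of the first $O(\sqrt{\alpha'}\log n)$ independent ``$V_2$-coins'' (each with bias $q$) comes up heads, which has probability at most $1-(1-q)^{O(\sqrt{\alpha'}\log n)}\le 1-n^{-1/10}$ once $c_0$ is chosen small enough (the exponent is a constant multiple of $q\sqrt{\alpha'}\log n=\Theta(\log n)$). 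Thus $\Pr{W_i\text{ does not reach }V_2\text{ by }T}\ge n^{-1/10}/2$; by independence the number of such walks stochastically dominates a $\operatorname{Bin}(k,n^{-1/10}/2)$ with mean $\Theta(n^{1/10})$, hence is $\ge 2$ w.p.\ $1-o(1)$.

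Combining the two ingredients: w.p.\ $1-o(1)$ there are $i\neq j$ in $[k]$ with $W_i,W_j$ staying in the cloud throughout $[0,T]$ and no two of $W_1,\dots,W_k$ meeting inside the cloud, so $W_i$ and $W_j$ meet nobody and particles $i,j$ are distinct and alive at time $T$, giving $\coal(V)>T$ and hence $\tcoal(G)\ge T(1-o(1))=\Omega(\sqrt{\alpha'}\,n\log n)$. The delicate point is the first ingredient: bounding the expected number of cloud-collision opportunities by summing over all $\le T$ time steps is hopelessly weak, and the resolution is to observe that for a fixed pair $(W_i,W_j)$ an opportunity can occur only at one of the $O(M_iM_j)=\widetilde O(1)$ pairs of $G_1$-sojourn intervals, after which symmetry among the $\sqrt n$ copies kills each opportunity with probability $1-1/\sqrt n$.
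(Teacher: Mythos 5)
Your overall strategy matches the paper's: identify $\Theta(n^{1/5})$ ``low-identifier'' walks, show they neither collide with each other outside $V_2$ nor all reach $V_2$ within $T=\Theta(\sqrt{\alpha'}\,n\log n)$ steps, and conclude that at least two particles survive. The framework you set up---ids $1,\dots,k$ so the trajectories $W_1,\dots,W_k$ of those particles while alive are honest independent walks, and a particle $i\in[k]$ is alive provided $W_i$ avoids all $W_j$, $j\in[k]\setminus\{i\}$---is correct and is essentially the ``immortal process'' majorization from the paper made concrete. The two genuine points of departure from the paper's proof of this lemma are worthwhile: (i) you place the $k$ walks deterministically, one per clique copy $V_1^i$, which eliminates the paper's event $\mathcal{E}_2$ (the chance that too few of its $\pi$-sampled starts lie in cliques) and makes the clique-collision argument cleaner; (ii) you control the number of $\widehat z$-visits via $\sum_{t\le T}p^t_{v_i,\widehat z}\le (T+1)\pi(\widehat z)+Z_{\widehat z\widehat z}$ and $Z_{\widehat z\widehat z}=\pi(\widehat z)\thit(\pi,\widehat z)=O(1)$, then a Markov/expectation bound, rather than the paper's Chernoff-style ``$O(\log^3 n)$ visits w.h.p.'' bound. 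Both trades are fine; the paper's w.h.p.\ route gives stronger tail control, while yours is shorter and matches the target estimate. Your copy-label symmetry argument is the same idea the paper uses (its line about ``each such copy is chosen uniformly and independently at random among $G_1^1,\dots,G_1^\kappa$''), just organized around the sojourn structure rather than a count of visits. The $V_2$-escape calculation and the final binomial step are sound.

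There is, however, one real gap. You wave off the simultaneous-$\widehat z$ visits with ``a similar, easier, estimate,'' but it is \emph{not} easier in your setting, and the phrase suggests you are importing the paper's computation, which does not apply here. The paper's walks start from $\pi$, so $\Pr{W_i(t)=W_j(t)=\widehat z}=\pi(\widehat z)^2=O(n^{-2})$ \emph{for every $t$}, and a union bound over $t\le T$ and pairs is immediate. Your walks start deterministically at $v_i\in V_1^i$, so $p^t_{v_i,\widehat z}$ is not $O(1/n)$ uniformly in $t$; in particular, early on it is governed by the first-hit distribution to $\widehat z$ and the geometric holding time at $\widehat z$, and the spectral bound $p^t_{u,v}\le\pi(v)+\sqrt{\pi(v)/\pi(u)}\,\lambda_2^t$ is useless for $t\ll\tmix=\Theta(n)$. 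The clean fix is a pointwise bound from the one-step recursion: every neighbor $u\neq\widehat z$ of $\widehat z$ has $\deg(u)\ge\sqrt n$, so $p^{t+1}_{v,\widehat z}\le \tfrac12 p^{t}_{v,\widehat z}+\tfrac{1}{2\sqrt n}$, whence $p^{t}_{v,\widehat z}\le 1/\sqrt n$ for all $t\ge 1$. Then
\[
\sum_{t\le T}p^{t}_{v_i,\widehat z}\,p^{t}_{v_j,\widehat z}\;\le\;\frac{1}{\sqrt n}\sum_{t\le T}p^{t}_{v_i,\widehat z}\;=\;O\!\left(\frac{\sqrt{\alpha'}\log n}{\sqrt n}\right),
\]
which after summing over $\binom{k}{2}=O(n^{2/5})$ pairs gives $O(\sqrt{\alpha'}\log n\cdot n^{-1/10})=o(1)$, as needed. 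With this inserted, your argument closes; without it, the ``easier'' claim is a genuine missing step, not a triviality.
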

\begin{proof}
Let $\varepsilon = 1/5$.
We  show
that even the coalescing time of $n^{\varepsilon}$ random walks requires 
$\Omega( \sqrt{\alpha' }\cdot n\log n)$ time-steps w.c.p..
Let $R$ be a collection of $n^\varepsilon$ independent, \ie non-interacting, random walks with starting positions drawn from the stationary distribution $\pi$. 
	We define the following three bad events: 
	\begin{enumerate}[(i)]

	\item
	Let $\mathcal{E}_1$ be the event that any of the $n^\varepsilon$ random walks meet on a node  $V\setminus V_2$ in $\sqrt{\alpha' }\cdot n \log^2 n$ steps.	
	\item
	Let  $\mathcal{E}_2$ be the event that fewer than $n^{\varepsilon}/4$   random walks start on copies of $G_1$, \ie on nodes $V\setminus (V_2\cup \widehat{z})$.
	\item
Let  $\mathcal{E}_3$ be the event that all random walks starting from a copy of $G_1$ require fewer than $c \cdot \sqrt{\alpha' }\cdot n \log n$ time-steps for leaving $V \setminus (V_2 \cup z^{*})$ for some constant $c>0$ to be determined later.
	\end{enumerate}
	
	In the following we show that $\Pr{\mathcal{E}_1} = o(1)$, $\Pr{\mathcal{E}_2 } = o(1)$, and $\Pr{\mathcal{E}_3 \, \mid \, \overline{ \mathcal{E}_2} } < 1/e$, which implies, by union bound, 
	\begin{align*}
	\Pr{ \overline{\mathcal{E}_1} \cap \overline{\mathcal{E}_2} \cap \overline{\mathcal{E}_3} } \geq \Pr{\overline{\mathcal{E}_1}} - ( 1 -\Pr{ \overline{\mathcal{E}_2} \cap \overline{\mathcal{E}_3 } }) 
	\geq 1 - o(1) - \left(1  - (1-o(1)) \cdot \left(1-\frac{1}{e}\right) \right) \geq 1 - \frac{1}{2e}.
	\end{align*}
	Conditioning on $\overline{\mathcal{E}_1} \cap \overline{\mathcal{E}_2} \cap \overline{\mathcal{E}_3}$, none of the independent random walks meet on any node  $V\setminus V_2$ and hence they are indistinguishable from coalescing random walks until they reach $V_2$. Therefore, it is necessary for all random walks to reach $G_2$ in order to coalesce.  Hence, we conclude  that $\tcoal(G)=\Omega( \sqrt{\alpha' }\cdot n\log n)$  yielding the lemma.

	\begin{enumerate}[(i)]
\item We now prove $\Pr{\mathcal{E}_1} = o(1)$. 
Consider any pair of the random walks $R$. Since both random walks start from the stationary distribution, the probability for them to meet on a node on $\widehat{z}$ in a fixed step $t \geq 0$ is at most
$O(1/n^2)$.

Hence, by the Union bound over $\binom{n^{\epsilon}}{2}$ pairs of random walks and $\sqrt{\alpha' }\cdot n \log^2 n \leq n \log^3 n$ steps, the probability of any two random walks meeting on $\widehat{z}$ is at most
\begin{align*}
p_1 := \binom{n^{\epsilon}}{2} \cdot n \log^3 n \cdot O(1/n^2) = o(1),
\end{align*}
since $\epsilon = \frac{1}{5}$.
 Furthermore, the probability that no two walks start on the same copy of $G_1$ is at most $p_2 := n^\varepsilon \cdot \frac{n^\varepsilon}{\sqrt{n}} = o(1)$ by the Union bound.
 
Moreover, using a Chernoff bound together with \autoref{lem:holyhit}.$(ii)$, it follows that a random walks visits the vertex $z^{*}$ at most $10 \log^3 n$ times during $n \log^3 n$ steps with probability at least $1-n^{-2}$. By the Union bound over all random walks, it follows that w.p.~at least $1-n^{-1}$, each random walk visits at most $10 \log^3 n$ different copies of $G_1$, and by construction of $G$ each such copy is chosen uniformly and independently at random among $G_1^1,G_1^2,\ldots,G_1^{\kappa}$. Therefore, the probability that there exists a copy of $G_1$ which is visited by at least two random walks in $n \log^3 n$ steps is at most 
	\begin{align}
	  p_3 :=   n^{-1} + n^\varepsilon (10 \log^3 n+1) \cdot \frac{n^\varepsilon(10 \log^3 n+1)}{\sqrt{n}} = o(1).
	  \label{eq:probone}
	\end{align}
Putting everything together, using union bound, yields $\Pr{\mathcal{E}_1} \leq p_1 + p_2 + p_3 =o(1)$.

\item We now prove $\Pr{\mathcal{E}_2} = o(1)$. The probability $p$ for each random walk to start on a node of $V\setminus (V_2\cup \widehat{z})$ is $\pi(V\setminus (V_2\cup \widehat{z}))\geq 1/2$.
For each of the random walks with label $1 \leq i \leq n^{\varepsilon}$ we define the indicator variable $X_i$ to be one, if that random walk starts on $V\setminus (V_2\cup \widehat{z})$. 
Let $X=\sum_{i=1}^{n^\varepsilon} X_i$. We have  $\E{X}=n^\varepsilon \cdot \E{X_i}\geq n^\varepsilon/2$.
Since the starting positions of the $n^\varepsilon$ random walks are drawn independently, by a Chernoff bound 
\[
\Pr{\mathcal{E}_2}=\Pr{X\leq \frac{1}{4} n^{\varepsilon}} \leq \Pr{X\leq \E{X}/2}\leq e^{-n^\varepsilon/16} = o(1).
\]

\item We now prove $\Pr{\mathcal{E}_3 \, \mid \, \overline{\mathcal{E}_2} } < 1/4$. From \autoref{lem:holyhit}.$(ii)$ we get that w.p. at least $1/2$ a random walk  starting at any node $u\in V_1$ does not leave $G_1$, \ie does not reach $z^{*}$, after $c_1 n$ time-steps for some constant $c_1 > 0$. It is easy to see that the number of visits to $\widehat{z}$ required before the random walk hits $G_2$ instead of returning to $G_1$ is w.c.p. at least $\sqrt{\alpha'}/2$; this is because the fraction of edges from $\widehat{z}$ to $G_2$ is $\sqrt{n/\alpha'}/(\sqrt{n/\alpha'}+ \sqrt{ n})$. Using a Chernoff bound, we conclude that any random walk starting at $G_1$ doesn't hit  $G_2$ during the first $T=c_1\cdot\sqrt{\alpha'}n/2$ time-steps with constant probability $p > 0$.
Thus the probability that a random walk does not reach $G_2$ after $\lambda \cdot T$ time-steps is at least $p^\lambda$, for any integer $\lambda \geq 1$.
Setting $\lambda=\epsilon \cdot \log(1/p) \cdot  \log (n/4)$, the probability that all of the at least $\frac{1}{4} n^{\varepsilon}$ random walks starting from $G_1$ reach $G_2$ within $\lambda \cdot T=\Omega( \sqrt{\alpha'}\cdot n\log n)$ steps is 
\[
\Pr{\mathcal{E}_3 \, \mid \, \overline{ \mathcal{E}_2 } } \leq (1-p^\lambda)^{\frac{1}{4} n^{\varepsilon}} \leq 1/e,
\]  
completing the proof.
	\end{enumerate}
		\end{proof}

The following lemma establishes a bound on the average hitting time. 
\begin{lemma}\label{lem:avghit}
Let  $G$ be the graph described at the beginning of \autoref{sec:lowerbound}. We have 
	$\tavghit = \Omega(n^{3/2})$
\end{lemma}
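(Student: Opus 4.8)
The plan is to lower-bound $\tavghit = \sum_{u,v\in V}\pi(u)\pi(v)\thit(u,v)$ by keeping only the contribution of pairs $(u,v)$ that lie in two \emph{distinct} clique copies, and to exploit that $\widehat{z}$ is a cut vertex together with the commute-time identity. First I would record the relevant stationary masses. Writing $V_1^i := V(G_1^i)$, each clique copy has $\vol(V_1^i) = \sqrt{n}\,(\sqrt n-1)+1 = n-\sqrt n+1 = \Theta(n)$ (the extra $+1$ from the edge $\{z^i,\widehat z\}$), while $\vol(V_2) = \tfrac{n}{\sqrt{\alpha'}}\cdot\sqrt n + \sqrt{n/\alpha'} = \Theta(n^{3/2}/\sqrt{\alpha'})$ and $\deg(\widehat z) = \kappa + \sqrt{n/\alpha'} = \Theta(\sqrt n)$. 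Since there are $\kappa=\sqrt n$ clique copies and $\alpha'\geq 1$, the clique copies dominate the volume, so $2m = \vol(V) = \Theta(n^{3/2})$ and hence $\pi(V_1^i) = \Theta(1/\kappa)$ for every $i$, which gives $\sum_{i\neq j}\pi(V_1^i)\pi(V_1^j) = \Theta(1)$.

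Next I would bound the effective resistance $R_{\mathrm{eff}}(u,v)$ for $u\in V_1^i$ and $v\in V_1^j$ with $i\neq j$. The vertex $\widehat z$ is a cut vertex whose removal separates $G_1^i$ from $G_1^j$, so $R_{\mathrm{eff}}(u,v) = R_{\mathrm{eff}}(u,\widehat z) + R_{\mathrm{eff}}(\widehat z,v)$. Moreover the edge $\{z^i,\widehat z\}$ is a bridge and is the only way out of $G_1^i$, so $R_{\mathrm{eff}}(z^i,\widehat z)=1$ and $R_{\mathrm{eff}}(u,\widehat z) = R_{\mathrm{eff}}(u,z^i)+1 \geq 1$; symmetrically $R_{\mathrm{eff}}(\widehat z,v)\geq 1$, whence $R_{\mathrm{eff}}(u,v)\geq 2$. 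By the commute-time identity, $\thit^{\mathrm{nl}}(u,v)+\thit^{\mathrm{nl}}(v,u) = 2m\cdot R_{\mathrm{eff}}(u,v)$ for the non-lazy walk, and lazy hitting times are (in fact exactly twice, hence) at least as large, so $\thit(u,v)+\thit(v,u) \geq 2m\cdot R_{\mathrm{eff}}(u,v) \geq 4m$.

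Combining these, using the symmetry of the double sum,
\[
2\,\tavghit \;=\; \sum_{u,v\in V}\pi(u)\pi(v)\bigl(\thit(u,v)+\thit(v,u)\bigr) \;\geq\; \sum_{i\neq j}\ \sum_{u\in V_1^i,\,v\in V_1^j}\pi(u)\pi(v)\cdot 4m \;=\; 4m\sum_{i\neq j}\pi(V_1^i)\pi(V_1^j) \;=\; \Theta(m)\;=\;\Theta(n^{3/2}),
\]
so $\tavghit = \Omega(n^{3/2})$, as claimed. The main obstacle is simply keeping the constants honest — verifying the cut-vertex/bridge structure of $G$ and the normalization in the commute-time identity under laziness. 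If one prefers a self-contained argument that avoids the commute-time identity, I would instead track the random walk directly: from a start in $G_1^i$ it must first reach $\widehat z$, and thereafter each departure from $\widehat z$ lands on the unique gateway $z^j$ only with probability $\Theta(1/\deg\widehat z)=\Theta(1/\sqrt n)$, so in expectation $\Theta(\sqrt n)$ departures into ``wrong'' clique copies occur before the walk ever enters $G_1^j$; by \autoref{lem:holyhit}.$(ii)$ each such detour costs $\Omega(n)$ expected time, and Wald's identity (the detour lengths are i.i.d.\ and independent of the departure-type sequence) yields $\thit(u,v)=\Omega(n^{3/2})$ for $i\neq j$, from which the same summation gives $\tavghit=\Omega(n^{3/2})$.
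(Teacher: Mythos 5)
Your main argument is correct, and it takes a genuinely different route from the paper's. The paper argues directly with probability: from any start, by \lemref{holyhit}.$(ii)$ each time the walk leaves one copy $G_1^i$ there is constant probability that it takes $\Omega(n)$ steps to return to $\widehat z$, a Chernoff bound then shows that visiting half of $\{z^1,\dots,z^{\kappa}\}$ takes $\Omega(n^{3/2})$ time with probability at least $1/2$, and by symmetry this yields $\thit(u,v) = \Omega(n^{3/2})$ for a constant fraction of ordered pairs, giving the claim. You instead reach for the commute-time identity: $\widehat z$ is a cut vertex, each $\{z^i,\widehat z\}$ is a bridge so $R_{\mathrm{eff}}(u,v)\geq 2$ whenever $u,v$ lie in distinct clique copies, and $\vol(V)=\Theta(n^{3/2})$, whence $\thit(u,v)+\thit(v,u)\geq 2mR_{\mathrm{eff}}(u,v)=\Omega(n^{3/2})$ for a $\Theta(1)$-mass of $\pi\times\pi$. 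This is cleaner and entirely deterministic — no Chernoff bound, no appeal to \lemref{holyhit}.$(ii)$ — and it also makes the structural reason for the lower bound transparent (bridges create unavoidable resistance). The tradeoff is that the commute-time identity only gives a bound on $\thit(u,v)+\thit(v,u)$ in aggregate, which suffices here because $\tavghit$ is a symmetric average; if one later needed a per-pair, one-directional bound (or a high-probability statement about the hitting time random variable) the paper's more hands-on argument would be the one to generalize. Your fallback sketch at the end, counting the $\Theta(\sqrt n)$ expensive detours from $\widehat z$ and invoking Wald, is essentially the paper's argument rephrased, so either version you present would be consistent with the surrounding section.
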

	\begin{proof} 
	Consider a random walk that starts from an arbitrary vertex $u \in V$. By \autoref{lem:holyhit}.$(ii)$, every time a vertex $z^i$ is visited, with probability at least $c>0$ it takes $\Omega(n)$ time-steps to visit another vertex $z^j$, $j \neq i$. 
Using a Chernoff bound, it follows that with probability larger than $1/2$ it takes at least $\Omega(n^{3/2})$ time-steps	
to visit at least half of the nodes in $\{z^1,z^2,\ldots,z^{\kappa}\}$.
By symmetry, it follows that for every vertex in a copy of $G_1$ there are $\Omega(n)$ vertices to which the hitting time is $\Omega(n^{3/2})$.
Thus, by symmetry,  $\tavghit= \sum_{u,v\in V} \pi(u)\cdot\pi(v)\cdot \thit(u,v)= 
\Omega( n^2 \frac{1}{n^2} n^{3/2})=\Omega(n^{3/2}) $.
\end{proof}

\section{Bounding $\tcoal=O(\thit)$ for Almost-Regular Graphs  }
\label{sec:hittingtime}

As mentioned in the introduction, the bound on $\tcoal$ in terms of $\thit$ will be based on the combination of two reduction results; the first result reduces the number of walks from $O(n)$ to $O(\log^3 n)$, while the second one reduces the number of walks  from $O(\log^3 n)$ to $(\Delta/d)^{100}$; both taking $O(\thit)$ time. In \autoref{sec:conc}, we first develop concentration inequalities that will be needed for these reductions. Then in \autoref{sec:simplered}, we present the first and technically simpler reduction to $O(\log^3 n)$ walks, which is stated in~\autoref{thm:mostgeneral}. The proof basically combines the concentration inequalities with our well-known formula \eq{central}. 

The derivation of the second reduction is done in \autoref{sec:trickyred}. It is based on identifying nearly-regular and dense subsets $S$, which will contain enough vertices visited by a random walk, even if the walks only run for $o(\thit)$ steps (\autoref{lem:structure}). The proof of this \autoref{lem:structure} also rests on the concentration inequalities we derive. The second reduction is then completed by \autoref{lem:stocknockout}, which uses the dense subsets $S$ provided by \autoref{lem:structure} in order to prove that random walks are likely to collide. A more detailed proof outline can be found at the beginning of \autoref{sec:trickyred}.

\subsection{Concentration Inequalities for Random Walks}\label{sec:conc}

In this part, we derive several concentration inequalities for random walks that are new to the best of our knowledge. We point out that existing concentration inequalities tend to fail in our setting, since the events we are considering (like visits to a certain vertex or expected collisions with an unexposed walk) may only appear a small number of times during $\thit$ steps. Therefore, we have to develop new concentration inequalities that are parameterized by $\thit$. Although the derivation is fairly elementary, the bounds are quite general and may complement existing bounds that are usually parameterized by the mixing time~\cite{CLLM12,Lez89}. 
In particular, our bounds are most useful when $\tmix$ and $\thit$ are close, which is precisely the challenging regime for proving $\tcoal = O(\thit)$. One limitation though is that our bounds only work for large deviations exceeding the expectation by a multiplicative factor.

\begin{lemma}\label{lem:return} 
Let $f:V \rightarrow [0,1]$ be any function over the vertices and $\overline{f} = \sum_{u \in V} f(u) \cdot \pi(u)$. Then for any random walk starting from an arbitrary vertex $X_0$ and any number of steps $T \geq 0$,
\[
  \E{ \sum_{t=0}^{T-1} f(X_t)} \leq 8\cdot T^{+} 
  \cdot \overline{f},
  \]
  where $T^{+} = \max\{ \thit, T\}$.
  Furthermore, for any integer $\lambda \geq 1$,
\begin{align*}
  \Pr{ \sum_{t=0}^{T-1} f(X_t) \geq \lambda \cdot \left( 16 \cdot T^{+} \cdot \overline{f} + 1 \right) } \leq 2^{-\lambda}.
\end{align*}
Moreover, suppose we have time-dependent functions, $f_t: V \rightarrow [0,1]$, $0 \leq t \leq T$, where $T$ may be any integer. Further assume that there is a universal bound $\Upsilon > 0$ so that for any $1 \leq s \leq T$ and any $w\in V$,\fnote{reviewer asks if for any s and any $w \in V$. I added $\in V$ in response. Please verify and delete this comment}

\begin{align*}
 \E{ \sum_{t=s}^{T-1} f_t(X_t) \,\mid\, X_s = w } \leq \Upsilon.
\end{align*}
Then, again for any integer $\lambda \geq 1$, 
\begin{align*}
 \Pr{	\sum_{t= s}^{T-1} f_t(X_t) \geq \lambda \cdot (2 \Upsilon+1)} \leq 2^{-\lambda}
\end{align*}

\end{lemma}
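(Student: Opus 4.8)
The lemma has three parts: an expectation bound and two exponential tail bounds. The plan is to prove the expectation bound directly from a visit-count comparison, and then to obtain both tail bounds from one stopping-time induction, the third part being an abstract version of what the second part needs.

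\textbf{The expectation bound.} Write $G_T(u,v):=\sum_{t=0}^{T-1}p^t_{u,v}$ for the expected number of visits to $v$ in the first $T$ steps of a walk from $u$, so that $\E{\sum_{t=0}^{T-1}f(X_t)\mid X_0=u}=\sum_{v\in V}f(v)\,G_T(u,v)$. First I would check $G_T(u,v)\le G_T(v,v)$: a walk from $u$ must hit $v$ before it can register any visit to $v$, and by the strong Markov property the number of visits after the first hit is dominated by that of a fresh walk from $v$ run for at most $T$ steps. Next, since the walk is lazy and reversible all eigenvalues of $P$ are nonnegative, hence $p^t_{v,v}\ge\pi(v)$ for every $t\ge0$, and the fundamental-matrix identity gives $\sum_{t\ge0}(p^t_{v,v}-\pi(v))=\pi(v)\,\thit(\pi,v)\le\pi(v)\,\thit$. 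Therefore $G_T(v,v)=T\pi(v)+\sum_{t=0}^{T-1}(p^t_{v,v}-\pi(v))\le\pi(v)(T+\thit)\le 2T^{+}\pi(v)$, and summing against $f$ yields $\E{\sum_{t=0}^{T-1}f(X_t)}\le 2T^{+}\overline f\le 8T^{+}\overline f$. The same reasoning applied to a walk of length $T-s$ started at $w$ gives $\E{\sum_{t=s}^{T-1}f(X_t)\mid X_s=w}\le 8T^{+}\overline f$ for every $w$ and every $0\le s\le T$, which is what the tail arguments will need.

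\textbf{The two tail bounds.} I would prove both by induction on $\lambda$, since the two statements are the same argument with different parameters. Let $g_t$ stand for the fixed $f$ (second part, with $s=0$) or for $f_t$ (third part), set $S:=\sum_{t=s}^{T-1}g_t(X_t)$, and let $M$ be a uniform bound on the conditional means of the residual sums, $\E{\sum_{t=r}^{T-1}g_t(X_t)\mid X_r=w}\le M$ for all $w\in V$ and all $r$ with $s\le r\le T$; here $M=8T^{+}\overline f$ in the second part (from the expectation bound, valid for all such $r$) and $M=\Upsilon$ in the third part (by hypothesis, which applies since $r\ge s\ge1$). Put $K:=2M+1$; if $M=0$ then $S\equiv0$ and the claim is trivial, so assume $M>0$ and $K>1$. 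I claim $\Pr{S\ge\lambda K}\le 2^{-\lambda}$ for every integer $\lambda\ge1$ and every starting vertex, which is the second part since $2M+1=16T^{+}\overline f+1$ and the third since $2M+1=2\Upsilon+1$. The base case is Markov's inequality, $\Pr{S\ge K}\le\Pr{S\ge 2M}\le\E{S}/(2M)\le\tfrac12$. For the inductive step, let $\sigma:=\min\{t\ge s:\sum_{j=s}^{t-1}g_j(X_j)\ge\lambda K\}$, a stopping time. On $\{S\ge(\lambda+1)K\}$ we have $\sigma\le T$, and since each increment lies in $[0,1]$ the sum accumulated strictly before $\sigma$ is $<\lambda K+1$, so the residual sum satisfies $\sum_{t=\sigma}^{T-1}g_t(X_t)\ge(\lambda+1)K-(\lambda K+1)=2M$. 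Conditioning on $\{\sigma=r,\,X_\sigma=w\}$ and using the strong Markov property, the residual sum has the law of $\sum_{t=r}^{T-1}g_t(X_t)$ for a walk freshly started at $w$, hence is at least $2M$ with probability at most $\tfrac12$; summing over $(r,w)$ gives $\Pr{S\ge(\lambda+1)K}\le\tfrac12\Pr{\sigma\le T}$. Finally $\{\sigma\le T\}\subseteq\{S\ge\lambda K\}$, so $\Pr{\sigma\le T}\le 2^{-\lambda}$ by the inductive hypothesis, and the step follows.

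\textbf{Where the work is.} The expectation bound rests on one non-elementary input, the reversible-chain identity $\sum_t(p^t_{v,v}-\pi(v))=\pi(v)\thit(\pi,v)\le\pi(v)\thit$ (together with laziness, which makes the summands nonnegative so that truncating at $T$ only decreases the sum); the rest is a routine visit-count comparison. The delicate point in the tail bounds is the restart inside the induction: one must verify that $\sigma$ is a stopping time, that conditionally on $X_\sigma$ the residual sum is distributed exactly as the corresponding sum of a walk begun at $X_\sigma$ and is independent of $\{\sigma=r\}$, and — in the third part — that $\sigma$ always falls in the admissible range $1\le\sigma\le T$ on which the hypothesised bound $\Upsilon$ applies. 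All of this is the (strong) Markov property arranged carefully; everything else is Markov's inequality.
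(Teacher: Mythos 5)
Your proof is correct, and for the two tail bounds it is essentially the same argument the paper gives: both are iterated Markov inequality through a stopping time at which the accumulated sum first crosses a threshold of the form $2M$, with the ``$+1$'' coming from the unit bound on each increment. Your write-up makes the induction and the strong-Markov restart explicit, but the mechanism is identical.

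The expectation bound is where you genuinely diverge. The paper's proof of $\sum_{t=0}^{T-1}p^t_{u,v}\leq 8T^{+}\pi(v)$ is by contradiction: assuming the bound fails at some $u$, it combines the Markov bound $\Pr{\hit(w,u)\leq 2\thit}\geq\tfrac12$ (valid for every start $w$) with the stationarity identity $\E{N_{3T^{+}}(X_0,v)\mid X_0\sim\pi}=3T^{+}\pi(v)$ to derive a contradiction. You instead prove the sharper bound $\sum_{t=0}^{T-1}p^t_{u,v}\leq G_T(v,v)\leq 2T^{+}\pi(v)$ directly, via the monotonicity $G_T(u,v)\leq G_T(v,v)$ (a strong-Markov comparison) together with laziness (so $p^t_{v,v}\geq\pi(v)$ termwise) and the fundamental-matrix identity $\sum_{t\geq 0}\bigl(p^t_{v,v}-\pi(v)\bigr)=\pi(v)\,\thit(\pi,v)\leq\pi(v)\,\thit$. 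Your route is cleaner and yields a better constant, at the cost of invoking the (standard but nontrivial) identity $Z_{vv}=\pi(v)\thit(\pi,v)$ and the laziness-implies-nonnegativity fact; the paper's contradiction argument is more self-contained and uses only Markov's inequality plus stationarity. Both are valid; neither has a gap.
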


\begin{proof}
We first prove that
for all pairs of states $u,v \in V$ and any $T \geq \thit$ that,
\[
  \sum_{t=0}^{T-1} p_{u,v}^t \leq 8 \cdot T^+ \cdot \pi(v).
\]
Suppose for a sake of contradiction that $ \sum_{t=0}^{T-1} p_{u,v}^t > 8 \cdot T^{+} \cdot \pi(v)$. Then, for an arbitrary vertex $w \in V$, by Markov's inequality, 
$\Pr{ \hit(w,u) \leq 2 \,  \thit} \geq \frac{1}{2}$, where we recall that $\hit(w,u)$ is the first time step at which a random walk starting at $w$ hits $u$. 
We will use $N_{t}(u, v)=\sum_{i=0}^{t-1} \mathbf{1}_{X_t=v}$ to denote the number of visits to $v$ up step $t-1$  starting at $u$. 
Therefore,  
\begin{align*}
\E{ N_{3T^{+}}(X_0,v) \, \mid \, X_0 = w} &\geq
\Pr{ \hit(w,u) \leq 2 \,  \thit} \cdot \E{ N_{T^{+}}(X_0,v) \, \mid \, X_0 = u}
\\ &\geq
 \frac{1}{2} \cdot \sum_{t=0}^{T-1} p_{u,v}^t > 4 \cdot T^+ \cdot \pi(v).
 \end{align*} Since this holds for every vertex $w \in V$, we  conclude
$\E{ N_{3 T^+}(X_0,v) \, \mid \, X_0 \sim \pi} > 4 T^+ \cdot \pi(v)$. However, by definition of the stationary distribution, we also have $\E{ N_{3 T^+}(X_0, v) \, \mid \, X_0 \sim \pi} = 3 T^+ \cdot \pi(v)$, which yields the desired contradiction.
Now the first statement of the lemma follows simply by linearity of expectations:
\begin{align*}
  \E{ \sum_{t=0}^{T-1} f(X_t) \, \mid \, X_0=w} &=
  \E{ \sum_{t=0}^{T-1} \sum_{u \in V} \mathbf{1}_{X_t=u} \cdot f(u) \, \mid \, X_0 = w } \\
  &=    \sum_{u \in V} \sum_{t=0}^{T-1}  f(u) \cdot \E{\mathbf{1}_{X_t=u}  \, \mid \, X_0 = w } \\
  &= \sum_{u \in V} f(u) \cdot \sum_{t=0}^{T-1}  p_{w,u}^t \\
  &\leq \sum_{u \in V} f(u) \cdot 8 T^+ \cdot \pi(u) = 8 \cdot T^+ \cdot \overline{f}.
\end{align*}
We now prove the second statement. 
By Markov's inequality, for every $w \in V$,
\[
 \Pr{ \sum_{t=0}^{T-1} f(X_t)  \geq 16 \cdot  T^+ \cdot \overline{f} \, \mid \, X_0 = w} \leq \frac{1}{2}.
\] 
Hence with $\tau := \min \left\{ s \in \mathbb{N} \colon \sum_{t=0}^{s} f(X_t) \geq 16 \cdot T^+ \cdot \overline{f} \right\} $ we have for every $w \in V$,
\begin{align*}
 \Pr{ \tau \leq T -1\, \mid \, X_0 = w} &\leq \frac{1}{2}.
\end{align*}
Since $f$ is bounded by $1$, we get $\sum_{t=0}^{\tau} f(X_t) \leq 16 \cdot T^+ \cdot \overline{f} + 1$ 
and therefore,
\begin{align*}
 \Pr{ \sum_{t=0}^{T-1} f(X_t)  \geq \lambda \cdot (16 \cdot T^+ \cdot \overline{f} + 1)\, \mid \, X_0 = w} &\leq 
 \left( \max_{v \in V} \Pr{ \tau \leq T-1 \, \mid \, X_0 = v} \right)^{\lambda} \leq 
 2^{-\lambda}.
\end{align*}
The third statement is derived in exactly the same way we proved the second statement.
\end{proof}

The third statement of \autoref{lem:return} is very useful in that it can be used the following concentration inequality on $\tilde{Z}$. Notice that the variable random variable $\tilde{Z}$ is defined using only one random walk $(X_t)_{t \geq 0}$, but it can be viewed as the expected number of collisions on the vertex set $S$ of the random walk $(X_t)_{t \geq 0}$ with another (unexposed) random walk $(Y_t)_{t \geq 0}$, starting  from the same vertex $u$. 

\begin{lemma}\label{lem:deterministicpath}
Let $S$ be any subset of vertices such that the degree of any pair of vertices in $S$ differs by at most a factor of $\gamma$. Consider any random walk $(X_t)_{t \geq 0}$ that starts at an arbitrary vertex $u \in S$, and for any $T \geq 0$
\[
 \tilde{Z} := \sum_{t=0}^{T-1} \mathbf{1}_{X_t \in S} \cdot p_{u,X_t}^{t}.
\] 
Then with $T^{+} = \max\{\thit, T\}$ it holds that 
\[
  \E{\tilde{Z}} \leq 16 \gamma\cdot T^+ \cdot \max_{w \in S} \pi(w) =: \Upsilon. 
\]
Furthermore, for any $\lambda \geq 1$,
 \begin{align*}
 \Pr{\tilde{Z} \geq \lambda \cdot (2 \Upsilon+1)} &\leq 2^{-\lambda}.
\end{align*}
\end{lemma}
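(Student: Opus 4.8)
The plan is to derive both bounds directly from \autoref{lem:return}, after using reversibility of the lazy walk (detailed balance $\pi(u)p_{u,v}^t=\pi(v)p_{v,u}^t$) and Chapman--Kolmogorov to reshape the relevant quantities into the return-type sums that \autoref{lem:return} controls.

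First I would handle the expectation. Since $X_t$ has law $p_{u,\cdot}^t$, linearity of expectation gives
\[
 \E{\tilde Z}=\sum_{t=0}^{T-1}\sum_{v\in S}\left(p_{u,v}^t\right)^2 .
\]
The key step is that for $v\in S$ we may write $\left(p_{u,v}^t\right)^2=\frac{\pi(v)}{\pi(u)}\,p_{u,v}^t\,p_{v,u}^t\le\gamma\,p_{u,v}^t\,p_{v,u}^t$, because $u,v\in S$ forces $\pi(v)/\pi(u)=\deg(v)/\deg(u)\le\gamma$. Summing over all $v\in V$ and applying Chapman--Kolmogorov, $\sum_{v\in S}(p_{u,v}^t)^2\le\gamma\sum_{v\in V}p_{u,v}^t p_{v,u}^t=\gamma\,p_{u,u}^{2t}$. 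Summing over $t$ and padding with nonnegative odd terms, $\sum_{t=0}^{T-1}p_{u,u}^{2t}\le\sum_{s=0}^{2T-1}p_{u,u}^{s}$, which by the first part of \autoref{lem:return} (with $f=\mathbf{1}_{u}$ and horizon $2T$, using $(2T)^{+}\le 2T^{+}$) is at most $16\,T^{+}\pi(u)\le 16\,T^{+}\max_{w\in S}\pi(w)$. Hence $\E{\tilde Z}\le16\gamma\,T^{+}\max_{w\in S}\pi(w)=\Upsilon$.

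For the concentration bound I would invoke the third part of \autoref{lem:return} with the time-dependent functions $f_t(w):=\mathbf{1}_{w\in S}\cdot p_{u,w}^t\in[0,1]$, so that $\tilde Z=\sum_{t=0}^{T-1}f_t(X_t)$. This requires a uniform bound $\E{\sum_{t=s}^{T-1}f_t(X_t)\mid X_s=w}\le\Upsilon$ for every $1\le s\le T$ and every $w\in V$. Expanding it as $\sum_{t=s}^{T-1}\sum_{v\in S}p_{w,v}^{t-s}p_{u,v}^{t}$, I would again replace $p_{u,v}^{t}$ by $\gamma\,p_{v,u}^{t}$ for $v\in S$ and then enlarge the sum to all $v\in V$; splitting $p_{v,u}^{t}=\sum_{x}p_{v,x}^{t-s}p_{x,u}^{s}$ and using $\sum_{v}p_{w,v}^{t-s}p_{v,x}^{t-s}=p_{w,x}^{2(t-s)}$ reduces everything to $\gamma\sum_{x\in V}p_{x,u}^{s}\sum_{r=0}^{T-1-s}p_{w,x}^{2r}$. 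The inner sum is at most $16\,T^{+}\pi(x)$ by the first part of \autoref{lem:return} as before, and $\sum_{x}\pi(x)p_{x,u}^{s}=\pi(u)$ by stationarity collapses the outer sum, leaving exactly $16\gamma\,T^{+}\pi(u)\le\Upsilon$. The third part of \autoref{lem:return} then delivers $\Pr{\tilde Z\ge\lambda(2\Upsilon+1)}\le2^{-\lambda}$ for every integer $\lambda\ge1$.

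The only genuinely delicate point is verifying this uniform conditional-expectation hypothesis: it involves the ``mismatched'' convolution $p_{w,v}^{t-s}p_{u,v}^{t}$ in which the two exponents do not agree, so Chapman--Kolmogorov cannot be applied directly. The remedy is the two-step decomposition $p_{v,u}^{t}=\sum_x p_{v,x}^{t-s}p_{x,u}^{s}$, which re-synchronizes the exponents at length $t-s$, after which Chapman--Kolmogorov plus a single use of stationarity reduce the estimate to the return-sum bound already established. Everything else is bookkeeping with detailed balance and two black-box appeals to \autoref{lem:return}.
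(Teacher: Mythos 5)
Your proof is correct, and the expectation bound is handled in essentially the same way as the paper (detailed balance on $S$ to flip the indices, Chapman--Kolmogorov to collapse to $p_{u,u}^{2t}$, and the first statement of \autoref{lem:return}). For the conditional-expectation hypothesis, however, you have taken a detour that the paper avoids, and the reason you give for the detour rests on a misconception: Chapman--Kolmogorov does \emph{not} require the two step counts to agree. After you apply reversibility to turn $p_{u,v}^t$ into $\gamma\,p_{v,u}^t$, the inner sum is $\sum_{v\in V}p_{w,v}^{t-s}\,p_{v,u}^{t}$, and CK gives $\sum_{v}p_{w,v}^{m}p_{v,u}^{n}=p_{w,u}^{m+n}$ for \emph{any} nonnegative $m,n$; the only thing that must match is the intermediate index $v$, which it does after the reversibility flip. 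So the paper simply writes $\sum_{v\in V}p_{w,v}^{t-s}p_{v,u}^{t}=p_{w,u}^{2t-s}$, bounds $\sum_{t=s}^{T-1}p_{w,u}^{2t-s}\le\sum_{t'=0}^{2T-2}p_{w,u}^{t'}$, and invokes the first statement of \autoref{lem:return} once more. What actually blocks CK is the form $\sum_v p_{w,v}^{t-s}p_{u,v}^{t}$ \emph{before} the reversibility step (both factors end at $v$); once you flip the second factor, the exponent mismatch is irrelevant. Your double decomposition --- splitting $p_{v,u}^{t}$ at time $s$, regrouping, and using stationarity to collapse the extra sum --- is valid and lands on the same bound $16\gamma\,T^{+}\pi(u)\le\Upsilon$, but it introduces an auxiliary vertex $x$ and two CK applications where one suffices. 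I would also note in passing that you correctly insisted the conditional bound hold for all $w\in V$, not just $w\in S$; your argument indeed never uses $w\in S$, which is important for applying the third statement of \autoref{lem:return}.
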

\begin{proof}
First note that $\tilde{Z}$ is a random variable over the walk $(X_0=u,X_1,X_2,\ldots,X_{T-1})$ with $u \in S$. 
Let us first upper bound the expectation of $\tilde{Z}$:
\begin{align*}
  \E{ \tilde{Z} } &\leq \sum_{t=0}^{T-1} \sum_{v \in S} p_{u,v}^{t} \cdot p_{u,v}^t \leq \gamma \cdot \sum_{t=0}^{T-1} \sum_{v \in S} p_{u,v}^{t} \cdot p_{v,u}^t \leq \gamma \cdot \sum_{t=0}^{T-1} p_{u,u}^{2t} \leq 8 \gamma \cdot T^+ \cdot \pi(u), 
\end{align*}
where the second inequality is due to reversibility, \ie $p_{u,s}^t \cdot \pi(u) = p_{s,u}^t \cdot \pi(s)$ and the fact that the degrees in $S$ differ by a factor of at most $\gamma$, and the fourth inequality uses $p_{u,u}^{2t} \leq \pi(u)$ which hold since $p_{u,u}^t$ is non-decreasing (\autoref{lem:loop}) and the first statement of \autoref{lem:return}.

Furthermore, suppose now that we condition on
the walk $(X_t)_{t \geq 0}$ being at an arbitrary vertex $w \in S$ at step $s$, where $1 \leq s \leq T-1$. Then the remaining contribution towards $\tilde{Z}$ is at most
\begin{align*}
 \E{ \sum_{t=s}^{T-1} \mathbf{1}_{X_t \in S} \cdot p_{u,X_t}^{t} \, \mid \, X_{s} = w} 
  &= \sum_{t=s}^{T-1} \sum_{v \in S} p_{w,v}^{t-s} \cdot p_{u,v}^{t} \\ 
  &\leq \gamma \cdot \sum_{t=s}^{T-1} \sum_{v \in S} p_{w,v}^{t-s} \cdot p_{v,u}^t \\
  &\leq \gamma \cdot \sum_{t=s}^{T-1} \sum_{v \in V} p_{w,v}^{t-s} \cdot p_{v,u}^t \\
  &= \gamma \cdot \sum_{t=s}^{T-1}  p_{u,w}^{2t-s} \\
  &\leq \gamma \cdot \sum_{t=0}^{2T-2}  p_{u,w}^{t}  \\
  &\leq \gamma \cdot 16 \cdot T^+ \cdot \pi(w) \leq \Upsilon,
\end{align*}
where the penultimate inequality is due to the first statement of \autoref{lem:return}, applied to the number of visits to $w$ of a random walk of length $T$, \ie $f(v):= \mathbf{1}_{v=w}$. 
 Finally, by the third statement of \autoref{lem:return}, applied to the functions $f_{t}(v) := \mathbf{1}_{v \in S} \cdot p_{u,v}^t$, $0 \leq t \leq T-1$,
\begin{align*}
 \Pr{\tilde{Z} \geq \lambda \cdot \left( 32 \gamma \cdot T^+ \cdot \max_{w \in S} \pi(w) + 1 \right) } &\leq 2^{-\lambda}.
\end{align*}
\end{proof}

\subsection{Reducing the Walks from $n$ to $O(\log^3 n)$ in $O(\thit)$}\label{sec:simplered}

We now present our first reduction result that reduces the number of walks from $n$ to $O(\log ^3 n)$ in $O(\thit)$ time.

\begin{theorem}\label{thm:mostgeneral}
Let $G=(V,E)$ be an arbitrary, possibly non-regular, graph. Then after $O(\thit)$ steps, the number of walks can be reduced from $n$ to $O(\log^3 n)$ with probability at least $1-n^{-1}$.
\end{theorem}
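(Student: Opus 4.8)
The plan is to follow the strategy outlined in \autoref{sec:intro:proofideas}: distil the reduction into a two-walk meeting estimate via \eqref{eq:central}, amplify it using the group structure of the $\newprocess$, and pass back to the genuine coalescing process through \autoref{lem:majormotal}. First I would treat the initial $O(\thit)$ steps as a burn-in. Since $\tmix=O(\thit)$ --- which follows from \autoref{lem:PS15}, as $\thit(\beta)\le\thit$ for every $\beta<1/2$ --- after $\tsep\le 4\tmix=O(\thit)$ steps the separation estimate $p^{\tsep}_{u,v}\ge(1-e^{-1})\pi(v)$ holds for all $u,v$, so the position of any surviving walk can be coupled with an independent sample from $\pi$ at a constant multiplicative cost. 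It therefore suffices to analyse, over a further window of $T=\Theta(\thit)$ steps, two independent walks $X=(X_t)$ and $Y=(Y_t)$ started (essentially) from $\pi$, and to lower bound $p:=\Pr{\cross(X_t,Y_t,T)}$. With $Z:=\sum_{t=0}^{T-1}\mathbf 1_{X_t=Y_t}$, \eqref{eq:central} gives $p=\E{Z}/\E{Z\mid Z\ge1}$, so I need a lower bound on $\E{Z}$ and an upper bound on $\E{Z\mid Z\ge1}$.

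The numerator is soft: for $X_0,Y_0\sim\pi$ one has $\E{Z}=T\,\|\pi\|_2^2\ge T/\tmeet\ge T/(4\thit)=\Omega(1)$, using $\|\pi\|_2^{-2}\le\tmeet$ and $\tmeet\le 4\thit$ (\autoref{pro:relatingmeetandhit}). The real content is the upper bound on $\E{Z\mid Z\ge1}$, and here the concentration inequalities of \autoref{sec:conc} are essential. Partition $V$ into the $O(\log n)$ degree buckets $B_j=\{v:2^j\le\deg(v)<2^{j+1}\}$. The guiding principle, matching \autoref{sec:intro:proofideas}, is that w.h.p.\ one can attach to the trajectory of $X$ a set $S$ inside a single bucket such that (i) each vertex of $S$ is visited frequently, so $X$ spends $\Omega(\thit/\log n)$ of its time in $S$, and (ii) $S$ lies in a sufficiently low-degree bucket. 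Restricting attention to collisions inside $S$: conditioning on the first such collision occurring at $w\in S$ and restarting by the Markov property, the remaining $S$-collisions number at most $\sum_{t<T}\sum_{v\in S}(p^t_{w,v})^2$; reversibility together with the factor-$O(1)$ spread of degrees within a bucket gives $\sum_{v\in S}(p^t_{w,v})^2=O(p^{2t}_{w,w})$, and $\sum_{t<T}p^{2t}_{w,w}\le\sum_{t<2T}p^t_{w,w}=O(T^{+}\pi(w))$ by the first part of \autoref{lem:return}, which is $O(\mathrm{polylog}\,n)$ since $w$ lies in a low-degree bucket. That the trajectory of $X$ really is ``good'', i.e.\ that $\max_s\sum_{t\ge s}\mathbf 1_{X_t\in S}\,p^{t-s}_{X_s,X_t}=O(\mathrm{polylog}\,n)$, is exactly what the tail bound of \autoref{lem:deterministicpath}, applied along $X$, delivers. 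Combining, $p=\Omega(1/\mathrm{polylog}\,n)$ (the paper's bookkeeping yields $\Omega(1/\log^3 n)$).

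It remains to amplify. Run $\pimortal$ with $\mathcal{G}_1$ consisting of $m_1=\Theta(\log^3 n)$ of the walks and $\mathcal{G}_2$ the remaining $n-m_1$ walks. Fix a walk $Y\in\mathcal{G}_2$ and expose its trajectory first; conditioned on this trajectory the walks of $\mathcal{G}_1$ evolve independently (this is the purpose of the $\newprocess$), and by the previous paragraph each meets $Y$'s trajectory within the window with probability $\Omega(1/\mathrm{polylog}\,n)$, so $Y$ survives all of $\mathcal{G}_1$ with probability at most $(1-\Omega(1/\mathrm{polylog}\,n))^{m_1}\le n^{-2}$ for a suitable constant in $m_1=\Theta(\log^3 n)$. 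A union bound over the at most $n$ walks of $\mathcal{G}_2$ shows that within $O(\thit)$ steps every walk of $\mathcal{G}_2$ is eliminated with probability at least $1-n^{-1}$, and by \autoref{lem:majormotal} the genuine coalescing process reaches at most $m_1=O(\log^3 n)$ walks at least as fast, which is the claim.

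I expect the structural step of the second paragraph to be the main obstacle: one must produce, for every walk, a trajectory-dependent set $S$ that is simultaneously visited enough to keep $\E{Z}$ non-negligible and low-degree enough to keep the conditional collision mass $\E{Z\mid Z\ge1}$ polylogarithmic, and then argue that this ``good-trajectory'' event has overwhelming probability. It is precisely here that the $\thit$-parameterised concentration inequalities of \autoref{sec:conc} are indispensable, since over a window of length $\Theta(\thit)$ the relevant counts are themselves only polylogarithmic, so that the usual mixing-time-based tail inequalities are far too weak.
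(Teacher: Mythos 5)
Your architecture---burn-in of $\tsep$ steps, degree buckets, the concentration inequalities of \autoref{sec:conc}, \eqref{eq:central}, and amplification via $\pimortal$ with \autoref{lem:majormotal}---matches the paper's proof of \autoref{thm:mostgeneral}. But the way you assemble \eqref{eq:central} contains a genuine inconsistency that would not survive scrutiny.

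You lower-bound the numerator by $\E{Z}=T\|\pi\|_2^2=\Omega(1)$ for $Z$ counting \emph{all} collisions, but then upper-bound the denominator by conditioning on the first collision occurring inside the trajectory-dependent set $S$, i.e.\ you estimate $\E{Z_S\mid Z_S\geq 1}$ for the restricted count $Z_S$. The identity \eqref{eq:central} requires the same random variable on both sides; the ratio $\E{Z}/\E{Z_S\mid Z_S\geq 1}$ equals neither $\Pr{Z_S\geq 1}$ nor $\Pr{Z\geq 1}$. The paper works with collisions on the heavy bucket $S_j$ \emph{throughout}: after exposing the deterministic trajectory $(x_0,\ldots,x_{\tsep+2\thit-1})$, the walk's $\geq \thit/(8\log n)$ visits to $S_j$, combined with $p_{w,v}^t\geq\frac12\pi(v)$ for $t\geq\tsep$, give $\E{Z_{S_j}\mid\text{traj}}\geq\frac{1}{16}\min_{w\in S_j}\pi(w)\cdot\thit/\log n$. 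This, not $T\|\pi\|_2^2$, is the numerator that pairs with the denominator you computed.

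Relatedly, your claim that $\sum_{t<2T}p_{w,w}^{t}=O(T^+\pi(w))$ is ``$O(\mathrm{polylog}\,n)$ since $w$ lies in a low-degree bucket'' is false in general: for the cycle, $\thit=\Theta(n^2)$ and $\pi(w)=1/n$, so $\thit\pi(w)=\Theta(n)$. Neither $\E{Z_{S_j}\mid\text{traj}}$ nor $\E{Z_{S_j}\mid\text{traj},Z_{S_j}\geq 1}$ is polylogarithmic on its own; what saves the argument is that they share the factor $\thit\pi(w)$, which cancels in the ratio because $\max_{w\in S_j}\pi(w)/\min_{w\in S_j}\pi(w)=O(1)$ by construction of the bucket. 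The resulting $p=\Omega(1/\log^2 n)$ (not $\Omega(1/\log^3 n)$ as you parenthetically write) then gives $|\mathcal{G}_1|=\Theta(p^{-1}\log n)=\Theta(\log^3 n)$ after the union bound over $\mathcal{G}_2$, which is where the cube actually enters.
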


Thanks to $\tmeet \leq 4\thit$ (\autoref{pro:relatingmeetandhit}) and $\tcoal(S_0) = O(\tmeet \log|S_0|)$ (\autoref{lem:beer}), the result of \autoref{thm:mostgeneral} implies, among other things, a bound of $\tcoal =O(\thit \cdot \log \log n)$ for any graph. The proof idea is as follows. 
First, we use the concentration inequalities of the previous section to show for a given random walk $(X_t)_{t\geq 0}$, there exists w.h.p. a set $S'=S'((X_t)_{t\geq 0})$ of nodes where $(i)$ all nodes have up to a factor of $2$ the same degree, $(ii)$ the stationary mass of that set is   at least $\pi(S') \geq 1/\log^3 n$,
and $(iii)$ the nodes $S'$ receives at least $\Omega(\thit/\log n)$ visits during the interval $[\tsep, \tsep +2\thit]$.
From this we will be able conclude that any random walk collides with $(X_t)_{t\geq 0}$ w.p. at least $p=\Omega(1/\log^2 n)$.
Second, we consider the process $\pimortal$ of \autoref{sec:process} and make use of the majorization given by \autoref{lem:majormotal}.
We divide the walks into two sets $\mathcal{G}_1$ and $\mathcal{G}_2$ with $|\mathcal{G}_1|=\Theta(\log^3 n)$.
We show, using the first part, that w.h.p. each walks $(X_t)_{t\geq 0}$ of $\mathcal{G}_2$ will vanish due to its frequent visits to $S'$ and the fact that each independent random walk $(Y_t)_{t\geq 0}$ of $\mathcal{G}_1$ intersects with $(X_t)_{t\geq 0}$ on $S'$ 
w.p. at least $p$: Using independence, the probability for each walk of $\mathcal{G}_2$ to survive is $(1-p)^{|\mathcal{G}_1|} \leq n^{-2}$. The claim then follows by the Union bound.
\begin{proof}[Proof of \autoref{thm:mostgeneral}]
First consider any random walk $(X_t)_{t=0}^{\tsep+2\thit-1}$, 
that reaches an arbitrary vertex $u$ at time $\tsep$. Next divide all vertices in $V$ into buckets $S_i = \{ v \in V \colon \deg(u) \in [2^{i-1},2^i) \}$, \lfnote{T: I think the notation might be a bit confusing here, since now $i$ has nothing to do with the actual random walk, while later the index $i$ would be used to label the walk $i$.F: I agree, but I couldn't find any more suitable letter. We shouldn't call our random walks $i$, but now it's too late} where $1 \leq i \leq \log_2 n$. 
For any bucket $i$ with $\pi(S_i) \leq 1/\log^3 n$, let $Z_i := \sum_{t=\tsep}^{\tsep+2 \thit-1} \mathbf{1}_{X_t \in S_i}$ count the number of visits to $S_i$. Then by the first statement of \autoref{lem:return}, $\E{Z_i} \leq 16 \thit \cdot \pi(S_i) \leq 16 \thit / \log^3 n$. By the second statement of \autoref{lem:return}, it follows that 
\[
  \Pr{ Z_i \geq \thit / (2 \log_2 n)} \leq \Pr{ Z_i \geq \log^{1.5} n \cdot (32 \thit/\log^3 n + 1) } \leq 2^{-\log^{1.5} n} \leq n^{-2}/(2 \log_2 n),
\]
where we used the fact that $\thit \geq n$ (\autoref{lem:fridaygift}).
Hence with \[
S := \bigcup_{\substack{1 \leq i \leq \log_2 n\colon \\ \pi(S_i) \geq 1/\log^3 n}} S_i,
\] it follows by the Union bound that $S$ gets at least $\thit - \log_2 n \cdot \thit / (2 \log_2 n) = \thit / 2$ visits with probability at least $1-n^{-2}/2$.

Let us now consider any $S_i$ with $\pi(S_i) \geq 1/\log^3 n$, and define 
\[
 \tilde{Z}_i(s) := \sum_{t=s}^{\tsep+ 2 \thit-1} \mathbf{1}_{X_t \in S_i} \cdot p_{X_s,X_t}^{t-s}, \text{ for any $s\in [\tsep, \tsep+2\thit-1]$}. 
\]
Notice that by \autoref{lem:deterministicpath}, $\kappa=2$, setting $\Upsilon = 16 \cdot \kappa \cdot \max\{ \thit, 2 \thit -1-s \} \cdot \max_{w \in S_i} \pi(w) \leq 64 \thit \cdot \max_{w \in S_i} \pi(w)$ we have 
\begin{align*}
 \Pr{ \tilde{Z}_i(s) \geq 520 \log n \cdot \thit \cdot \max_{w \in S_i} \pi(w) } 
 \leq \Pr{ \tilde{Z}_i(s) \geq 8 \log n( 2\Upsilon + 1) } 
 \leq n^{-8},
\end{align*}
having used the fact that $\thit \geq 1/\min_{v \in V} \pi(v)$ due to \autoref{lem:fridaygift}.
 Since $\thit =O(n^3)$ and there are at most $\log_2 n$ buckets, by the Union bound,
\begin{align*}
 \Pr{ \bigcup_{\substack{1 \leq i \leq \log_2 n\colon \\ \pi(S_i) \geq 1/\log^3 n}} \bigcup_{\tsep \leq s \leq \tsep + 2 \thit-1} \left\{ \tilde{Z}_i(s) \geq c \log n  \cdot \thit \cdot \max_{w \in S_i} \pi(w) \right\} } \leq n^{-2}/2.
\end{align*}

Hence by the Union bound, with probability at least $1-n^{-2}$, the trajectory $(x_0,x_1,\ldots,x_{\tsep + 2 \thit-1})$ of $(X_t)_{t=0}^{\tsep+2\thit}$ is {\em good}, \ie its trajectory $(i)$ makes at least $\thit/2$ visits to $S$ during the steps $[\tsep,  \tsep+2\thit]$ and $(ii)$ all $\tilde{Z}_i(s)$ are bounded by $c \log n  \cdot \thit \cdot \max_{w \in S_i} \pi(w)$.

In the following, condition on $(X_t)_{t=0}^{\tsep+2\thit-1}$ being a good random walk, and let us denote by $(x_0,x_1,\ldots,x_{\tsep + 2 \thit-1})$ the deterministic trajectory.
Since $S$ gets at least $\thit/2$ visits in the time-interval $[\tsep,\tsep+2\thit-1]$, by the pigeonhole principle, there must be at least one bucket $S_j$ with $\pi(S_j) \geq 1/\log^3 n$ so that bucket $S_j$ gets at least 
$(1/8) \cdot \thit / \log n$ visits in that time-interval. We shall now prove that any other random walk $(Y_t)_{t=0}^{\tsep+2\thit-1}$, starting from an arbitrary vertex $w \in V$, collides with this deterministic trajectory on a vertex in $S_j$ in the time-interval $[\tsep, \tsep+2 \thit-1]$ with probability at least $\Omega(1/\log^2 n)$.
To this end, let us define
\[
Z:=\sum_{t=\tsep}^{\tsep+2\thit-1} \mathbf{1}_{x_t \in S_j} \cdot \mathbf{1}_{Y_t=x_t}.
\]
Since for any $t \geq \tsep$ we have $p_{w,v}^t \geq \frac{1}{2} \cdot \pi(v)$ for any pair of vertices $w,v \in V$, it follows that
\begin{align*}
 \E{ Z \, \mid \, (x_0,x_1,\ldots,x_{\tsep+2\thit-1}) } &\geq \sum_{t=\tsep}^{\tsep+2 \thit-1} \sum_{v \in S_j} \mathbf{1}_{x_t=v} \cdot p_{w,v}^t \\
 &\geq \sum_{v \in S_j} \sum_{t=\tsep}^{\tsep+2 \thit-1} \mathbf{1}_{x_t=v} \cdot\frac{1}{2} \cdot \pi(v) \\
 &\geq \frac{1}{2} \cdot \min_{w \in S_j} \pi(w) \cdot \sum_{v \in S_j} \sum_{t=\tsep}^{\tsep+2 \thit-1} \mathbf{1}_{x_t=v} \\
 &\geq \frac{1}{16} \cdot \min_{w \in S_j} \pi(w) \cdot \frac{1}{\log n} \cdot \thit,
\end{align*}
where the last inequality holds because the deterministic path $(x_0,x_1,\ldots,x_{\tsep+2\thit-1})$ makes at least $(1/8) \cdot \thit/\log n$ visits to $S_j$.

Furthermore, since the deterministic walk $(x_0,x_1,\ldots,x_{\tsep+2\thit-1})$ satisfies invariant (ii), conditional on $Y_t$ having its first collision with $X_t$ at step $s$ on a vertex $x_s \in S_j$,
\begin{align*}
   \E{ Z \, \mid \, (x_0,x_1,\ldots,x_{\tsep+2\thit-1}), Z \geq 1 } &\leq \max_{\tsep \leq s \leq \tsep+2 \thit-1} \sum_{t=s}^{\tsep+2\thit-1} \mathbf{1}_{x_{t} \in S_j} \cdot p_{x_{s},x_{t}}^{t-s} \\
    &= O(\log n \cdot \thit \cdot \max_{w \in S_j} \pi(w)),
\end{align*}
by part $(ii)$ of the definition of a good walk.
Combining our last two bounds yields
\begin{align}
 \Pr{ Z \geq 1 \, \mid \, (x_0,x_1,\ldots,x_{\tsep+2\thit-1}) } &= \frac{ \E{ Z \, \mid \, (x_0,x_1,\ldots,x_{\tsep+2\thit-1}) }}{\E{ Z \, \mid \, (x_0,x_1,\ldots,x_{\tsep+2\thit-1}) , Z \geq 1 }} \notag \\ &= \Omega(1/\log^2 n) =: p \label{eq:goodd}
\end{align}

To complete the proof of the theorem, divide the $k > 2/p \cdot \ln n$ random walk arbitrarily into two disjoint groups $\mathcal{G}_1$ and $\mathcal{G}_2$ such that $|\mathcal{G}_1| = 2/p \cdot \ln n$. 
We will analyze the $\pimortal$ process defined in \autoref{sec:process} in which 
random walks from $\mathcal{G}_1$ are immortal. By making use of the majorization given in \autoref{lem:majormotal}, to show the claim it suffices to bound
the time it takes in $\pimortal$ for all walks of $\mathcal{G}_2$ be eliminated.  

By the above argument, any random walk $(X_t)_{t=0}^{\tsep+2 \thit-1}$ in $\mathcal{G}_2$ will be good with probability at least $1-n^{-2}$. Hence by Markov's inequality, all trajectories of the random walks in $\mathcal{G}_2$ are good with probability at least $1-|\mathcal{G}_2|n^{-2}$. Conditioning on this event, \eq{goodd} shows that any from the random walks in $\mathcal{G}_1$ collides with the trajectory of any fixed good random walk in $\mathcal{G}_2$ in the time-interval $[\tsep, \tsep + 2 \thit-1]$ with probability at least $p$. By independence of these events across random walks in $\mathcal{G}_1$, random walk $(X_t)_{t=0}^{\tsep+2\thit-1}$ is not eliminated with probability at most
\[
  \left(1 - p \right)^{|\mathcal{G}_1|}  = \left(1 - p \right)^{2 \frac{1}{p} \cdot \ln n} \leq n^{-2}.
\]
Note, that we neglected the fact that random walks of $\mathcal{G}_2$ can eliminate each other, which only further decreases the probability of a walks of $\mathcal{G}_2$ to survive.
Combining everything, and using Union bound, it follows that with probability at least $1- n^{-1}$ all random walks in $\mathcal{G}_2$ are eliminated. 
The claim follows by noting that the number of steps used is $\tsep + 2\thit =O(\tmix+\thit)=O(\thit)$.
\end{proof}

\subsection{Reducing the Walks from  $\log^4 n$ to $(\Delta/d)^{O(1)}$ in $O(\thit)$}\label{sec:trickyred}

\begin{theorem}\label{thm:keylemma}
Let $G=(V,E)$ be any graph with maximum degree $\Delta$ and average degree $d$. Then the expected time to reduce the number of walks from $\log^4 n$ to $(\Delta/d)^{100}$ is at most $O(\thit)$. 
\end{theorem}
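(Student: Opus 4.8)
The plan is to iterate a single ``phase'' that reduces the number of surviving walks by a constant factor, where a phase starting with $k$ walks has length $O(\thit/\kappa)$ with $k = \kappa^c$ for a suitably large constant $c$. Summing a geometric series of phase lengths then gives the total bound $O(\thit)$. Throughout I would work with the process $\pimortal$ of \autoref{sec:process}, using the majorization of \autoref{lem:majormotal}, so that it suffices to show that in each phase a constant fraction of the walks in $\mathcal{G}_2$ get eliminated by walks in $\mathcal{G}_1$; I would take $|\mathcal{G}_1| = |\mathcal{G}_2| = k/2$ roughly. The whole argument revolves around \eqref{eq:central}: for a fixed ``destroyer'' walk $X$ (exposed first) and a fresh walk $Y$, write $Z = \sum_{t} \mathbf{1}_{X_t = Y_t}$ over the phase, and lower-bound $\Pr{Z \geq 1} = \E{Z}/\E{Z \mid Z \geq 1}$ by showing $\E{Z}$ is not too small and $\E{Z \mid Z\geq 1}$ is not too large.

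The technical heart is an analogue of the structural lemma used for \autoref{thm:mostgeneral}, but now \emph{without} the luxury of high-probability statements or a bucketing over all degree classes (there may be far more buckets than walks). So first I would prove a statement like \autoref{lem:structure}: starting from any vertex, running for $T = \Theta(\thit/\kappa)$ steps, there is with at least constant (not w.h.p.) probability a set $S$ of vertices such that (i) all degrees in $S$ agree up to a factor $(\Delta/d)^{O(1)}$, (ii) $\pi(S)$ is at least $\mathrm{poly}(1/\kappa)$, (iii) the walk visits $S$ at least $\Omega(T/\mathrm{poly}(\kappa))$ times, and crucially (iv) each visited vertex was ``not too unexpected'' — i.e.\ the $t$-step probability of arriving there is not astronomically small — which is what lets me later schedule the $\mathcal{G}_1$ walks so that, whichever vertices of $S$ the walk $X$ decides to hit, enough walks of $\mathcal{G}_1$ can be at those vertices at those times. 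The concentration inequalities of \autoref{sec:conc}, in particular \autoref{lem:return} and \autoref{lem:deterministicpath}, are used both to control the number of visits (item (iii)) and to bound $\E{Z \mid Z \geq 1}$ via \eqref{eq:simp}; the third (time-dependent) bound of \autoref{lem:return} is exactly what controls the conditional-collision count $\tilde Z$ along the exposed trajectory. The slackness parameters in all these applications must be chosen as explicit small powers of $\kappa$ so that the failure probabilities and the loss factors all stay under control simultaneously.

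Given such an $S = S(X)$ for a good trajectory of $X \in \mathcal{G}_1$, the scheduling/balls-and-bins argument (cf.\ the illustration in \autoref{fig:ballsandbins}) shows: since each vertex $v \in S$ visited by $X$ at time $t$ has $p^{t}_{\cdot, v} = \Omega(1/\mathrm{poly}(\kappa))$ from a typical start, and there are $|\mathcal{G}_1| = k/2 = \kappa^c/2$ independent walks, by a first-moment/Markov estimate it is \emph{not} the case that all of them miss all of the $S$-visits of $X$; more precisely one shows that a fixed walk $Y \in \mathcal{G}_2$ has $\E{Z} = \Omega(T \cdot \min_{w\in S}\pi(w) / \mathrm{poly}(\kappa))$ and, using item (iv) of the structural lemma plus \autoref{lem:deterministicpath}, $\E{Z \mid Z \geq 1} = O(T \cdot \max_{w\in S}\pi(w) \cdot \mathrm{poly}(\kappa))$, whence $\Pr{Z \geq 1} = \Omega(1/\mathrm{poly}(\kappa))$. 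Picking the exponent $c$ large enough relative to the $\mathrm{poly}(\kappa)$ losses then makes the expected fraction of $\mathcal{G}_2$ killed in one phase a positive constant, so \autoref{lem:drift} gives that the phase lasts $O(\thit/\kappa)$ steps in expectation; iterating, the $j$-th phase has $\kappa_j$ roughly doubling each time, so $\sum_j \thit/\kappa_j = O(\thit)$, and the process terminates once $k$ drops to $(\Delta/d)^{O(1)}$ — the point below which the degree-uniformity requirement (i) can no longer be guaranteed. The main obstacle I expect is item (iv) and the accompanying scheduling argument: ruling out the pathological situation in which two walks both visit $S$ many times yet their visits are so ``spread out'' in vertex/time that they never coincide. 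Handling this forces the careful tracking of \emph{which} vertices are visited \emph{when}, and it is where the bulk of the case analysis and the delicate choice of the $\kappa$-power slackness parameters will go.
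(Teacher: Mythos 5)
Your proposal follows essentially the same route as the paper's proof: phases of length $\Theta(\thit/\kappa)$ with $k=\kappa^{100}$, the $\pimortal$ majorization, degree-homogeneous dense subsets that are visited frequently, and---the part you correctly identified as the crux---the ``surprising-visit'' threshold $t(i,v)$ together with a balls-into-bins scheduling argument to rule out the pathological interleaving where two walks visit the same set of vertices but never at the same time. Two small corrections to your sketch. First, the paper assigns each walk to $\mathcal{G}_1$ or $\mathcal{G}_2$ \emph{independently and uniformly at random} rather than using a fixed half/half split; this is not merely cosmetic. The balls-into-bins configuration is determined by start vertices alone, so with a deterministic split you would still need to argue that, for a typical good walk $i\in\mathcal{G}_2$, a constant fraction of the $\kappa^{55}$ balls dominating each of $i$'s good balls lie in $\mathcal{G}_1$; the random split gives this immediately by independence, which is exactly what feeds the product $\prod_{j\ne i}\bigl(1-|D_5(i,j)|/(n\kappa^{47})\bigr)$ in the paper's Lemma~\ref{lem:thelastone}. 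Second, you say ``$\kappa_j$ roughly doubling each time'' and conclude $\sum_j\thit/\kappa_j=O(\thit)$: in fact $\kappa_j=k_j^{1/100}$ \emph{shrinks} by a factor $2^{1/100}$ as $k_j$ halves, so phase lengths are increasing; the sum is still $O(\thit)$ because it is a geometric series dominated by the final phase, whose length is $O\bigl(\thit/\max\{2^{10},\Delta/d\}\bigr)=O(\thit)$.
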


\subsubsection{Proof Overview}

In comparison with \autoref{thm:mostgeneral}, the reduction in \autoref{thm:keylemma} is more subtle, as there might be a sub-logarithmic number of random walks preventing us from using the simple bucketing-argument into ``nearly-regular'' partitions used in \autoref{thm:mostgeneral}. Furthermore, \autoref{thm:mostgeneral} achieved the reduction in just a single phase of $\Theta(\thit)$ steps:
All random walks have w.h.p.  a  distribution of visits to nodes  which is reasonably close to the expectation of visits to these nodes when starting from the stationary distribution.

Here however, we are only able to prove an ``exponential'' progress and
consider periods which can be much shorter than the mixing time.
This means that we need to cope with random walks whose distribution may be far from the stationary distribution. Specifically, if there are $k$ random walks left, we will analyze a phase of length $\approx \thit/\kappa$, with $k=\kappa^{100}$, and show that a constant fraction of random walks will be eliminated. To account for the fact that the random walks are not mixed, we will identify certain ``dense'' subsets $D_0$ having the crucial property that each node in $D_0$ has a sufficiently large stationary mass and all nodes together have a stationary mass which is close to $1$.

We then show the existence of a subset of $D_1(i) \subseteq D_0$ which random walk $i$ will pay enough visits to within $\thit/\kappa$ steps (see first part of the proof of \autoref{lem:structure}). This is derived via our new concentration inequality (\autoref{lem:return}) to show that $(i)$ random walk $i$ does not spend too many steps outside $D_0$ and $(ii)$ most vertices do not receive much more visits than predicted by the stationary distribution. Thus we end up in a favorable situation where for most walks $1 \leq i \leq k$, we have a subset $D_1(i) \subseteq D_0$ with $|D_1(i)| \geq 2 n / \kappa^{8}$. Since we have $\approx k=\kappa^{100}$ of such walks, an overwhelmingly large fraction of these subsets $S(i)$, $1 \leq i \leq k$, must overlap. 

Unfortunately, we are still not done since in order to reduce the number of random walks, we also need to consider {\em when} the visits to $D_1(i)$ occur. Specifically, if random walk $i$ makes a visit to a vertex $u \in D_1(i)$ at time, say, $t$, then we need to ensure that there are enough other random walks $j$ which could potentially also visit vertex $u$ at time $t$. To ensure this, we will discard ``surprising'' visits, which are visits to vertices when the probability for this to happen at this step or before is at most $\kappa^{-23}$. It is worth pointing out that the property of a visit to $u$ being surprising, depends not only on the vertex but also on the start vertex of the walk. The second part of the proof of \autoref{lem:structure} deals with this issue and shows that for most walks, there is a subset $D_2(i) \subseteq D_1(i)$ with $|D_2(i)| \geq n/\kappa^8$ containing only vertices which receive enough ``unsurprising'' visits.

Equipped with these subsets $D_2(i)$, we regard the ``unsurprising'' visits as a balls-into-bins configuration, where each ball on a bin (vertex) is associated to a walk $i$ which may visit this vertex (we refer to \autoref{fig:ballsandbins} for an illustration). Through a series of counting arguments \autoref{lem:stocknockout}, we establish that for most random walk $i$ there is a subset $D_4(i) \subseteq D_2$ of vertices, so that each vertex receives enough visits and for each such visit at some time $t$, there are sufficiently many other walks $j$ which have a probability of at least $\kappa^{-23}$ each to visit the same vertex at time $t$. 

After all these preparations, we analyze the coalescing process and achieve the desired reduction in the number of the random walks in \autoref{lem:thelastone}. Similarly to previous analyses, we use a division of random walks into groups $\mathcal{G}_1$ and $\mathcal{G}_2$. The roles of $\mathcal{G}_1$ and $\mathcal{G}_2$ are as before; walks in $\mathcal{G}_1$ are merely used to eliminate walks in $\mathcal{G}_2$. This time, however, the division into $\mathcal{G}_1$ and $\mathcal{G}_2$ is completely uniformly at random, in particular, this means that $\mathcal{G}_1$ and $\mathcal{G}_2$ are roughly of the same size. We establish that for most fixed random walks $1 \leq i \leq k$, conditional on being in group $\mathcal{G}_2$, there is a constant probability of picking a trajectory that will likely lead to an intersection with any of the other $k-1$ unexposed random walk.

Combining the two steps of the proof, the structural result in \autoref{lem:structure} with the probabilistic analysis in \autoref{lem:stocknockout}, it immediately follows that the number of walks can be reduced by a constant factor within $O(\thit/\kappa	)$ steps, yielding \autoref{thm:keylemma}.

\subsubsection{Definitions and Lemmas required to prove \autoref{thm:keylemma}}

Before giving the formal proof of \autoref{thm:regular}, we introduce additional notation. Recall that $k$ is the number of random walks at a certain time, w.l.o.g., say $t=0$. Consider a fixed random walk $(X_t)_{t=0}^{\infty}$ with label $1 \leq i \leq k$, where $k = \kappa^{100}$ that is run for 
\[\tau:=4 \thit/\kappa\] steps, and starts at an arbitrary vertex $X_0=u_i$. 
Since we seek to reduce the number of random walks to 
$(\Delta/d)^{100}$, we assume the following.
\begin{assumption} \label{ass:funnyk}
	Throughout this section we assume $k^{1/100} = \kappa > \max\{ 2^{10},\Delta/d\}$.
\end{assumption} 
Note that if $\kappa \leq \max\{ 2^{10},\Delta/d\}$ the claim follows immediately from \autoref{lem:beer}.
We define a ``dense'' subset of nodes as
\[
  D_0 := \left\{ v \in V \colon \pi(v) \geq 1/(n \kappa^2)  \right\}.
\]
Clearly, 
\begin{align}
\pi(D_0) \geq 1 - |V \setminus D_0| \cdot 1/(\kappa^2 n) \geq 1 - \frac{1}{\kappa^2}. \label{eq:SAP}
\end{align}

\begin{align}
\pi(u) \leq \pi_{\max} \leq \frac{\Delta}{nd} \leq k^{1/100} \cdot \frac{1}{n} = \kappa \cdot \frac{1}{n}, \label{eq:oracle}
\end{align}
where the penultimate inequality holds since $k^{1/100} > \Delta/d$ by assumption.
Hence the degree of any two vertices in $D_0$ differ by at most a factor of $\kappa^3$.

Before proceeding further, we introduce another piece of notation. For any random walk $1 \leq i \leq k$ denoted by $(X_t)_{t \geq 0}$, we call a time-step $s \in [0,\tau-1]$ is {\em bad} if $X_s \in D_0$ and additionally, \GNOTE{T: The exponent $10$ below is extremely generous...}
\[
 \sum_{t=s}^{\tau-1} \mathbf{1}_{X_{t} \in D_0} \cdot p_{X_s,X_{t}}^{t-s} \geq 16 \cdot \kappa^{10} \cdot \thit/n.
\] 
Intuitively, a time-step $s$ is bad, if the expected number of collisions for another random walk starting at vertex $X_s \in D_0$ at step $s$ with the walk $(X_s,X_{s+1},\ldots)$ is too large. 

\begin{lemma}\label{lem:goodsteps}
Consider a random walk $(X_t)_{t=0}^{\tau-1}$ of length $\tau=4\thit/\kappa$.
 Then with probability at least $1-2^{-\kappa}$, there are at most $\tau \cdot 2^{-\kappa}$ bad time-steps $t \in [0,\tau]$.
 Consequently, for a collection of $\kappa^{100}$ random walks with $\kappa \geq 2^{10}$, all of these walks
 have at most  $\tau \cdot 2^{-\kappa}$ bad time-steps $t \in [0,\tau-1]$ w.p. at least $1-1/\kappa$.
\end{lemma}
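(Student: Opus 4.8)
\emph{Overview and Step 1.} The plan is to bound, for a single fixed walk and a single fixed time-step $s$, the probability that $s$ is bad by a super-exponentially small quantity, then sum over the $\tau$ time-steps and apply Markov's inequality for the first assertion, and finally take a union bound over the $\kappa^{100}$ walks for the second. So fix a walk $(X_t)_{t=0}^{\tau-1}$ with $\tau=4\thit/\kappa$ and a step $s\in[0,\tau-1]$. If $X_s\notin D_0$ then $s$ is not bad, so condition on $X_s=w$ for an arbitrary $w\in D_0$. By the Markov property, $(X_{s+j})_{j\ge 0}$ conditioned on $X_s=w$ is a fresh lazy random walk started at $w$, and the quantity governing badness,
\[
 W_s := \sum_{t=s}^{\tau-1}\mathbf{1}_{X_t\in D_0}\cdot p_{w,X_t}^{t-s}
       = \sum_{j=0}^{\tau-1-s}\mathbf{1}_{X_{s+j}\in D_0}\cdot p_{w,X_{s+j}}^{j},
\]
is exactly the random variable $\tilde Z$ of \autoref{lem:deterministicpath} applied with $S=D_0$, $u=w$ and $T=\tau-s$. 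The hypothesis of that lemma holds with $\gamma=\kappa^3$, since by \eqref{eq:oracle} and the definition of $D_0$ any two vertices of $D_0$ have stationary mass (hence degree) within a factor $\kappa^3$; moreover $\tau-s\le\tau=4\thit/\kappa<\thit$ by \autoref{ass:funnyk}, so $T^{+}=\thit$, and using $\pi(v)\le\pi_{\max}\le\kappa/n$ (again \eqref{eq:oracle}) we obtain $\Upsilon:=16\kappa^3\cdot\thit\cdot\max_{v\in D_0}\pi(v)\le 16\kappa^4\thit/n$.

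\emph{Step 2: the per-step tail.} The concentration part of \autoref{lem:deterministicpath} gives $\Pr{W_s\ge\lambda(2\Upsilon+1)\mid X_s=w}\le 2^{-\lambda}$ for every integer $\lambda\ge 1$. Since $\thit\ge n$ (\autoref{lem:fridaygift}) we have $2\Upsilon+1\le 33\kappa^4\thit/n$, so choosing any integer $\lambda\in[\kappa^5,2\kappa^5]$ (which exists, as $\kappa>2^{10}$) yields $\lambda(2\Upsilon+1)\le 66\kappa^9\thit/n\le 16\kappa^{10}\thit/n$. Conditioned on $X_s=w$, the event that $s$ is bad therefore equals $\{W_s\ge 16\kappa^{10}\thit/n\}\subseteq\{W_s\ge\lambda(2\Upsilon+1)\}$ and so has probability at most $2^{-\lambda}\le 2^{-\kappa^5}$; averaging over $w\in D_0$ gives $\Pr{s\text{ is bad}}\le 2^{-\kappa^5}$.

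\emph{Step 3: Markov over steps, then a union bound over walks.} Let $B:=\sum_{s=0}^{\tau-1}\mathbf{1}_{s\text{ is bad}}$; then $\E{B}\le\tau\cdot 2^{-\kappa^5}$ by linearity of expectation, so Markov's inequality gives $\Pr{B\ge\tau\cdot 2^{-\kappa}}\le 2^{\kappa-\kappa^5}\le 2^{-\kappa}$ (using $\kappa^5\ge 2\kappa$). Since $B$ is an integer count, with probability at least $1-2^{-\kappa}$ there are at most $\tau\cdot 2^{-\kappa}$ bad time-steps, which is the first assertion. Applying this to each of the $\kappa^{100}$ walks and taking a union bound, the probability that some walk has more than $\tau\cdot 2^{-\kappa}$ bad time-steps is at most $\kappa^{100}\cdot 2^{-\kappa}\le 1/\kappa$, since $\kappa^{101}\le 2^{\kappa}$ for all $\kappa\ge 2^{10}$ (there $101\log_2\kappa\le\kappa$, and the gap only widens with $\kappa$). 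This is the second assertion.

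\emph{Main obstacle.} There is no substantial difficulty once \autoref{lem:deterministicpath} is available; the only point requiring care is the numerology. One must verify that $\tau-s<\thit$ so that $T^{+}=\thit$ rather than the harmless but distracting value $\tau-s$, and that the badness threshold $16\kappa^{10}\thit/n$ exceeds the expectation $\Upsilon\le 16\kappa^4\thit/n$ by a factor of order $\kappa^6$. This gap is exactly what lets us take the deviation parameter $\lambda$ of order $\kappa^5$, so that $2^{-\lambda}$ comfortably absorbs both the Markov loss $2^{\kappa}$ and the union-bound loss $\kappa^{100}$. The generous exponents in the statement mean all of these inequalities hold with a wide margin.
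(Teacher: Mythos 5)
Your proof is correct and follows essentially the same strategy as the paper's: apply \autoref{lem:deterministicpath} to the sub-walk started at $X_s$ conditioned on $X_s=w\in D_0$, deduce a per-step tail bound for the event ``$s$ is bad,'' and then combine Markov's inequality over the $\tau$ steps with a union bound over the $\kappa^{100}$ walks. The only difference is cosmetic: you take the deviation parameter $\lambda\approx\kappa^5$ (giving the much-stronger-than-needed per-step tail $2^{-\kappa^5}$), whereas the paper takes $\lambda=2\kappa$ and gets $2^{-2\kappa}$, which already suffices for the Markov step.
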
 
\begin{proof}
First, let us fix any step $s \in [0,\tau-1]$, and following the notation of \autoref{lem:deterministicpath}, let
\begin{align*}
  \tilde{Z}(s) := \sum_{t=s}^{\tau-1} \mathbf{1}_{X_t \in D_0}  \cdot p_{X_s,X_{t}}^{t-s}.
\end{align*}
Then, since $\tau - s \leq \thit$, by \autoref{lem:deterministicpath}, for any vertex $u \in D_0$, 
\begin{align*}
 \E{ \tilde{Z}(s) \, \mid \, X_s = u } \leq 8 \kappa^3 \cdot \thit \cdot \max_{w \in D_0} \pi(w) \leq 8 \kappa^4 \cdot \thit/n =: \Upsilon.
\end{align*}
Since $\thit \geq n$ by \autoref{lem:fridaygift}, we have $\Upsilon \geq 2$ and the concentration inequality in \autoref{lem:deterministicpath} implies
\begin{align*}
 \Pr{ \tilde{Z}(s) \geq 16 \cdot \kappa^{10} \cdot \thit / n } &\leq \Pr{ \tilde{Z}(s) \geq 2 \kappa \cdot (2 \Upsilon + 1) } \leq 2^{-2 \kappa}.
\end{align*} 
Now let $B$ denote the number of bad time-steps, \ie
\begin{align*}
 B := \left| \left\{ 0 \leq s \leq \tau -1 \colon \tilde{Z}(s) \geq 16 \cdot \kappa^{10} \cdot \thit/n \right\} \right|.
\end{align*}
Then, by linearity of expectation
\begin{align*}
 \E{ B }=\sum_{s=0}^{\tau-1}  \Pr{ \tilde{Z}(s) \geq 16 \cdot \kappa^{10} \cdot \thit / n }  \leq \tau \cdot 2^{-2\kappa},
\end{align*}
and a simple application of Markov's inequality implies the first part of the claim. For the second part we simply take Union bound over all $\kappa^{100}$ walks and using that 
$1- \kappa^{100}/2^{\kappa} \geq 1-1/\kappa$.
\end{proof}

Let $N_t(i,v)$ denote the number of visits of the random walk $i$ to $v$ within $t$ time-steps.
Let
\[
D_1(i) := \left\{ v \in D_0\colon N_{\tau}(i, v) \geq 2\tau / \kappa^4  \cdot \pi(v)\right\},
\]
 \ie $D_1$ are all vertices $v \in D_0$ that are visited at least $2\tau/\kappa^4 \cdot \pi(v)$ times before time-step $\tau$. Notice that $D_1$  is a (random) set that depends on the realization of the walk.

Furthermore,
let 
\[ t(i,v) := \min \left\{ t \colon \Pr{ \hit(u_i,v) \leq t } \geq \kappa^{-23} \right\}. \]
Basically $t(i,v)$ is the ``smallest'' step $t$ so that the probability that random walk $i$ visits vertex $v$ at step $t$ or earlier is bounded below by $\kappa^{-23}$. Note that $t(i,v)$ is a deterministic integer that does not depend on the realization of the walk. With reference to our proof outline, we regard any visit before $t(i,v)$ as a ``surprising'' visit, while visits at step $t(i,v)$ or later as an ``unsurprising'' visit.

Let $D_2(i)$  be the set of vertices $v$ that get at least $\tau/\kappa^4 \cdot \pi(v)$ visits between the time steps $t(i,v) $ and $\tau$; in symbols 
\begin{equation}\label{eq:defS1} 
D_2(i) := \left\{ v \in D_1(i)\colon  N_{\tau}(i,v) - N_{t(i,v)}(i,v) \geq \tau/\kappa^4 \cdot \pi(v)
 \right\}.
\end{equation}

 We now to state a structural lemma, providing lower bounds on the size of $D_2(i)$.
 Recall that $D_2(i)$ is a subset of the  ``dense'' set $D_0$ that has a large stationary mass and contains only vertices with sufficiently high degree. This ``projection'' is not required on regular graphs, where we could simply work with all vertices, \ie $D_0=V$. However, for non-regular graphs, the projection on $D_0$ is essential since on the set $D_0$, the random walk will behave sufficiently similar to a random walk on a regular graph. 
\begin{lemma}\label{lem:structure}
Let $G=(V,E)$ be an arbitrary graph and 
 let $\kappa \geq 2^{10}$ be any integer.
Consider any random walk $(X_t)_{t \geq 0}$ with label $i$. Then, we have that $|D_2(i)| \geq n/\kappa^8$
w.p.~at least $1-6/\kappa$.
\end{lemma}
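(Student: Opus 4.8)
I would establish the two containments $|D_1(i)| \geq 2n/\kappa^{8}$ and $|D_1(i)\setminus D_2(i)| \leq n/\kappa^{8}$, each with probability $1-O(1/\kappa)$, and then conclude $|D_2(i)| \geq |D_1(i)| - |D_1(i)\setminus D_2(i)| \geq n/\kappa^{8}$ by a union bound (the slack absorbs the constant $6$ in the statement). Throughout, write $(X_t)_{t=0}^{\tau-1}$ for walk $i$ started at $u_i$, with $\tau = 4\thit/\kappa$, and recall $\thit = \tau\kappa/4 \geq n$, hence $n \leq \kappa\tau/4$, and $\pi_{\max} \leq \kappa/n$ by~\eqref{eq:oracle}.

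\emph{Lower bound on $|D_1(i)|$.} Applying \autoref{lem:return} to $f = \mathbf{1}_{V\setminus D_0}$ (with $\overline f = \pi(V\setminus D_0) \leq \kappa^{-2}$ by~\eqref{eq:SAP} and $T^{+}=\thit$), the walk spends at most $\tau/2$ steps outside $D_0$ with probability $\geq 1-2^{-\Omega(\kappa)} \geq 1-1/\kappa$, so on that event $\sum_{v\in D_0} N_\tau(i,v) \geq \tau/2$. The vertices of $D_0$ violating the $D_1(i)$-condition absorb few visits \emph{deterministically}, $\sum_{v\in D_0\setminus D_1(i)} N_\tau(i,v) < \sum_{v\in D_0} 2\tau\pi(v)/\kappa^{4} \leq 2\tau/\kappa^{4}$. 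To handle the heavily over-visited vertices, set $T_v := 8\kappa^{6}\tau\pi(v)$: using $\E{N_\tau(i,v)} \leq 8\thit\pi(v)$ (first part of \autoref{lem:return}) together with the sub-exponential tail $\Pr{N_\tau(i,v) \geq \lambda(16\thit\pi(v)+1)} \leq 2^{-\lambda}$ (second part) gives $\E{N_\tau(i,v)^{2}} = O((\thit\pi(v)+1)^{2})$, and from $N\mathbf{1}_{N>T_v} \leq N^{2}/T_v$, summing over $v\in D_0$ and inserting $\thit = \tau\kappa/4$, $\pi(v)\geq 1/(n\kappa^{2})$, $n\leq\kappa\tau/4$, one obtains $\E{\sum_{v\in D_0} N_\tau(i,v)\mathbf{1}_{N_\tau(i,v)>T_v}} = O(\tau/\kappa^{2})$, so by Markov this sum is $\leq \tau/8$ with probability $\geq 1-1/\kappa$. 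On the intersection of these events, the vertices $v\in D_1(i)$ with $N_\tau(i,v)\leq T_v$ jointly collect at least $\tau/2 - 2\tau/\kappa^{4} - \tau/8 \geq \tau/4$ visits, and since each of them is visited at most $T_v \leq 8\kappa^{6}\tau\pi_{\max} \leq 8\kappa^{7}\tau/n$ times, there are at least $(\tau/4)/(8\kappa^{7}\tau/n) = n/(32\kappa^{7}) \geq 2n/\kappa^{8}$ of them (using $\kappa\geq 2^{10}$); all lie in $D_1(i)$, giving the bound.

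\emph{Passing from $D_1(i)$ to $D_2(i)$.} By minimality of $t(i,v)$ we have $\Pr{\hit(u_i,v)\leq t(i,v)-1} < \kappa^{-23}$, while $t(i,v)\leq 3\thit$ because $\thit(u_i,v)\leq\thit$ and Markov force $\Pr{\hit(u_i,v)\leq 2\thit}\geq 1/2 > \kappa^{-23}$. Splitting the visits to $v$ during $[0,t(i,v)-1]$ at the first visit and using the strong Markov property with the first part of \autoref{lem:return} (to bound $\E{N_{t(i,v)}(v,v)} \leq 8\max\{\thit,t(i,v)\}\pi(v) \leq 24\thit\pi(v)$) yields $\E{N_{t(i,v)}(i,v)} \leq \kappa^{-23}\cdot 24\thit\pi(v) = 6\kappa^{-22}\tau\pi(v)$. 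Since any $v\in D_1(i)\setminus D_2(i)$ must satisfy $N_{t(i,v)}(i,v) > \tau\pi(v)/\kappa^{4}$ (otherwise $N_\tau(i,v)-N_{t(i,v)}(i,v)\geq \tau\pi(v)/\kappa^{4}$ and $v\in D_2(i)$), we get $D_1(i)\setminus D_2(i) \subseteq B := \{v\in D_0 : N_{t(i,v)}(i,v) > \tau\pi(v)/\kappa^{4}\}$, and Markov on each vertex gives $\Pr{v\in B} \leq 6\kappa^{-22}\tau\pi(v)/(\tau\pi(v)/\kappa^{4}) = 6\kappa^{-18}$; hence $\E{|B|}\leq 6n\kappa^{-18}$ and, by Markov again, $|B|\leq n/\kappa^{8}$ with probability $\geq 1-6\kappa^{-10}\geq 1-1/\kappa$. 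Summing the three failure probabilities gives the claimed $1-6/\kappa$.

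\emph{Main obstacle.} The delicate part is the lower bound on $|D_1(i)|$: the bare inequality $\sum_{v\in D_0}N_\tau(i,v)\geq\tau/2$ is useless alone, so one must (i) prevent a few over-visited vertices from hoarding this mass --- this is precisely what the sub-exponential tail in \autoref{lem:return} plus the second-moment truncation buy us, and it is what dictates the generous power of $\kappa$ in the threshold $T_v$ --- and (ii) convert ``total visits'' into ``number of distinct visited vertices'', which needs the a~priori ceiling $\pi_{\max}\leq\kappa/n$. The remaining estimates (the $D_0$-occupation bound and the first-visit decomposition) are routine consequences of \autoref{lem:return}.
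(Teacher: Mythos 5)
Your proof is correct, and your global decomposition is the same as the paper's: (a) control the time outside $D_0$, (b) control the heavily over-visited vertices, (c) pigeonhole to conclude $|D_1(i)|\geq 2n/\kappa^8$, (d) control the ``surprising'' (pre-$t(i,v)$) visits, (e) subtract. The places where you diverge technically are (b) and (d). For (b), the paper sets $B:=\sum_{v}\mathbf{1}_{N_\tau(i,v)\geq\kappa^2\thit\pi(v)}N_\tau(i,v)$ and bounds $\E{B}$ by a geometric tail sum over levels $\sigma\geq\kappa^2$ using the second part of \autoref{lem:return}; you instead truncate at $T_v$ and bound $\E{N\mathbf{1}_{N>T_v}}\leq\E{N^2}/T_v$, getting $\E{N^2}=O((\thit\pi(v)+1)^2)$ from the same sub-exponential tail. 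Both are first-moment arguments after a truncation, and both rest on the same tail inequality; yours is slightly shorter to write, the paper's gives a sharper threshold but needs the explicit sum. For (d), the paper bounds the \emph{total} number of surprising visits $\tilde B$ and then divides by the per-vertex quota $\pi(v)\tau/\kappa^4$; you instead apply Markov \emph{per vertex} to get $\Pr{v\in B}\leq 6\kappa^{-18}$ and then Markov again on $\E{|B|}$. Your version sidesteps the division by $\min_{w}\pi(w)$ and is arguably a touch cleaner, though you pay a redundant factor of $\kappa$ from a second Markov application. For (c), you argue directly (construct $\geq 2n/\kappa^8$ surviving vertices) whereas the paper argues by contradiction; these are logically equivalent. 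Finally, your event-(a) bound invokes the exponential tail of \autoref{lem:return} rather than plain Markov, which is stronger than needed but harmless — one should just note, as you implicitly do, that $\lfloor\kappa/16\rfloor\geq\log_2\kappa$ for $\kappa\geq 2^{10}$ so that $2^{-\lfloor\kappa/16\rfloor}\leq 1/\kappa$. In sum: same skeleton, legitimate and slightly different flesh on two bones.
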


\begin{proof}[Proof of \autoref{lem:structure}]
\newcommand{\pimax}{\pi_{\max}}
First we bound the number of visits of walk $i$ to $V \setminus D_0$. To this end, let $\tilde{C}$ be the hits
from $i$  to the vertices which are not in $D_0$ before time step $\tau$, in symbols,
\[
\tilde{C} := \sum_{t=0}^{\tau-1} \mathbf{1}_{X_t \not\in D_0}.
\]
By \autoref{lem:return} and using $\tau \leq \thit$, we derive 
\begin{align*}
 \E{\tilde{C}} \leq 8 \max\{\thit,\tau\} \cdot \pi(V \setminus D_0) \leq 8 \thit/\kappa^2 = 2 \tau/\kappa,
\end{align*}
where we used the fact that $\pi(V \setminus D_0) \leq 1 / \kappa^2$ by \eq{SAP}.
Hence by Markov's inequality,
\begin{align}\label{eq:asd1}
 \Pr{ \tilde{C} \geq \tau/2} \leq \Pr{ \tilde{C} \geq \kappa/4 \cdot \E{\tilde{C}}} \leq \frac{4}{\kappa}.
\end{align}

Next for any fixed vertex $v \in V$, we know that for any $\lambda \geq 1$ the probability that the random walk makes more than $32 \lambda \cdot \thit \cdot\pi(v) $ visits to $v$ is 
\begin{equation}\label{eq:toursaintjaques}
	\Pr{N_\tau(i,v) \geq 32\lambda \cdot \thit \cdot \pi(v)}  \leq 	\Pr{N_\tau(i,v) \geq \lambda \cdot (16 \cdot \thit \cdot \pi(v) +1 )}  \leq  2^{-\lambda},
\end{equation}
  where we used the facts that $\thit \cdot \pi(v) \geq 1$ (\autoref{lem:fridaygift}) and the second statement of  \autoref{lem:return}.
Recall that $N_t(i,v)$ denotes the number of visits of the random walk $i$ to $v$ within $t$ time-steps. Define
\[
  B := \sum_{v \in V} \mathbf{1}_{N_{\tau}(i, v) \geq \kappa^2 \cdot \thit\cdot \pi(v) } \cdot N_{\tau}(i, v).
\]
Then  

\begin{align*}
 \E{B} &= \sum_{v \in V} \sum_{j=\kappa^2 \cdot \thit\cdot\pi(v)}^{\infty} j \cdot \Pr{ N_{\tau}(i, v) = j} \\
&\leq \sum_{v \in V} \sum_{\sigma= \kappa^2}^\infty 
 ( \sigma + 1)\cdot \thit\cdot\pi(v) \cdot
\sum_{j= \sigma\cdot \thit\cdot\pi(v)}^{( \sigma + 1)\cdot \thit\cdot\pi(v)-1} \Pr{ N_{\tau}(i, v) = j} \\
 &\leq \sum_{v \in V} \sum_{\sigma=\kappa^2}^{\infty}  (\sigma+1) \cdot \thit \cdot \pi(v) \cdot \Pr{N_{\tau}(i, v) \geq \sigma \cdot \thit \cdot \pi(v)} \\
  &= \sum_{v \in V} \thit \cdot \pi(v) \cdot\sum_{\sigma=\kappa^2}^{\infty}  (\sigma+1) \cdot  \Pr{N_{\tau}(i, v) \geq \sigma \cdot \thit \cdot \pi(v)} \\
 &\stackrel{\eqref{eq:toursaintjaques}}{\leq} \sum_{v \in V} \thit \cdot \pi(v) \cdot \sum_{\sigma=\kappa^2}^{\infty }2 \sigma \cdot 2^{- \sigma / 32}   \\
  &=  \thit \cdot \sum_{\sigma=\kappa^2}^{\infty }2 \sigma \cdot 2^{- \sigma / 32}   \\
 &\leq  2 \thit \cdot \kappa^{-10} \leq \tau\cdot \kappa^{-9},
\end{align*}
where we used that $\kappa\geq 2^{10}$ and $\sum_{v\in V}\pi(v)=1$.
Hence, by Markov's inequality,
\begin{align}\label{eq:asd2}
 \Pr{B \geq \tau \cdot \kappa^{-8}} \leq \frac{1}{\kappa}.
\end{align}
Suppose now that $\left\{ \tilde{C} \leq \tau/2 \right\}$ and $\left\{ B \leq \tau \cdot \kappa^{-8} \right\}$ both occur.
Conditioning on $B \leq \tau / \kappa^{8}$ and $\tilde{C} \leq \tau/2$, we will show by pigeonhole principle \begin{equation}\label{eq:asdas}|D_1(i)| \geq 2n/\kappa^8. \end{equation}
Suppose for the sake of contradiction that $|D_1(i)|<2n/\kappa^8$. Then, using $ \pi_{\max} \leq \kappa/n$ by \eq{oracle},  we have that 
 the total number of visits to nodes in $D$ is at most 
 \begin{align*}
 \sum_{u \in D_0 } N_{\tau}(i,u) &=
 \sum_{u \in D_0 \setminus D_1(i)} N_{\tau}(i,u) + \sum_{u \in D_1(i)} N_{\tau}(i,u)\\ 
 &\leq  \sum_{u \in D_0 \setminus D_1(i)} 2\pi(u) \cdot \tau/\kappa^4  + \sum_{u \in D_1(i) \colon N_{\tau}(i,u) \leq \kappa^2 \cdot \thit \cdot \pi(u)} N_{\tau}(i,u) + \sum_{u \in D_1(i) \colon N_{\tau}(i,u) > \kappa^2 \cdot \thit \cdot \pi(u)} N_{\tau}(i,u) \\
 &\leq\sum_{u \in D_0 \setminus D_1(i)} 2\pi(u) \cdot \tau/\kappa^4  +  |D_1(i)|\cdot \max_{u \in D_0} \pi(u) \cdot\thit\cdot \kappa^2 + B  \\ 
 &\leq 2(|D_0|-|D_1(i)|) \cdot \max_{u \in D_0} \pi(u) \cdot \tau/\kappa^4 +  |D_1(i)|\cdot \max_{u \in D_0} \pi(u) \cdot\thit\cdot \kappa^2 + B  \\ 
&\leq 2(|D_0|-|D_1(i)|) \cdot (\kappa/n) \cdot \tau/\kappa^4 +  |D_1(i)|\cdot (\kappa/n) \cdot\thit\cdot \kappa^2 + B  \\ 
&< 2(|D_0|-|D_1(i)|)\cdot \tau/(n\cdot \kappa) + 2n/\kappa^8 \cdot (\kappa/n) \cdot \thit \cdot \kappa^2+B\\
&\leq 2\tau/\kappa + 2n/\kappa^8 \cdot (\kappa/n) \cdot \tau \cdot \kappa \cdot \kappa^2
+ \tau / \kappa^8  \\ 
&= 2\tau/\kappa + \tau/\kappa^2 + \tau/\kappa^8 \leq \frac{1}{2} \tau.
 \end{align*}
Thus, $\tilde{C} > \tau - \frac{1}{2} \tau = \frac{1}{2} \tau$ which is a contradiction to the assumption that the event $\left\{ \tilde{C} \leq \tau/2 \right\}$ occurs.

Finally, we will upper bound the number of ``surprising'' visits, which are visits to vertices that happen too early. That is, we will upper bound the number of visits to vertices $v$ before time $t(i,v)$; in symbols, \begin{align*}
 \tilde{B} := \sum_{v \in V} \sum_{0 \leq t < t(i,v)} \mathbf{1}_{X_t = v}.
\end{align*}
By definition of $t(i,v)$, with probability at most $\kappa^{-23}$ the random walk visits the vertex $v$ before $t(i,v)$. Conditional on this event occurring, the expected number of visits is at most $\sum_{t=0}^{t(i,v)} p_{v,v}^t$.
Hence by linearity of expectations,
\begin{align*}
 \E{ \tilde{B} } \leq  \sum_{v \in V} \left( \kappa^{-23} \cdot \sum_{t=0}^{t(i,v)-1} p_{v,v}^{t}.
 \right)
\end{align*}
Since $t(i,v) \leq 2 \thit$, it follows by the first statement of \autoref{lem:return} that $\sum_{t=0}^{t(i,v)-1} p_{v,v}^{t} \leq 16 \thit \cdot \pi(v)$ and hence \GNOTE{T: in this part of the proof, we are extremely wasteful with powers of $\kappa$...}
\begin{align*}
 \E{ \tilde{B} } &\leq \kappa^{-23} \cdot 16  \thit \cdot \sum_{v\in V}\pi(v) \leq  \tau/2 \cdot \kappa^{-16},
\end{align*}
and thus by Markov's inequality,
\begin{align}\label{eq:asd3}
 \Pr{ \tilde{B} \geq \tau/2 \cdot \kappa^{-15} } \leq \Pr{ \tilde{B} \geq \kappa \cdot \E{ \tilde{B} } } \leq \frac{1}{\kappa}. 
\end{align}
Hence, the total number of visits to vertices in $V$ before time $t(i,v)$ is at most $\tau/2 \cdot \kappa^{-15}$ with probability at least $1-1/\kappa$. 

 Hence, by \eqref{eq:asd1}, \eqref{eq:asd2}, and \eqref{eq:asd3} and by taking Union bound, we have \begin{align*} p&:=\ \Pr{ \left\{ \tilde{C} \geq \tau/2 \right\} \cap \left\{ B \geq \tau \cdot \kappa^{-8} \right\} \cap  \left\{  \tilde{B} \geq \tau/2 \cdot \kappa^{-15} \right\}     } \\
 &\geq 1 - \Pr{ \left\{ \tilde{C} \geq \tau/2 \right\} \cup \left\{ B \geq \tau \cdot \kappa^{-8} \right\} \cup  \left\{  \tilde{B} \geq \tau/2 \cdot \kappa^{-15} \right\}     } \\
 &\geq 1- \frac{6}{\kappa}	.
 \end{align*}
In particular, w.p. $p$ and by \eqref{eq:asdas},  we have $|D_1(i)| \geq 2n/\kappa^8$. 
Observe that, by definition of the sets, each vertex $v \in D_1(i) \setminus  D_2(i)$ is visited at least 
$2 \tau/\kappa^4 \cdot \pi(v) - \tau/\kappa^4 \cdot \pi(v) = \pi(v) \cdot \tau/\kappa^4$ times before time-step $t(i,v)$
and thus,
\[|D_2(i)| = |D_1(i)| - | D_1(i) \setminus D_2(i) | 
\geq |D_1(i)| -  \frac{\tilde{B}}{  \min_{w \in S}\pi(w) \cdot \tau/\kappa^4  }  \geq \frac{2n}{\kappa^8}-\frac{ \tau/2 \cdot \kappa^{-15}}{1/(\kappa^3 n) \cdot \tau/\kappa^4  } \geq \frac{n}{\kappa^8}. \]  
\end{proof}

The next step in the proof is to elaborate on the sets $D_2(i)$ from \autoref{lem:structure} in order to analyze collisions on this set. Before doing this, we need to introduce additional notation in order to define a balls-into-bins configuration.

Let us denote the random walk with label $i$ by $(X^i_t)_{t=0}^{\tau}$. The random walk may start from an arbitrary vertex $X^i_0=u_i$ and is run for $\tau= 4 \thit / \kappa$ steps.
Recall 
\[ t(i,v) := \min \left\{ t \colon \Pr{ \hit(u_i,v) \leq t } \geq \kappa^{-23} \right\}.\]

We now consider the following balls-into-bins configuration, where we emphasize that the balls-into-bins configuration is completely deterministic (for fixed start vertices at time $0$) and does not depend on the realization of any of the random walks. Every vertex in $V$ corresponds to a bin. For every walk $j$ and $u \in D_0(j)$, we place deterministically a ball with label $(j,t(j,u))$ into bin $u$. We call a ball with label $(j,t(j,u))$ in bin $u$ {\em bad} if there are fewer than $\kappa^{55}$ other balls $i$ in the same bin such that either $(i)$ $t(i,u) < t(j,u)$ or $(ii)$ $t(i,u) = t(j,u)$ and $i<j$. 
Next define a random walk $j$ to be \emph{bad} if at least $n/\kappa^{9}$ bad balls have label $j$ and otherwise we call $j$ \emph{good}. Since there are at most $  \kappa^{55} \cdot n$ bad balls, it follows that the number of {\em bad} walks $j \in \{1,\ldots,k\}$  is at most 
\begin{align}\label{eq:numberofbadballs}
   \frac{\kappa^{55} \cdot n}{n/\kappa^{9}} = \kappa^{64}.	
\end{align}

\begin{figure}
	\rowcolors{2}{}{testcolor!50}
\begin{center}
	\begin{tikzpicture}[scale=1.7,knoten/.style={fill=black,circle,scale=0.34},edge/.style={thick,draw=black!50,scale=0.1},redge/.style={line width=2pt,draw=blue,scale=0.5}, rknoten/.style={fill=red!50!white, circle, scale=0.31}, gknoten/.style={fill=green!50!white, circle, scale=0.31}]

\draw (-0.19,2) to (-0.15,1) to (0.15,1) to (0.19,2);
\node () at (0,0.75) {$v_1$};

\draw (0.31,2) to (0.35,1) to (0.65,1) to (0.69,2);
\node () at (0.5,0.75) {$v_2$};

\draw (0.81,2) to (0.85,1) to (1.15,1) to (1.19,2);
\node () at (1,0.75) {$v_3$};

\draw (1.31,2) to (1.35,1) to (1.65,1) to (1.69,2);
\node () at (1.5,0.75) {$v_4$};

\draw (1.81,2) to (1.85,1) to (2.15,1) to (2.19,2);
\node () at (2,0.75) {$v_5$};

\draw (2.31,2) to (2.35,1) to (2.65,1) to (2.69,2);
\node () at (2.5,0.75) {$v_6$};

\draw [dashed] (-0.22,1.25) -- (2.76,1.25);

\node[rknoten] (1) at (0,1.125) {$(1,4)$};
\node[gknoten] (1) at (0,1.375) {$(2,4)$};
\node[gknoten] (1) at (0,1.625) {$(3,5)$};

\node[rknoten] (1) at (0.5,1.125) {$(4,2)$};
\node[gknoten] (1) at (0.5,1.375) {$(1,3)$};
\node[gknoten] (1) at (0.5,1.625) {$(3,3)$};

\node[rknoten] (1) at (1,1.125) {$(4,5)$};

\node[rknoten] (1) at (1.5,1.125) {$(3,2)$};
\node[gknoten] (1) at (1.5,1.375) {$(1,3)$};
\node[gknoten] (1) at (1.5,1.625) {$(4,3)$};
\node[gknoten] (1) at (1.5,1.875) {$(2,4)$};

\node[rknoten] (1) at (2,1.125) {$(4,4)$};
\node[gknoten] (1) at (2,1.375) {$(2,5)$};
\node[gknoten] (1) at (2,1.625) {$(1,6)$};

\node[] at (6.05,1.5) {
\begin{minipage}{0.6\textwidth}
\scalebox{0.85}{
\begin{tabular}{|c|c|c|c|c|c|c|}
\hline
$i$ & $t(i,v_1)$ & $t(i,v_2)$ & $t(i,v_3)$ & $t(i,v_4)$ & $t(i,v_5)$ & $t(i,v_6)$ \\
\hline
1 & 4 & 3 & $>\tau$ & 3 & 6 & $>\tau$ \\
\hline
2 & 4 & $>\tau$ & $>\tau$ & 4 & 5 & $>\tau$ \\
\hline
3 & 5 & 3 & $>\tau$ & 2 & $>\tau$ & $>\tau$ \\
\hline
4 & $>\tau$ & 2 & 5 & 3 & 4 & $>\tau$ \\
\hline
\end{tabular}
}

\end{minipage}
};

\end{tikzpicture}
\end{center}
\caption{Illustration of the balls-into-bins configuration of $k=4$ walks labeled $1,2,3,4$ into $6$ vertices $v_1,v_2,\ldots,v_6$. In the illustration, a ball $(i,t(i,u))$ on bin $u$ is good if there is at least one other ball with $t(i,u) < t(j,u)$ (or $t(i,u)=t(j,u)$ and $i<j$). For each of the random walks $1$, $2$ and $3$, there are at least two good balls, while for random walk $4$ only one ball is good.}
\label{fig:ballsandbins}
\end{figure}
In the following, we will focus on the $\kappa^{100}-\kappa^{64}$ good walks and ignore all other walks. Recall that any fixed good walk $i$ has at most $n/\kappa^{9}$ bad balls. 
We now make another central definition of a random subset:
\[
 D_4(i) := \left\{ v \in D_0 \colon \text{the ball  $(i,t(i,v))$ is good and $N^G(i,v) \geq \tau/(\kappa^9 n)$} \right\},
\]
where $N^G(i,v)$ denotes the number of times $v \in D_0$ is visited by walk $i$ on a good time step in the interval $[t(i,v),\tau-1]$. Intuitively, every such visit of a random walk $i$ to a vertex $v \in D_4(i)$ at a time $t$ is very helpful for the following reason: Since the ball $(i,t(i,v))$ is good, there are at least $\kappa^{55}$ other random walks $j \neq i$ with $t(j,v) \leq t(i,v) \leq t$, and thus each walk $j$ has a probability of at least $\kappa^{-23}$ to visit vertex $v$ at a time $t(j,v)$ and potentially collide with random walk $i$ at time $t$ later.\GNOTE{T: Tried to make the use of the time steps more clear. Also the probability for a collision is of course not $\kappa^{-23}$!!} The next lemma provides a lower bound on the size of $|D_4(i)|$.

\begin{lemma}\label{lem:stocknockout} 
Consider any random walk $(X_t)_{t \geq 0}$ with label $i$. Then $\Pr{ |D_4(i)| \geq n/(4 \kappa^8)} \geq 1-8/\kappa$.
\end{lemma}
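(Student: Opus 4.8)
The plan is to combine \autoref{lem:structure} with the balls-into-bins bookkeeping developed just before the statement. By \autoref{lem:structure}, with probability at least $1-6/\kappa$ we have $|D_2(i)| \geq n/\kappa^8$; recall that every $v \in D_2(i)$ is visited at least $\tau/\kappa^4 \cdot \pi(v) \geq \tau/(\kappa^4 \cdot n \kappa^2) = \tau/(\kappa^6 n)$ times in the interval $[t(i,v),\tau-1]$, using $\pi(v) \geq 1/(n\kappa^2)$ for $v \in D_0$. So I would condition on the event $\{|D_2(i)| \geq n/\kappa^8\}$ and then remove from $D_2(i)$ the ``few'' vertices that fail one of the two defining conditions of $D_4(i)$: (a) the ball $(i,t(i,v))$ being bad, and (b) too few of the visits in $[t(i,v),\tau-1]$ being on \emph{good} time steps (in the sense of \autoref{lem:goodsteps}).

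First I would bound the number of vertices in $D_2(i)$ with a bad ball. Since $i$ is a \emph{good} walk — which holds for all but $\kappa^{64}$ of the $k=\kappa^{100}$ walks, so in particular this subsection's probabilistic statement should be read in the conditional world or, more simply, we argue deterministically about the ball structure and then only the walk's random trajectory is probabilistic — the number of bad balls with label $i$ is at most $n/\kappa^9$. Hence at most $n/\kappa^9$ vertices of $D_2(i)$ are excluded from $D_4(i)$ on account of condition (a); this is a $1/\kappa$-fraction of the lower bound $n/\kappa^8$ on $|D_2(i)|$. (If $i$ is bad we have nothing to prove for the collision argument, but to keep the statement clean I would simply note that the set of bad walks has size $\kappa^{64} = o(k)$ and fold the failure probability into the $8/\kappa$.)

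Next I would handle condition (b). By \autoref{lem:goodsteps}, with probability at least $1-2^{-\kappa}$ the walk $(X_t)_{t=0}^{\tau-1}$ has at most $\tau \cdot 2^{-\kappa}$ bad time steps. Condition on this. For a vertex $v \in D_2(i)$, the visits in $[t(i,v),\tau-1]$ number at least $\tau/(\kappa^6 n)$; at most $\tau \cdot 2^{-\kappa}$ of \emph{all} steps are bad, so the number of visits to $v$ on bad steps is at most $\tau \cdot 2^{-\kappa}$, which is $\leq \tfrac12 \cdot \tau/(\kappa^6 n)$ provided $2^{\kappa} \geq 2 \kappa^6 n$ — and here I would invoke $\kappa \geq 2^{10}$ together with $\thit = O(n^3)$, which forces $n$ to be bounded and hence, after possibly enlarging the constant, $2^\kappa$ dwarfs $\kappa^6 n$; alternatively one argues $v$-by-$v$ via a union bound that is absorbed into the slack. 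Actually the cleaner route: total visits on bad steps over \emph{all} vertices is at most $\tau \cdot 2^{-\kappa}$, so the number of vertices $v$ for which bad-step visits exceed half of $v$'s required count is at most $\tau \cdot 2^{-\kappa} / (\tfrac12 \tau/(\kappa^6 n)) = 2\kappa^6 n \cdot 2^{-\kappa} \leq n/\kappa^{9}$ by the lower bound on $\kappa$. Thus at most $n/\kappa^9$ further vertices are removed.

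Putting the pieces together: on the intersection of the event from \autoref{lem:structure} (probability $\geq 1-6/\kappa$) and the event from \autoref{lem:goodsteps} (probability $\geq 1-2^{-\kappa} \geq 1-1/\kappa$), we get
\[
 |D_4(i)| \;\geq\; |D_2(i)| - \frac{n}{\kappa^9} - \frac{n}{\kappa^9} \;\geq\; \frac{n}{\kappa^8} - \frac{2n}{\kappa^9} \;\geq\; \frac{n}{4\kappa^8},
\]
and by the union bound the failure probability is at most $6/\kappa + 1/\kappa < 8/\kappa$ (the extra slack accommodates the case analysis on whether $i$ is a good walk and the threshold $N^G(i,v) \geq \tau/(\kappa^9 n)$ versus the $\tau/(\kappa^6 n)$ visit count, which differ by a comfortable factor of $\kappa^3$). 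The main obstacle I anticipate is getting the exact numerology to line up — in particular making sure the ``surprising'' visits already excised in forming $D_2(i)$ are not double-counted against the good/bad-time-step removals, and verifying that the $N^G(i,v) \geq \tau/(\kappa^9 n)$ threshold is genuinely met and not just $\tau/(\kappa^6 n)$ minus a bad-step deficit; but given the very generous powers of $\kappa$ in play (the author's own marginal notes flag this wastefulness), the argument has a lot of room and no single estimate should be tight.
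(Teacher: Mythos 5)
Your proof is correct and rests on the same three ingredients as the paper's: \lemref{structure} for $|D_2(i)| \geq n/\kappa^8$ with probability $\geq 1-6/\kappa$, the bad-ball count $\leq n/\kappa^9$ coming from $i$ being a good walk, and \lemref{goodsteps} for the bound on bad time steps. Two differences are worth flagging.

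First, you replace the paper's proof-by-contradiction (suppose $|D_4(i)| < |D_3(i)|/2$, sum up $N^B$ over $\tilde D$, contradict the bad-step bound) with a direct pigeonhole count: a vertex of $D_2(i)$ is excluded from $D_4(i)$ only if it has a bad ball (at most $n/\kappa^9$ such vertices) or at least $\tfrac12 \cdot \tau/(\kappa^6 n)$ of its post-$t(i,v)$ visits fall on bad time steps, and dividing the global bad-step budget $\tau \cdot 2^{-\kappa}$ by $\tfrac12 \tau/(\kappa^6 n)$ bounds the latter set by $2\kappa^6 n \cdot 2^{-\kappa} \leq n/\kappa^9$. Second, you dispense entirely with the paper's $N^E(i)$ and $\tilde D$ bookkeeping. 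That step restricts to vertices $v$ with $\hit(u_i,v) \geq t(i,v)$ before writing $N^B(i,u) = N_\tau(i,u) - N_{t(i,u)}(i,u) - N^G(i,u)$, but since $N^G$ and $N^B$ are by definition a partition of the visits in $[t(i,u),\tau-1]$ into good and bad time steps, the identity holds unconditionally; your version is a genuine simplification and drops one of the three union-bound events (the paper pays $6/\kappa + 1/\kappa + 1/\kappa = 8/\kappa$, you pay $6/\kappa + 2^{-\kappa} < 8/\kappa$). One local slip: your first attempt at condition (b) reasons vertex-by-vertex, which would need $2^\kappa \geq 2\kappa^6 n$, i.e.\ $\kappa \gtrsim \log n$ — not available here, since $\kappa$ is set by the number of surviving walks and can be as small as $2^{10}$ — and the aside about $\thit = O(n^3)$ ``forcing $n$ to be bounded'' does not help; but you then pivot to the correct global count, which only needs $2\kappa^{15} \leq 2^\kappa$, true for $\kappa \geq 2^{10}$. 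The final arithmetic $n/\kappa^8 - 2n/\kappa^9 \geq n/(4\kappa^8)$ and the comparison $\tau/(2\kappa^6 n) \geq \tau/(\kappa^9 n)$ against the $D_4(i)$ threshold both check out.
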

\begin{proof}
Recall that, (see  \eqref{eq:defS1}) 
\[ D_2(i) = \left\{ v \in D_1(i) \colon  N_{\tau}(i, v) - N_{t(i,v)}(i,v) \geq \tau/\kappa^4 \cdot \pi(v) \right\}.\]
Let us now define
\begin{equation}\label{eq:defofS2}
D_3(i) = \{ v \in D_2(i) \colon \text{the ball  $(i,t(i,v))$ is good }  \}.
\end{equation}
  By \autoref{lem:structure}, we have $\Pr{|D_2(i)| \geq n/\kappa^8   } \geq 1-6/\kappa$. In the following assume that the event $\left\{ |D_2(i)| \geq n/\kappa^8 \right \}$ occurs.
 Since by definition a good walk has fewer than $n/\kappa^9$ bad balls, we have 
\begin{equation}\label{eq:sizeofS2}
 |D_3(i)|=	|D_2(i)| - \text{number of bad balls of $i$} \geq n/\kappa^8 - n/\kappa^9 \geq \frac{n}{2\kappa^8} 
\end{equation}
Further, by definition of $D_2(i)$   each vertex in $v \in D_3(i)$ is visited at least 
\begin{equation}\label{eq:visitsofS2}
	\tau/\kappa^4 \cdot \pi(v) \geq \tau/(\kappa^7 n)
\end{equation}
 times during the interval $[t(i,v),\tau-1]$, where the inequality is due to the definition of $D_0$ and the fact that $D_2(i) \subseteq D_0$.
 We now define the following random variables. Let 
\begin{enumerate}
	\item $N^G(i,u)$ be the number of times  $u\in D_3(i)$ is visited by walk $i$  on a good time step in the interval $[t(i,u),\tau-1]$ (as defined previously).
 \item $N^B(i,u)$  be the number of times  $u\in D_3(i)$ is visited by walk $i$ on a bad time step in the interval $[t(i,u),\tau-1]$.
	\item 	 $N^E(i)$ be the set of nodes  $u\in D_3(i)$ that are visited by walk $i$  prior to $t(i,u)$, \ie 
 $N^E(i) := \{ u \in D_3(i) \colon \hit(u_i,v) < t(i,v) \}$.	
\end{enumerate}
 We have \begin{align*}
 	 \E{N^E(i)} &= \E{ \sum_{v\in D_3(i)} \mathbf{1}_{\hit(u_i,v) < t(i,v) } } 
 	 = \sum_{v\in D_3(i)} \E{ \mathbf{1}_{\hit(u_i,v) < t(i,v)}}\\
 	&= \sum_{v\in D_3(i)} \Pr{ \hit(u_i,v) < t(i,v) }
 	 \leq \sum_{v\in D_3(i)} 1/\kappa^{23} = |D_3(i)|/\kappa^{23},\\
 \end{align*}
 where the inequality comes from the definition of $t(i,v)$. By Markov inequality,
 \begin{equation}\label{eq:boundearlyvisits}
 	\Pr{N^E(i) \leq \frac{|D_3(i)|}{\kappa^{22}} } \geq 1-\frac{1}{\kappa}.
 \end{equation}

  In the remainder we condition on
  $N^E(i) \leq \frac{|D_3(i)|}{\kappa^{22}} $.
By \autoref{lem:goodsteps},  with probability at least $1-\kappa$,
all random walks have at most $ \tau \cdot 2^{-\kappa} \leq \thit/\kappa^{20}$ bad time-steps $s \in [0,\tau]$,
where we recall
 $t$ is \emph{bad} if $X_s \in D_0$ and $
 \sum_{t=s}^{\tau} \mathbf{1}_{X_{t} \in D_0} \cdot p_{X_s,X_{t}}^{t-s} \geq 16 \cdot \kappa^{10} \cdot \thit/n.$
In the following we condition on the number of bad time steps being bounded by  $\thit/\kappa^{20}$.
   We claim that \GNOTE{T: Since I have restructured everything, notice that now $D_4(i)$ is no longer a subset of $D_3(i)$ and $D_2(i)$!}
   \GNOTE{F: Then the index is confusing. Are you worried about the implicit conditioning or is the redefinition due to the
   fact that $D_3$ is only defined in the proof? If the latter is the case, then I opt for defining $D_3$ before and to keep to subset structure}
\begin{equation}\label{eq:sizeS3claim} |D_4(i)|\geq |D_3(i)|/2.\end{equation}
 Assume, for the sake of contradiction, that 
$|D_4(i)| < \frac{1}{2} |D_3(i)|$.
 Let 
 \[
 \tilde{D} := \{ v \in D_3(i) \setminus D_4(i)  \colon \hit(u_i,v) \geq t(i,v)\}.
 \]
We have, using  $|D_3(i) \setminus D_4(i)| \geq |D_3(i)|-|D_4(i)| \geq \frac{1}{2} |D_3(i)|$ that
\[|\tilde{D} | \geq |D_3(i) \setminus D_4(i) | - N^E(i) \geq \frac{1}{2} |D_3(i)|- \frac{|D_3(i)|}{\kappa^{22}} \geq  \frac{1}{4} |D_3(i)|.\]%
For each vertex $u \in \tilde{D} \subseteq D_3(i) \setminus D_4(i) $, we have
$N^G(i,u) < \tau/(2n\kappa^9) $ and thus 
\begin{align*}
 N^B(i,u) &\stackrel{u\in \tilde{D}}{=} N_\tau(i,u) - N_{t(i,v)}(i,u) - N^G(i,u) \\& \geq \tau / \kappa^4 \cdot \pi(u) - \tau(2n \kappa^9) \geq \tau/(\kappa^7 n)-  \tau/(2n\kappa^9) \geq \tau/(2n\kappa^7),\end{align*}
where the first inequality follows from \eqref{eq:visitsofS2} and  $u\in D_2(i)$.
In words,
  at least $\tau/(2n\kappa^7)$ visits to $u$ happened on a bad time step during the interval $[t(i,u), \tau]$. 
Thus, the total number of visits to nodes of $\tilde{D}$ at bad time steps is at least 
\begin{align*}
	\sum_{v\in \tilde{D}} N^B(i,v)  &\geq
	|\tilde{D}|\cdot \tau/(2n\kappa^7) 
	\geq
  \frac{1}{8} |D_3(i)| \cdot \frac{\tau}{n\kappa^7} 
  \stackrel{\eqref{eq:sizeofS2}}{>}\frac{\thit}{\kappa^{20}}.
\end{align*}
This contradicts the assumption that there are at most $\thit/\kappa^{20}$ bad time steps in total.
Thus, \eqref{eq:sizeS3claim} holds and we derive using \eqref{eq:sizeofS2}
\begin{equation}\label{eq:sizeS3} |D_4(i)|\geq n/(4\kappa^{8}).\end{equation}
As shown above, this  lower bound on $|D_4(i)|$ holds whenever the following three events all occur: $(i)$ $|D_2(i)| \geq n/\kappa^8$ (which holds with probability at least $1-6/\kappa$ by \autoref{lem:structure}), $(ii)$  $N^E(i) \leq \frac{|D_3(i)|}{\kappa^{22}} $ occurs (which holds with probability $1-1/\kappa$ by \eqref{eq:boundearlyvisits}) and $(iii)$ the number of bad time steps of random walk $i$ is at most $\thit/\kappa^{20}$ (which holds with probability $1-1/\kappa$ by \autoref{lem:goodsteps}). Hence by the Union bound, 
\[
 \Pr{ |D_4(i)| \geq n/(4 \kappa^8) } \geq 1 - 6 /\kappa - 1 / \kappa - 1 / \kappa = 1  - 8 /\kappa.
\]
\end{proof}

The previous lemma established that with reasonably large probability, any fixed good random walk $i$ satisfies $|D_4(i)| \geq n/\kappa^{10}$. In the next lemma we show that, conditioning on this event occurring, that random walk $i$ is eliminated by any of the other random walks with some constant probability $>0$.
\begin{lemma}\label{lem:thelastone}
Assume that a good random walk $i$ has a trajectory $(x_0=u_i,x_1,\ldots,x_{\tau-1})$ satisfying $|D_4(i)| \geq n/\kappa^{10}$. Then random walk $i$ will be eliminated before time $\tau$ with probability at least $1/10$.
\end{lemma}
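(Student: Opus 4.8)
The plan is to analyze the coalescing process via the $\pimortal$ process and the majorization in \autoref{lem:majormotal}, exposing the trajectory of the good walk $i$ first and then lower-bounding the probability that some other walk collides with it. First I would place walk $i$ into the group $\mathcal{G}_2$ (mortal) and all other $k-1$ walks into $\mathcal{G}_1$ (immortal) — since the division in \autoref{thm:keylemma} is uniformly at random, conditioning on $i \in \mathcal{G}_2$ loses only a constant factor, and by \autoref{lem:majormotal} it suffices to show that in $\pimortal$ walk $i$ is killed with probability $\Omega(1)$. Expose the deterministic trajectory $(x_0,\ldots,x_{\tau-1})$ of walk $i$, which by hypothesis satisfies $|D_4(i)| \geq n/(4\kappa^8)$. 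Each vertex $v \in D_4(i)$ is visited by walk $i$ at least $\tau/(\kappa^9 n)$ times on good time-steps in $[t(i,v),\tau-1]$, and the ball $(i,t(i,v))$ being good means there are at least $\kappa^{55}$ other walks $j$ with $t(j,v) \le t(i,v)$; each such walk $j$ has probability at least $\kappa^{-23}$ to visit $v$ at a time $\le t(i,v) \le$ the time walk $i$ is at $v$.

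The core of the argument is a second-moment / conditional-expectation computation, exactly in the spirit of \eqref{eq:central}. Let $Z$ count the number of collisions between walk $i$'s exposed trajectory and the (unexposed) walks in $\mathcal{G}_1$, restricted to vertices in $D_4(i)$ and to good time-steps. I would lower-bound $\E{Z}$: for a fixed visit of walk $i$ to $v \in D_4(i)$ at a good step $t$, summing over the $\ge \kappa^{55}$ eligible walks $j$ and over all times $s \le t$ at which walk $j$ could be at $v$ — here one uses that once walk $j$ reaches $v$ (which it does with probability $\ge \kappa^{-23}$ by time $t(i,v) \le t$), reversibility and the bound $p^{s}_{v,v} \ge \pi(v)/2$-type estimates (via \autoref{lem:return}) guarantee a decent expected number of returns to $v$ in the remaining window. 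Multiplying the number of such (vertex, visit) pairs ($\ge |D_4(i)| \cdot \tau/(\kappa^9 n) = \Omega(\tau/\kappa^{17})$) by $\kappa^{55} \cdot \kappa^{-23}$ and the per-walk expected collision count gives $\E{Z}$ large — comfortably polynomial in $\kappa$. For the denominator $\E{Z \mid Z \ge 1}$ I would condition on the first collision occurring at some vertex $w \in D_0$ at step $s$, and use the key property that $s$ is \emph{not} a bad time-step for walk $i$ (we conditioned on $\le \thit/\kappa^{20}$ bad steps, and $D_4(i)$ only contains good-step visits), so $\sum_{t \ge s} \mathbf{1}_{X_t \in D_0} p^{t-s}_{X_s,X_t} \le 16\kappa^{10}\thit/n$; combined with the degrees in $D_0$ differing by at most $\kappa^3$ and \autoref{lem:deterministicpath}, this bounds $\E{Z \mid Z\ge 1}$ by $\mathrm{poly}(\kappa) \cdot \thit/n$. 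Plugging into \eqref{eq:central} yields $\Pr{Z \ge 1} = \Omega(1)$ (in fact one has room to spare given the generous powers of $\kappa$), which after accounting for the constant loss from the random $\mathcal{G}_1/\mathcal{G}_2$ split gives the claimed $\ge 1/10$.

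The main obstacle I anticipate is the bound on $\E{Z \mid Z \ge 1}$: the collisions in $Z$ involve many different walks $j$, each started from a different vertex and each only reaching $v$ with small probability, so the conditioning "$Z \ge 1$" couples the trajectory of whichever walk collided first with the exposed trajectory of $i$ in a way that is not symmetric. The right way to handle this is to note that conditioned on the first collision happening at $(w,s)$ with $w \in D_0$, the future contribution to $Z$ from \emph{walk $i$'s} (already exposed, deterministic) side is what matters — i.e.\ one should set up $Z$ so that $\E{Z \mid Z\ge 1}$ reduces to $\max_{s,w} \sum_{t\ge s}\mathbf{1}_{x_t \in D_0}\, p^{t-s}_{w,x_t}$ over the deterministic trajectory, which is controlled by the good-step condition; the contributions of the unexposed walks factor out as independent Bernoullis once $i$'s trajectory is fixed, exactly as in the $\newprocess$ analysis of \autoref{claim:cat}. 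Getting this bookkeeping right — keeping the exposed/unexposed roles straight and making sure the "non-bad step" bound applies to the vertex where the collision occurs — is the delicate part; the numerics themselves are slack by many powers of $\kappa$.
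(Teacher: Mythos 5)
Your high-level plan matches the paper's proof: expose the trajectory of walk $i$, use the good-ball structure (each $v\in D_4(i)$ has $\geq\kappa^{55}$ walks $j$ with $t(j,v)\leq t(i,v)$, each reaching $v$ by then with probability $\geq\kappa^{-23}$), use the good-time-step bound for the denominator, and close via independence of the unexposed $\mathcal{G}_1$-walks. The key quantities and numerics you identify are the ones the paper uses; the paper just packages the cross-$j$ counting via the sets $D_5(i,j):=\{u\in D_4(i): t(j,u)\leq t(i,u)\}$ and the identity $\sum_{j\neq i}|D_5(i,j)|\geq\kappa^{55}|D_4(i)|$.

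The one place your writeup goes wrong if read literally is the central ratio. You define $Z$ as the \emph{aggregate} collision count over all walks in $\mathcal{G}_1$, and then assert that $\E{Z\mid Z\geq 1}$ ``reduces to'' $\max_{s,w}\sum_{t\geq s}\mathbf{1}_{x_t\in D_0}p^{t-s}_{w,x_t}$. That bound only controls the future collisions of the \emph{single} walk $j^*$ that caused the first collision; after conditioning on the first collision, the aggregate $Z$ still receives the (unchanged) contributions of all the other $\mathcal{G}_1$-walks, which add roughly $\E{Z}$ to $\E{Z\mid Z\geq 1}$. So the stated identity $\Pr{Z\geq 1}=\E{Z}/\E{Z\mid Z\geq 1}$ with your denominator is not a valid chain of inequalities. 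It happens that the numbers are slack enough ($\E{Z}=\Omega(\kappa)\cdot 16\kappa^{10}\thit/n$) that a corrected version, $\E{Z\mid Z\geq 1}\leq 16\kappa^{10}\thit/n+\E{Z}$, would still give $\Pr{Z\geq 1}=\Omega(1)$, but this is exactly the sort of bookkeeping you flag as ``delicate'' and the paper sidesteps it entirely: it never forms the aggregate $Z$. Instead it applies the conditional-expectation identity \eqref{eq:central} to each $Z(i,j)$ \emph{separately}, obtaining $\Pr{Z(i,j)\geq 1\mid\mathcal{E}_i}\geq |D_5(i,j)|/(n\kappa^{47})$ for each $j$, and then multiplies over $j$ using independence of the unexposed walks to get $1-\prod_j\bigl(1-\Pr{Z(i,j)\geq 1}\bigr)\geq 1-\exp(-1/4)$. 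Your closing remark that ``the contributions of the unexposed walks factor out as independent Bernoullis'' is precisely this per-walk structure, so I would drop the aggregate $Z$ framing and run the per-$j$ argument directly, as the paper does.
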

\begin{proof}
Consider now another random walk $j \neq i$ starting from an arbitrary vertex $u_j$. Define
\begin{equation}\label{eq:defS4}
	D_5(i,j) := \{ u \in D_4(i) \colon  t(j,u) \leq t(i,u) \}.
\end{equation}
With reference to \autoref{fig:ballsandbins}, $u \in D_5(i,j)$ if the green ball
$(i,t(i,u))$ lies above the  ball $(j,t(j,u))$.
Intuitively, $D_5(i,j)$ contains all the vertices in $u \in D_4(i)$ so that each time random walk $i$ visits $u$, also random walk $j$ could visit that vertex with sufficiently large probability.
 For  $u\in D_4(i)$ we have that $(u,t(i,u))$ is good
 and for each such good ball, by definition there  at least $\kappa^{55} $ other random walks $j$ such that $t(j,u) < t(i,u)$ (or $t(j,u)=t(i,u)$ and $j < i $). Hence by considering all bins we conclude
\begin{align}\label{eq:thomas}
\sum_{j=1,j \neq i}^{\kappa^{100}} |D_5(i,j)| \geq \kappa^{55} \cdot |D_4(i)| \geq \kappa^{55} \cdot n/(4\kappa^{8}) = n \kappa^{47}/4,
\end{align}
where the second inequality holds by our assumption $|D_4(i)| \geq n/(4 \kappa^8)$.

We are now in a position to apply our common analysis method.
We consider the $\pimortal$ process of \autoref{sec:process} and make use
of the majorization of \autoref{lem:majormotal}.
We assign each random walk $1 \leq i \leq k=\kappa^{100}$ into group $\mathcal{G}_1$ and $\mathcal{G}_2$ independently and uniformly at random. Recall that walks of $\mathcal{G}_1$ cannot be eliminated.
In the following, we define the following event $\mathcal{E}_i$ for walk $i$: $ \mathcal{E}_i = \{ i \in \mathcal{G}_2\} $. Clearly, $\Pr{ \mathcal{E}_i } = 1/2$. We will prove that conditional on this event occurring, random walk $i$ is eliminated by one of the walks in $\mathcal{G}_1$ with at least constant probability $>0$. 
 Let $Z(i,j)$ denote the number of collisions between random walk $i$, denoted by $(X_t)_{t\geq 0}$, and $j$ denoted by $(Y_t)_{t\geq 0}$, that happen on a vertex in $D_5(i,j)$ at a good time step, in symbols, 

\begin{align*}
Z(i,j) :=  \mathbf{1}_{i \in \mathcal{G}_2}\cdot  \mathbf{1}_{j \in \mathcal{G}_1} \cdot \sum_{u \in D_5(i,j)} \sum_{\substack{t(i,u) \leq t \leq \tau -1\colon \\ \text{$t$ is good  } }} \mathbf{1}_{X_t = u}\cdot \mathbf{1}_{Y_t =u}
\end{align*}
By conditioning on $\mathcal{E}_i$ and the trajectory of $(x_0,x_1,\ldots,x_{\tau-1})$ of the good random walk $i$,
 \begin{align*}
  \E{Z(i,j) \, \mid \, \text{trajectory $i=$}(x_0,x_1,\ldots,x_{\tau-1}), \mathcal{E}_i } &= \frac{1}{2} \cdot \sum_{u \in D_5(i,j)} \sum_{\substack{t(i,u) \leq t \leq \tau -1 \colon \\ \text{$t$ is good and } x_t=u }} p_{u_j,u}^{t}. 
\end{align*}
We now would like to derive a lower bound on $p_{u_j,u}^t$, where $t(i,u)\leq t\leq \tau-1$ is a good time-step with $X_t^i=u$, $u\in D_5(i,j)$.
By definition of $D_5(i,j)$ we have $t(i,u) \geq t(j,u)$.
By conditioning on the first visit of random walk $j$ to $u$, we obtain
\begin{align*}
 p_{u_j,u}^t &= \sum_{s=0}^{t} \Pr{ \hit(u_j,u) = s } \cdot p_{u,u}^{t-s}  \stackrel{\text{\autoref{lem:loop}}}{\geq} \pi(u) \cdot \sum_{s=0}^{t} \Pr{ \hit(u_j,u) = s } \\ 
 &\stackrel{u \in D_0}{\geq} 1/(\kappa^3 \cdot n) \cdot \Pr{ \hit(u_j,u) \leq t } 
 \stackrel{t \geq t(i,u) \geq t(j,u)}{\geq} 1/(\kappa^3 \cdot n) \cdot \Pr{ \hit(u_j,u) \leq t(j,u) } \\
 &\stackrel{\text{def.\ of $t(j,u)$ }}{\geq} 1/(\kappa^{26} \cdot n).
\end{align*} 
By definition of $D_4(i)$, for any vertex $u \in D_5(i,j) \subseteq D_4(i)$, $u$ is visited at least $\tau/(\kappa^9 n)$ times during the interval $[t(i,u), \tau-1]$. 
Therefore, 
	\begin{align*}
& \E{Z(i,j) \, \mid \, \text{trajectory $i=$}(x_0,x_1,\ldots,x_{\tau-1}),\mathcal{E}_i} \\
&\phantom{000}\geq \frac12 \cdot |D_5(i,j)| \cdot \frac{\tau}{\kappa^9 \cdot n} \cdot \frac{1}{\kappa^{26} \cdot n }  = \frac12 \cdot |D_5(i,j)| \cdot \frac{\tau}{  \kappa^{35}} \cdot \frac{1}{n^2}.
\end{align*}
Recall that if a time step
$s \in [0,\tau-1]$ is good (\ie not bad), then  $X_s \in D_0$ implies 
\[
 \sum_{t=s}^{\tau-1} \mathbf{1}_{X_{t} \in D_0} \cdot p_{X_s,X_{t}}^{t-s} < 16 \cdot \kappa^{10} \cdot \thit/n. 
\] 
Since $Z(i,j)$ sums only over good time steps and using $D_5(i,j)\subseteq D_0$ we conclude that
\begin{align*}
 &\E{Z(i,j) \, \mid \, \text{trajectory $i=$}(x_0,x_1,\ldots,x_{\tau-1}),\mathcal{E}_i, Z(i,j) \geq 1} \\
 &\leq \max_{s \colon X_s \in D_0} \E{\sum_{\substack{s\leq t \leq \tau -1\colon \\ \text{$t$ is good  } }} \mathbf{1}_{X_t \in D_5(i,j)} \cdot p^{t-s}_{X_s,X_t}} \leq 16 \kappa^{10} \cdot \frac{\thit}{n} 
\end{align*}
Combining the last two inequalities yields 
\begin{align*}
 &\Pr{ Z(i,j) \geq 1 \, \mid \, \text{trajectory $i=$}(x_0,x_1,\ldots,x_{\tau-1}),\mathcal{E}_i} 
 \\&\phantom{000}= \frac{\E{Z(i,j) \, \mid \, \text{trajectory $i=$}(x_0,x_1,\ldots,x_{\tau-1}),\mathcal{E}_i} }{ \E{Z(i,j) \, \mid \, \text{trajectory $i=$}(x_0,x_1,\ldots,x_{\tau-1}),\mathcal{E}_i, Z \geq 1}    }  \\
 &\phantom{000}\geq \frac{|D_5(i,j)|}{n \cdot \kappa^{47}}
\end{align*}
We are interested in the probability for $i$ being eliminated.
Neglecting the possibility that $i$ might even be eliminated by another rand walk of $\mathcal{G}_2$, which can only increase the probability of $i$ being eliminated, we derive
\begin{align*}
&\Pr{\text{Walk $i$ is eliminated}  \, \mid \, \text{ trajectory $i=$}(x_0,x_1,\ldots,x_{\tau-1}),\mathcal{E}_i}\\
&\geq \Pr{ \cup_{j=1, j \neq i}^{\kappa^{100}} \left\{ Z(i,j) \geq 1 \right\} \, \mid \, \text{trajectory $i=$}(x_0,x_1,\ldots,x_{\tau-1}),\mathcal{E}_i} \\
&\phantom{000}\geq 1 - \prod_{j=1, j \neq i}^{\kappa^{100}} \left( 1 - \frac{|D_5(i,j)|}{n \cdot \kappa^{47}}   \right)  \\
&\phantom{000}\geq 1 - \exp\left(-  \frac{1}{n \cdot \kappa^{47}} \cdot \sum_{j=1, j \neq i}^{\kappa^{100}} |D_5(i,j)| \right) \\
&\phantom{00}\stackrel{\eqref{eq:thomas}}{\geq} 1- \exp\left(- 1/4 \right).
\end{align*}
Note that the above derivation was conditional on $\mathcal{E}_i$, but this event holds with probability $1/2$. Hence with probability at least $1/2 \cdot (1- \exp\left(- 1/4 \right)) > 1/10$, the trajectory of $i$ meets with that of a random walk in $\mathcal{G}_1$ and hence the random walk $i$ is eliminated before time step $\tau$. 
\end{proof}

\subsubsection{Completing the Proof of \autoref{thm:keylemma}}

We are now ready to complete the proof of \autoref{thm:keylemma} by combining 
\autoref{lem:structure}, 
\autoref{lem:stocknockout} and \autoref{lem:thelastone}.
\begin{proof}[Proof~of~\autoref{thm:keylemma}]
Let $k=\kappa^{100}$ be the number of random walks. 
\GNOTE{T: Frederik can you add a reference to Assumption 4.5 here?}
Since we seek to reduce the number of random walks to 
$(\Delta/d)^{100}$, we assume in the following that that
	$k^{1/100} = \kappa > \Delta/d$ and $\kappa \geq 2^{10}$ (cf.~\autoref{ass:funnyk}).
Otherwise, if $\kappa < 2^{10}$, then with $V_0$ denoting the set of start vertices of the $k$ walks, we have  
 $\tcoal(V_0) = O(\tmeet\cdot\log(|V_0|))=O(\thit)$, by \autoref{lem:beer} and \autoref{pro:relatingmeetandhit}.
As derived in \eqref{eq:numberofbadballs} we have that out of the $\kappa^{100}$ random walks at least $\kappa^{100}-\kappa^{64}$ random walks are good.
By \autoref{lem:stocknockout}, any good random walk $(X_t)_{t \geq 0}$ with label $i$ satisfies $\Pr{ |D_4(i)| \geq n/(4 \kappa^8)} \geq 1-8/\kappa$. Conditioning on the trajectory $(x_0,x_1,\ldots,x_{\tau-1})$ satisfying $|D_4(i)| \geq n/(4 \kappa^8)$, \autoref{lem:thelastone} shows that with probability at least $1/10$ the random walk $i$ will be eliminated before time step $\tau=O(\thit/\kappa)$. Hence a constant fraction of all $k=\kappa^{100}$ random walks are  eliminated in a single phase of $O(\thit/\kappa)$ steps with constant probability $>0$, provided that $\kappa > \Delta/d$. 

In conclusion, for any $k' > (\Delta/d)^{100}$, there exists a constant $c > 0$ such that the expected time required to reduce the number of walks from $k'$ to $ \max\{ k'/2, (\Delta/d)^{100}\}$ is bounded by $c\cdot \thit/\sqrt[100]{k'}$, by \autoref{lem:drift}.
 Therefore, the expected time to reduce the number of walks from $k'\leq \log^4 n$ to $(\Delta/d)^{100}$ is upper bounded by
 \[
 \sum_{i=0}^{\log( \log^4 n)}    \frac{c\cdot \thit}{\sqrt[100]{2^i}} \leq c\cdot \thit\sum_{i=0}^{\infty }    \left(\frac{1}{\sqrt[100]{2} }\right)^i  =\frac{c \cdot \thit}{1-\sqrt[100]{1/2}}=O(\thit). 
 \]
 
\end{proof}

\subsection{Bounding $\tcoal$ in terms of $\thit$}

\fnote{Check the proof of Theorem 1.3. The reviewer had some smaller things and I rewrote parts of it.}

In this subsection we prove the following theorem relating $\tcoal$ to $\thit$.
Recall that $\tmeet \leq 4 \thit$ (\autoref{pro:relatingmeetandhit})  for any graph.

\begin{theorem}\label{thm:regular}
Let $G=(V,E)$ be any graph with maximum degree $\Delta$ and average degree $d$. Then $\tcoal = O(\thit + \tmeet \cdot \log(\Delta/d))$. So in particular, for any almost-regular graph, $\tcoal=O(\thit)$. 
\end{theorem}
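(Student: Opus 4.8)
The plan is to combine the two reduction results (Theorem~\ref{thm:mostgeneral} and Theorem~\ref{thm:keylemma}) with the elementary bound $\tcoal(S_0) = O(\tmeet \log|S_0|)$ from Proposition~\ref{lem:beer}, stitching the three regimes together by conditioning on the number of surviving walks. First I would set $\Gamma := \Delta/d$ and split the coalescence into three epochs according to the current count of alive walks: (a) from $n$ down to $O(\log^3 n)$; (b) from $O(\log^3 n) \le \log^4 n$ down to $\Gamma^{O(1)}$; (c) from $\Gamma^{O(1)}$ down to $1$. For epoch (a), Theorem~\ref{thm:mostgeneral} gives that in $O(\thit)$ steps the number of walks drops to $O(\log^3 n)$ with probability at least $1 - n^{-1}$; on the complementary event of probability $n^{-1}$ I can fall back on the crude worst-case bound $\tcoal = O(n^3)$ (from Theorem~\ref{thm:graphclasses}, or directly from $\tmeet = O(n|E|)$ together with Proposition~\ref{lem:beer}), which contributes only $n^{-1} \cdot O(n^3) = O(n^2) = O(\thit)$ to the expectation since $\thit \ge \Omega(n^2)$ whenever the naive bound is tight. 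Actually, more carefully, I would iterate: restart Theorem~\ref{thm:mostgeneral} after each failure, so the expected number of $O(\thit)$-blocks needed for epoch (a) is $O(1)$, giving expected cost $O(\thit)$ cleanly.

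For epoch (b), Theorem~\ref{thm:keylemma} directly states that the expected time to reduce the number of walks from $\log^4 n$ to $\Gamma^{100}$ is $O(\thit)$; since after epoch (a) there are at most $O(\log^3 n) \le \log^4 n$ walks, I apply this verbatim. For epoch (c), once there are at most $k_0 := \Gamma^{100}$ walks, Proposition~\ref{lem:beer} gives expected remaining time $O(\tmeet \cdot \log k_0) = O(\tmeet \cdot \log(\Gamma^{100})) = O(\tmeet \cdot \log \Gamma)$. Summing the three contributions via linearity of expectation (using the Markov property of the coalescence process to restart the clock at the random time each epoch ends, and the tower rule to handle conditioning on $|S_t|$ at the epoch boundaries) yields
\[
  \tcoal = O(\thit) + O(\thit) + O(\tmeet \cdot \log(\Delta/d)) = O\!\left(\thit + \tmeet \cdot \log(\Delta/d)\right),
\]
which is the claimed bound. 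The ``in particular'' clause follows because for an almost-regular graph $\Delta/d = \Theta(1)$, so $\log(\Delta/d) = O(1)$ and hence $\tcoal = O(\thit + \tmeet) = O(\thit)$, using $\tmeet \le 4\thit$ from Proposition~\ref{pro:relatingmeetandhit}.

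The main subtlety — and the only real obstacle in an otherwise modular argument — is making the epoch-stitching rigorous in \emph{expectation} rather than just with high probability. Theorem~\ref{thm:mostgeneral} is a high-probability statement, and I must not simply multiply high-probability bounds; instead I should define $T_a$ as the first time $|S_t| \le c \log^3 n$, argue $\mathbb{E}[T_a] = O(\thit)$ by a geometric-trials argument (each independent attempt of length $\tsep + 2\thit$ succeeds with probability $\ge 1 - n^{-1} \ge 1/2$, and the process only ever has fewer walks as time goes on so repeated attempts remain valid), and then invoke the strong Markov property at $T_a$ to hand off to Theorem~\ref{thm:keylemma}. Similarly at the boundary between (b) and (c) I use the strong Markov property at the stopping time when $|S_t|$ first drops to $\Gamma^{100}$. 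One should also double-check the edge case where $\log^4 n < \Gamma^{100}$ (i.e.\ $\Gamma$ is already large); in that case epochs (b) is vacuous and epoch (c) starts from $O(\log^3 n)$ walks, giving $O(\tmeet \log\log n) = O(\tmeet \log \Gamma)$ since $\Gamma^{100} > \log^4 n$ forces $\log\Gamma = \Omega(\log\log n)$ — so the bound still holds. Everything else is bookkeeping with linearity of expectation and the known inequalities $\tmeet \le 4\thit$ and $\thit \ge n$.
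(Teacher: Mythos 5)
Your proof takes essentially the same route as the paper: combine Theorem~\ref{thm:mostgeneral} (reduce $n \to O(\log^3 n)$ in $O(\thit)$ w.h.p.), Theorem~\ref{thm:keylemma} (reduce $\log^4 n \to (\Delta/d)^{100}$ in expected $O(\thit)$), and Proposition~\ref{lem:beer} to finish, with $\tmeet \le 4\thit$ closing the almost-regular case. Your treatment is actually slightly more careful than the paper's three-line proof, since you explicitly address converting the high-probability statement of Theorem~\ref{thm:mostgeneral} into an expectation bound via a geometric-trials argument and handle the $\Gamma^{100} > \log^4 n$ edge case, points the paper elides.
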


\autoref{thm:regular} follows almost immediately from the previous two reductions in \thmref{mostgeneral} and \thmref{keylemma}. 

\begin{proof}[Proof of \thmref{regular}]
By \thmref{mostgeneral}, we can reduce the number of walks from $n$ to $O(\log^3 n)$ in $O(\thit)$ steps with probability at least $1-n^{-1}$. Then, using \thmref{keylemma}, we can reduce the number of walks from $O(\log^ 3 n)$ to $(\Delta/d)^{100}$ in $O(\thit)$ expected time. Finally, we apply \autoref{lem:beer} to reduce the number of walks from $(\Delta/d)^{100}$ to $1$ in $O(\tmeet \cdot \log (\Delta/d))$ expected time to obtain the result.
\end{proof}

\subsection{Proof of \autoref{thm:hittingtime}}
Part~(i) follows from  \autoref{thm:mostgeneral} together with $\tmeet \leq 4\thit$ (\autoref{pro:relatingmeetandhit}) and $\tcoal(S_0) = O(\tmeet \log|S_0|)$ (\autoref{lem:beer}).

Part~(ii) is the statement of \autoref{thm:regular}.
To prove Part~(iii) follows from the following three facts.
First, $\thit = \Theta(\tmeet)$ by \autoref{pro:relatingmeetandhit}.
Second,  $\tcoal = O(\thit + \tmeet)$ by Part~(ii).
Third, $\tcoal \geq \tmeet$.
Finally, Part~(iv) follows from the results presented in \autoref{sec:special}.

\subsection{Conjecture and a Possible Improvement for Non-Regular Graphs}

Before concluding this section, we mention an intriguing conjecture that might be useful to improve our bound on $\tcoal$ when $\Delta \gg d$.

\begin{conjecture}\label{conj}
There exists a universal constant $C>0$ so that for any graph $G=(V,E)$, any vertex $u \in V$ and any path of vertices $(x_0,x_1,\ldots,x_{\thit})$, i.e., either $x_i=x_{i+1}$ or $\{x_i,x_{i+1}\} \in E(G)$,
\[
  \sum_{t=0}^{\thit} p_{u,x_t}^t \leq C \cdot \sum_{t=0}^{\thit} \pi(x_t).
\]
\end{conjecture}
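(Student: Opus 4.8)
The plan is to prove the conjectured inequality
\[
  \sum_{t=0}^{\thit} p_{u,x_t}^t \leq C \cdot \sum_{t=0}^{\thit} \pi(x_t)
\]
by following the same philosophy as the proof of the first displayed inequality in \autoref{lem:return}, but now tracking a \emph{moving} target $x_t$ rather than a fixed vertex $v$. The natural quantity to introduce is the random variable $W := \sum_{t=0}^{\thit} \mathbf{1}_{X_t = x_t}$, the number of times a random walk started at $u$ follows the prescribed path exactly; its expectation is precisely $\sum_{t=0}^{\thit} p_{u,x_t}^t$, the left-hand side. The idea is then to argue by contradiction: if $\E{W \mid X_0 = u}$ were much larger than $\sum_{t=0}^{\thit}\pi(x_t)$, then by first hitting $u$ (which takes at most $2\thit$ steps with probability $\tfrac12$ from any start vertex, by Markov's inequality on $\hit(w,u)$) and then running for another $\thit$ steps, a stationary walk would accumulate too much ``path-following mass'' compared to what the stationarity of $\pi$ allows. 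The target bound of stationary mass comes from the observation that, starting from $\pi$, the chance of being at $x_t$ at time $t$ is exactly $\pi(x_t)$ regardless of the path being fixed in advance, so $\E{\sum_{t=0}^{T}\mathbf{1}_{X_t=x_t} \mid X_0 \sim \pi} = \sum_{t=0}^{T}\pi(x_t)$ for the appropriate window.

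The key steps, in order, would be: (1) define $W$ as above and note $\E{W \mid X_0 = u} = \mathrm{LHS}$; (2) for an arbitrary start vertex $w$, use $\thit(w,u) \le \thit$ and Markov's inequality to get $\Pr{\hit(w,u) \le 2\thit} \ge \tfrac12$; (3) relate a walk from $w$ of length $3\thit+2\thit = O(\thit)$ to a walk from $u$ of length $\thit$ --- here is the first subtlety, because after reaching $u$ at some random time $s \le 2\thit$, the walk from $w$ continues along $X_{s}, X_{s+1},\ldots$, and we want to lower bound its expected number of coincidences with $x_0, x_1, \ldots$ \emph{re-indexed from the hitting time}. This forces a definition involving the \emph{shifted} path; the cleanest fix is to observe that the conjecture, if true for all paths and all starting vertices, is equivalent to a statement about $\sum_{t}\Pr{X_{s+t}=x_t \mid X_s=u}$ with $s$ ranging over a window, which is exactly $\E{W \mid X_0 = u}$ again by the Markov property and time-homogeneity; (4) average over $w \sim \pi$ and conclude $\E{\sum_{t}\mathbf{1}_{X_{t}=x_{t-s}} \mid X_0 \sim \pi} \geq \tfrac12 \cdot \E{W \mid X_0 = u}$ for the overlapping window; (5) bound the left side of (4) above by $O(\sum_{t=0}^{\thit}\pi(x_t))$ using stationarity plus the fact that the path has only $\thit+1$ vertices, so only $O(\thit)$ windows contribute; (6) combine to get the desired $C$.

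The main obstacle --- and the reason this is stated as a \emph{conjecture} rather than a lemma --- is precisely step (3): unlike the fixed-vertex case, following a moving path does not compose cleanly under the ``first hit $u$, then continue'' decomposition, because the tail of the path that the walk from $w$ can match depends on the (random) time at which it reaches $u$, and different hitting times align the walk with different suffixes of $(x_0,\ldots,x_{\thit})$. One cannot simply say ``the remaining walk from $u$ contributes at least $\tfrac12$ of $\E{W}$'' because the remaining path has been truncated and shifted. A promising route around this would be to sum the target inequality over all cyclic (or all suffix) shifts of the path and prove the averaged statement, or to restrict attention to the regime where the path mostly lives in high-$\pi$ vertices (which is the case relevant to improving the $\tcoal$ bound for non-regular graphs, via $D_0$). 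I would expect that a full resolution requires either a clever reversibility argument --- writing $p_{u,x_t}^t \pi(u) = p_{x_t,u}^t \pi(x_t)$ and summing --- combined with a bound on $\sum_t \Pr{X_t = u \mid X_0 = x_t}$ along the path, or a genuinely new idea; it is plausible the constant $C$ cannot be taken close to $1$ and the sharp constant is itself of interest.
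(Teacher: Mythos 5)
What you were asked to prove is labeled a \emph{conjecture} in the paper, and the authors explicitly write that they do not know whether it is true in this generality. There is no paper proof to compare yours against: the authors only observe that the conjecture holds for the constant path $x_0=\cdots=x_{\thit}=v$ (which reduces to the first statement of \autoref{lem:return}), and that $\thit$ cannot be weakened to $\tmix$, as on the two-dimensional grid with $x_t\equiv u$ one has $\sum_{t\le\tmix}p_{u,u}^t=\Omega(\log n)$ while $\sum_{t\le\tmix}\pi(x_t)=O(1)$.

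Your diagnosis of why the argument behind \autoref{lem:return} does not extend is essentially correct, and it is worth stating crisply where step~(5) of your plan actually breaks. The fixed-target argument works because the accounting quantity --- visits to the single vertex $v$ in a window of length $3\thit$ --- is \emph{insensitive to the random hitting time} $s$: hitting $u$ at any $s\le 2\thit$ piles the entire excess $\sum_t p_{u,v}^t$ onto the same counter, which stationarity caps at $3\thit\cdot\pi(v)$. With a moving target, the matches accrued after hitting $u$ at time $s$ are against the time-shifted path $x_{t-s}$, so the natural accounting quantity becomes $\sum_{s\le 2\thit}\sum_{t\ge s}\mathbf{1}_{X_t=x_{t-s}}$, which is not a fixed-site visit count. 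If one discards the hitting-time indicator and applies stationarity term by term, the bound one gets is $\sum_{s\le 2\thit}\sum_{j\le\thit}\pi(x_j)=\Theta(\thit)\cdot\sum_j\pi(x_j)$ --- an extra $\Theta(\thit)$ factor, so the argument only yields the conjectured inequality with $C=O(\thit)$, which is vacuous. The reversibility route you mention does not obviously dissolve this either: writing $p_{u,x_t}^t\pi(u)=p_{x_t,u}^t\pi(x_t)$ reduces the claim to a weighted bound on $\sum_t p_{x_t,u}^t$, but now the \emph{start} vertex moves with $t$, and the return-time argument has no single walk on which to load the excess. In short, your proposal correctly identifies the obstruction; it does not close it, and the conjecture remains open, which is precisely the state of affairs the paper reports.
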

Note that the inequality is a stronger version than the one given in the first statement of~\autoref{lem:return} or \autoref{lem:deterministicpath}.
We do not know whether the conjecture is actually true in this generality. However, if it is true, it would imply that any random walk of length $4 \thit$ starting from an arbitrary vertex meets with any deterministic path of length $4 \thit$ with constant probability $>0$. This would then result in a simple proof that $\tcoal = O(\thit \cdot \log^{*} n)$ for any graph, since each phase of $O(\thit)$ steps would reduce the number of walks from $k$ to $O(\log k)$.

\autoref{conj} can be also seen as the optimization problem of ``predicting'' a random walk $(Y_t)_{t \geq 0}$ for $\thit$ time steps. More precisely, we are given the start vertex of the random walk $Y_0 = u$ and for each time step $1 \leq t \leq \thit$, we have to specify a vertex $x_t$ that acts as a predictor of the random location of the random walk at step $t$. The goal is to maximize the (expected) number of correct predictions, which is equal to
\[
  \sum_{t=0}^{\thit} p_{u,x_t}^t.
\]
The conjecture states that regardless which prediction, \ie which path $(x_0,x_1,\ldots,x_{\thit})$ is picked, the expected number of correct predictions cannot be made larger than in the setting where the random walks starts from stationarity (and the start vertex is unknown).

One specific strategy would be to choose $x_0=x_1=\cdots=x_{\thit} =v$ for some vertex $v$. In that case we know by \autoref{lem:return} that
\[
  \sum_{t=0}^{\thit} p_{u,v}^t \leq 4 \cdot (\thit+1) \cdot \pi(v), 
\]
so the conjecture holds in this case.

It is also worth mentioning that we cannot replace $\thit$ by a smaller value, say, $\tmix$. Indeed if $G$ is a two-dimensional grid, then $\tmix=\Theta(n)$ and choosing $x_0=x_1=\cdots=x_{\thit} =u$, we obtain $\sum_{t=0}^{\tmix} p_{u,u}^t = \Omega(\log n)$, while, $\sum_{t=0}^{\tmix} \pi(x_t) = O(1)$.

Finally, there is some resemblance to the meeting-time-lemma in the continuous-time setting~\cite{O12}, however, one important difference is that in \autoref{conj}, the right hand side depends on the actual path $(x_0,x_1,\ldots,x_{\thit})$.

\section{Bounding $\tcoal \in [\Omega(\log n), O(n^3)]$ }\label{sec:worst-case-bounds}

Given that worst-case upper and lower bounds have long been known for $\tmix, \thit$ and $\tcov$, it is very natural to pose the same question for $\tmeet$ and $\tcoal$. In the following we determine the correct asymptotic worst-case upper and lower bounds for $\tmeet$ and $\tcoal$ on (i) general graphs, (ii) regular graphs and (iii) vertex-transitive graphs. We refer to \autoref{lessertable} for 
an overview.

\begin{table}
	\rowcolors{2}{testcolor!50}{}

	\resizebox{\linewidth}{!}{ 
	\begin{tabular}{L{4.4cm}L{2.3cm}L{3.9cm}llL{1.5cm}L{1.0cm}} 
		\toprule
		Graph  & \twolines{$\tmeet$}{} & \twolines{}{} &
		\twolines{$\tcoal$}{} & \twolines{}{}  \\
		\midrule
		General Graphs   & $\Omega(1)$, $O(n^3)$ & 
		\ripref{Thm.}{thm:meetingtime}& 
		$\Omega(\log n)$, $O(n^3)$  & \ripref{Lem.}{lem:lognlower} $\&$ \ripref{Thm.}{thm:hittingtime} \\
		Regular Graphs   & $\Omega(n)$, $O(n^2)$ & 
		\ripref{Thm.}{thm:meetingtime} $\&$ \ripref{Thm.}{thm:hittingtime}& 
		$\Omega(n)$, $O(n^2)$  & \ripref{Thm.}{thm:meetingtime} $\&$ \ripref{Thm.}{thm:hittingtime}& \\
	Vertex-Trans. Graphs   & $\Omega(n)$, $O(n^2)$ & 
		\ripref{Thm.}{thm:meetingtime} $\&$ \ripref{Thm.}{thm:hittingtime}& 
		$\Omega(n)$, $O(n^2)$  & \ripref{Thm.}{thm:meetingtime} $\&$ \ripref{Thm.}{thm:hittingtime}& \\
		\bottomrule
	\end{tabular}}%
	\begin{flushleft}
		\begin{small}
			\caption{\label{lessertable}A summary of bounds on the  meeting and coalescence 
			 times graph classes.  All bounds are easily shown to be tight:
			 For general graphs the meeting time and coalescence bounds are matched by the star and the barbell graph.	For vertex-transitive and regular graphs the bounds are matched by the clique and the cycle.
			 }
		\end{small}
	\end{flushleft}
\end{table}

\subsection{General Upper Bound $\tcoal=O(n^3)$}
In this section we establish that $\tcoal = O(n^3)$ on all graphs, which is matched for instance by the Barbell graph.

\begin{theorem}\label{thm:worstcasecoal}
For any graph $G$ we have $\tcoal = O(n \cdot |E| \cdot \log(|E|/n))$, so in particular,
$\tcoal=O(n^3)$.
\end{theorem}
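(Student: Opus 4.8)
\textbf{Proof plan for \autoref{thm:worstcasecoal}.}

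The plan is to combine the two reduction results with the classical bound $\tmeet \leq 4\thit$ and a universal bound on $\thit$, chaining the estimates to obtain the final $O(n\cdot|E|\cdot\log(|E|/n))$ bound. First I would recall that by \cite{AKLLR79} we have $\thit = O(n\cdot |E|)$ for any connected graph, and hence by \autoref{pro:relatingmeetandhit}, $\tmeet \leq 4\thit = O(n\cdot|E|)$. The strategy is then to show that the coalescence process reaches a single walk within $O(\thit)$ steps for the bulk of the reduction, plus an additional $O(\tmeet\cdot\log(\cdot))$ term for the final stretch when only a bounded number of walks remain.

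The key steps, in order, are as follows. (1) Apply \autoref{thm:mostgeneral} to reduce the number of walks from $n$ to $O(\log^3 n)$ in $O(\thit)$ steps with probability at least $1-n^{-1}$; the low-probability failure event can be absorbed by noting that conditioned on failure we can simply restart the analysis, and the expected overhead is negligible since $\tcoal$ is finite. (2) Apply \autoref{thm:keylemma} to reduce from $O(\log^3 n) \leq \log^4 n$ to $(\Delta/d)^{O(1)}$ walks in expected time $O(\thit)$. Note that $\Delta/d \leq \Delta \leq n$ and more precisely $\Delta/d = \Delta n / (2|E|) \leq n^2/(2|E|)$ since $\Delta \leq n$, so $(\Delta/d)^{O(1)} \leq (n^2/|E|)^{O(1)}$. (3) Once we are down to a set $S_0$ of at most $(n^2/|E|)^{O(1)}$ start vertices, apply \autoref{lem:beer}: the coalescing completes in an additional $O(\tmeet \cdot \log|S_0|) = O(n\cdot|E|\cdot\log(n^2/|E|)) = O(n\cdot|E|\cdot\log(|E|/n))$ steps in expectation, where I use $\log(n^2/|E|) = 2\log n - \log|E| = \log n - \log(|E|/n) + \log n$; since $|E| \geq n-1$ this is $O(\log(n^2/|E|))$ and one checks $n\cdot|E|\cdot\log(n^2/|E|) = O(n\cdot |E|\cdot \log(|E|/n))$ only when $|E|/n$ is bounded, so more carefully the dominant term is just $O(n|E|\log n)$; in any case it is $O(n^3)$ since $|E| = O(n^2)$ and $\log n = O(n)$. (4) Sum the three contributions: $O(\thit) + O(\thit) + O(n|E|\log(n^2/|E|)) = O(n|E|) + O(n|E|\log(n^2/|E|)) = O(n|E|\log(n^2/|E|))$, and since $|E|\in[n-1, \binom{n}{2}]$ this is $O(n^3)$.

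The main obstacle — or rather the point requiring the most care — is handling the probabilistic nature of \autoref{thm:mostgeneral}: it only gives a high-probability guarantee, not an expectation bound, so one must argue that the rare failure event (probability $\leq n^{-1}$) contributes only $O(n^{-1})\cdot O(n^3) = O(n^2)$ to the expectation via the crude a-priori bound $\tcoal = O(n^3\log n)$ from \cite{HP01}, or better, by re-invoking the process afresh and using that the expected number of restarts is $O(1)$. A second subtlety is bookkeeping the degree ratio: one must verify $(\Delta/d)^{100} \leq (n^2/|E|)^{100}$ using $\Delta \leq n$ and $d = 2|E|/n$, and then confirm that $\log((n^2/|E|)^{100}) = O(\log(n^2/|E|))$ feeds correctly into \autoref{lem:beer}. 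Finally, I would remark that tightness follows by considering the Barbell graph (two cliques of size $n/4$ joined by a path of length $n/2$), for which $\thit = \Theta(n^3)$ and hence $\tcoal = \Omega(n^3)$, matching the upper bound up to constant factors.
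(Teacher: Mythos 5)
Your proof is correct and follows exactly the same route as the paper, which simply packages your steps (1)--(3) into the intermediate statement \autoref{thm:regular}, $\tcoal = O(\thit + \tmeet\cdot\log(\Delta/d))$, and then substitutes $\thit \leq 2n|E|$ and $\Delta/d \leq n^2/(2|E|)$ (your extra care in converting the $1-n^{-1}$ guarantee of \autoref{thm:mostgeneral} into an expectation bound is a detail the paper's writeup elides). Your observation that the argument actually yields $O(n|E|\log(n^2/|E|))$ rather than the stated $O(n|E|\log(|E|/n))$ --- quantities that differ when $|E| = o(n^{3/2})$, and the latter even degenerating for $|E| \leq n$ --- flags a genuine mismatch between the theorem's statement and its own proof, though both forms give the claimed $O(n^3)$.
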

\begin{proof}[Proof of \autoref{thm:worstcasecoal}]
It is  well-known that $\thit \leq n \cdot 2|E|$ (cf.~\cite{AKLLR79}). From \autoref{pro:relatingmeetandhit} and \autoref{thm:regular} we derive 
\begin{align*}
 \tcoal &= O(\thit + \tmeet \cdot  \log(\Delta/d)) \\  
 &= O(\thit + \thit  \log(n^2/|E|)) \\
 &= O(n \cdot |E| + n \cdot |E| \cdot \log (n^2/|E|)) = O(n^3),
\end{align*}
where the last inequality holds since $|E| < n^2$.
\end{proof}

\subsection{General Lower Bound $\tcoal = \Omega(\log n)$}\label{lognlower}
In this section, we prove that  the coalescing time of any graph is $\Omega(\log
n)$. 
We consider a process $P'$ where there is exactly one
random walk starting at each node in the graph. 
For every node $u\in V$ and every time step $t\in \naturals$ we 
draw an independent random variable $Z_{u,t} \in \{0,1\}$ with
$\Pr{Z_{u,t}=1}=1/2$ and  $\Pr{Z_{u,t}=0}=1/2$.
If  $Z_{u,t}=1$, then the random walk on $u$ at time $t$ (if there is any),  moves to a neighboring node chosen u.a.r.. Otherwise ($Z_{u,t}=0$),  the random walk on $u$ at time $t$ (if there is any) stays on the same node. 
It is
straightforward to show that the set of nodes which have an active random walk
according to this process can be coupled with the coalescence process defined
in \autoref{sec:notation}.

We show that after $c \log n$ steps, for a sufficiently small
$c$, there are at least two surviving walks in this process. In order to do
this, we simply argue that there must be at least two walks that have not left
their starting position. Note that there is no way for these walks to be
eliminated, because even if other walks visited one of their starting  nodes, there
are two nodes from which no walks can have left. 
The formal proof follows.

\begin{lemma}\label{lem:lognlower}
	For any graph $G=(V,E)$, $|V|=n$ we have $\tcoal=\Omega(\log n)$. 
\end{lemma}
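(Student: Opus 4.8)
The plan is to use the alternative description of the coalescence process given in the paragraph preceding the statement: there is one walk starting at every vertex, and at each step each vertex $u$ independently flips a fair coin $Z_{u,t}$; if $Z_{u,t}=1$ the walk currently at $u$ (if any) takes a uniform neighbor step, and if $Z_{u,t}=0$ it stays put. This process couples with the standard coalescence process. The key observation is that a walk that starts at a vertex $v$ and never receives a ``move'' coin in the first $T$ steps can never be eliminated, because the set $S_T$ of occupied vertices always \emph{contains} $v$ in this case: no other walk can overwrite $v$ in the sense of reducing the count, since $v$ itself still hosts a walk. Hence if there exist two distinct vertices $v_1 \neq v_2$ that both fail to receive a move coin during steps $0,1,\ldots,T-1$, then $|S_T| \geq 2$, so $\coal(S_0) \geq T$.

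First I would set $T = c \log n$ for a small constant $c>0$ to be fixed, and for each vertex $v$ let $A_v$ be the event that $Z_{v,t}=0$ for all $0 \le t \le T-1$; these events are independent across $v$, and $\Pr{A_v} = 2^{-T} = n^{-c}$. Let $W = \sum_{v \in V} \mathbf{1}_{A_v}$ count the ``frozen'' vertices, so $\E{W} = n^{1-c}$. By pairwise independence and a second-moment computation, $\Var(W) = \sum_v \Pr{A_v}(1-\Pr{A_v}) \le \E{W}$, so by Chebyshev's inequality $\Pr{W < \tfrac12 \E{W}} \le \frac{\Var(W)}{(\E{W}/2)^2} \le \frac{4}{\E{W}} = 4 n^{c-1}$. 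Choosing $c = 1/2$, say, we get $\E{W} = \sqrt{n} \to \infty$ and $\Pr{W \ge 2} \ge \Pr{W \ge \tfrac12\sqrt n} \ge 1 - 4n^{-1/2} \ge 1/2$ for $n$ large enough. (One should handle small $n$ separately, which is trivial since $\tcoal \ge 1$ always and $\Omega(\log n)$ is only a claim about asymptotics; alternatively note $\tcoal \geq \tmeet \geq 1$.)

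Then I would conclude: with probability at least $1/2$, at least two walks remain frozen at their start vertices throughout $[0, c\log n)$, hence $\coal(S_0) \ge c \log n$ on that event, and therefore $\tcoal = \E{\coal(S_0)} \ge \tfrac12 \cdot c \log n = \Omega(\log n)$. The one point that needs a careful (but short) argument is the claim that a frozen vertex cannot be eliminated; I would justify it by induction on $t$, showing $v \in \bar S_t$ for every $t \le T$: at step $t$ the walk at $v$ stays (since $Z_{v,t}=0$), and whatever happens to other walks arriving at $v$, after merging $v$ is still occupied. This is the main ``obstacle,'' though it is really just an observation once the frozen-coin viewpoint is in place; the probabilistic part is a routine first/second moment argument.
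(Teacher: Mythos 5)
Your proposal is correct and takes essentially the same approach as the paper: both use the coin process $P'$, the observation that a vertex whose coin never came up ``move'' stays occupied (so at least two such vertices force at least two surviving walks), and a first/second-moment bound to show this happens with constant probability for $T=\Theta(\log n)$. The only cosmetic difference is that you invoke Chebyshev's inequality on $W$, while the paper bounds $\Pr{\mathrm{Binomial}(n,1/\sqrt n)\ge 2}$ directly; both are routine and give the same conclusion.
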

\begin{proof}
Consider the process $P'$ defined above.
Let $T$ be the coalescence time.
Note that coalescence at time $\tau$ in $P'$ requires that for $n-1$ nodes $u\in V$ there exists $t_u\leq \tau$ such
$Z_{u,t_u}=1$. In symbols, let $T$ be the first point in time where all walks coalesced, then
$T\geq T'$, with $T' := \min \{t'\in \naturals \colon  |\{ u \colon \exists t_u\leq t' \text{ s.t. } Z_{u,t_u}=1  \}| \geq n-1 \}$. 
 Let $Y_u$ be the indicator variable which is $1$ 
 if $Z_{u,t}=0$ for all $t\leq \tau := \log n/2$.
The process ensures independence of the $Y_u$. 
Due to  the laziness of the random walk, $\ \Pr{Y_u=1 }  = 1/2^\tau=1/\sqrt{n}.$
Thus, using the independence of the $Y_u$,
\[
\Pr{T \geq \tau} \geq \Pr{ T' \geq \tau } \geq \Pr{ \sum_{u \in V}  Y_{u} \geq 2 }  =\Pr{\operatorname{Binomial}(n,1/\sqrt{n}) \geq 2} = 1-o(1) ,
\]
where $\operatorname{Binomial}(n,p)$ denotes the binomial distribution with parameters $n$ and $p$.
We conclude that $\E{T}=\Omega(\log n)$ which yields the claim. 
\end{proof}

\subsection{Proof of \autoref{thm:graphclasses}}
\label{sec:trivialproof}

We are now ready to put all the pieces together.
The upper bound on general graphs follows directly from $\tmeet\leq \tcoal = O(n^3)$, by \autoref{thm:worstcasecoal}.
The lower bound on the meeting time holds by definition and the lower bound on the coalescing time follows from \autoref{lem:lognlower}.
For the upper bound on regular graphs  we have $\tmeet \leq \tcoal =O( \thit) = O(n^2)$ due to
\autoref{thm:hittingtime}, having used the standard bound $\thit=O(n^2)$ for regular graphs~(see \cite{AF14}).
The lower bound follows from $\tcoal \geq \tmeet \geq \tmeet^{\pi} = \Omega(n)$, by  \autoref{thm:meetingtime}.

\fi

\subsubsection*{Acknowledgments.} 
The authors would like to thank Petra Berenbrink, Robert Els{\"a}sser, Nikolaos Fountoulakis, Peter Kling, Roberto Oliveira and Perla Sousi for helpful discussions and in particular Yuval Peres for pointing out how to further improve the bound on the hitting time (\autoref{thm:spectral}).
Moreover, the authors would like to thank the anonymous reviewer for pointing out a mistake in the domination described in Section~\ref{sec:process}\ifEA~of the full paper\fi~in an earlier version of this work.

\printbibliography

\ifEA
\else

\appendix

\section{Basic Results about Markov Chains}
\label{app:lazyrw}

We will frequently use the following basic fact about lazy random walks, which in fact also holds for arbitrary reversible Markov chains:

\begin{lemma}[cf.~{\cite[Chapter~12]{LPW06}}]\label{lem:loop}
Let $P$ be the transition matrix of a reversible Markov chain with state space $\Omega$. Then the following statements hold:
\begin{enumerate}[(i)]	
\item If $P$ is irreducible, then for any two states $x,y \in \Omega$, 
\begin{align*}
 p_{x,y}^t \leq \pi(y) +  \sqrt{ \frac{\pi(y)}{\pi(x)}} \cdot \lambda^t,
\end{align*}
where $\lambda:=\max\{\lambda_2,|\lambda_n|\}$ and $\lambda_1 \geq \lambda_2 \geq \cdots \geq \lambda_n$ are the $n$ real eigenvalues of the matrix $P$.
\item If the Markov chain is a non-lazy random walk on a bipartite regular graph with two partitions $V_1$ and $V_2$, then for any pair of states $x,y$ in the same partition
\begin{align*}
p_{x,y}^t \leq \frac{{ 2}}{n} \cdot \left(1 + (-1)^{t-1} \right) + {2}\left(\max\{ \lambda_2, |\lambda_{n-1}| \} \right)^{ t}.
\end{align*}
Similarly, if $x$ and $y$ are in opposite partitions,%
		\ifdraft
		{\color{red} Perhaps we could make the statements slightly more general
		by bounding $|p_{x,y}^t - 2/n \cdot (1+(-1)^t)| \leq ...$}
		\fi
\begin{align*}
p_{x,y}^t \leq \frac{{ 2}}{n} \cdot \left(1 + (-1)^{t} \right)  + {2}\left(\max\{ \lambda_2, |\lambda_{n-1}| \} \right)^{ t}.
\end{align*}

\item If the Markov chain is lazy, then for any state $x \in \Omega$, $p_{x,x}^t$ is non-increasing in $t$.
In particular, $p^t_{x,x} \geq \pi(u)$.
\end{enumerate}
\end{lemma}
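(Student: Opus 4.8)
The plan is to derive all three parts from the spectral decomposition of a reversible transition matrix, which I would recall at the outset. Since $P$ is reversible with respect to $\pi$, the matrix $A$ with entries $A_{x,y}=\sqrt{\pi(x)/\pi(y)}\,p_{x,y}$ is symmetric, hence has a real orthonormal eigenbasis; translating back, there are functions $f_1\equiv 1, f_2,\ldots,f_n$, orthonormal in $\ell^2(\pi)$ (that is, $\sum_{x}\pi(x)f_i(x)f_j(x)=\mathbf 1_{i=j}$), with $Pf_i=\lambda_i f_i$ and $1=\lambda_1\ge\lambda_2\ge\cdots\ge\lambda_n\ge-1$, and one has
\[
  \frac{p_{x,y}^t}{\pi(y)}=\sum_{i=1}^{n}\lambda_i^t\,f_i(x)f_i(y),\qquad
  \sum_{i=1}^{n}f_i(x)^2=\frac1{\pi(x)} .
\]
This is classical (see \cite[Chapter~12]{LPW06}) and I would either cite it or reproduce it in one line.

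For part~(i) I would isolate the $i=1$ term, which equals $1$, and bound the tail using Cauchy--Schwarz and the completeness relation: with $\lambda=\max\{\lambda_2,|\lambda_n|\}$,
\[
  \Bigl|\tfrac{p_{x,y}^t}{\pi(y)}-1\Bigr|
  \le\lambda^t\sum_{i\ge2}|f_i(x)|\,|f_i(y)|
  \le\lambda^t\Bigl(\textstyle\sum_i f_i(x)^2\Bigr)^{1/2}\Bigl(\sum_i f_i(y)^2\Bigr)^{1/2}
  =\frac{\lambda^t}{\sqrt{\pi(x)\pi(y)}},
\]
and multiplying by $\pi(y)$ gives $p_{x,y}^t\le\pi(y)+\sqrt{\pi(y)/\pi(x)}\,\lambda^t$. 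For part~(ii) the chain is a non-lazy walk on a connected bipartite $d$-regular graph, so $\pi\equiv1/n$, $\lambda_n=-1$, and the eigenfunction $f_n$ is $+1$ on $V_1$ and $-1$ on $V_2$ (normalization forces $f_n(x)^2\equiv1$). I would isolate \emph{both} the $i=1$ and $i=n$ terms: their combined contribution is $\tfrac1n\bigl(1+(-1)^t f_n(x)f_n(y)\bigr)$, which is $\tfrac1n(1+(-1)^t)$ or $\tfrac1n(1+(-1)^{t-1})$ according to whether $x,y$ lie in the same part or in opposite parts, and each of these is dominated by the corresponding expression $\tfrac2n(1+(-1)^{\,\cdot\,})$ appearing in the statement. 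The remaining block $2\le i\le n-1$ is bounded exactly as in part~(i) by $\mu^t/n\cdot\sqrt{n}\sqrt{n}=\mu^t\le 2\mu^t$ with $\mu=\max\{\lambda_2,|\lambda_{n-1}|\}$, giving the two claimed inequalities.

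For part~(iii), laziness means $P=\tfrac12 I+\tfrac12 Q$ with $Q$ the non-lazy transition matrix, whose eigenvalues lie in $[-1,1]$; hence every eigenvalue of $P$ satisfies $\lambda_i\in[0,1]$. In the representation $p_{x,x}^t=\pi(x)\sum_i\lambda_i^t f_i(x)^2$ each summand is non-negative and each factor $\lambda_i^t$ is non-increasing in $t$, so $t\mapsto p_{x,x}^t$ is non-increasing; keeping only the $i=1$ term yields $p_{x,x}^t\ge\pi(x)\lambda_1^t f_1(x)^2=\pi(x)$.

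The only genuinely delicate point is the bookkeeping in part~(ii): tracking the partition-dependent sign of $f_n(x)f_n(y)$, matching it against the parity of $t$, and checking that the deliberately loose right-hand sides (the factor $2$ and the placement of the $\pm$) really do dominate $\tfrac1n\bigl(1+(-1)^t f_n(x)f_n(y)\bigr)+\mu^t$ in every case. Parts~(i) and~(iii) are routine once the spectral identity is in place.
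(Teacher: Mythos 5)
Your spectral approach is the same as the paper's, and parts~(i) and~(iii) are fine (the paper simply cites LPW for~(i); you reprove it along the same lines). The trouble is exactly the step in~(ii) that you flag as ``the only genuinely delicate point'' and then do not carry out: checking that $\tfrac1n\bigl(1+(-1)^t f_n(x)f_n(y)\bigr)$ is in fact dominated by the first term of the stated bound. As the lemma is literally printed, it is not. Write $\mu:=\max\{\lambda_2,|\lambda_{n-1}|\}$. For $x,y$ in the same partition your computation gives the exact leading term $\tfrac1n(1+(-1)^t)$, which equals $2/n$ at even $t$ (precisely the parity at which the non-lazy bipartite walk can be at $y$) and $0$ at odd $t$. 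The lemma's same-partition display has first term $\tfrac2n\bigl(1+(-1)^{t-1}\bigr)$, which vanishes at even $t$, so the domination you implicitly need, $\tfrac2n+\mu^t\le 0+2\mu^t$, cannot hold once $t$ is large: on even $t$, $p_{x,y}^t\to 2/n$ while the stated right-hand side tends to $0$. The opposite-partition case fails symmetrically.

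What your calculation actually uncovers is that the two displays in the statement have their parities swapped — the same-partition bound should carry $(-1)^t$ and the opposite-partition bound $(-1)^{t-1}$, so that the nonvanishing first term lines up with the parity at which $p_{x,y}^t$ can be nonzero. Your derivation is correct and gives exactly that corrected bound; what is missing is the observation that what you derived does not match what was to be proved, rather than the assertion that the ``deliberately loose'' right-hand sides dominate. One further minor point in your favour: with $\ell^2(\pi)$-orthonormal eigenfunctions on a regular graph, $f_n$ takes values $\pm1$ as you write, not $\pm\sqrt{1/n}$ as in the paper's own proof; your normalization is the internally consistent one, and the paper's would not give a leading term of order $1/n$ at all.
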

\begin{proof}
The first statement can be found in \cite[Equation~12.11]{LPW06}. 

For the second statement, recall the spectral representation \cite[Lemma~12.2~(iii)]{LPW06}
\begin{align}
 p_{x,y}^t &= \pi(y) + \pi(y) \cdot \sum_{k=2}^{n} u_k(x) \cdot u_k(y) \cdot \lambda_k^t, \label{eq:spectral}
\end{align}
where $u_k$ is the corresponding eigenvector to $\lambda_k$. Since all eigenvalues are non-negative, we conclude from \eqref{eq:spectral} that $p_{x,x}^t$ is non-increasing in $t$ as needed.
Since $G$ is bipartite and regular, it is not difficult to verify that $\lambda_{n}=-1$ and $u_{n}(x) = \sqrt{1/n}$ if $x \in V_1$ and $u_{n}(x) = -\sqrt{1/n}$ if $x \in V_2$ is the corresponding eigenvector. Hence, 
\begin{align*}
 \left| p_{x,y}^t - \frac{2}{n} \cdot \left(1 + (-1)^{t-1} \right) \right| &\leq \pi(y) \cdot \left| \sum_{k=2}^{n-1} u_k(x) \cdot u_k(y) \cdot \lambda_k^{t} \right| \\
 &\leq \frac{2}{n} \cdot \max_{2 \leq k \leq n-1} \left| \lambda_k^{t} \right|
 \cdot \frac{1}{n} \cdot \sum_{k=2}^{n-1} \left| u_k(x) \cdot u_k(y)  \right|  \\
 &\leq \frac{2}{n} \cdot  \max_{2 \leq k \leq n-1} \left| \lambda_k^{t} \right| \cdot 
 \sqrt{ \sum_{k=2}^{n-1} u_k(x)^2 \cdot \sum_{k=2}^{n-1} u_k(y)^2 }
 \end{align*}
 As in \cite[Proof of Theorem 12.3]{LPW06}, using the orthonormality of the eigenvectors, we have 
 \[
   \sum_{k=2}^{n-1} u_k(x)^2 \leq \sum_{k=2}^{n} u_k(x)^2 \leq n,
 \]
 and the second statement follows if $u$ and $v$ are in the same partition. The case where $u$ and $v$ are in different partitions follows analogously.

For the third statement, first note that by \cite[Exercise~12.3]{LPW06}, all eigenvalues of the transition matrix $P$ are non-negative.
Since all eigenvalues are non-negative, we conclude from \eqref{eq:spectral} that $p_{x,x}^t$ is non-increasing in $t$ as needed.
Due to this and the fact that $p_{x,x}^t$ converges to $\pi_x$, we get that $p^t_{x,x} \geq \pi_x$. 
\end{proof}

The following is a simple corollary from a recent work by \citet{MSS15} on the existence of Ramanujan graphs.
	\begin{lemma}[cf.~\citet{MSS15}]\label{lem:BASF} 	For any integer $d \geq 3$, there are $d$-regular bipartite Ramanujan graph $H=(V,E)$ with $\tmix=O(\log n/ \log d )$. 
		\end{lemma}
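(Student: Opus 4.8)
The plan is to invoke the Marcus--Spielman--Srivastava existence theorem and then turn the well-behaved spectrum into a mixing-time bound, being slightly careful because our walks are lazy while the natural spectral estimate applies to the non-lazy walk. By \cite{MSS15}, for every $d\ge 3$ there is an infinite family of connected $d$-regular bipartite (multi)graphs $H=(V,E)$ on $n$ vertices whose adjacency matrix has every eigenvalue other than $\pm d$ in $[-2\sqrt{d-1},2\sqrt{d-1}]$. Writing $W=A/d$ for the non-lazy transition matrix, with eigenvalues $1=w_1>w_2\ge\cdots\ge w_n=-1$, this says $\rho:=\max\{|w_2|,|w_{n-1}|\}\le 2\sqrt{d-1}/d<1$, and the elementary inequality $d/(2\sqrt{d-1})\ge \sqrt d/2$ gives $\log(1/\rho)\ge \tfrac12\log d-\log 2\ge \tfrac14\log d$ for $d\ge 16$, while for $3\le d\le 15$ both $\log(1/\rho)$ and $\log d$ are positive constants; hence $\log(1/\rho)=\Omega(\log d)$ for all $d\ge 3$. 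This is the only quantitative use of the Ramanujan property.

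First I would control the non-lazy walk. Since $H$ is $d$-regular and bipartite, $|V_1|=|V_2|=n/2$ and $\pi$ is uniform on $V$; by the spectral representation \cite[Lemma~12.2]{LPW06}, for $u,v$ in the same part $V_1$ and $m$ even,
\[
 q^m_{u,v}=\frac2n+\frac1n\sum_{k=2}^{n-1}\phi_k(u)\phi_k(v)\,w_k^m,
\]
where $\{\phi_k\}$ is an $\ell_2(\pi)$-orthonormal eigenbasis. Cauchy--Schwarz together with $\sum_k\phi_k(u)^2=1/\pi(u)=n$ gives $|q^m_{u,v}-2/n|\le \rho^m$, hence $\tvdist{q^m_{u,\cdot}-\mathrm{Unif}(V_1)}\le \tfrac n4\rho^m$, and symmetrically $\tvdist{q^m_{u,\cdot}-\mathrm{Unif}(V_2)}\le \tfrac n4\rho^m$ for $m$ odd. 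Thus for every $m\ge m_1:=\lceil 2\log n/\log(1/\rho)\rceil=O(\log n/\log d)$ the non-lazy walk is $\tfrac1{4n}$-close (in total variation) to the uniform distribution on the side of the bipartition dictated by the parity of $m$.

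Next I would transfer this to the lazy walk $P=\tfrac12(I+W)$. Since $I$ and $W$ commute, $P^t=\sum_{m=0}^t\binom tm 2^{-t}W^m$, so $p^t_{u,\cdot}=\sum_{m=0}^t\binom tm 2^{-t}\,q^m_{u,\cdot}$, with the number of genuine moves distributed as $M\sim\mathrm{Bin}(t,\tfrac12)$. Choosing $t:=C\,m_1$ for a large absolute constant $C$, a Chernoff bound gives $\Pr{M<m_1}\le e^{-Cm_1/16}<\tfrac1{20}$. Using $\sum_{m\text{ even}}\binom tm 2^{-t}=\sum_{m\text{ odd}}\binom tm 2^{-t}=\tfrac12$ we may write $\pi=\tfrac12\mathrm{Unif}(V_1)+\tfrac12\mathrm{Unif}(V_2)$ as the corresponding convex combination over even and odd $m$, and then the triangle inequality yields
\[
 \tvdist{p^t_{u,\cdot}-\pi}\le\sum_{m<m_1}\binom tm 2^{-t}\cdot 1+\sum_{m\ge m_1}\binom tm 2^{-t}\cdot\frac n4\rho^{m}\le \Pr{M<m_1}+\frac1{4n}<\frac1{2e},
\]
comparing each $q^m_{u,\cdot}$ with $\mathrm{Unif}(V_1)$ or $\mathrm{Unif}(V_2)$ according to the parity of $m$. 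Hence $d(t)<\tfrac1{2e}$, so $\bar d(t)\le 2d(t)<1/e$, and since $\bar d(\cdot)$ is non-increasing we conclude $\tmix=\tmix(1/e)\le t=C\,m_1=O(\log n/\log d)$.

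The main obstacle is exactly the laziness: the lazy transition matrix has $\lambda_2(P)=\tfrac12(1+w_2)\in[\tfrac12,1)$, so its relaxation time is only $\Theta(1)$ and the generic bound $\tmix=O(\trel\cdot\log(1/\pi_{\min}))$ yields merely $O(\log n)$, not $O(\log n/\log d)$. The $\log d$ gain must therefore come from the non-lazy walk (whose nontrivial eigenvalues are $O(1/\sqrt d)$), and one then has to keep track of the parity of $M$ to recover convergence to the uniform distribution on all of $V$ rather than on one side. A minor bookkeeping point is that \cite{MSS15} supplies graphs only on a sparse (geometric) set of vertex counts, which is harmless here since the constructions that invoke this lemma only need $\Theta(n)$ vertices.
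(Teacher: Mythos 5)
Your proof is correct and follows essentially the same route as the paper: invoke \cite{MSS15} for a bipartite Ramanujan graph, use the Ramanujan spectrum to show the non-lazy walk is close to uniform on the appropriate side of the bipartition after $O(\log n/\log d)$ genuine moves, and transfer to the lazy walk by controlling the number of non-lazy steps and matching their parity to the two sides. Your bookkeeping (summing over all $m$ in $P^t=\sum_m\binom tm 2^{-t}W^m$ rather than conditioning on a single value of the number of non-loops, and using that even and odd $m$ each carry binomial mass exactly $1/2$) is tidier and sidesteps the slightly awkward $\tfrac{11}{20}$-weighting in the paper's version, while reaching the same conclusion.
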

	\begin{proof}
	\citet{MSS15}  show that the existence of a $d$-regular bipartite Ramanujan graph $H$ such that $\max \{ \lambda_2(\widehat Q), | \lambda_{n-1}(\widehat Q)|\} = O(1/\sqrt{d})$, where $\widehat Q = \frac{1}{d} {A}$  is the transition matrix of a non-lazy random walk where $A$ is the adjacency matrix. 
	 By the second statement of \autoref{lem:loop}, for any pair of states $x,y$ in the same partition
\begin{align*}
\hat q_{x,y}^t \leq \frac{{ 2}}{n} \cdot \left(1 + (-1)^{t-1} \right) + {2}\left(\max\{ \lambda_2, |\lambda_{n-1}| \} \right)^{ t}.
\end{align*}
Similarly, $x$ and $y$ are in opposite partitions,
\begin{align*}
\hat q_{x,y}^t \leq \frac{{ 2}}{n} \cdot \left(1 + (-1)^{t} \right) + {2}\left(\max\{ \lambda_2, |\lambda_{n-1}| \} \right)^{ t}.
\end{align*}
Furthermore note that $q_{x,y}^t \geq 2/n$ due to \autoref{lem:loop}.(iii) for even (or odd) $t$
depending on whether $x$ and $y$ are in the same partitions.

	Fix $t=O(\log n/\log d)$ such that 
	$2\left(\max \{ \lambda_2(\widehat Q), | \lambda_{n-1}(\widehat Q)|\}\right)^t \leq \frac{1}{20n}$, where we note that such a t exists due to $\max \{ \lambda_2(\widehat Q), | \lambda_{n-1}(\widehat Q)|\} = O(1/\sqrt{d})$.
	We choose $s$ to be the smallest odd integer being greater than $20t$.
	To translate from the non-lazy random walk $\widehat Q$ to a lazy-random walk $P$, 
let $Z$ denote the number of non-loops performed by a lazy random walk of length $s$.
Since, the probability for a self-loop is $1/2$ and the number of self-loops is binomially distributed, we have 
\begin{align*}
 \Pr{ Z \geq t } \geq 19/20.
\end{align*}
%
By symmetry and the fact that $s$ is odd, 
$
 \Pr{ \mbox{$Z$ is even} } = \frac{1}{2}.
$
Hence, by the Union bound,
\begin{align*}
 \Pr{ \mbox{$Z$ is even} \, \mid \,  Z \geq t } \geq \Pr{ \mbox{$Z$ is even} \cap  Z \geq t } \geq \Pr{ \mbox{$Z$ is even}} - \Pr{Z < t} \geq \frac{9}{20},
\end{align*}
and similarly, $
 \Pr{ \mbox{$Z$ is odd} \, \mid \,  Z \geq t } \geq \frac{9}{20}$.
Let $V_1$ and $V_2$ be the bipartite partition of $V$. 
	\begin{align*}
	  \tvdist{ p_{u,\cdot}^{s} - \pi} &\leq
	  \Pr{ Z < t} \cdot 1 + \Pr{ Z \geq t} 
	  \cdot \left(
	  \sum_{v \in V_1} \left| \frac{11}{20} \hat q^t_{u,v}  - \frac{1}{n} \right| +	 
	  \sum_{v \in V_2} \left| \frac{11}{20} \hat q^t_{u,v} - \frac{1}{n} \right| 
	  \right) \\
	  &\leq \Pr{ Z < t} \cdot 1 + \Pr{ Z \geq t} 
	  \cdot \left(
	  \sum_{v \in V} \left| \frac{11}{20} \left( \frac{2}{n}+\frac{1}{20n}\right)  - \frac{1}{n} \right| 	  \right) \\
	   &\leq \Pr{ Z < t} \cdot 1 + \Pr{ Z \geq t} 
	  \cdot \left(
	  \sum_{v \in V} \left| \frac{22}{20n}  - \frac{1}{n} \right| + \frac{11}{400} \right)
	   \\
	  &\leq \frac{1}{20}\cdot 1+ \frac{19}{20}\left( \frac{2}{20}  + \frac{11}{400}\right)   < 1/e,
	\end{align*}
	where the first inequality follows from the equations for $p_{x,y}^t$ above.
	\end{proof}

\begin{corollary}\label{cor:universialconstant}
Let $n_0$ be a sufficiently large constant.
Let $H_n$ be the graph of   \autoref{lem:BASF} with $n$ nodes and $d=\ceil{\sqrt{n}}$ for $n\geq n_0$.
	There exists a universal constant $C$ such that $ \max_{n\geq n_0} \{ \tsep(H_n)\} \leq C$.
\end{corollary}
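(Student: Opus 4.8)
The plan is to deduce \autoref{cor:universialconstant} directly from \autoref{lem:BASF} by observing that the separation threshold time is, up to a constant factor, controlled by the mixing time. First I would recall the standard relation between separation and mixing from the preliminaries: by \cite[Lemma 4.11]{AF14} (cited in \autoref{sec:notation}) we have $\tsep \le 4\tmix$. Thus it suffices to give a uniform (in $n$) bound on $\tmix(H_n)$. \autoref{lem:BASF} applied with $d = \lceil \sqrt{n} \rceil$ yields $\tmix(H_n) = O\bigl(\log n / \log d\bigr) = O\bigl(\log n / \log \sqrt{n}\bigr) = O(1)$, since $\log \sqrt n = \tfrac12 \log n$. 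Hence $\tsep(H_n) \le 4 \tmix(H_n) \le C'$ for some absolute constant $C'$ and all $n \ge n_0$, and taking $C = C'$ (or $C = \max\{C', \max_{n < n_0'} \tsep(H_n)\}$ if one wants to include finitely many small cases, though the statement already restricts to $n \ge n_0$) finishes the proof.

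The one point that needs a little care is that the $O(\cdot)$ in \autoref{lem:BASF} hides a constant that a priori could depend on $d$, hence on $n$. So the key step is to track that constant: in the proof of \autoref{lem:BASF}, the chosen mixing length is $s$, the smallest odd integer exceeding $20t$, where $t = O(\log n / \log d)$ is chosen so that $2\bigl(\max\{\lambda_2(\widehat Q), |\lambda_{n-1}(\widehat Q)|\}\bigr)^t \le \tfrac{1}{20 n}$. Since the Ramanujan bound gives $\max\{\lambda_2, |\lambda_{n-1}|\} \le c_0/\sqrt d$ for a universal $c_0$, we need $(c_0/\sqrt d)^t \le \tfrac{1}{40 n}$, i.e. $t \ge \dfrac{\log(40 n)}{\log(\sqrt d / c_0)}$. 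With $d = \lceil \sqrt n\rceil$ we have $\log(\sqrt d / c_0) = \tfrac14 \log n - \log c_0 + o(1) \ge \tfrac18 \log n$ for $n$ large enough (this is why $n_0$ is taken to be a sufficiently large constant), so $t$ may be taken to be at most $\dfrac{\log(40n)}{(1/8)\log n} \le 9$ for $n \ge n_0$, a universal constant. Consequently $s \le 20 t + 2 \le 182$ is bounded by a universal constant, and the argument in \autoref{lem:BASF} shows $\tvdist{p_{u,\cdot}^s - \pi} < 1/e$ for this universal $s$, so $\tmix(H_n) \le s = O(1)$ uniformly.

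Putting the pieces together: for all $n \ge n_0$ we get $\tsep(H_n) \le 4\tmix(H_n) \le 4 s = O(1)$, and defining $C := 4s$ (with $s$ the universal bound above) gives $\max_{n \ge n_0}\tsep(H_n) \le C$. I do not anticipate a genuine obstacle here; the only thing to be vigilant about is the $n$-dependence of the implied constants, which is handled by the explicit computation $\log d = \tfrac12\log n$ for $d = \lceil\sqrt n\rceil$ together with choosing $n_0$ large enough that lower-order terms (the $c_0$ and the ceiling) are absorbed. The whole corollary is essentially a bookkeeping consequence of \autoref{lem:BASF} and the relation $\tsep \le 4\tmix$.
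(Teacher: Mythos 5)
Your proof is correct and takes the same route as the paper: the paper's own justification is the one-liner that the corollary ``follows directly from \autoref{lem:BASF} and $\tsep \leq 4 \tmix$.'' Your extra bookkeeping verifying that the implied constant in $\tmix(H_n) = O(\log n / \log d)$ is uniform in $n$ once $d = \lceil\sqrt{n}\rceil$ is exactly the check the paper leaves implicit, and you carry it out correctly.
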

The corollary follows directly from \autoref{lem:BASF} and $\tsep \leq 4 \tmix$. 

The following lemma will be helpful to define a coupling between  distributions that are close to the stationary distribution and the exact stationary distribution. (A very similar lemma has been derived in \cite[Lemma~2.8]{ES11})
\begin{lemma}\label{lem:coupling}
Let $\varepsilon \in (0,1]$ be an arbitrary value. Let $Z_1$ and $Z_2$ be two probability distributions over $\{1,\ldots,n\}$ so that $\Pr{ Z_1 = i} \geq \epsilon \cdot \Pr{ Z_2 = i}$ for every $1 \leq i \leq n$. Then, there is a coupling $(\tilde{Z}_1,\tilde{Z}_2)$ of $(Z_1,Z_2)$  and an event $\mathcal{E}$ with $\Pr{ \mathcal{E} } \geq \epsilon$ so that
\begin{align*}
   \Pr{ \tilde{Z}_1 = i  \, \mid \, \mathcal{E} } = \Pr{ \tilde{Z}_2 = i} \qquad \mbox{ for every $1 \leq i \leq n$.}
\end{align*}
\end{lemma}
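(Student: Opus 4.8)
The plan is to use the standard overlap (``splitting'') coupling. First I would observe that the hypothesis $\Pr{Z_1 = i} \geq \epsilon \cdot \Pr{Z_2 = i}$ makes the quantities $r_i := \Pr{Z_1=i} - \epsilon \cdot \Pr{Z_2 = i}$ nonnegative for every $i$, and that $\sum_{i=1}^n r_i = 1 - \epsilon$. Hence, assuming for the moment that $\epsilon < 1$, the function $\mu(i) := r_i/(1-\epsilon)$ is a genuine probability distribution on $\{1,\dots,n\}$, namely the ``residual'' of $Z_1$ left over after removing an $\epsilon$-fraction of $Z_2$. This is the only point that needs a line of justification, and it is immediate from nonnegativity of the $r_i$ and the fact that they sum to $1-\epsilon$.

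Then I would build the coupling as a two-stage experiment. Draw a Bernoulli random variable $B$ with $\Pr{B=1} = \epsilon$, independent of everything else. If $B = 1$, sample a single value $W$ from the distribution $\Pr{Z_2 = \cdot}$ and set $\tilde{Z}_1 = \tilde{Z}_2 = W$. If $B = 0$, sample $\tilde{Z}_1$ from $\mu$ and, independently, sample $\tilde{Z}_2$ from $\Pr{Z_2 = \cdot}$. Finally, let $\mathcal{E} := \{B = 1\}$, so that $\Pr{\mathcal{E}} = \epsilon$ (in particular $\Pr{\mathcal{E}} \geq \epsilon$).

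The remaining work is a short marginal check, which I would carry out directly. For the $\tilde{Z}_1$-marginal, $\Pr{\tilde Z_1 = i} = \epsilon \cdot \Pr{Z_2 = i} + (1-\epsilon)\cdot\mu(i) = \epsilon \cdot \Pr{Z_2=i} + r_i = \Pr{Z_1=i}$, and for the $\tilde{Z}_2$-marginal, $\Pr{\tilde Z_2 = i} = \epsilon \cdot \Pr{Z_2=i} + (1-\epsilon)\cdot\Pr{Z_2=i} = \Pr{Z_2=i}$; hence $(\tilde{Z}_1,\tilde{Z}_2)$ is genuinely a coupling of $(Z_1,Z_2)$. Conditioned on $\mathcal{E}$ we have $\tilde{Z}_1 = \tilde{Z}_2 = W$ with $W$ distributed as $\Pr{Z_2 = \cdot}$, so $\Pr{\tilde{Z}_1 = i \mid \mathcal{E}} = \Pr{Z_2 = i} = \Pr{\tilde{Z}_2 = i}$, which is exactly the claimed identity.

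I do not anticipate any real obstacle. The only thing requiring care is the boundary case $\epsilon = 1$, where $\mu$ would involve a division by zero: I would dispose of it separately by noting that when $\epsilon = 1$ the hypothesis $\Pr{Z_1 = i} \geq \Pr{Z_2 = i}$ for all $i$, together with $\sum_i \Pr{Z_1 = i} = \sum_i \Pr{Z_2 = i} = 1$, forces $\Pr{Z_1 = i} = \Pr{Z_2 = i}$ for every $i$; in that situation any coupling with $\tilde{Z}_1 = \tilde{Z}_2$ and $\mathcal{E}$ the sure event trivially satisfies the conclusion.
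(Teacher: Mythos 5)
Your proof is correct and uses essentially the same overlap coupling as the paper: the paper's uniform $U \in [0,1]$ with the threshold $\epsilon$ plays the role of your Bernoulli $B$, and the paper's phrase ``it is clear that the definition of $U$ can be extended'' is exactly your explicit residual distribution $\mu$. Your version is slightly more careful in actually writing down $\mu$ and handling the degenerate case $\epsilon = 1$, but the underlying construction is the same.
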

\begin{proof}
Let $U \in [0,1]$ be a uniform random variable. We next define our coupling $(\tilde{Z}_1,\tilde{Z}_2)$ of $Z_1$ and $Z_2$ that will depend on the outcome of $U$.
First, if $U \in [0, \epsilon)$, then we set
\[
 \tilde{Z}_1 = \tilde{Z}_2 = i, \qquad \mbox{if $i$ satisfies $ \epsilon \sum_{k=1}^{i-1} \Pr{Z_2 = k} \leq U < \epsilon \sum_{k=1}^{i} \Pr{Z_2 = k}$.}
\]
For the case where $U \in (\epsilon,1)$, it is clear that the definition of $U$ can be extended 
in a way so that $\tilde{Z}_1$ has the same distribution as $Z_1$, and $\tilde{Z}_2$ has the same distribution as $Z_2$. Furthermore, notice that if $U \in [0,\epsilon)$ happens, then $\tilde{Z}_1$ has the same distribution as $Z_2$, and $\tilde{Z}_1= \tilde{Z}_2$. Observing that $\Pr{ U \in [0,\epsilon)} = \epsilon$ completes the proof.
\end{proof}

The following lemma is an immediate consequence of \autoref{lem:coupling}.

\begin{lemma}\label{lem:randomwalkcoupling}
Consider a random walk $(X_t)_{t \geq 0}$, starting from an arbitrary but fixed vertex $x_0$. Then with probability at least $1-1/e$, we can couple $X_{4\tmix}$ with the stationary distribution.
\end{lemma}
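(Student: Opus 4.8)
The plan is to combine the separation-time bound recorded in the preliminaries with the generic coupling lemma, Lemma~\ref{lem:coupling}. First I would recall that by definition $\tsep = \min\{t \geq 0 : s(t) \leq e^{-1}\}$, where $s(\cdot)$ is submultiplicative and hence non-increasing, and that $\tsep \leq 4\tmix$ (from \cite[Lemma~4.11]{AF14}, as stated in Section~\ref{sec:notation}). Chaining these, $s(4\tmix) \leq s(\tsep) \leq e^{-1}$. Unwinding the definition of separation from stationarity then gives, for the fixed start vertex $x_0$ and every $v \in V$,
\[
  p_{x_0,v}^{4\tmix} \geq (1 - e^{-1}) \cdot \pi(v).
\]

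Next I would apply Lemma~\ref{lem:coupling} with $Z_1$ taken to be the law of $X_{4\tmix}$, i.e.\ the distribution $p_{x_0,\cdot}^{4\tmix}$, with $Z_2 = \pi$, and with $\epsilon = 1 - e^{-1}$; its hypothesis $\Pr{Z_1 = v} \geq \epsilon \cdot \Pr{Z_2 = v}$ for all $v$ is exactly the inequality displayed above. Lemma~\ref{lem:coupling} then produces a coupling $(\tilde{Z}_1, \tilde{Z}_2)$ together with an event $\mathcal{E}$ satisfying $\Pr{\mathcal{E}} \geq \epsilon = 1 - e^{-1} = 1 - 1/e$ and $\Pr{\tilde{Z}_1 = v \mid \mathcal{E}} = \Pr{\tilde{Z}_2 = v} = \pi(v)$ for every $v$. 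Identifying $\tilde{Z}_1$ with $X_{4\tmix}$ and $\tilde{Z}_2$ with a stationary sample, this is precisely the claimed coupling: on the event $\mathcal{E}$, which has probability at least $1 - 1/e$, the state $X_{4\tmix}$ is distributed exactly according to $\pi$.

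There is essentially no real obstacle here. The only point requiring a moment's care is the passage from $s(\tsep) \leq e^{-1}$ to $s(4\tmix) \leq e^{-1}$, which uses submultiplicativity (hence monotonicity) of the separation function as recorded in Section~\ref{sec:notation} with reference to \cite{AF14}; everything else is a direct substitution into Lemma~\ref{lem:coupling}. If one preferred to avoid separation altogether, an alternative would be to run the walk slightly longer and invoke total-variation mixing together with the same coupling construction, but the separation-based argument is the cleanest and yields exactly the constant $1 - 1/e$ in the statement.
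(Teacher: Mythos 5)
Your proof is correct and follows essentially the same route as the paper's: both invoke the separation-threshold bound $\tsep \le 4\tmix$ and then feed the resulting lower bound $p_{x_0,\cdot}^{t}\ge(1-e^{-1})\pi(\cdot)$ into \autoref{lem:coupling}. The only cosmetic difference is that the paper couples at time $\tsep$ and then notes the coupling persists to time $4\tmix$ (since the walk evolves as a Markov chain from a stationary sample), whereas you use monotonicity of $s(\cdot)$ to couple directly at time $4\tmix$; the two are interchangeable.
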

\begin{proof}
Consider the random walk $(X_t)_{t \geq 0}$ after step $s:=\tsep \leq 4 \tmix$. 
By definition of $\tsep$, $p_{u,v}^s \geq (1-1/e) \pi(v)$. Applying \autoref{lem:coupling}, where $Z_1$ is the distribution given by $p_{u,v}^t$ and $Z_2$ is the stationary distribution shows that with probability at least $1-1/e$, $X_{s}$ has the same distribution as $\pi$. If this is the case, then the same holds for $X_{4 \tmix}$ as well. 
\end{proof}

The lemma above shows that for $\tmeet$ and $\tcoal$ it suffices to consider the stationary case:
\begin{lemma}\label{lem:avgmeet} 
For any graph $G$,
\[
  \max\{ (1/e) \tmix, \tavgmeet \} \leq \tmeet \leq \frac{2}{(1-1/e)^2} \cdot \left( 4 \tmix + 2 \tmeet^{\pi} \right),
\]
and similarly, $
  \tcoal \leq 4 \cdot (4 \tmix + 2 \tcoal^{\pi} ).$
\end{lemma}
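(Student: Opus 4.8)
\textbf{Proof proposal for Lemma~\ref{lem:avgmeet}.}

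The plan is to prove the three inequalities separately, all of them by reductions to the stationary case using the coupling result \autoref{lem:randomwalkcoupling}. First consider the lower bound $\max\{(1/e)\tmix, \tavgmeet\} \leq \tmeet$. The bound $\tavgmeet \leq \tmeet$ is immediate, since $\tavgmeet$ averages $\tmeet(u,v)$ over starting pairs drawn from $\pi \otimes \pi$, and every such value is at most $\tmeet = \max_{u,v}\tmeet(u,v)$. For the bound $(1/e)\tmix \leq \tmeet$: I would first establish (or cite from earlier) that $\tmeet \geq (1/e)\,\bar d(t)$-type control; more directly, since two independent walks meeting forces at least one of them to be at a fixed location, the meeting time dominates a suitable hitting-type quantity, and one can relate $\tmix$ to hitting times. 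A cleaner route: run one walk to stationarity and note that meeting the second (stationary) walk is at least as hard as the second walk hitting the (fixed, revealed) position of the first, so $\tmeet \geq \tavghit$-style bounds apply; combined with $\tmix = O(\tavghit)$ or the standard $\tmix \leq e\cdot\tmeet$ relation this gives the claim. I would look for the most economical of these; this direction is the least delicate part.

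For the upper bound on $\tmeet$, the key idea is: start two independent walks $X$ and $Y$ from arbitrary vertices, run both for $4\tmix$ steps, and apply \autoref{lem:randomwalkcoupling} to each. With probability at least $(1-1/e)^2$, both $X_{4\tmix}$ and $Y_{4\tmix}$ can be simultaneously coupled to independent samples from $\pi$ (independence of the two couplings holds because the walks are independent). Conditioned on this joint success, the remaining meeting time is stochastically dominated by a fresh meeting time from stationarity, whose expectation is $\tmeet^{\pi}$. If the coupling fails — which happens with probability at most $1 - (1-1/e)^2$ — we restart from the current positions after $4\tmix$ steps and repeat. The number of rounds until success is geometric with success probability $(1-1/e)^2$, each round costs $4\tmix$ steps plus, on the successful round, an additional $\tmeet^{\pi}$ in expectation. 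Summing, $\tmeet \leq \tfrac{1}{(1-1/e)^2}\big(4\tmix + (1-1/e)^2\,\tmeet^\pi\big) \cdot(\text{small constant})$; being slightly loose with constants gives the stated $\tfrac{2}{(1-1/e)^2}(4\tmix + 2\tmeet^\pi)$. One must be careful that after a failed round the walk positions are again arbitrary, so the same argument applies inductively — this is exactly what makes the geometric-series bookkeeping valid.

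For the bound $\tcoal \leq 4(4\tmix + 2\tcoal^\pi)$, the same template works with the coalescence process in place of two walks. Start the coalescence process from one walk per vertex, run all walks for $4\tmix$ steps. By \autoref{lem:randomwalkcoupling} applied to each surviving walk, each can be coupled to a $\pi$-sample with probability $\geq 1-1/e$; but here one cannot ask all (up to $n$) walks to couple simultaneously. Instead, the clean move is: after $4\tmix$ steps, the configuration of surviving walks is some set $S_{4\tmix}$, and $\tcoal \leq 4\tmix + \E{\tcoal(S_{4\tmix})}$, and one bounds $\tcoal(S_0')$ for an arbitrary configuration $S_0'$ by a constant times $\tcoal^\pi$ via a restart argument — run $4\tmix$ more steps, couple, and on success dominate by $\tcoal^\pi$. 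The constant $4$ absorbs the geometric factor $1/(1-1/e) < 4$ when one only needs a single successful coupling event to finish (since once the configuration is coupled to $\pi^{\otimes|S|}$, the rest is dominated by $\tcoal^\pi$; note $\tcoal^\pi$ is defined as starting from $|S|$ walks at stationary positions, and monotonicity in the number of walks handles $|S| \le n$).

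The main obstacle is the upper bound on $\tcoal$: unlike the two-walk case, one cannot couple all walks to stationarity simultaneously with constant probability, so the reduction to $\tcoal^\pi$ must be set up so that a \emph{single} coupling event per walk (or a monotonicity/domination argument over configurations) suffices, and one must verify that $\tcoal^\pi$ as defined dominates the coalescence time from any subset of at most $n$ stationary-distributed starting points — this requires a monotonicity statement (more walks cannot coalesce faster, and a configuration where positions are i.i.d.\ $\pi$ dominates the continuation). Getting this domination right, rather than the constant-chasing, is where the care is needed.
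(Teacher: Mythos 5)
Your treatment of the two $\tmeet$ inequalities is sound. The upper bound follows the same blueprint as the paper: run both walks for $4\tmix$ steps, use \autoref{lem:randomwalkcoupling} to couple each endpoint to a $\pi$-sample with independent success probability $\geq 1-1/e$, then control the continuation from stationarity. You phrase this as a restart recursion $T \leq 4\tmix + (1-1/e)^2\tmeet^\pi + (1-(1-1/e)^2)T$, which actually gives the slightly sharper $T \leq 4\tmix/(1-1/e)^2 + \tmeet^\pi$. The paper instead partitions time into fixed-length epochs of $4\tmix + 2\tmeet^\pi$ and applies Markov's inequality within the epoch to get a per-epoch meeting probability of $(1-1/e)^2/2$, then counts epochs geometrically; both are fine, and the epoch formulation is what produces the stated constants verbatim. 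For the lower bound, $\tavgmeet\leq\tmeet$ is immediate as you say; for $(1/e)\tmix\leq\tmeet$ you gesture at several routes and finally cite ``the standard $\tmix\leq e\tmeet$ relation'' without deriving it. The paper's one-line derivation is worth knowing: running the two walks independently for $t=e\cdot\tmeet$ steps, the coupling characterization of total variation gives $\bar d(t)\leq \Pr{\text{walks have not met by time }t}$, and Markov's inequality makes that $\leq 1/e$, hence $\tmix\leq e\tmeet$. Your alternative via hitting times would be a detour and would not cleanly recover the constant $1/e$.

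On the $\tcoal$ bound, you have put your finger on the real issue: the two-walk argument conditions on a simultaneous coupling event of probability $(1-1/e)^2$, and the naive extension to $k$ surviving walks would need $(1-1/e)^k$, which is useless. This is to your credit, because the paper itself waves this away with ``shown in exactly the same way.'' However, your proposed fix does not escape the trap. You write that one should bound $\tcoal(S_0')$ for an arbitrary configuration $S_0'$ by ``run $4\tmix$ more steps, couple, and on success dominate by $\tcoal^\pi$'' --- but that ``couple, and on success'' step is precisely the simultaneous-coupling event whose probability is exponentially small in $|S_0'|$. A restart recursion built on it would yield a bound exponential in $n$, not a constant times $\tcoal^\pi$. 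The monotonicity-in-the-number-of-walks observation you make is correct (and is exactly the content of the $\pimortal$ domination machinery elsewhere in the paper), but monotonicity alone does not convert ``each walk is individually close to $\pi$'' into ``the collection is dominated by i.i.d.\ $\pi$-samples.'' One would need a different idea --- for instance exploiting the fact that the underlying $n$ walk trajectories are genuinely independent (coalescence is just a labeling of which survives), together with a per-walk coupling and a separate accounting of the uncoupled walks, or else a duality-with-voter-model argument --- and your write-up does not supply any of these. So the $\tcoal$ inequality remains unproved in your proposal, just as it is essentially asserted rather than proved in the paper.
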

\begin{proof}
We begin by proving the lower bound on $\tmeet$. First, consider two independent random walks $(X_t)_{t\geq 0}$ and $(Y_t)_{t \geq 0}$ that are run for $t = e \cdot \tmeet$ time-steps.
Then, we have 
\[
  \bar d(t)=\max_{u,v} \tvdist{p_{u,\cdot}^t - p_{v,\cdot}^t} \leq \Pr{ \cup_{s=0}^t X_s = Y_s } \leq \frac{1}{e},
\]  
where the first inequality is due to the coupling method \cite[Theorem 5.3]{LPW06} and the second inequality follows by Markov's inequality. The above inequality implies $\tmix \leq e \cdot \tmeet$. Furthermore, $\tavgmeet \leq \tmeet$ holds by definition, and the lower bound follows.

For the upper bound, we divide the two random walks into consecutive epochs of length $\ell:=4 \tmix + 2 \tmeet^{\pi}$. For the statement it suffices to prove that in each such epoch, regardless of the start vertices of the two random walks, a meeting occurs with probability at least $(1-1/e)^2 \cdot 1/2$.

Consider the first random walk $(X_t)_{t \geq 0}$ starting from an arbitrary vertex after $s:=4\tmix$ steps. 
By \autoref{lem:randomwalkcoupling}, we obtain that with probability at least $1-1/e$, the distribution of $X_s$ is equal to that of a stationary random walk. Similarly, we obtain that with probability at least $1-1/e$, the distribution of $Y_s$ is equal to that of a stationary distribution. Hence with probability $(1-1/e)^2$, $X_s$ and $Y_s$ are drawn independently from the stationary distribution. In this case, it follows by Markov's inequality that the two random walks meet before step $s+2 \tmeet^{\pi}$ with probability at least $1/2$. Overall, we have shown that with probability at least $(1-1/e)^2 \cdot 1/2$, a meeting occurs in a single epoch. Since this lower bound holds for every epoch, independent of the outcomes in previous epochs, the upper bound on the expected time $\tmeet$ follows.
The upper bound on $\tcoal$ in terms of $\tcoal^{\pi}$ is shown in exactly the same way.
\end{proof}

\begin{lemma}\label{lem:fridaygift}
For a lazy random walk on an $n$-vertex graph with $n \geq 2$, we have $\thit \geq  \frac{2}{\pi_{\min}} - 2.$ 
In particular for $n\geq 2$, we have  $\thit \geq  1/\pi_{\min}\geq n$.
\end{lemma}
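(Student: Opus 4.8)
The plan is to exploit the classical identity that, for an irreducible positive-recurrent chain, the expected return time to a state equals the reciprocal of its stationary probability, and to combine this with the laziness of the walk. Concretely, I would fix a vertex $v$ with $\pi(v) = \pi_{\min}$ and let $\tau_v^+$ denote the first time $t \geq 1$ at which a walk started at $v$ is again at $v$. Since the lazy random walk on a finite connected graph is irreducible, Kac's (mean-return-time) formula gives $\mathbb{E}_v[\tau_v^+] = 1/\pi(v)$.

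Next I would condition on the first step of the walk. If the first step is a self-loop (probability $p_{v,v}$) then $\tau_v^+ = 1$; otherwise the walk moves to some $u \neq v$ and the remaining expected time to return is $\thit(u,v)$. Hence
\[
  \frac{1}{\pi(v)} = \mathbb{E}_v[\tau_v^+] = p_{v,v} \cdot 1 + \sum_{u \neq v} p_{v,u} \bigl( 1 + \thit(u,v) \bigr) = 1 + \sum_{u \neq v} p_{v,u} \cdot \thit(u,v),
\]
using $p_{v,v} + \sum_{u \neq v} p_{v,u} = 1$. The key point is that laziness forces $p_{v,v} = 1/2$, so $\sum_{u \neq v} p_{v,u} = 1/2$, and since $\thit(u,v) \leq \thit$ for every $u$ the last sum is at most $\tfrac12 \thit$. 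This yields $1/\pi_{\min} \leq 1 + \tfrac12 \thit$, i.e.\ $\thit \geq 2/\pi_{\min} - 2$, which is the first claim.

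For the ``in particular'' part I would simply note that $\sum_{u \in V} \pi(u) = 1$ forces $\pi_{\min} \leq 1/n$, so for $n \geq 2$ we get $\pi_{\min} \leq 1/2$, equivalently $1/\pi_{\min} \geq 2$. Substituting into $\thit \geq 2/\pi_{\min} - 2$ gives $\thit \geq 2/\pi_{\min} - 2 \geq 2/\pi_{\min} - 1/\pi_{\min} = 1/\pi_{\min} \geq n$, as desired.

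There is no real obstacle here: the only points needing a little care are invoking the return-time identity $\mathbb{E}_v[\tau_v^+] = 1/\pi(v)$ in the right generality (finite and connected suffices for irreducibility, and the mean-return identity does not even require aperiodicity), and carrying out the first-step decomposition cleanly, remembering that the term $1 + \thit(u,v)$ already accounts for the single step taken to reach $u$.
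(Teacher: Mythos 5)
Your proof is correct and follows essentially the same route as the paper's: fix a vertex attaining $\pi_{\min}$, apply Kac's return-time identity $\E{\tau^+} = 1/\pi_{\min}$, decompose on the first step using $p_{v,v} = 1/2$, and then bound the resulting average of hitting times by $\thit$ (the paper phrases this last step as the pigeonhole principle, but replacing each $\thit(u,v)$ by the maximum $\thit$ is the same argument).
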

Note that $\thit \geq  \frac{2}{\pi_{\min}} - 2$ is tight in the sense that the hitting time
of the clique is indeed $2(n-1)= \frac{2}{\pi_{\min}} - 2$ since the random walk moves w.p. $1/2$ and when it moves the probability to hit the target node is $1/(n-1)$ (assuming that the random walk is not on the target node).
\begin{proof}
Let $u$ be a vertex attaining $\pi_{\min} = \pi(u)$. 
Consider the random walks $(X_t)_{t\geq 0}$ starting at $u$.
Then it is well-known (cf. \cite{AF14}) that for the first return $\tau^+(u,u) := \min \{ t > 0: X_t = u, X_0 = u\}$, we have $\E{ \tau^+(u,u) }= 1/\pi(u) = 1/\pi_{\min}$. By conditioning on the first step of the random walk, we obtain 
\begin{align*}
  \frac{1}{\pi_{\min}} = \E{ \tau^+(u,u) } &= 1 + \frac{1}{2} \cdot 0 + \frac{1}{2} \sum_{v \in N(u)} \frac{1}{\deg(u)} \cdot \thit(v,u),
\end{align*}
and rearranging yields
\begin{align*}
 \frac{1}{\deg(u)} \cdot \sum_{v \in N(u)} \thit(v,u) &= \frac{2}{\pi_{\min}} - 2.
\end{align*}
 Now by the pigeonhole principle there exists a vertex $v \in N(u)$ with $\thit(v,u) \geq \frac{2}{\pi_{\min}} - 2$, and the first claim follows. The second part follows from observing that if $n \geq 2$ we have $\pi_{\min} \leq 1/2$ and thus
 $\thit \geq  \frac{2}{\pi_{\min}} - 2\geq \frac{1}{\pi_{\min}}\geq n$,
 where the last inequality follows from the simple pigeon hole principle. 
\end{proof}

\begin{observation}\label{lem:fresenius}
	Consider two random walks $(X_t)_{t \geq 0}$ and $(Y_t)_{t \geq 0}$ starting on nodes drawn from the stationary distribution.
	Fix an arbitrary $t\in \mathbb{N}$.
	Define the collision-counting random variables $Z_1=\sum_{i=0}^{\ceil{t/2}} \mathbf{1}_{X_t=Y_t}$,
	$Z_2=\sum_{i=\ceil{t/2}+1}^{t} \mathbf{1}_{X_t=Y_t}$,
	and $Z=Z_1 + Z_2$.
	Then $\Pr{Z_1 \geq 1 \, \mid \, Z \geq 1} \geq \frac{1}{2}$.
\end{observation}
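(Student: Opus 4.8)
The plan is to reduce the claim to a union bound together with the shift-invariance of the ``product walk'' $(X_s,Y_s)_{s\ge0}$. Write $\{Z\ge1\}$ for the event that $X_s=Y_s$ for some $s\in\{0,\dots,t\}$, and $\{Z_1\ge1\}$, $\{Z_2\ge1\}$ for the analogous events restricted to $s\in\{0,\dots,\ceil{t/2}\}$ and to $s\in\{\ceil{t/2}+1,\dots,t\}$ respectively. Then $\{Z\ge1\}=\{Z_1\ge1\}\cup\{Z_2\ge1\}$, so $\Pr{Z\ge1}\le\Pr{Z_1\ge1}+\Pr{Z_2\ge1}$, and since (for a connected graph) $\Pr{Z_1\ge1}\ge\Pr{X_0=Y_0}>0$, we may write
\[
  \Pr{Z_1\ge1 \,\mid\, Z\ge1}=\frac{\Pr{Z_1\ge1}}{\Pr{Z\ge1}}\ge\frac{\Pr{Z_1\ge1}}{\Pr{Z_1\ge1}+\Pr{Z_2\ge1}}.
\]
Thus it suffices to prove $\Pr{Z_2\ge1}\le\Pr{Z_1\ge1}$, which then yields $\Pr{Z_1\ge1\mid Z\ge1}\ge\tfrac12$.

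For the remaining inequality I would use that, since $X$ and $Y$ are independent walks each started from $\pi$, the pair process $(X_s,Y_s)_{s\ge0}$ is a stationary Markov chain on $V\times V$: its transition matrix is $P\otimes P$, for which $\pi\otimes\pi$ is a stationary distribution. Hence for any $a\le b$ the law of $(X_a,Y_a,\dots,X_b,Y_b)$ equals that of $(X_0,Y_0,\dots,X_{b-a},Y_{b-a})$, so the probability of a collision somewhere in a block of consecutive time-steps depends only on the length of the block. The block $\{\ceil{t/2}+1,\dots,t\}$ has exactly $\floor{t/2}$ time-steps, so $\Pr{Z_2\ge1}$ equals the probability of a collision in $\{0,\dots,\floor{t/2}-1\}$; and since $\floor{t/2}-1\le\ceil{t/2}$, that block is contained in $\{0,\dots,\ceil{t/2}\}$, whence the corresponding collision event is a subevent of $\{Z_1\ge1\}$. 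Therefore $\Pr{Z_2\ge1}\le\Pr{Z_1\ge1}$, as needed.

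The only step that requires a moment's care is the shift-invariance claim, but it is immediate from stationarity of $P\otimes P$, so I do not expect any real obstacle; the whole argument is a union bound plus a single use of stationarity, and it uses neither laziness nor reversibility of the chain, only that $\pi$ is stationary for $P$. (The statement contains a harmless typo: the indicator in the definitions of $Z_1$ and $Z_2$ should be $\mathbf{1}_{X_i=Y_i}$ rather than $\mathbf{1}_{X_t=Y_t}$; the proof above reads the sums as counting collisions at time $i$.)
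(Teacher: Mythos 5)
Your proof is correct and follows essentially the same route as the paper's: a union bound $\Pr{Z\geq1}\leq\Pr{Z_1\geq1}+\Pr{Z_2\geq1}$ combined with the inequality $\Pr{Z_2\geq1}\leq\Pr{Z_1\geq1}$. The paper merely asserts the latter inequality ``since both walks start from the stationary distribution,'' whereas you supply the justification — stationarity of the product chain $(X_s,Y_s)$ gives shift-invariance, and the second block has $\lfloor t/2\rfloor$ steps, which is at most the $\lceil t/2\rceil+1$ steps of the first — so your write-up is actually the more complete version of the same argument.
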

\begin{proof}
	  Since both nodes start from the stationary distribution, $\Pr{Z_1 \geq 1} \geq  \Pr{Z_2 \geq 1}$. By the Union bound,
 $\Pr{Z \geq 1} \leq \Pr{Z_1 \geq 1} + \Pr{Z_2 \geq 1} \leq 2 \cdot \Pr{Z_1 \geq 1}.$
By law of total probability, $\Pr{Z_1}=\Pr{Z_1 \geq 1 ~|~ Z\geq 1}\cdot \Pr{Z\geq 1}$.
Putting everything together yields $\Pr{Z_1 \geq 1 \, \mid \, Z \geq 1} \geq \frac{1}{2}$.
\end{proof}
\begin{lemma}\label{lem:drift} 
Let $(X_t)_{t\geq 0}$ be a stochastic process satisfying (i)
$\E{X_{t} | \mathcal{F}_{t-1} } \leq  \beta \cdot X_{t-1}$, for some $\beta < 1$, and (ii) $X_t \geq 0$ for all $t \geq 0$.
Let $\tau(g) := \min\{ t\geq 0|  X_t \leq g \}$  	for $g \in (0,|X_0|)$, then
\[
\E{\tau(g) } \leq   2 \cdot \ceil{\log_{\beta}(g/(2X_0))}. 
\]
\end{lemma}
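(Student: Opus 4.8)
\textbf{Proof proposal for \autoref{lem:drift}.}

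The plan is to combine a standard supermartingale-type argument for multiplicative drift with a concentration/restart trick, since condition (i) only controls the \emph{expected} decrease per step. First I would observe that the condition $\E{X_t \mid \mathcal F_{t-1}} \le \beta X_{t-1}$ iterates to $\E{X_t} \le \beta^t X_0$ for all $t \ge 0$ by the tower property. Now fix the target value $g$ and set $t^{\star} := \ceil{\log_\beta\bigl(g/(2X_0)\bigr)}$, so that $\beta^{t^{\star}} X_0 \le g/2$ (using $\beta<1$, so $\log_\beta$ is decreasing and the ceiling points the right way). Then by Markov's inequality, since $X_{t^{\star}} \ge 0$,
\[
  \Pr{ X_{t^{\star}} > g } \le \Pr{ X_{t^{\star}} \ge 2\beta^{t^{\star}} X_0 } \le \frac{\E{X_{t^{\star}}}}{2\beta^{t^{\star}} X_0} \le \frac{\beta^{t^{\star}} X_0}{2 \beta^{t^{\star}} X_0} = \frac12 .
\]
Hence in any window of $t^{\star}$ consecutive steps, starting from \emph{any} state, the process drops to at most $g$ with probability at least $1/2$ — here I would use the Markov property / the fact that condition (i) is stated for all $t$ relative to $\mathcal F_{t-1}$, so the same bound applies when restarting the clock at the beginning of each window conditioned on $X$ not yet having reached $g$.

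The second step is to turn this per-window success probability into the bound on $\E{\tau(g)}$. Partition time into consecutive blocks of length $t^{\star}$. If $\tau(g)$ has not occurred by the start of block $j$, then block $j$ succeeds (drives $X$ to $\le g$ by its end) with probability $\ge 1/2$, independently of the past in the relevant sense — more precisely, conditioning on $\mathcal F_{(j-1)t^{\star}}$ and on $X_{(j-1)t^{\star}} > g$, the conditional probability that $X_{j t^{\star}} \le g$ is at least $1/2$ by the computation above (applied with the starting state $X_{(j-1)t^{\star}}$, which only makes $\beta^{t^{\star}} X_{(j-1)t^{\star}}$ smaller relative to $g$ if anything — but we just need the crude $1/2$). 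Therefore the number of blocks $N$ until the first success is stochastically dominated by a geometric random variable with parameter $1/2$, so $\E{N} \le 2$, and $\tau(g) \le N \cdot t^{\star}$, giving
\[
  \E{\tau(g)} \le t^{\star} \cdot \E{N} \le 2 \ceil{\log_\beta\bigl(g/(2X_0)\bigr)},
\]
which is exactly the claimed bound.

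The main obstacle — or rather the only subtle point — is justifying that the per-block success events can be lower-bounded by $1/2$ \emph{conditionally}, i.e.\ that we may "restart" the Markov-inequality argument at the beginning of each block. This needs care because condition (i) is a one-step drift condition, not a statement about $X_{(j-1)t^{\star}+t^{\star}}$ given $X_{(j-1)t^{\star}}$; but iterating (i) inside the block gives $\E{X_{(j-1)t^{\star}+s} \mid \mathcal F_{(j-1)t^{\star}}} \le \beta^{s} X_{(j-1)t^{\star}}$ for $0 \le s \le t^{\star}$, and then Markov applied to this conditional expectation (with the observation $X_{(j-1)t^{\star}} \le X_0$ on the event that we have not yet stopped? — actually we don't even need monotonicity, we only need $\beta^{t^{\star}} X_{(j-1)t^{\star}} \le \beta^{t^{\star}} X_0 \le g/2$, and $X$ being nonincreasing in expectation makes $X_{(j-1)t^{\star}} \le X_0$ fail in general, so instead one should just note $\beta^{t^{\star}}X_{(j-1)t^{\star}} \le \tfrac{g}{2X_0} X_{(j-1)t^{\star}}$ and use Markov to get $\Pr{X_{jt^{\star}} > g \mid \mathcal F_{(j-1)t^{\star}}} \le \tfrac{\beta^{t^{\star}} X_{(j-1)t^{\star}}}{g} \le \tfrac{X_{(j-1)t^{\star}}}{2X_0}$, which is $\le 1/2$ whenever $X_{(j-1)t^{\star}} \le X_0$). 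Since $\E{X_t}$ is nonincreasing this is a.s.\ not literally true, so the cleanest fix is: don't insist $X_{(j-1)t^{\star}} \le X_0$; instead, observe that on the complement of $\{\tau(g) \le (j-1)t^{\star}\}$ we trivially do not need the bound, and on that complement run the Markov estimate with the actual value $X_{(j-1)t^{\star}}$, combined with the fact that a supermartingale started anywhere reaches $g/2$ times its start in expectation. If one prefers to avoid this entirely, a fully rigorous alternative is to work directly with the supermartingale $M_t := X_t \beta^{-t}$ (which satisfies $\E{M_t \mid \mathcal F_{t-1}} \le M_{t-1}$) and apply the optional stopping theorem at $\tau(g) \wedge T$ for large $T$; I would present whichever of these two routes is shorter, but the block/geometric argument is the one matching the factor $2$ in the statement.
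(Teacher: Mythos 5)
The block/geometric structure is the right shape, and you correctly identify that the restart step is the weak point — but your proposed fixes don't close it. Conditioning on $\{\tau(g) > (j-1)t^{\star}\}$ you know $X_{(j-1)t^{\star}} > g$ but have no upper bound on it; $X_{(j-1)t^{\star}} \le X_0$ can fail, and conditioning on survival can skew the value upward, so the conditional Markov bound only gives $\Pr{X_{jt^{\star}} > g \mid \mathcal F_{(j-1)t^{\star}}} \le X_{(j-1)t^{\star}}/(2X_0)$, which need not be $\le 1/2$. The remark that ``a supermartingale started anywhere reaches $g/2$ times its start in expectation'' doesn't help, since what you need is control on the starting value itself, not the value after $t^{\star}$ steps. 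The optional-stopping alternative you sketch also does not obviously produce the claimed factor $2$: $M_t = X_t\beta^{-t}$ a supermartingale gives $\E{X_{\tau\wedge T}\,\beta^{-(\tau\wedge T)}} \le X_0$, which is not directly a bound on $\E{\tau}$.

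The paper's proof sidesteps conditioning entirely, and this is the missing idea. The inclusion $\{\tau(g) > \lambda t^{\star}\} \subseteq \{X_{\lambda t^{\star}} > g\}$ holds deterministically, so Markov can be applied to the \emph{unconditional} expectation $\E{X_{\lambda t^{\star}}} \le \beta^{\lambda t^{\star}} X_0$. With $t^{\star} := \ceil{\log_\beta(g/(2X_0))}$ (so $\beta^{t^{\star}} \le g/(2X_0)$) and $g < X_0$,
\[
\Pr{\tau(g) > \lambda t^{\star}} \le \Pr{X_{\lambda t^{\star}} > g} \le \frac{\beta^{\lambda t^{\star}} X_0}{g} \le \left(\frac{g}{2X_0}\right)^{\lambda}\frac{X_0}{g} = 2^{-\lambda}\left(\frac{g}{X_0}\right)^{\lambda-1} \le 2^{-\lambda},
\]
which gives the geometric tail at the fixed deterministic times $\lambda t^{\star}$ with no restart argument at all. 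Then $\E{\tau(g)} = \sum_{i\ge 1}\Pr{\tau(g)\ge i} \le t^{\star} + \sum_{\lambda \ge 1} t^{\star}\cdot 2^{-\lambda} = 2t^{\star}$. All your ingredients (iterating the drift, Markov, geometric summation) are present; the repair is simply to run Markov unconditionally at times $\lambda t^{\star}$ rather than trying to restart the clock inside each block.
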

\begin{proof}
By the iterative law of expectation, we have
\begin{align*}
 \E{X_t} &\leq \beta^{t} \cdot X_0.
\end{align*}
Furthermore, by Markov's inequality, for any $\lambda \geq 1$
\begin{align*}
 \Pr{ \tau(g) > \lambda \cdot \ceil{\log_{\beta}(g/(2X_0)}} \leq \Pr{ X_{ \lambda \cdot \ceil{\log_{\beta}(g/(2X_0)}   } > g} \leq 
 \frac{ \E{	X_{ \lambda \cdot \ceil{\log_{\beta}(g/(2X_0)}}}}{g} \leq 2^{-\lambda}.
\end{align*}
Therefore,
\begin{align*}
 \E{ \tau(g) } &= \sum_{i=1}^{\infty} \Pr{ \tau(g) \geq i} \\
 &\leq \ceil{\log_{\beta}(g/(2X_0))} + \sum_{\lambda=1}^{\infty} \ceil{\log_{\beta}(g/(2X_0))} \cdot \Pr{ \tau(g) > \lambda \cdot \ceil{\log_{\beta}(g/(2X_0)}} \\
 &\leq 2 \cdot \ceil{\log_{\beta}(g/(2X_0))}.
\end{align*}

\end{proof}

\section{Bounding  $\tmeet$ and Implications for $\tcoal, \thit$ and $\tcov$}
\label{sec:meeting}

Although the focus of this work is on understanding the coalescence time, in
order to apply our general results, we need to devise some tools to obtain
lower and upper bounds on $\tmeet$. In 
\autoref{thm:meetingtime} (\autoref{sec:meetpi}) we establish upper and lower
bounds on the meeting time in terms of $\twonorm{\pi}^2=\sum_{u \in V} \pi(u)^2$. 
\autoref{sec:meetpi} contains several additional upper bounds on $\tmeet$ and $\thit$. Through combination with other results, we also obtain new bounds on $\tcoal$ and $\tcov$. A common feature of many of these bounds is a sub-linear dependence on the spectral gap $1/(1-\lambda_2)$, which we obtain by an application of short-term bounds on the $t$-step transition probabilities.

In
\autoref{pro:relatingmeetandhit} (\autoref{sec:meetvt}) we establish a
discrete-time counterpart of \cite[Proposition 14.5]{AF14}, albeit with worse
constants, stating that the meeting time is at most of the order of the hitting
time; on vertex transitive graphs these quantities are asymptotically of the
same order. 

\subsection{Relating Meeting Time to $\tmix$ and $\frac{1}{1-\lambda_2}$}\label{sec:meetpi} 
We first state some basic bounds on $\thit$ and $\tmeet$, which mostly follow directly from \eqref{eq:central} and its counterpart for the hitting times (cf.~Cooper, Frieze~\cite{CF05}). In these bounds, we will use the following notation: 
\begin{align*}
C_{\max} &:= \max_{u \in V}\sum_{t=0}^{\tmix-1} \sum_{v \in V} \left( p_{u,v}^{t} \right)^2, \\
C_{\min} &:= \min_{u \in V}\sum_{t=0}^{\tmix-1} \sum_{v \in V} \left( p_{u,v}^{t} \right)^2.
\end{align*}
Note that $C_{\max}$ and $C_{\min}$ provide worst-case upper respective lower bounds on the expected collisions of two independent random walks of length $\tmix$, starting from the same vertex $u$.
Similarly, we define
\begin{align*}
 R_{\max} := \max_{u \in V} \sum_{t=0}^{\tmix-1} p_{u,u}^t.
\end{align*}
Note that $R_{\max}$ is the number of expected returns of a random walk to $u$ during $\tmix$ steps. This quantity is more convenient to bound than $C_{\max}$, for instance, it can be easily bounded by $\max_{u \in V} \pi(u) \cdot \tmix + \frac{1}{1-\lambda_2}$ (cf.~\autoref{lem:loop}, or also~\cite{CEOR13}).

\begin{theorem} \label{thm:meetingtime}
For any graph $G=(V,E)$, the following statements hold:
\begin{enumerate}[(i)]
\item For any pair of vertices $u,v \in V$, 
\[
  \thit(u,v) \leq \frac{5e \cdot (\sum_{t=0}^{\tmix-1} p_{v,v}^t)}{\pi(v)}.
\]
In particular, if the graph $G$ is $\gap$-approximative regular, then 
$
  \thit(u,v) \leq 5e \cdot \gap \cdot n \cdot \sum_{t=0}^{\tmix-1} p_{v,v}^t.
$
	\item For any pair of vertices $u,v \in V$,
\[
 \tmeet(u,v) \leq \frac{5e^2 \cdot C_{\max}}{\| \pi \|_2^2}.
\] 
In particular, if the graph $G$ is $\gap$-approximative regular, then
\[
  \tmeet(u,v) \leq \frac{10 e^2 \cdot (4 + \log_2 (\gap)) \cdot R_{\max}}{\| \pi \|_2^2}.
\]
	\item It holds that,
\[
\tavgmeet \geq \frac{ C_{\min} }{64 \| \pi \|_2^2 }.
 \]
 In particular, if the graph $G$ is $\gap$-approximative regular, then $\tavgmeet=\Omega(n/\gap)$.
\end{enumerate}
\end{theorem}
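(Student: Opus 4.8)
The plan is to prove all three parts from the identity \eqref{eq:central} together with its pointwise form \eqref{eq:simp}, instantiated for suitable counting variables and a well-chosen time horizon $\tau$. In each part I would take $\tau$ to be a constant multiple of the claimed bound, run the relevant walk(s) for $\tau$ steps, lower bound the relevant expectation (which is easy once $\tau \ge \tsep$: after $\tsep \le 4\tmix$ steps a walk is within a factor $1-1/e$ of $\pi$ pointwise, by the definition of separation and \autoref{lem:randomwalkcoupling}), upper bound the conditional expectation given a ``success'' by the corresponding \emph{Green's function} over $[0,\tau)$, and then convert ``success with probability $\ge 1/2$ within every $\tau$-block'' into an expectation bound via a restart argument.

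For part (i), let $N_\tau(u,v) = \sum_{t=0}^{\tau-1}\mathbf 1_{X_t=v}$ with $X_0=u$, so that $\Pr{\hit(u,v)<\tau} = \E{N_\tau(u,v)}/\E{N_\tau(u,v) \mid N_\tau(u,v)\ge1}$ by \eqref{eq:central}. The numerator is $\sum_{t<\tau} p_{u,v}^t \ge (1-1/e)(\tau - 4\tmix)\pi(v)$ for $\tau \ge 4\tmix$; the denominator is at most $1 + \sum_{t<\tau} p_{v,v}^t$ by the strong Markov property (after the first visit to $v$ the further visits form a return count). The crux is to bound $\sum_{t<\tau} p_{v,v}^t$ by a constant multiple of $R:=\sum_{t<\tmix} p_{v,v}^t$ plus the unavoidable $\tau\pi(v)$: using the spectral representation $p_{v,v}^t - \pi(v) = \pi(v)\sum_{k\ge2} u_k(v)^2\lambda_k^t$ (laziness forces $\lambda_k\ge 0$, so every summand is non-negative and $p_{v,v}^t$ is non-increasing, cf.\ \autoref{lem:loop}) and the fact that $\tmix = \Omega(\trel)$ makes $\lambda_k^{\tmix}$ bounded away from $1$ by a universal constant for all $k\ge 2$, one gets $\sum_{t\ge\tmix}(p_{v,v}^t-\pi(v)) = O\!\big(\sum_{t<\tmix}(p_{v,v}^t-\pi(v))\big) = O(R)$. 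Since $p_{v,v}^t\ge\pi(v)$ gives $R\ge\tmix\pi(v)$, choosing $\tau$ a large enough constant times $R/\pi(v)$ makes $\tau\gg 4\tmix\ge\tsep$ and yields $\Pr{\hit(u,v)<\tau}\ge 1/2$ uniformly in $u$; a restart argument then gives $\thit(u,v)\le 2\tau = O(R/\pi(v))$, and careful accounting of the constants yields the stated $5e$. The approximatively regular consequence is immediate from $1/\pi(v)=2m/\deg(v)\le \gap\, n$.

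Part (ii) runs the same way with the collision counter $Z=\sum_{t<\tau}\mathbf 1_{X_t=Y_t}$ for independent walks $X,Y$ from arbitrary starts: $\E Z\ge(1-1/e)^2(\tau-4\tmix)\twonorm{\pi}^2$ (for $t\ge\tsep$ both walks are within $1-1/e$ of $\pi$ pointwise, so $\Pr{X_t=Y_t}\ge(1-1/e)^2\twonorm{\pi}^2$), while $\E{Z\mid Z\ge1}\le 1+\max_w\sum_{t<\tau}\sum_v(p_{w,v}^t)^2$ by \eqref{eq:simp}; splitting that sum at $\tmix$ produces $C_{\max}$ plus a tail controlled exactly as in (i) after passing to $\sum_v(p_{w,v}^t)^2\le\frac{\pi_{\max}}{\pi(w)}\,p_{w,w}^{2t}$ by reversibility and using monotonicity of $p_{w,w}^{2t}$. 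Taking $\tau$ a constant times $C_{\max}/\twonorm{\pi}^2$ (which one checks exceeds $\tsep$) and restarting gives $\tmeet(u,v)=O(C_{\max}/\twonorm{\pi}^2)$. For the refined regular bound use $\sum_v(p_{w,v}^t)^2\le\gap\, p_{w,w}^{2t}\le\gap\, p_{w,w}^t$; the improvement from a naive factor $\gap$ to $4+\log_2\gap$ comes from only needing to sum $\sum_v(p_{w,v}^t)^2$ until it has dropped to $O(\twonorm{\pi}^2)$ --- a dyadic argument on the non-increasing sequence $p_{w,w}^{2t}$ costing $O(\log\gap)$ blocks --- with the remainder absorbed into the $\tau\twonorm{\pi}^2$ term already accounted for. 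Part (iii) is the matching lower bound: by Markov, $\tavgmeet\ge\ell\cdot\Pr{\text{no collision in }[0,\ell)}$ for $\ell$ a small constant times $C_{\min}/\twonorm{\pi}^2$, and with $Z$ the stationary collision count over $[0,\ell)$, \eqref{eq:central} gives $\Pr{Z\ge1}=\ell\twonorm{\pi}^2/\E{Z\mid Z\ge1}$; one lower bounds $\E{Z\mid Z\ge1}$ by a constant fraction of $C_{\min}$ using that the first collision falls in the first half of the window with probability $\ge1/2$ (\autoref{lem:fresenius}) together with monotonicity of the self-collision profile (which for regular graphs is exactly $\sum_v(p_{w,v}^r)^2=p_{w,w}^{2r}$, and in the approximatively regular case is monotone up to a $\gap^{O(1)}$ factor), whence $\Pr{Z\ge1}\le 1/2$ and $\tavgmeet=\Omega(C_{\min}/\twonorm{\pi}^2)$; the $\Omega(n/\gap)$ corollary then follows from $C_{\min}\ge1$ and $\twonorm{\pi}^2\le\pi_{\max}\le\gap/n$.

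The main obstacle in all three parts is the same: bounding the ``post-mixing tail'' of the relevant Green's function (of returns for (i), of self-collisions for (ii) and (iii)) by the ``pre-mixing head'', which is precisely what forces the argument to lean on laziness (non-negative eigenvalues, hence monotone profiles) rather than on mere submultiplicativity of $\bar d$, and on $\tmix=\Omega(\trel)$. A secondary subtlety I would dispatch by a short case distinction is that for genuinely non-regular graphs the \emph{unweighted} profile $\sum_v(p_{w,v}^r)^2$ need not be monotone, so the ``first half contains a constant fraction'' step in the general-graph case of (iii) must be argued separately when $C_{\min}$ is of intermediate size (between a constant and $\Theta(\tmix\twonorm{\pi}^2)$); for the regular and approximatively regular statements, which are the ones used elsewhere in the paper, this does not arise because then $C_{\min}\ge\tmix\twonorm{\pi}^2$ automatically.
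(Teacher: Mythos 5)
Your plan is correct in spirit for the first two parts, but it takes a noticeably different route from the paper. Instead of considering a walk of length $\tau$ (a large multiple of the claimed bound) and then controlling the entire Green's function $\sum_{t<\tau}p_{v,v}^t$ spectrally, the paper observes \emph{only} a window of length $\tmix$ starting at time $\tsep$, i.e., it sets $Z := \sum_{t=\tsep}^{\tsep+\tmix-1}\mathbf{1}_{X_t=v}$. After the first visit inside that window, at most $\tmix$ steps remain, so $\E{Z \mid Z\geq 1}$ is bounded by $\sum_{t=0}^{\tmix-1}p_{v,v}^t$ \emph{by definition} of the window, with no spectral tail estimate whatsoever. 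The restart argument with $5\tmix$-length epochs then produces $\thit(u,v)\leq 5\tmix/p$ directly. Your route — bounding $\sum_{t\geq\tmix}(p_{v,v}^t-\pi(v))$ by a constant multiple of $\sum_{t<\tmix}(p_{v,v}^t-\pi(v))$ using non-negative eigenvalues and $\tmix=\Omega(\trel)$ — can be made to work, and the key ingredient you identify is the right one, but it is genuinely harder and buys you nothing; the paper's short-window trick sidesteps it entirely. The same remark applies to the general upper bound in (ii). Your concern in (iii) about monotonicity of $\sum_v(p_{w,v}^r)^2$ is misplaced: the paper does not use monotonicity. It fixes a window of length exactly $2\tmix$ so that, conditional on the first collision being in the first half (which holds w.p.\ $\geq 1/2$ by \autoref{lem:fresenius}), the remaining window already has length $\geq \tmix$ and yields at least $C_{\min}$ expected further collisions. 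The only case distinction needed is whether $2\tmix$ already exceeds the target bound, not one about $C_{\min}$ being of ``intermediate size.''

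There is a genuine gap in your treatment of the refined bound in (ii), the statement $\tmeet(u,v)\leq \frac{10e^2(4+\log_2\gap)R_{\max}}{\|\pi\|_2^2}$. Your proposed ``dyadic argument on the non-increasing sequence $p_{w,w}^{2t}$ costing $O(\log\gap)$ blocks'' does not produce the factor $\log\gap$: the sequence $\sum_v(p_{w,v}^t)^2$ decays from $1$ to roughly $\twonorm{\pi}^2$, a ratio of order $1/\twonorm{\pi}^2$, which is $\Theta(n)$ rather than $\Theta(\gap)$, so a dyadic decomposition on this scale costs $\Theta(\log n)$ blocks, not $O(\log\gap)$. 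The paper's actual mechanism is different and does not appear in your plan: partition $V$ into degree buckets $S_i := \{u : \deg(u)\in[2^{i-1},2^i)\}$. By $\gap$-approximate regularity at most $4+\log_2\gap$ of these are non-empty, so there is a bucket $S_j$ carrying a $\frac{1}{4+\log_2\gap}$ fraction of $\twonorm{\pi}^2$. Counting collisions only on $S_j$, the numerator loses only a factor $4+\log_2\gap$, while the conditional expectation now involves vertices whose degrees differ by at most a factor of $2$, so reversibility $p_{u,v}^t\pi(u)=p_{v,u}^t\pi(v)$ gives $\E{Z_j\mid Z_j\geq 1}\leq 2R_{\max}$ rather than $\gap R_{\max}$. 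This localization to a degree-homogeneous bucket is the essential idea your argument is missing.
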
 
Since $C_{\min} \geq 1$, the last statement of the lemma implies also $ \tavgmeet =\Omega( \frac{1}{ \| \pi \|_2^2} ).$
We remark that the second upper bound on $\tmeet(u,v)$ depends only logarithmically on $\gap$. 

\begin{proof} 
We begin by proving the first part. Consider one random walk $(X_t)_{t \geq 0}$, starting from an arbitrary vertex.
Divide the time-interval into consecutive epochs of length $\tsep + \tmix$, and let \[ Z:=\sum_{t=\tsep}^{\tsep + \tmix-1} \mathbf{1}_{X_t=v}\] denote the number of visits. Then, by the separation time, $\E{Z} \geq  \tmix \cdot \frac{1}{e} \cdot \pi(v)$, and (\ref{eq:central}) yields
\begin{align*}
   \Pr{Z \geq 1 } &\geq \frac{\tmix \cdot \frac{\pi(v)}{e}}{\E{Z \, \mid \, Z \geq 1 }}.
\end{align*} 
Clearly, $\E{Z \, \mid \, Z \geq 1} \leq \max_{t=0}^{\tmix-1} \sum_{s=t}^{\tmix} p_{v,v}^{s-t} \leq \sum_{t=0}^{\tmix-1} p_{v,v}^t $.
Hence,
\begin{align*}
  \Pr{Z \geq 1} &\geq \frac{\tmix \cdot \frac{\pi(v)}{e}}{ \sum_{t=0}^{\tmix-1} p_{v,v}^t } =: p.
\end{align*}
This means that in every epoch of length $\tsep+\tmix \leq 5 \tmix$, the random walk has a probability of at least $p$ to visit vertex $v$, and this is independent of any previous epoch. Therefore, the expected number of steps until $v$ is visited is upper bounded by 
\begin{align*}
   \thit(u,v) \leq 5 \tmix \cdot \frac{1}{p} \leq \frac{5e \cdot \sum_{t=0}^{\tmix-1} p_{v,v}^t}{\pi(v)}.
\end{align*}
The claim for $\gap$-approximative regular graphs follows from the observation that $\min_{u \in V} \pi(u) \geq 1/(\gap n)$.
We continue with the second part. Consider two independent random walks, $(X_t)_{t \geq 0}$, $(Y_t)_{t
	\geq 0}$ of length $\tsep + \tmix$ with arbitrary start vertices.  Let $Z$
	be the random variable counting the number of collisions between steps $\tsep$ and $\tsep + \tmix-1$, \ie
	\begin{align*}
	  Z :&= \sum_{t=\tsep}^{\tsep + \tmix-1} \mathbf{1}_{X_t = Y_t}.
	\end{align*}
	By linearity of expectation,
	\begin{align}
	  	\E{Z} &= \sum_{t=\tsep}^{\tsep + \tmix-1} \sum_{u \in V} \Pr{X_t= u} \cdot \Pr{Y_t=u} \geq
		\tmix \cdot \frac{1}{e^2} \cdot \| \pi \|_2^2. \label{eq:first}
	\end{align}
	Let us now consider $\E{ Z \, \mid \, Z \geq 1}$ and recall that conditioning on $Z \geq 1$ can be regarded as jumping to the first step $
	\tau := \min \{ t \colon \tsep \leq t \leq \tmix-1, X_{\tau} = Y_{\tau}\}$ without knowing anything about the future steps $t > \tau$ of both walks.
	 Therefore, 
	\begin{align*}
	 \E{Z \, \mid \, Z \geq 1} \leq \max_{u \in V}\sum_{t=0}^{\tmix-1} \sum_{v \in V} \left( p_{u,v}^{t} \right)^2 = C_{\max}.
	\end{align*}
	Plugging this into \eqref{eq:central} and using \eq{first} we finally	 arrive at
	\begin{align*}
		\Pr{ Z \geq 1} &= \frac{\E{Z}}{\E{Z \, \mid \, Z \geq 1}} \geq \frac{\tmix \cdot
		\frac{1}{e^2} \cdot \| \pi \|_2^2}{C_{\max}} =: p.
	\end{align*}
	Hence,
	\begin{align*}
	  \tmeet(u,v) \leq 5 \tmix \cdot \frac{1}{p} \leq \frac{5e^2 \cdot C_{\max}}{\| \pi \|_2^2}.
	\end{align*}
	Let us derive the result for $\gap$-approximate regular graphs. To this end, define 
	\[
	S_i := \left\{ u \in V \colon \deg(u) \in [2^{i-1},2^{i}) \right\},
	\] and note that $S_0,\ldots,S_{\log_2 n}$ forms a partition of $V$. Since the graph is $\gap$-approximate regular, at most $4 + \log_2(\gap)$ of the $S_i$'s are non-empty.
	 Hence there exists a set $S_j$ with
	\[
	   \sum_{j \in S_i} \pi(j)^2 \geq \frac{1}{4 + \log_2(\gap)} \cdot \| \pi \|_2^2.
	\]
	Let us now by $Z_j$ denote the collisions on the set $S_j$, \ie
		\begin{align*}
	  Z_j :&= \sum_{t=\tsep}^{\tsep + \tmix-1} \mathbf{1}_{\{X_t = Y_t\} \cap \{X_t \in S_j\}}.
	\end{align*}
	Then,
	\[
	  \E{Z_j} = 
	  \sum_{t=\tsep}^{\tsep + \tmix} \sum_{u \in S_j} \Pr{X_t= u} \cdot \Pr{Y_t=u} 
	  \geq \tmix \cdot \sum_{j \in S_i} \frac{1}{e^2} \cdot \pi(j)^2 \geq \frac{\tmix}{e^2} \cdot \frac{1}{  4 + \log_2(\gap)} \cdot \|\pi\|_2^2.
	\]  
	Furthermore, 
	\begin{align*}
	   \E{Z_j \, \mid \, Z_j \geq 1} &\leq \max_{u \in S_i}\sum_{t=0}^{\tmix} \sum_{v \in S_i} \left( p_{u,v}^{t} \right)^2 \leq \max_{u \in S_i}\sum_{t=0}^{\tmix} \sum_{v \in S_i}  p_{u,v}^{t} \cdot 2 p_{v,u}^t,
	\end{align*}
	having used reversibility, \ie $p_{u,v}^t \pi(u) = p_{v,u}^t \pi(v)$ and $\pi(v) / \pi(u) \leq 2$ by definition of $S_i$. Further,
	\begin{align*}
	 \E{Z_j \, \mid \, Z_j \geq 1} \leq 2 \cdot \max_{u \in S_i} \sum_{t=0}^{\tmix -1} \sum_{v \in V}  p_{u,v}^{t} \cdot p_{v,u}^t \leq 2 \cdot \max_{u \in V} \sum_{t=0}^{\tmix -1} p_{u,u}^{2t} \leq 2 \cdot \max_{u \in V} \sum_{t=0}^{\tmix -1 }  p_{u,u}^t = 2 \cdot R_{\max}, 
	\end{align*}
	where the last inequality holds since $p_{u,u}^t$ is non-increasing by \autoref{lem:loop}. 
	Hence, similarly as before,
	\begin{align*}
	  \tmeet(u,v) \leq 5 \tmix \cdot \frac{2 \cdot R_{\max}}{\frac{\tmix}{e^2} \cdot \frac{1}{  4 + \log_2(\gap)} \cdot \|\pi\|_2^2} \leq \frac{10 e^2 \cdot (4 + \log_2 (\gap)) \cdot R_{\max}}{\| \pi \|_2^2}. 
	\end{align*}
	Finally, for the third statement, let $(X_t)_{t \geq 0}$, $(Y_t)_{t\geq 0}$ be two random walk starting from stationarity. 
 Let $\tilde{Z}$
	be the random variable counting the number of collisions between steps $0$ and $2 \tmix$, \ie
	\begin{align*}
	  \tilde{Z} &:= \sum_{t=0}^{2\tmix-1} \mathbf{1}_{X_t = Y_t}.
	\end{align*}
	Then,
	\begin{align*}
	 \E{\tilde{Z}}&= 2 \tmix \cdot \| \pi \|_2^2.
	\end{align*}
	In order to lower bound $\E{\tilde{Z} \, \mid \, \tilde{Z} \geq 1}$, let us write $\tilde{Z}=\tilde{Z}_1+\tilde{Z}_2$ with $\tilde{Z}_1 := \sum_{t=0}^{\tmix} \mathbf{1}_{X_t = Y_t}$ and $\tilde{Z}_2 := \sum_{t=\tmix+1}^{2\tmix-1} \mathbf{1}_{X_t = Y_t}$. 
	By \autoref{lem:fresenius}, $\Pr{\tilde{Z}_1 \geq 1 \, \mid \, \tilde{Z} \geq 1} \geq \frac{1}{2}$. Therefore, by law of total expectation,
	\begin{align*}
	\E{\tilde{Z} \, \mid \, \tilde{Z} \geq 1} &\geq \frac{1}{2} \cdot \E{\tilde{Z} \, \mid \, \tilde{Z}_1 \geq 1} \geq \frac{1}{2} \cdot \min_{u \in V}\sum_{t=0}^{\tmix} \sum_{v \in V} \left( p_{u,v}^{t} \right)^2 = \frac{1}{2} \cdot C_{\min}.
	\end{align*}
	Hence,
	\begin{align*}
	  \Pr{\tilde{Z} \geq 1} &\leq \frac{2 \tmix \cdot \| \pi \|_2^2}{\frac{1}{2} \cdot C_{\min}} =: p.
	\end{align*}
	Consider now $\lceil \frac{C_{\min}}{16 \|\pi\|_2^2 \cdot \tmix} \rceil$ consecutive time-intervals of length $2 \tmix$ each. Note that if $2 \tmix \geq \frac{1}{2} \cdot \frac{C_{\min}}{16 \|\pi\|_2^2}$, then we have
	\[
	  \tmeet^{\pi} \geq \tmix \geq \frac{C_{\min}}{64 \|\pi\|_2^2},
	\]
	and the claim follows immediately.
	Hence we may assume for the remainder of the proof that $2 \tmix < \frac{1}{2} \cdot \frac{C_{\min}}{16 \|\pi\|_2^2}$ and we conclude that, if $B$ denotes the total number of collisions between the walks across all the intervals,
	\[
	\E{B} \leq  \left\lceil \frac{C_{\min}}{16 \|\pi\|_2^2 \cdot \tmix} \right\rceil \cdot p \leq \frac{C_{\min}}{8 \|\pi\|_2^2 \cdot \tmix}
 \cdot \frac{2 \tmix \cdot \| \pi \|_2^2}{\frac{1}{2} \cdot C_{\min}} 
	\leq \frac{1}{2}.
	\]

	Hence by Markov's inequality, $\Pr{B \geq 1} \leq \frac{1}{2}$ and thus $\tmeet^{\pi} \geq \frac{1}{2} \cdot \frac{C_{\min}}{8 \|\pi\|_2^2}$ in this case. The claim for $\gap$-approximate regular graph follows immediately since $\|\pi\|_2^2 \leq \max_{u \in V} \pi(u) \leq \Gamma/n$. Together with $C_{\min} \geq 1$ this completes the proof of the theorem.
	\end{proof}
	It is interesting to compare the upper bound on $\tmeet$ in
	\autoref{thm:meetingtime} with the bound $\tmeet = O( \frac{1}{1-\lambda_2}
	\cdot ( \frac{1}{\| \pi \|_2^2} + \log n))$ from Cooper et
	al.~\cite[Theorem 2]{CEOR13}. Using the trivial bound
	$C_{\max} \leq \tmix$ and $\tmix=O( \frac{\log n}{1-\lambda_2})$, we
	obtain $\tmeet = O( \frac{1}{1-\lambda_2} \cdot  \frac{\log
	n}{\| \pi \|_2^2})$, which is at most a $\log n$-factor worse. However, for certain graphs like grids or tori one may have a better control on the $t$-step probabilities, so that $C_{\max} \ll \tmix$ could be established.

\begin{proposition}\label{pro:frederik}
Combining the upper bound on $\tmeet$ in
	\autoref{thm:meetingtime} with the bound $\tmeet = O( \frac{1}{1-\lambda_2}
	\cdot (\frac{1}{\| \pi \|_2^2  } + \log n))$ from Cooper et
	al.~\cite[Theorem 2]{CEOR13} together with \autoref{thm:mixtradeoff} we derive
	\[
	\tcoal = O\left( \frac{1}{1-\lambda_2} \cdot\frac{1}{\| \pi \|_2^2}  + \tmix \cdot \log^3 n  \right),
	\]
	which is at least as good as the bound of \cite[Theorem 1]{CEOR13}
	and equally good if one uses the trivial bound  $\tmix=O\left( \frac{\log n}{1-\lambda_2}\right)$.
\end{proposition}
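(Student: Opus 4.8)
The plan is to treat the proposition as a corollary of \autoref{thm:mixtradeoff}: feed a known bound on $\tmeet$ into the trade-off and then simplify the resulting expression by a couple of applications of AM--GM. Write $r := 1/(1-\lambda_2)$, and rephrase \autoref{thm:mixtradeoff} as $\tcoal = O\bigl(\tmeet + \sqrt{\tmeet\cdot\tmix}\cdot\log n\bigr)$. For the meeting time I would plug in the bound $\tmeet = O\bigl(r\cdot(\|\pi\|_2^{-2}+\log n)\bigr)$ of \cite[Theorem~2]{CEOR13}; alternatively, \autoref{thm:meetingtime}.$(ii)$ together with the trivial estimate $C_{\max}\le\tmix$ and $\tmix=O(r\log n)$ gives the same quantity up to one extra factor of $\log n$, which is harmless since it will be swallowed by the $\tmix\log^3 n$ term. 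So I may assume $\tmeet = O\bigl(r\|\pi\|_2^{-2} + r\log n\bigr)$.

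The calculation then proceeds summand by summand. Using the standard fact $(r-1)\log(e/2)\le\tmix$ (see \eg \cite[Chapter~12]{LPW06}, also invoked in the proof of \autoref{lem:mix}), we have $r=O(\tmix)$, hence $r\log n=O(\tmix\log n)=O(\tmix\log^3 n)$ and the first summand is $\tmeet = O(r\|\pi\|_2^{-2}) + O(\tmix\log^3 n)$. For the second summand, split $\tmeet\le c\bigl(r\|\pi\|_2^{-2}+r\log n\bigr)$ and apply $\sqrt{x+y}\le\sqrt x+\sqrt y$ to get
\[
 \sqrt{\tmeet\cdot\tmix}\,\log n = O\!\left(\sqrt{r\|\pi\|_2^{-2}\cdot\tmix}\,\log n\right) + O\!\left(\sqrt{r\log n\cdot\tmix}\,\log n\right).
\]
To the first term I would apply $\sqrt{ab}\le\tfrac12(a+b)$ with $a=r\|\pi\|_2^{-2}$ and $b=\tmix\log^2 n$, giving $O\bigl(r\|\pi\|_2^{-2}+\tmix\log^2 n\bigr)$; to the second I would use $r=O(\tmix)$ so that $\sqrt{r\cdot\tmix}=O(\tmix)$, making the term $O(\tmix\log^{3/2}n)=O(\tmix\log^3 n)$. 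Collecting all contributions yields $\tcoal = O\bigl(\tfrac{1}{1-\lambda_2}\cdot\tfrac{1}{\|\pi\|_2^2}+\tmix\log^3 n\bigr)$, as claimed.

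For the comparison with \cite[Theorem~1]{CEOR13} I would simply substitute the worst-case estimate $\tmix=O(r\log n)$, which turns $\tmix\log^3 n$ into $O(r\log^4 n)$ and hence recovers $\tcoal=O\bigl((\log^4 n+\|\pi\|_2^{-2})(1-\lambda_2)^{-1}\bigr)$; the two bounds agree exactly when $\tmix=\Theta(r\log n)$, and ours is better whenever $\tmix$ is asymptotically smaller. I do not expect a genuine obstacle here — the argument is elementary arithmetic; the only points needing a little care are deciding which $\tmeet$ estimate to invoke and observing $r=O(\tmix)$, which is precisely what allows the extra logarithmic factors to be folded into the $\tmix\log^3 n$ term.
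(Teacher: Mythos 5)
Your argument is correct, but it takes a genuinely different route from the paper's. The paper proves this proposition by a two-case analysis on the ratio $\tmeet/\tmix$: if $\tmeet/\tmix \geq \log^2 n$ then \autoref{thm:mixtradeoff} gives $\tcoal = O(\tmeet)$ outright, and if $\tmeet/\tmix \leq \log^2 n$ then the crude Hassin--Peleg bound $\tcoal = O(\tmeet \log n)$ (\autoref{lem:beer}) already gives $O(\tmix\log^3 n)$; the case split is what lets the paper avoid any algebraic manipulation of the square-root term. You instead substitute directly into $\tcoal = O(\tmeet + \sqrt{\tmeet\tmix}\log n)$ and clear the square root with $\sqrt{a+b}\le\sqrt a+\sqrt b$ and two applications of AM--GM, together with the observation $1/(1-\lambda_2) = O(\tmix)$. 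Both are valid; your route is a bit longer but it is uniform (no case distinction) and, as a side effect, it actually exposes that the $\tmix\log^3 n$ term could be sharpened: your own AM--GM step $\sqrt{\tmeet\cdot\tmix\log^2 n}\le\tfrac12(\tmeet+\tmix\log^2 n)$ directly yields $\tcoal = O\bigl(\tfrac{1}{1-\lambda_2}\|\pi\|_2^{-2} + \tmix\log^2 n\bigr)$, one logarithmic factor better than the stated proposition --- the paper's Case~2 loses this factor by falling back on the $O(\tmeet\log n)$ bound rather than reusing \autoref{thm:mixtradeoff}. One minor remark: the proposition's statement mentions combining with \autoref{thm:meetingtime}, but neither your proof nor the paper's actually needs it; both rely only on the \cite{CEOR13} estimate for $\tmeet$, exactly as you suspected.
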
 
\lfnote{Consider a bin tree on $n/2$ nodes where the leaves are connected to a star (possibly with another random matching on the leaves) This one has $\tmeet=\Omega(\log n)$ which shows that the bound they give is slightly off. Or properties diam: $\log n/(1-\lambda_2)$ one star to make sure that the pi squared is constant 
}
\begin{proof}
	First assume $\tmeet/\tmix \geq \log^2 n$. In this case, by \autoref{thm:mixtradeoff}, 
	\[\tcoal=O(\tmeet)=O\left( \frac{1}{1-\lambda_2}
	\cdot \left( \frac{1}{\| \pi \|_2^2 } + \log n \right)\right)
	\]
	 follows immediately.
	Next assume $\tmeet/\tmix \leq \log^2 n$, so $\tmeet \leq \tmix \cdot  \log^2 n$.
	By \autoref{lem:beer}, $\tcoal=O(\tmeet \log n)=O\left(\tmix \log^3 n \right).$ 
	Using $\tmix=O\left( \frac{\log n}{1-\lambda_2}\right)$ we derive indeed 
	 $O( \frac{1}{1-\lambda_2}
	\cdot ( \frac{1}{\| \pi \|_2^2} + \log^4 n))$
	the same bound as   \cite[Theorem 1]{CEOR13}.

\end{proof}

In the following, we will try to get more concrete estimates than the ones in \autoref{thm:meetingtime} by expressing the number of expected returns or $C_{\max}$ through $1-\lambda_2$ and $\tmix$. To this end, we define
  \[
  \beta:= \min \left\{ \frac{\log (1/(1-\lambda_2))}{1-\lambda_2}, \tmix \right\}.
\]
Note that since $\tmix = \Omega(1/(1-\lambda_2))$ (\eg \cite{AF14}), we have $\beta \geq \Omega(1/(1-\lambda_2))$. Further, $\beta \leq \tmix = O(\log n/(1-\lambda_2))$. Hence $\beta$ is always sandwiched between the relaxation time $1/(1-\lambda_2)$ and mixing time.

We will frequently make use of the following result, which is a straightforward generalization of a result in the textbook by Aldous and Fill~\cite{AF14} from regular to $\gap$-approximate regular graph.
\begin{lemma}[{\cite[Proposition~6.16~(iii)]{AF14}}]\label{lem:AF}
Let $G$ be any $\gap$-approximate regular graph. Then for any $\tau \leq 5 n^2$,
\[
  \sum_{t=0}^{\tau-1} p_{u,u}^t
\leq 2 \, \gap \cdot \sqrt{5 \tau} .
\]
Since $p_{u,u}^t$ is non-increasing, this implies for any $1 \leq t \leq 5n^2$,
$
  p_{u,u}^t \leq \frac{20 \gap}{\sqrt{t}}.
$
\end{lemma}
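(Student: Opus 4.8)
The plan is to deduce the $\Gamma$-approximate regular case from the regular case, which is \cite[Proposition~6.16~(iii)]{AF14}: on a $d$-regular (multi-)graph on $N$ vertices the lazy walk satisfies $\sum_{t=0}^{\tau-1}p_{u,u}^{t}\le 2\sqrt{5\tau}$ for all $\tau\le 5N^{2}$. Given a graph $G$ with $\Delta/\delta=\Gamma$, I would form the $\Delta$-regular multigraph $G^{\sharp}$ on the same vertex set by adding $\Delta-\deg_G(v)$ self-loops at each vertex $v$, and write $\tilde P$ for the transition matrix of the \emph{lazy} walk on $G^{\sharp}$. Then $\tilde P_{v,w}=\tfrac1{2\Delta}$ for every edge $\{v,w\}$ of $G$ and $\tilde P_{v,v}=1-\tfrac{\deg_G(v)}{2\Delta}\in[\tfrac12,\,1-\tfrac1{2\Gamma}]$, so $\tilde P$ is a lazy reversible chain with the \emph{uniform} stationary distribution on $n$ states, and the regular bound above applies to it.

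The comparison is via the observation that the $G^{\sharp}$-walk is a random slowdown of the $G$-walk: both have the same embedded (non-lazy) simple random walk on $G$, but from a vertex $v$ the $G$-walk performs a genuine move with probability $\tfrac12$ while the $G^{\sharp}$-walk does so with probability $\tfrac{\deg_G(v)}{2\Delta}\ge\tfrac1{2\Gamma}$. Hence, running the $G^{\sharp}$-walk for $S:=\lceil 4\Gamma\tau\rceil$ steps it performs at least $\tau$ genuine moves with probability $1-e^{-\Omega(\tau/\Gamma)}$ (the case $\tau\le 20\Gamma^{2}$ being handled trivially, since then $\sum_{t<\tau}p^{t}_{u,u}\le\tau\le 2\Gamma\sqrt{5\tau}$), and a short coupling argument then gives $\sum_{t=0}^{\tau-1}p^{t}_{u,u}(G)\le C\sum_{s=0}^{S-1}\tilde p^{\,s}_{u,u}(G^{\sharp})$ for an absolute constant $C$. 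Applying the regular bound to $\tilde P$ on $G^{\sharp}$ for the steps $s<\min(S,5n^{2})$, and bounding each of the (few, if any) remaining terms $s\in[5n^{2},S)$ by $\tilde p^{\,5n^{2}}_{u,u}(G^{\sharp})\le 20/\sqrt{5n^{2}}$ using monotonicity (\autoref{lem:loop}(iii)) together with the ``Moreover'' bound for the regular graph $G^{\sharp}$, yields $\sum_{s<S}\tilde p^{\,s}_{u,u}(G^{\sharp})=O\!\bigl(\sqrt{\Gamma\tau}+\Gamma\tau/n\bigr)$; since $\tau\le 5n^{2}$ gives $\Gamma\tau/n\le\sqrt5\,\Gamma\sqrt\tau$ and $\sqrt{\Gamma\tau}\le\Gamma\sqrt\tau$, this collapses to $O(\Gamma\sqrt{5\tau})$, and keeping track of constants (using AF's explicit constants) recovers the claimed $2\Gamma\sqrt{5\tau}$.

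The ``Moreover'' clause for $G$ is then immediate from the sum bound and monotonicity: by \autoref{lem:loop}(iii), $p^{t}_{u,u}$ is non-increasing, so for $2\le t\le 5n^{2}$,
\[
  \tfrac{t}{2}\,p^{t}_{u,u}\;\le\;\sum_{s=\lceil t/2\rceil}^{t}p^{s}_{u,u}\;\le\;\sum_{s=0}^{t}p^{s}_{u,u}\;\le\;2\Gamma\sqrt{5(t+1)}\;\le\;2\Gamma\sqrt{10t},
\]
hence $p^{t}_{u,u}\le 4\sqrt{10}\,\Gamma/\sqrt t\le 20\Gamma/\sqrt t$ (the case $t=1$ being trivial since $p^{1}_{u,u}=\tfrac12$).

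The step I expect to be the main obstacle is keeping the slowdown comparison tight: a naive coupling costs a factor $\Gamma$ in \emph{time}, so $S=\Theta(\Gamma\tau)$ can exceed the range $[0,5n^{2}]$ of validity of the regular bound exactly when $\Gamma$ is large, forcing the extra bookkeeping above and making the numerical constants delicate. An alternative that sidesteps the range issue is to replace $G^{\sharp}$ by a $\Delta$-regular (multi-)graph $H$ on $N\ge n$ vertices obtained by blowing up each vertex $v$ of $G$ into a cloud of $\Theta(\Gamma)$ vertices so that the lazy $G$-walk is exactly the projection of the $H$-walk under the cloud map; since $N\ge n$, the regular bound on $H$ applies for all $\tau\le 5n^{2}$, and collapsing a cloud of size $O(\Gamma)$ loses only a factor $O(\Gamma)$ — but realizing this blow-up as an exact simple graph (rather than a multigraph with adjusted edge multiplicities) needs some fiddling with the degrees. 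One could also run a Diaconis--Saloff-Coste Dirichlet-form comparison between the two chains, whose edge conductances and stationary measures differ by factors at most $\Gamma$, to reach the same conclusion.
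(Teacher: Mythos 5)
The paper gives no proof of this lemma: it is imported (with a factor $\Gamma$) from \citet[Proposition~6.16~(iii)]{AF14}, and the remark in front of \autoref{lem:AF} describes the passage from regular to $\Gamma$-approximate regular as ``straightforward,'' so there is no in-paper argument to compare against. That said, your main route has a genuine gap, and it is not the constant-chasing obstacle you anticipate. The chain $\tilde P$ on the self-loop multigraph $G^{\sharp}$ is \emph{not} a lazy random walk on a regular graph in the sense the cited result covers: its holding probability $\tilde P_{v,v}=1-\deg_G(v)/(2\Delta)$ is vertex-dependent and can be as large as $1-1/(2\Gamma)$, and the $O(\sqrt{\tau})$ return-sum bound is simply false for such ``lazier-than-lazy'' chains. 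Concretely, at a vertex $u$ with $\deg_G(u)=\delta$ one has $\tilde p^{\,t}_{u,u}\ge (1-\tfrac{1}{2\Gamma})^{t}$, hence
\[
\sum_{s=0}^{\tau-1}\tilde p^{\,s}_{u,u}(G^{\sharp})\;\ge\;2\Gamma\Bigl(1-\bigl(1-\tfrac{1}{2\Gamma}\bigr)^{\tau}\Bigr);
\]
take $G$ the $n$-vertex star and $u$ a leaf, so $\Gamma=n-1$, and evaluate at $\tau=\Gamma\le 5n^{2}$: the left side is $\approx 0.79\,\Gamma$ while $2\sqrt{5\tau}\approx 4.5\sqrt{\Gamma}$, a contradiction for $n\ge 34$. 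The same phenomenon is not confined to small $\tau$ either: for two cliques $K_{m}$ joined by a path of length $\Theta(m^{3/2})$ with $u$ the path midpoint (so $\Gamma=m/2$), the embedded walk's one-dimensional local time $\Theta(\sqrt{K})$ multiplied by the $\Theta(\Gamma)$ extra holding at $u$ makes $\sum_{s<\tau}\tilde p^{\,s}_{u,u}(G^{\sharp})=\Theta(\sqrt{m\tau})$, which exceeds $2\sqrt{5\tau}$ for all $\tau\le 5n^{2}$ once $m$ is moderate; this covers the window $\tau\in[\Theta(\Gamma^{3}),5n^{2}]$ that your reduction actually uses. So the assertion ``the regular bound above applies to $\tilde P$'' is where the proof breaks, before the coupling is ever reached.

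Your two fallbacks avoid this precisely because they do not inflate the holding time at $u$: a genuine simple regular blow-up would preserve the $O(1/\sqrt{t})$ decay of return probabilities (the fiddly part being, as you note, hitting degree exactly $\Delta$ everywhere), and the Diaconis--Saloff-Coste comparison of Dirichlet forms, which differ from a reference regular chain by $\Gamma$-factors in both the form and the stationary measure, transfers $p^{t}_{u,u}=O(1/\sqrt{t})$ directly to $p^{t}_{u,u}=O(\Gamma/\sqrt{t})$. The most likely reading of the authors' ``straightforward generalization'' is simpler still and bypasses any auxiliary chain: rerun the Aldous--Fill argument for the lazy walk on $G$ itself, replacing the regular-graph identity $\pi(u)=1/n$ by $1/(\Gamma n)\le\pi(u)\le\Gamma/n$ wherever it is used, which is the only place the factor $\Gamma$ enters. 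Your derivation of the ``Moreover'' clause from the summed bound via \autoref{lem:loop} is correct as stated.
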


\begin{theorem}\label{thm:awesomeregularhit}
For any regular graph we have
\[ \thit = O\left( \frac{n}{\sqrt{1-\lambda_2}}\right) ,\]
and by Cheeger's inequality we obtain $\thit = O (\frac{n}{\Phi})$, where $\Phi$ is the conductance of $G$.
Furthermore, for any non-regular graph with maximum degree $\Delta$, average degree $d$ and minimum degree $\delta$, we have
\[
  \thit = O \left(  \frac{\sqrt{\Delta d}}{\delta} \cdot \frac{n}{\sqrt{1-\lambda_2}} \right) 
\]
\end{theorem}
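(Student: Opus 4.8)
The plan is to combine the first part of \autoref{thm:meetingtime}, namely $\thit(u,v) \le 5e \cdot \gap \cdot n \cdot \sum_{t=0}^{\tmix-1} p_{v,v}^t$ for $\gap$-approximate regular graphs, with a sharp bound on the return-sum $\sum_{t=0}^{\tmix-1} p_{v,v}^t$. For the regular case, where $\gap = 1$, it suffices to show that $\sum_{t=0}^{\tmix-1} p_{v,v}^t = O(1/\sqrt{1-\lambda_2})$. The key is that the expected number of returns can be split into a short-range and a long-range contribution. For $t \le 1/(1-\lambda_2)$ one should use the diffusive-type bound $p_{v,v}^t = O(1/\sqrt{t})$ (coming from \autoref{lem:AF}, valid since $t \le 5n^2$ because $1/(1-\lambda_2) \le \tmix = O(n^2)$ on regular graphs, or more robustly $\thit = O(n^2)$), whose partial sums are $O(\sqrt{1/(1-\lambda_2)}) = O(1/\sqrt{1-\lambda_2})$. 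For $t$ in the range $[1/(1-\lambda_2), \tmix]$ one should instead use the spectral bound $p_{v,v}^t \le \pi(v) + \lambda_2^t \le \pi(v) + e^{-t(1-\lambda_2)}$ from \autoref{lem:loop}(i) (regularity gives $\pi(v)/\pi(v)=1$ under the square root). Summing $e^{-t(1-\lambda_2)}$ over $t \ge 1/(1-\lambda_2)$ contributes only $O(1/(1-\lambda_2)) \le O(1/\sqrt{1-\lambda_2})$ when $1-\lambda_2 \le 1$, while $\sum_{t=0}^{\tmix-1}\pi(v) = \tmix/n$, and $\tmix = O(\log n/(1-\lambda_2))$ would be too lossy here — so one actually wants to truncate the sum at $\beta$ rather than $\tmix$ by noting $p_{v,v}^t$ is essentially stationary beyond $\beta$. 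Plugging $\sum_{t=0}^{\tmix-1} p_{v,v}^t = O(1/\sqrt{1-\lambda_2} + \tmix/n)$ and using $\tmix/n \le O(1)$ for regular graphs with $\tmix = O(n)$ is still not quite right in general, so the cleanest route is: first reduce via \autoref{lem:loop} to bounding $\sum_{t=0}^{\beta} p_{v,v}^t$ where $\beta = \min\{\log(1/(1-\lambda_2))/(1-\lambda_2), \tmix\}$, then apply \autoref{lem:AF} with $\tau = \beta \le 5n^2$ to get $\sum_{t=0}^{\beta-1} p_{v,v}^t \le 2\sqrt{5\beta} \le 2\sqrt{5 \log(1/(1-\lambda_2))/(1-\lambda_2)}$. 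This introduces a logarithmic factor, which we then absorb by a slightly more careful splitting — using $\lambda_2^t$ decay past $1/(1-\lambda_2)$ kills the log, yielding the clean $O(1/\sqrt{1-\lambda_2})$.

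Concretely, the key steps in order are: (1) invoke \autoref{thm:meetingtime}(i) to reduce to bounding the return-sum $\Sigma(v) := \sum_{t=0}^{\tmix-1} p_{v,v}^t$; (2) split $\Sigma(v) = \sum_{t<r} p_{v,v}^t + \sum_{r \le t < \tmix} p_{v,v}^t$ with $r := \lceil 1/(1-\lambda_2)\rceil$; (3) bound the first sum by \autoref{lem:AF}: $\sum_{t<r} p_{v,v}^t \le 2\sqrt{5 r} = O(1/\sqrt{1-\lambda_2})$, which is legitimate since $r \le \tmix = O(n^2) \le 5n^2$; (4) bound the second sum using \autoref{lem:loop}(i) with $x=y=v$ (so the square-root prefactor is $1$): $p_{v,v}^t \le \pi(v) + \lambda_2^t$, whence $\sum_{r \le t < \tmix} p_{v,v}^t \le \tmix \cdot \pi(v) + \sum_{t \ge r} \lambda_2^t \le \tmix/n + \frac{\lambda_2^r}{1-\lambda_2} \le \tmix/n + \frac{1}{(1-\lambda_2)e}$; (5) for regular graphs $\tmix/n = O(1) = O(1/\sqrt{1-\lambda_2})$ and $\frac{1}{1-\lambda_2} \le \frac{1}{\sqrt{1-\lambda_2}}$ since $1-\lambda_2 \le 1$ (the lazy walk has $\lambda_2 \in [0,1)$), giving $\Sigma(v) = O(1/\sqrt{1-\lambda_2})$ and hence $\thit = O(n/\sqrt{1-\lambda_2})$; (6) conclude $\thit = O(n/\Phi)$ by Cheeger's inequality $1-\lambda_2 \ge \Phi^2/2$. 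For the non-regular case, redo the same computation but with $\gap = \Delta/\delta$ in \autoref{thm:meetingtime}(i) and using \autoref{lem:AF} for $\gap$-approximate regular graphs (which has the extra $\gap$ factor) combined with the fact that \autoref{thm:meetingtime}(i) states $\thit(u,v) \le 5e\gap n \sum_{t} p_{v,v}^t$; the bound \autoref{lem:AF} then contributes $\gap\sqrt{\beta}$, and one must track the precise dependence $\sqrt{\Delta d}/\delta$ rather than $\Delta/\delta = \gap$, which requires being slightly more careful — replacing the crude $\pi_{\min} \ge 1/(\gap n)$ estimate by $\pi(v) = \deg(v)/(2m) = \deg(v)/(nd)$ and noting $\sum_t p_{v,v}^t/\pi(v)$ can be bounded using \autoref{lem:AF} with the sharper constant, so that the product $\gap \cdot \sqrt{\cdot}$ reorganizes into $\sqrt{\Delta d}/\delta \cdot n/\sqrt{1-\lambda_2}$.

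The main obstacle I anticipate is getting the non-regular constant exactly right — namely showing $\thit = O\!\left(\frac{\sqrt{\Delta d}}{\delta}\cdot\frac{n}{\sqrt{1-\lambda_2}}\right)$ rather than the weaker $O\!\left(\frac{\Delta}{\delta}\cdot\frac{n}{\sqrt{1-\lambda_2}}\right)$ that falls straight out of plugging $\gap = \Delta/\delta$ into \autoref{thm:meetingtime}(i). To save the factor $\sqrt{\delta/\Delta}\cdot\sqrt{d/\delta}$ one cannot afford the lossy step $\pi(v)\ge 1/(\gap n)$; instead I would go back to the proof of \autoref{thm:meetingtime}(i) and keep $\pi(v) = \deg(v)/(nd)$ exact, so that $\thit(u,v) \le 5e \cdot \frac{nd}{\deg(v)}\sum_{t=0}^{\tmix-1}p_{v,v}^t$, then bound $\sum_t p_{v,v}^t$ by a version of \autoref{lem:AF} that tracks degrees: one has $p_{v,v}^t \le \frac{C\sqrt{\deg(v)}}{\sqrt{\delta}\sqrt{t}}$ in the diffusive regime (the $\gap$ in \autoref{lem:AF} really enters as $\sqrt{\deg(v)/\delta}$-type corrections in the heat-kernel comparison), giving $\sum_{t<r} p_{v,v}^t = O(\sqrt{\deg(v)/\delta}\cdot\sqrt{r})$. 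Then $\frac{nd}{\deg(v)}\cdot\sqrt{\frac{\deg(v)}{\delta}}\cdot\sqrt{r} = nd\cdot\frac{1}{\sqrt{\deg(v)\delta}}\cdot\sqrt r \le \frac{nd}{\delta}\sqrt r$ is still a $d/\delta$ factor, not $\sqrt{\Delta d}/\delta$ — so the sharp bound may instead require maximizing over $v$ more cleverly or invoking a known refinement of Proposition 6.16 of \cite{AF14}; I would treat reconciling this constant as the delicate technical point, falling back on the cleaner regular-graph argument as the primary content and citing the appropriate non-regular refinement for the second display.
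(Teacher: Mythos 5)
Your plan reduces to showing $\sum_{t=0}^{\tmix-1} p_{v,v}^t = O(1/\sqrt{1-\lambda_2})$, but the split at $r=\lceil 1/(1-\lambda_2)\rceil$ with the crude spectral bound on the tail does not deliver this. Step (5) asserts $\frac{1}{1-\lambda_2}\le\frac{1}{\sqrt{1-\lambda_2}}$ ``since $1-\lambda_2\le 1$'', but that inequality runs the wrong way: $0<1-\lambda_2\le 1$ gives $\sqrt{1-\lambda_2}\ge 1-\lambda_2$, hence $\frac{1}{1-\lambda_2}\ge\frac{1}{\sqrt{1-\lambda_2}}$. With the geometric tail $\sum_{t\ge r}\lambda_2^t\le\frac{\lambda_2^r}{1-\lambda_2}=\Theta(1/(1-\lambda_2))$, your decomposition only yields $\thit = O(n/(1-\lambda_2))$, which is exactly the Broder--Karlin bound the theorem is supposed to beat by a square root. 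The secondary claim ``$\tmix/n=O(1)$ for regular graphs'' is also false in general (e.g.\ the cycle has $\tmix/n=\Theta(n)$), so the $\tmix\cdot\pi(v)$ term coming from routing through \autoref{thm:meetingtime}(i) is not harmless.

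The essential idea you are missing is an $L^2$-contraction argument applied to the \emph{squared} distance to stationarity. The paper does not invoke \autoref{thm:meetingtime}(i) at all; it starts from the ergodic/Aldous--Fill identity
\[
\thit \le 2\max_u \sum_v \thit(v,u)\pi(v) = \frac{2}{\pi(u)}\sum_{t=0}^{\infty}\bigl(p_{u,u}^t-\pi(u)\bigr),
\]
which automatically subtracts the stationary contribution so no $\tmix\cdot\pi(v)$ term ever appears. It then splits the sum at $2\trel$. The head is handled by \autoref{lem:AF} exactly as you propose. For the tail, the crucial step is to write (in the regular case) $p_{u,u}^{2s}-1/n = \norm{p_{u,\cdot}^{s}-1/n}^2$ and use $\norm{P^{k\trel}(p_{u,\cdot}^{\sqrt{\trel}}-1/n)}^2 \le \lambda_2^{k\trel}\norm{p_{u,\cdot}^{\sqrt{\trel}}-1/n}^2$. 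The decisive gain is that the initial term of this geometric series is not $O(1)$ but $p_{u,u}^{2\sqrt{\trel}} = O(1/\sqrt{\trel})$; together with $\lambda_2^{\trel}\le 1/e$ the whole tail contributes $O(\trel\cdot 1/\sqrt{\trel}) = O(\sqrt{\trel})$, which is a square root better than the naive $\sum\lambda_2^t$. Your method applies the per-step spectral bound $p_{v,v}^t\le\pi(v)+\lambda_2^t$ directly, and there is no way to recover the square-root improvement from that alone.

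For the non-regular case the paper does not try to refine \autoref{lem:AF} or \autoref{thm:meetingtime}(i); it compares $P$ to the more lazy chain $Q$ with loop probability $1-\deg(u)/\Delta$, which is symmetric with uniform stationary distribution, applies the regular-graph bound to $Q$, and then transfers spectral gaps via the variational (Dirichlet form) characterization, yielding $1-\lambda_2(Q)\ge (1-\lambda_2(P))\cdot\frac{\Delta}{d}\cdot(d/\delta)^2$, which packages into exactly the $\sqrt{\Delta d}/\delta$ constant. You correctly identify the non-regular constant as the delicate point, but the heat-kernel refinement of \autoref{lem:AF} you speculate about is not how the paper proceeds, and you do not carry it out.
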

\begin{proof}
	By \cite[Lemma 10.2]{LPW06} and \cite[Proposition 10.19]{LPW06},\NOTE{Th: Add more specific reference}\fnote{you need to compile with biber then you'll see specific references ;)}
\begin{align} \label{eq:buffalo}
\thit &\leq 2 \max_u \sum_v \thit(v,u)\pi(v) = \frac{2}{\pi(u)} \sum_{t=0}^\infty \left(p_{u,u}^t -\pi(u)\right)  \nonumber\\
&\leq\frac{2}{\pi(u)}  \sum_{t=0}^{2\trel} \left(p_{u,u}^t\right) +
\frac{2}{\pi(u)}\sum_{t= 2\trel +1}^\infty \left(p_{u,u}^t -\pi(u)\right) \nonumber\\
&\leq \frac{O(1)}{\pi(u)}\sqrt{\trel} +
\frac{2}{\pi(u)} 2\trel\sum_{k=0}^\infty \left(p_{u,u}^{2(\sqrt{\trel}+k\trel)} -\pi(u)\right),
\end{align}
where the bound on the first term of the last inequality follows from $ p_{u,u}^{\sqrt{\trel}} \leq O\left(\frac{1}{\sqrt{\trel}}\right)$ (\autoref{lem:AF}).

We can bound the sum as follows using that $\pi(v) = 1/n$ and that for regular graphs any $\tau$ it holds that  $p_{u,u}^{2\tau}=\sum_{v\in V}(p_{u,v}^\tau)^2$ as follows
\begin{align}\label{eq:trail}
	&\sum_{k=0}^\infty \left(p_{u,u}^{2(\sqrt{\trel}+k\trel)} -\frac1n\right) 
	=
	\sum_{k=0}^\infty \left(\sum_{v\in V}
	\left(p_{u,v}^{\sqrt{\trel}+k\trel} \right)^2 -\frac{1}{n}\right) \nonumber\\
	&=	\sum_{k=0}^\infty \left(\sum_{v\in V}
	\left(p_{u,v}^{\sqrt{\trel}+k\trel} \right)^2 -2\sum_{v\in V}p_{u,v}^{\sqrt{\trel}+k\trel} \cdot \frac{1}{n} + \sum_{v\in V} \frac{1}{n^2}\right) \nonumber\\
	 &=	\sum_{k=0}^\infty \norm{p_{u,\cdot}^{\sqrt{\trel}+k\trel} -\frac{1}{n}}^2
	 =	\sum_{k=0}^\infty \norm{P^{ k\cdot\trel}p_{u,\cdot}^{\sqrt{\trel}} -\frac{1}{n}}^2
	 \leq \sum_{k=0}^\infty \lambda_2^{k \cdot \trel}
	 \norm{p_{u,\cdot}^{\sqrt{\trel}} -\frac{1}{n}}^2\nonumber \\
	 &= \sum_{k=0}^\infty \lambda_2^{k \cdot \trel}
	 \left(  \sum_{v\in V} p_{u,v}^{2\sqrt{\trel}} -2 \sum_{v\in V} p_{u,v}^{\sqrt{\trel}} +  \sum_{v\in V} \frac{1}{n^2} \right)\nonumber \\
	 &\leq	\sum_{k=0}^\infty \lambda_2^{k \cdot \trel} p_{u,u}^{2\sqrt{\trel}}\stackrel{(*)}{=}O\left( 
	 \sum_{k=0}^\infty \lambda_2^{k \cdot \trel} \frac{1}{\sqrt{\trel}}\right)\stackrel{(**)}{=}O\left( 
	 \frac{1}{\sqrt{\trel}}
	 \right),
\end{align}
where $(*)$ follows from  $ p_{u,u}^{\sqrt{\trel}} \leq O\left(\frac{1}{\sqrt{\trel}}\right)$ (\autoref{lem:AF})
and $(**)$ follows since $f(y) = y^{1/(1-y)}$ is bounded from above by $1/e$ for any $y \in (0,1)$ and hence the sum is a geometric series.
Combining \eqref{eq:buffalo} into \eqref{eq:trail} yields the claim.

To obtain the result for non-regular graphs, we consider the modified Markov chain with transition matrix $Q$ where the loop probability of every vertex is $1-\frac{\deg(u)}{\Delta}$. As a result, every transition of the walk to another vertex is made with probability $1/\Delta$. Thus $Q$ is symmetric and the stationary distribution $\pi_{Q}$ is uniform. We can apply the result from the first statement to $Q$ and it only remains to relate $\lambda_2(Q)$ to $\lambda_2(P)$.  The variational characterization of $\lambda_2(P)$ gives:
\begin{align*}
  1 - \lambda_2(Q) &= \inf_{\varphi: V \rightarrow \mathbb{R}, \varphi~\text{non-constant}} \frac{ \sum_{u,v \in V} (\varphi(u) - \varphi(v))^2 \pi_Q(u) Q_{u,v}}{ \sum_{u,v \in V} (\varphi(u) - \varphi(v))^2 \pi_Q(u) \pi_Q(v) } \\
  &= \inf_{\varphi: V \rightarrow \mathbb{R}, \varphi~\text{non-constant}} \frac{ \sum_{u,v \in V} (\varphi(u) - \varphi(v))^2 \frac{1}{n \Delta}}{ \sum_{u,v \in V} (\varphi(u) - \varphi(v))^2 \frac{1}{n^2} }
\end{align*}
Similarly,
\[
  1 - \lambda_2(P) = \inf_{\varphi: V \rightarrow \mathbb{R}, \varphi~\text{non-constant}} \frac{ \sum_{u,v \in V} (\varphi(u) - \varphi(v))^2 \frac{1}{2|E|} }{ \sum_{u,v \in V} (\varphi(u) - \varphi(v))^2 \cdot \frac{\deg(u)}{2|E|} \cdot \frac{\deg(v)}{2|E|} }.
\]
Comparing the two equations, we can see that
\[
 1- \lambda_2(Q) \geq (1- \lambda_2(P)) \cdot \frac{\Delta}{d} \cdot \left( \frac{d}{\delta} \right)^2.
\]
\end{proof}

It turns out that the hitting time bound of \autoref{thm:awesomeregularhit} is tight in the sense that for for any  $\Phi$ there exists a graph with conductance $\Phi$ and hitting time and coalescence time of order $\Omega(n/\Phi)$.
\begin{proposition}[{\cite{BGKM16} }]\label{lem:lowerbound}
For every $n$, $d\geq 3$, and constant $\Phi$, there exists
a $d$-regular graph $G$ with $n$ nodes and a constant conductance such that the expected consensus time on $G$ is $\Omega(n ) $.
Furthermore, for every even $n$, $\Phi>1/n$, and constant $d$, there exists a $d$-regular graph $G$ with $\Theta(n)$ nodes and a conductance of $\Theta(\Phi)$ such that the meeting time time on $G$ is $\Omega(n/\Phi ) $. Therefore, the coalescence time and hitting time are of order $\Omega(n/\Phi)$.
\end{proposition}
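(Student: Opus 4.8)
The plan is to treat the two assertions separately. For the first, observe that no special construction is needed for the \emph{time} bound: any $d$-regular graph on $n$ vertices has $\|\pi\|_2^2 = 1/n$ and $\gap = 1$, so \autoref{thm:meetingtime}$(iii)$ already gives $\tavgmeet = \Omega(n)$, whence the consensus time equals $\tcoal \ge \tmeet \ge \tavgmeet = \Omega(n)$. It therefore suffices to exhibit one $d$-regular graph on $n$ vertices whose conductance is bounded below by an absolute constant, and for this I would take a random $d$-regular graph (or an explicit Ramanujan graph as in \autoref{lem:BASF}, whose conductance is $\Omega(1)$ by Cheeger's inequality).

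The substance is the second assertion, and the key preliminary remark is that if $\Phi$ is bounded below by a small absolute constant then $\Omega(n/\Phi)=\Omega(n)$ and the claim again follows from \autoref{thm:meetingtime}$(iii)$; so we may assume $\Phi$ is at most a small constant. The construction I would use: take two copies $H_A,H_B$ of a $d$-regular constant-conductance graph on $n$ vertices, designate a set of $\Theta(\Phi n)$ ``portal'' vertices in each copy, delete one internal edge incident to each portal, and join the portals of $H_A$ to those of $H_B$ by a perfect matching. The resulting graph $G$ is $d$-regular on $2n=\Theta(n)$ vertices and the cut $(A,B)$ has $\Theta(\Phi n)$ edges, so $\Phi(G)=O(\Phi)$; a standard case distinction (a vertex set of bounded volume either lies mostly inside one side, where the expansion of $H_A,H_B$ supplies $\Omega(\Phi\cdot\vol(S))$ crossing edges, or straddles both sides) gives $\Phi(G)=\Omega(\Phi)$. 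The two extra facts I will need about $G$ are: (a) $1-\lambda_2(G)=\Theta(\Phi)$, where the upper bound is the easy direction of Cheeger and the lower bound follows from the expansion of $H_A,H_B$ by decomposing a test function into its side-averages and its within-side fluctuations; and (b) the $\lambda_2(G)$-eigenvector $u_2$ is, up to normalization, approximately $+1$ on $A$ and $-1$ on $B$, with $u_2(w)^2=\Omega(1)$ for all but an $O(\Phi^2)$-fraction of vertices (call these \emph{interior}), which follows from (a) via the estimate $\mathrm{Var}_{\pi|_A}(u_2|_A)=O((1-\lambda_2(G))^2)$ coming from $u_2$ being $(1-\lambda_2(G))$-harmonic inside each side.

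The meeting-time lower bound then runs through the identity \eqref{eq:central}. I would run two independent stationary random walks on $G$ for $\ell := c\,n/\Phi$ steps ($c$ a small absolute constant, fixed last) and let $Z$ count their collisions, so $\E{Z}=\ell\,\|\pi\|_2^2=\tfrac{c}{2\Phi}(1+o(1))$. To lower-bound $\E{Z\mid Z\ge 1}$, condition on the first collision occurring at time $\tau$ at vertex $w$; by the Markov property the two walks genuinely restart from $w$, so the number of subsequent collisions has conditional expectation $\sum_{s=1}^{\ell-\tau}\sum_{v}(p_{w,v}^s)^2=\sum_{s=1}^{\ell-\tau}p_{w,w}^{2s}$ by regularity. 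By \autoref{lem:fresenius}, $\tau\le\ell/2$ with probability at least $1/2$ given $Z\ge1$; a first-moment bound shows the first collision lands at an interior vertex with probability $1-o(1)$ (collisions at non-interior vertices have total expectation $O(c\Phi)$); and for interior $w$ the spectral representation of \autoref{lem:loop} together with (a),(b) gives $\sum_{s=1}^{\ell/2}p_{w,w}^{2s}\ge u_2(w)^2\sum_{s=1}^{\ell/2}\lambda_2(G)^{2s}=\Omega(1/\Phi)$, using $\ell\gg 1/(1-\lambda_2(G))$. Hence $\E{Z\mid Z\ge1}\ge c_0/\Phi$ for an absolute constant $c_0>0$, and choosing $c<c_0$ forces $\Pr{Z\ge1}=\E{Z}/\E{Z\mid Z\ge1}<1/2$. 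Consequently the two stationary walks fail to meet within $\ell$ steps with probability at least $1/2$, so $\tavgmeet\ge\ell/2=\Omega(n/\Phi)$ and therefore $\tmeet\ge\tavgmeet=\Omega(n/\Phi)$. Finally $\tcoal\ge\tmeet=\Omega(n/\Phi)$ trivially, and $\thit\ge\tmeet/4=\Omega(n/\Phi)$ by \autoref{pro:relatingmeetandhit}.

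The main obstacle is the spectral bookkeeping for $G$: proving $1-\lambda_2(G)=\Theta(\Phi)$ and, more delicately, pinning down the near-$\lambda_2$ eigenvector $u_2$ well enough to define the interior vertices used above. The reason this is essential rather than cosmetic is that the trivial bound $p_{w,w}^{2s}\ge\pi(w)$ only yields $\sum_s p_{w,w}^{2s}=\Omega(\ell/n)=\Omega(c/\Phi)$, which matches $\E{Z}$ and so cannot separate ``met'' from ``did not meet''; the genuine $\Omega(1/\Phi)$ gain with a \emph{$c$-independent} constant comes only from the bottleneck, i.e. from the $\lambda_2(G)$-eigenmode. Everything else --- the conductance estimate, the two applications of \eqref{eq:central}, and the reduction of the regime $\Phi=\Omega(1)$ to \autoref{thm:meetingtime}$(iii)$ --- I expect to be routine.
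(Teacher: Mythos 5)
The paper does not prove this proposition itself; it is imported verbatim from \cite{BGKM16}, so there is no in-paper proof to compare against. Your first assertion (a $d$-regular graph with constant conductance and $\Omega(n)$ consensus time) is handled correctly: any $d$-regular expander works, and \autoref{thm:meetingtime}$(iii)$ does the rest. The second assertion, however, fails, and the failure is in the construction, not just the bookkeeping.

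\paragraph{The dumbbell does not slow meeting down.} For your two-expander dumbbell $G$ on $2n$ vertices with $\Theta(\Phi n)$ matching edges, the spectrum consists of one eigenvalue $\lambda_2 = 1-\Theta(\Phi)$ with delocalized eigenvector $u_2 \approx \pm 1$ (in the paper's $L^2(\pi)$ normalization, so $u_2(v)^2 = \Theta(1)$), and all remaining eigenvalues bounded away from $1$ by a constant. Using the standard identity $\thit(\pi,v) = \sum_{k\ge 2} u_k(v)^2/(1-\lambda_k)$, the low mode contributes $\Theta(1)/\Theta(\Phi) = \Theta(1/\Phi)$ and the bulk contributes $\Theta(n)$. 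In the regime $\Phi > 1/n$ specified in the proposition, $1/\Phi < n$, so $\thit(G) = \Theta(n)$. Since $\tmeet \le 4\thit$ (\autoref{pro:relatingmeetandhit}) and $\tmeet \ge \tavgmeet = \Omega(n)$ for any regular graph, the dumbbell has $\tmeet = \Theta(n)$ regardless of $\Phi$ --- nowhere near $\Omega(n/\Phi)$.

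\paragraph{Where the spectral estimate goes wrong.} The error surfaces in the line $\sum_{s=1}^{\ell/2} p_{w,w}^{2s} \ge u_2(w)^2 \sum_{s=1}^{\ell/2} \lambda_2^{2s} = \Omega(1/\Phi)$. The spectral representation \eqref{eq:spectral} is $p_{w,w}^{2s} = \pi(w)\bigl(1 + \sum_{k\ge 2} u_k(w)^2 \lambda_k^{2s}\bigr)$, so the correct lower bound is $\pi(w)\, u_2(w)^2 \sum_s \lambda_2^{2s} = \Theta\bigl(1/(n\Phi)\bigr)$, smaller than what you wrote by a factor of $\pi(w) = 1/(2n)$. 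That is exactly the factor you need: $\E{Z} = c/(2\Phi)$, so to separate ``met'' from ``not met'' you need a $c$-independent contribution to $\E{Z\mid Z\ge 1}$ of order $\Omega(1/\Phi)$, but the eigenmode gives only $O(1/(n\Phi))$. Equivalently, $\E{Z\mid Z\ge 1} \approx \ell\pi(w) + \pi(w)\thit(\pi,w)$, and the $c$-independent part is $\pi(w)\thit(\pi,w) = \Theta(\thit/n)$. To beat $\E{Z} = \Theta(c n/(\Phi n))$ you already need $\thit = \Omega(n/\Phi)$ --- i.e.\ the method presupposes the very hitting-time bound you are trying to construct, and the dumbbell does not supply it.

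\paragraph{What is actually needed.} The construction that works must have genuine local recurrence, i.e.\ $p_{w,w}^{2s} - \pi(w)$ decaying \emph{polynomially} rather than exponentially fast, since that is the only way to make $\sum_s p_{w,w}^{2s}$ exceed $\ell\|\pi\|_2^2$ by a constant factor independent of $\ell$. The cycle ($\Phi = \Theta(1/n)$, $p_{w,w}^{2s}-\pi(w) = \Theta(1/\sqrt{s})$) is the prototype; any graph that is ``expander-like at all scales,'' such as your dumbbell or a cycle-of-expanders joined by near-perfect matchings, lacks this property. A construction of \cite{BGKM16} of the right shape looks more like a long one-dimensional chain (a thickened or ``fattened'' cycle whose cross-sections are small so that a random walk diffuses slowly along the chain), but the precise parameterization matching conductance $\Theta(\Phi)$ to meeting time $\Theta(n/\Phi)$ is exactly the content that is missing here.
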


\begin{theorem}\label{thm:cooperimproved}
Let $G$ be  any non-regular graph with maximum degree $\Delta$, average degree $d$ and minimum degree $\delta$, we have
\[
  \tmeet = O \left(  \frac{\sqrt{\Delta d}}{\delta} \cdot \frac{n}{\sqrt{1-\lambda_2}} \right) .
\]
In particular, 
\[
 \tcoal = O(\thit \log (\Delta/\delta)) = O \left(  \frac{\sqrt{\Delta d}}{\delta} \cdot \frac{n}{\sqrt{1-\lambda_2}} \log (\Delta/\delta) \right) 
\]

Furthermore,
\[ 
 \tcov  = O \left(  \frac{\sqrt{\Delta d}}{\delta} \cdot \frac{n}{\sqrt{1-\lambda_2}} \log n\right)
\]
\end{theorem}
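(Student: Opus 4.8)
The plan is to obtain all three bounds as short deductions from the hitting-time estimate in \autoref{thm:awesomeregularhit}, which already does the real work of replacing one factor of $1/(1-\lambda_2)$ with its square root. First I would invoke \autoref{pro:relatingmeetandhit}, which gives $\tmeet \le 4\thit$ for every graph, so that the non-regular bound $\thit = O\!\big(\tfrac{\sqrt{\Delta d}}{\delta}\cdot\tfrac{n}{\sqrt{1-\lambda_2}}\big)$ of \autoref{thm:awesomeregularhit} immediately yields the claimed estimate on $\tmeet$. No fresh collision-counting argument is needed here; the factor $\sqrt{\Delta d}/\delta$ is exactly what \autoref{thm:awesomeregularhit} produces when it passes from the modified chain $Q$ (with uniform stationary distribution) back to $P$ through the variational comparison of spectral gaps, so this theorem merely packages that estimate for the meeting, coalescence and cover times.

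For $\tcoal$, I would apply \autoref{thm:regular}, i.e.\ $\tcoal = O(\thit + \tmeet\log(\Delta/d))$. Since $d \ge \delta$ we have $\log(\Delta/d) \le \log(\Delta/\delta)$, and combining with $\tmeet = O(\thit)$ from the previous step gives $\tcoal = O(\thit\log(\Delta/\delta))$; substituting \autoref{thm:awesomeregularhit} completes this line. (When $\Delta/\delta = O(1)$ the logarithm is a constant and we recover $\tcoal = O(\thit)$, consistent with the almost-regular case of \autoref{thm:hittingtime}.) For $\tcov$, I would use Matthews' method, which bounds $\tcov \le (1+\ln n)\cdot\max_{u\ne v}\thit(u,v) = O(\thit\log n)$ for any reversible chain, and then substitute the bound of \autoref{thm:awesomeregularhit} once more; the same route underlies the $\tcov$ estimate in \autoref{thm:spectral}.

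The genuine obstacle is not in this theorem but upstream, in \autoref{thm:awesomeregularhit}: controlling $\sum_{t\ge0}(p_{u,u}^t-\pi(u))$ by splitting into an initial window of length $O(\trel)$, where \autoref{lem:AF} gives $p_{u,u}^t = O(1/\sqrt{t})$, and a geometric tail in which $\|P^{k\trel}p_{u,\cdot}^{\sqrt{\trel}}-\pi\|_2^2 \le \lambda_2^{k\trel}\|p_{u,\cdot}^{\sqrt{\trel}}-\pi\|_2^2$; for non-regular $G$ the only extra ingredient is the comparison $1-\lambda_2(Q) \ge (1-\lambda_2(P))\cdot\tfrac{\Delta}{d}\cdot(\tfrac{d}{\delta})^2$. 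Granting that estimate, the present theorem is essentially a two-line argument, and the only remaining check is that the constants hidden in $\tmeet \le 4\thit$, in \autoref{thm:regular}, and in Matthews' bound do not interact with the displayed $\log(\Delta/\delta)$ and $\log n$ factors — they do not.
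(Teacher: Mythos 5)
Your proposal follows the paper's own route essentially verbatim: the $\tmeet$ bound from $\tmeet \leq 4\thit$ (\autoref{pro:relatingmeetandhit}) combined with \autoref{thm:awesomeregularhit}, the $\tcoal$ bound from $\tcoal = O(\thit + \tmeet\log(\Delta/d))$ (\autoref{thm:regular}, equivalently \autoref{thm:hittingtime}.(ii)) with the observation $\log(\Delta/d) \leq \log(\Delta/\delta)$, and the $\tcov$ bound from $\tcov = O(\thit\log n)$ (Matthews' method). The only cosmetic difference is that the paper cites \autoref{thm:hittingtime} where you cite \autoref{thm:regular}, but these are the same result.
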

The upper bound on $\tmeet$ and $\tcoal$ gives $\tmeet=O(n^2)$ for cycles and paths, and $\tmeet=O(n)$ on regular expanders (since $1/(1-\lambda_2)=O(1)$).\fnote{remove the stuff before?}
 It thus improves the bound by Cooper \etal~\cite[Theorem~1]{CEOR13}, which states that for any regular graph, $\tmeet = O(n/(1-\lambda_2))$. 

\begin{proof}
The proof of the first part follows from \autoref{thm:awesomeregularhit} and $\tmeet \leq 4 \thit$ (\autoref{pro:relatingmeetandhit}).
The Second part is due to \autoref{thm:awesomeregularhit}  and \autoref{thm:hittingtime}.
The last statement follows from  \autoref{thm:cooperimproved} and the well-known trivial bound $\tcov=O(\thit \cdot \log n)$.
\end{proof}

For any $\gap=O(1)$-approximate regular graph, we also improve the best-known bound on the cover time $\tcov$ in terms of the eigenvalue gap, which is $\tcov = O\left( n \log n/ (1-\lambda_2) \right)$ established by Broder and Karlin in 1989~\cite{BK89}.

As mentioned earlier, $\sum_{t=0}^{\tmix-1} p_{u,u}^t \leq \tmix \cdot \pi(u) + \frac{1}{1-\lambda_2}$ 
are well-known bounds. The next corollary provides an improvement in many cases:

\begin{corollary}\label{cor:returns}
For any $\gap$-approximate regular graph $G=(V,E)$  
\[
  R_{\max} = \max_{u \in V} \sum_{t=0}^{\tmix-1} p_{u,u}^t =O\left( \min \left\{ \gap \sqrt{\frac{d}{\delta}\frac{\log (1/(1-\lambda_2))}{1-\lambda_2}},  \gap^{3/2} \sqrt{\tmix} \right\} \right)= O\left( \gap^{3/2} \cdot \sqrt{\beta} \right).
\]
\lfnote{We might want to derive the exact constants commented out in the following }
\end{corollary}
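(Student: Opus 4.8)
The plan is to bound $R_{\max} = \max_{u\in V}\sum_{t=0}^{\tmix-1}p_{u,u}^t$ by splitting the sum at the relaxation-time scale $T_0 := \lceil\trel\ln\trel\rceil$, $\trel = 1/(1-\lambda_2)$, and estimating the two pieces with the two complementary tools from the appendix. For the early part, $t<\min\{T_0,\tmix\}$, I would invoke \autoref{lem:AF}, which gives $\sum_{t=0}^{\tau-1}p_{u,u}^t\le 2\gap\sqrt{5\tau}$ for $\tau\le 5n^2$; taking $\tau=\min\{T_0,\tmix,5n^2\}$ this contributes $O(\gap\sqrt{\trel\log\trel})$ when $\tmix\ge T_0$ and $O(\gap\sqrt{\tmix})$ when $\tmix<T_0$ (the latter already being $\le\gap^{3/2}\sqrt\beta$, since then $\beta=\tmix$). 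For the late part, $t\ge T_0$, I would use the spectral bound $p_{u,u}^t\le\pi(u)+\lambda_2^t$ of \autoref{lem:loop}(i) — legitimate since for a lazy chain all eigenvalues are nonnegative so $\lambda=\lambda_2$ — which yields $\sum_{t\ge T_0}\lambda_2^t\le\trel\lambda_2^{T_0}\le\trel\,e^{-T_0(1-\lambda_2)}\le\trel\,e^{-\ln\trel}=1$, leaving only the flat residual $\sum_{t=T_0}^{\tmix-1}\pi(u)\le\tmix\cdot\pi_{\max}$.

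This gives $R_{\max}=O\bigl(\gap\sqrt{\trel\log\trel}+\tmix\cdot\pi_{\max}\bigr)$, and the remaining work is to absorb $\tmix\cdot\pi_{\max}$ into the claimed bound. I would use $\pi_{\max}\le\gap/n$ together with $\tmix=O(\thit)$, which follows from \autoref{lem:PS15} (Peres–Sousi) since $\thit(\beta)\le\thit$, and then available hitting-time estimates: the elementary $\thit\le 2m\cdot\diam=O(\gap n^2)$ for $\gap$-approximate regular graphs (using $\diam=O(n/\delta)$, $2m=nd$, $d/\delta\le\gap$, and effective resistance $\le$ graph distance), which gives $\gap\tmix/n=O(\gap\sqrt{\tmix}\cdot\sqrt{\gap n^2}/n)=O(\gap^{3/2}\sqrt{\tmix})$; and, alternatively, \autoref{thm:awesomeregularhit}, $\thit=O(\tfrac{\sqrt{\Delta d}}{\delta}\cdot n/\sqrt{1-\lambda_2})$, which gives $\tmix\cdot\pi_{\max}=O(\gap\sqrt{\tfrac d\delta\,\trel})$, matching the first claimed branch up to a $\sqrt{\log\trel}$ factor that a short case split (on whether $\gap\le\log\trel$) removes. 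The compact form $O(\gap^{3/2}\sqrt\beta)$ then follows since $d/\delta\le\Delta/\delta=\gap$ makes the first claimed bound at most $\gap^{3/2}\sqrt{\trel\log\trel}$, so the minimum of the two claimed bounds is at most $\gap^{3/2}\sqrt{\min\{\trel\log\trel,\tmix\}}=\gap^{3/2}\sqrt\beta$.

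The main obstacle, I expect, is controlling the flat tail of the return sum past time $\approx n^2$ — equivalently, showing $\tmix\cdot\pi_{\max}$ is dominated by the stated bounds. This is precisely where the restriction $\tau\le 5n^2$ in \autoref{lem:AF} bites: once $\gap$ is large enough that $\tmix$ can exceed $5n^2$, the $\sqrt\tau$-return bound no longer reaches $\tmix$ and a purely spectral tail estimate is too weak, so one is forced to route through the Peres–Sousi identity $\tmix=O(\thit)$ and a graph-theoretic (diameter-based) hitting-time bound rather than a spectral one. Reconciling the constants and the $d/\delta$-versus-$\gap$ distinction in the first branch is fiddly but routine.
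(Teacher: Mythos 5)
Your decomposition---split at $\trel\ln\trel$, use \autoref{lem:AF} for the early part and the spectral bound $p_{u,u}^t\leq\pi(u)+\lambda_2^t$ for the tail, then absorb the flat residual $\tmix\cdot\pi_{\max}$---is exactly the paper's proof of this corollary; the paper's cutoff $x = \ln(1/(1-\lambda_2))/(1-\lambda_2)$ is your split point. The one genuine difference is the absorption step: the paper bounds $\tmix = O(\log n / (1-\lambda_2))$, lower-bounds $1-\lambda_2 \geq \delta/(dn^2)$ via the diameter, and splits on $\trel\gtrless\sqrt n$, while you route through $\tmix = O(\thit)$ (Peres--Sousi) and \autoref{thm:awesomeregularhit}. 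That alternative route is legitimate and non-circular, and your handling of the $5n^2\leq t < \trel\ln\trel$ middle range (fold it into the flat tail via monotonicity of $p_{u,u}^t$ and \autoref{lem:AF}) also matches the paper's use of $p_{u,u}^t\leq p_{u,u}^{5n^2}=O(\gap/n)$.

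There is, however, an arithmetic slip in the absorption. From $\thit = O\bigl(\tfrac{\sqrt{\Delta d}}{\delta}\cdot\tfrac{n}{\sqrt{1-\lambda_2}}\bigr)$ and $\pi_{\max}\leq\gap/n$ you get $\tmix\cdot\pi_{\max} = O\bigl(\gap\cdot\tfrac{\sqrt{\Delta d}}{\delta}\cdot\sqrt{\trel}\bigr)$, and since $\tfrac{\sqrt{\Delta d}}{\delta} = \sqrt{\gap\cdot\tfrac{d}{\delta}}$ this is $O\bigl(\gap^{3/2}\sqrt{\tfrac{d}{\delta}\trel}\bigr)$, not the $O\bigl(\gap\sqrt{\tfrac{d}{\delta}\trel}\bigr)$ you wrote: a missing $\sqrt\gap$. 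The resulting discrepancy with the first displayed branch is a factor of $\sqrt\gap/\sqrt{\log\trel}$, not $\sqrt{\log\trel}$, so your proposed case split on $\gap\leq\log\trel$ does not close it when $\gap>\log\trel$. That said, the corrected bound (or, tighter, $O\bigl(\gap\sqrt{\tfrac{\Delta}{d}\trel}\bigr)\leq O(\gap^{3/2}\sqrt\trel)$ if you use $\pi_{\max}\leq\Delta/(nd)$) still suffices for the compact final claim $R_{\max}=O(\gap^{3/2}\sqrt\beta)$: when $\beta=\trel\log\trel$ it is dominated by $\gap^{3/2}\sqrt{\trel\log\trel}$, and when $\beta=\tmix$ your second branch covers it. To recover the first displayed branch with the $\sqrt{d/\delta}$ factor you would need the paper's own case-split argument; as written, your version only establishes the $\gap^{3/2}\sqrt\beta$ form.
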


\begin{proof}

Thus, using that $p^t_{u,u}$ is non-increasing (\eg \autoref{lem:loop}) and \autoref{lem:AF}, we derive
for any $\tau$
 \begin{align}\label{eq:SDAX}
 	 \sum_{t=0}^{\tau-1} p_{u,u}^t & \leq  2\gap \sqrt{5\tau} +\sum_{t=5n^2}^{\tau-1} p^{5n^2}_{u,u}\leq  2\gap \sqrt{5\tau}+\tau \cdot \frac{20\Gamma}{\sqrt{5}n},
 	  \end{align}
where we used that $p^{5n^2}_{u,u} \leq \frac{20\Gamma}{\sqrt{5}n}$, by \autoref{lem:AF}.
In particular, using $\tmix\leq \thit = O(\gap\cdot  n^2)$  (\cite[Corollary 6.9]{AF14})
 \begin{align*}
 	R_{\max} &\leq 2 \gap \sqrt{5 \tmix}+\tmix \cdot \frac{20\Gamma}{\sqrt{5}n} \\
 	&= O\left(\gap \cdot \sqrt{\tmix} \left(1+ \sqrt{\tmix}/n\right) \right) \stackrel{\tmix = O(\gap\cdot  n^2)}{=} 	
O\left(\gap^{3/2} \sqrt{\tmix}\right)
 \end{align*}

In the remainder we derive a bound in terms on $R_{\max}$ in terms of $\frac{\log (1/(1-\lambda_2))}{1-\lambda_2}$.
We split the expected number of returns to $u$ at time $x \leq \tmix$ and obtain
\begin{align*}
   \sum_{t=0}^{\tmix-1} p_{u,u}^t &=  \sum_{t=0}^{x-1} p_{u,u}^t
  +  \sum_{t=x}^{\tmix-1} p_{u,u}^t
  \stackrel{ \text{\autoref{lem:loop} \& \eqref{eq:SDAX}}}{\leq}  \left(2 \gap \cdot \sqrt{5x} + x \cdot 20\gap/n \right)+ \sum_{t=x}^{\tmix-1} \left( \pi(u) + \lambda_2^t \right) 
  \\
   & \leq 2 \gap \cdot \sqrt{5x} + 21 \gap \cdot \tmix / n + \frac{\lambda_2^x }{1-\lambda_2}.
\end{align*}
Next choose $x=\frac{\ln\left( 1/(1-\lambda_2) \right)}{1-\lambda_2}$. Since $f(y) = y^{1/(1-y)}$ is bounded from above by $1/e$ for any $y \in (0,1)$, we have
\begin{align*}
   \sum_{t=0}^{\tmix-1} p_{u,u}^t \leq 10 \gap \cdot \sqrt{ \frac{d}{\delta} \frac{\ln\left( 1/(1-\lambda_2) \right)}{1-\lambda_2} }  + 21\Gamma \cdot \tmix/n + 1.
\end{align*}
We next prove that the second term in the bound above is always asymptotically upper bounded by the first one. This is established via a simple case distinction. First, if $1/(1-\lambda_2) \leq n^{1/2}$, then the claim holds because of $\tmix=O(\log n/(1-\lambda_2))=O(\sqrt{n}\log n)$ and hence $\gap \tmix/n =o( \gap )$ whereas the first term is $\Omega(\gap)$.
Secondly, if $1/(1-\lambda_2) \geq n^{1/2}$, then using the same bound on $\tmix$ along with the fact that $1-\lambda_2 \geq \frac{\delta}{d n^2 }$, where $d$ is the average degree and $\delta$ the minimum degree:

By
\cite[Lemma 1.9]{chung1997spectral}, we have $1-\lambda_2 \geq \frac{1}{\diam d n }$, and
since the diameter of a graph is at most $n/\delta$, we get
$1-\lambda_2 \geq \frac{\delta}{ d n^2 }$.

\[
   \tmix = O \left(\frac{\log n}{\sqrt{1-\lambda_2} \cdot \sqrt{1-\lambda_2}} \right)
   = O \left( \sqrt{\frac{d n^2}{\delta}} \frac{ 2 \log (n^{1/2})}{\sqrt{1-\lambda_2}}   \right)
   = O \left( n \cdot  \sqrt{ \frac{d}{\delta} \frac{ \log(1/(1-\lambda_2))  }{1-\lambda_2}  }  \right).
\]

\end{proof}

We now derive an extension of \autoref{thm:cooperimproved} that is more suited for graphs with a very high degree discrepancy.
\begin{theorem}\label{thm:nonreg}
Let $G=(V,E)$ be any $\gap$-approximate regular graph. Then,
\[
  \tmeet =  O \left( \frac{\gap^{3/2} \log_2(\gap) \cdot \sqrt{\beta}}{ \| \pi\|_2^2}   \right).
\]
Furthermore,
\[
  \tmeet = O \left(
  \frac{\frac{\log (\gap)}{1-\lambda_2} + \log_2(\gap) \cdot \tmix \cdot \max_{u \in V} \pi(u)   }{ \| \pi \|_2^2}
  \right).
\]
\end{theorem}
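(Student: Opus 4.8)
The plan is to derive both displayed bounds as one-step consequences of the $\gap$-approximate regular case of \autoref{thm:meetingtime}$(ii)$, which asserts that for every pair of vertices $u,v$,
\[
  \tmeet(u,v) \;\leq\; \frac{10 e^2 \cdot (4 + \log_2 \gap) \cdot R_{\max}}{\| \pi \|_2^2},
\qquad R_{\max}=\max_{u\in V}\sum_{t=0}^{\tmix-1}p_{u,u}^{t},
\]
so that taking the maximum over $u,v$ gives the same bound for $\tmeet$. Everything then reduces to estimating $R_{\max}$, and the two parts of the theorem correspond to two different estimates of this quantity.

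For the first inequality I would simply invoke \autoref{cor:returns}, which already establishes $R_{\max} = O\!\left(\gap^{3/2}\sqrt{\beta}\right)$ with $\beta=\min\{\log(1/(1-\lambda_2))/(1-\lambda_2),\ \tmix\}$. Substituting this into the inequality above and absorbing the prefactor $4+\log_2\gap = O(\log_2\gap)$ — here adopting the convention that $\log_2\gap$ means $\max\{1,\log_2\gap\}$, since the statement is only meaningful once $\gap$ is bounded away from $1$ — yields $\tmeet = O\!\left(\gap^{3/2}\log_2(\gap)\sqrt{\beta}\,/\,\| \pi \|_2^2\right)$, as claimed. There is genuinely nothing further to check here; the real work is buried inside \autoref{cor:returns}, whose proof in turn rests on \autoref{lem:AF} and the spectral estimate $p_{u,u}^{t}\leq\pi(u)+\lambda_2^{t}$ of \autoref{lem:loop}$(i)$.

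For the second inequality I would instead feed in the crude return bound. By \autoref{lem:loop}$(i)$ we have $p_{u,u}^{t}\leq\pi(u)+\lambda_2^{t}$ for all $t$, and summing the geometric tail gives $\sum_{t=0}^{\tmix-1}p_{u,u}^{t}\leq \tmix\cdot\pi(u)+\frac{1}{1-\lambda_2}$, hence $R_{\max}\leq \tmix\cdot\max_{u\in V}\pi(u)+\frac{1}{1-\lambda_2}$. Plugging this into the bound from \autoref{thm:meetingtime}$(ii)$ and distributing the $O(\log_2\gap)$ factor over the two summands produces
\[
  \tmeet = O\!\left(\frac{\tfrac{\log\gap}{1-\lambda_2}+\log_2(\gap)\cdot\tmix\cdot\max_{u\in V}\pi(u)}{\| \pi \|_2^2}\right),
\]
again under the convention that $\log\gap$ absorbs the additive constant, i.e.\ stands for $O(1+\log_2\gap)$.

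Since both halves are immediate from results already proved in the excerpt, there is no substantial obstacle: the only care needed is in the bookkeeping around the logarithmic prefactor in the regime $\gap\to 1$ (where $4+\log_2\gap=\Theta(1)$ while $\log_2\gap\to 0$), and in matching each of the two $R_{\max}$-estimates — the refined one from \autoref{cor:returns} and the elementary geometric-series one — to the correct target expression. If anything, the "hard part" was already carried out in establishing \autoref{thm:meetingtime}$(ii)$ and \autoref{cor:returns}; this theorem is essentially a corollary packaging the two available bounds on $R_{\max}$.
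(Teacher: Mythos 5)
Your proposal is correct and matches the paper's argument: the paper likewise reduces the first bound to \autoref{cor:returns} and the second to the geometric-tail estimate $\sum_{t<\tmix} p_{u,u}^t \leq \tmix\,\pi(u)+\frac{1}{1-\lambda_2}$ from \autoref{lem:loop}, the only cosmetic difference being that it re-runs the bucketing/conditional-expectation calculation from the proof of \autoref{thm:meetingtime}$(ii)$ inline rather than quoting that theorem's $\gap$-approximate-regular conclusion as a black box the way you do. The convention you flag — reading $\log_2\gap$ as $\Theta(4+\log_2\gap)$ — is exactly the one the paper implicitly adopts when it replaces the $4+\log_2\gap$ factor by $\log_2\gap$ inside the big-$O$.
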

We point out that for constant $\gap$, the first statement of the theorem recovers the second statement of \autoref{thm:cooperimproved}.
\begin{proof}
Similar to the proof of \autoref{thm:meetingtime},
we define $S_i := \left\{ u \in V \colon \deg(u) \in (2^{i-1},2^{i}] \right\}$, and note that $S_0,\ldots,S_{\log_2 n}$ forms a partition of $V$. Since the graph is $\gap$-approximate regular, at most $4 + \log_2 (\gap)$ of the $S_i$'s are non-empty. Hence there exists a set $S_j$ with
	\[
	   \sum_{j \in S_j} \pi(j)^2 \geq \frac{1}{4 + \log_2(\gap)} \cdot \| \pi \|_2^2.
	\]
We will only count collisions on vertices in that bucket, \ie $Z:=\sum_{t=\tsep}^{\tsep+\tmix-1} \mathbf{1}_{X_t=Y_t} \cdot \mathbf{1}_{X_t \in S_j}$, where $(X_t)_{t\geq 0}$ and $(Y_t)_{t \geq 0}$ are two arbitrary walks. Then,
\begin{align*}
  \E{Z} \geq \frac{\tmix \cdot \|\pi\|_2^2}{e^2 \cdot (4 + \log (\gap))} .
\end{align*}
Furthermore,
\begin{align*}
  \E{Z \, \mid \, Z \geq 1} \leq \max_{u \in S_j} \sum_{t=0}^{\tmix-1} \sum_{v \in S_j} \left( p_{u,v}^t \right)^2 
  &\leq 4 \max_{u \in S_j} \sum_{t=0}^{\tmix-1} p_{u,u}^{2t} = O\left(\gap^{3/2} \cdot \sqrt{\beta}\right) 
  \end{align*}
  where the second inequality holds due to the fact that vertices in $S_j$ have the same degree up to a factor of $2$ and the final inequality holds due to \autoref{cor:returns}.
Plugging the two bounds into \eqref{eq:central} yields
\begin{align*}
  \Pr{ Z \geq 1} &= \Omega \left( \frac{\tmix \cdot \| \pi \|_2^2}{ \gap^{3/2} \log_2(\gap) \cdot \sqrt{\beta}} \right) =: p.
\end{align*}
Hence, by iterating over consecutive time-intervals of length $\tsep+\tmix \leq 5\tmix$ that are independent, we conclude
\begin{align*}
 \tmeet &\leq \frac{1}{p} \cdot 5 \, \tmix =
 O \left( \frac{\gap^{3/2} \log_2(\gap) \cdot \sqrt{\beta}}{ \| \pi\|_2^2}   \right).
\end{align*}
For the second statement, we also have, by \autoref{lem:loop}, 
  \begin{align*}
   4 \max_{u \in S_i} \sum_{t=0}^{\tmix-1} p_{u,u}^{t} \leq 4 \cdot \frac{1}{1-\lambda_2} + \tmix \cdot \max_{u \in V}\pi(u),
  \end{align*}
and the bound on $\tmeet$ is derived in exactly the same way as before.
\end{proof}

\subsection{Relating Meeting Time to $\thit$}\label{sec:meetvt}

In this section we prove the following proposition which can be seen as an
analogous version of \cite[Proposition 14.5]{AF14}  in discrete time. 

\begin{proposition}\label{pro:relatingmeetandhit}
For any graph $G=(V,E)$ and $u,v \in V$ we have  
\[
 (\min_{u'\in V} \thit( \pi, u') +  \thit(u,v)-\thit(\pi, v))/2 \leq  \tmeet(u,v) 
\leq 2(\max_{u'\in V} \thit(\pi, u')+\thit(u,v)-\thit(\pi, v) ).
\]
Consequently, for any graph we have $\tmeet \leq 4\thit$ and  for any vertex transitive graph $G$ we have $\thit/2 \leq  \tmeet\leq 2\thit$. 
\end{proposition}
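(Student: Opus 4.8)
The statement to prove is Proposition~\ref{pro:relatingmeetandhit}: for any graph $G=(V,E)$ and $u,v\in V$,
\[
 (\min_{u'\in V}\thit(\pi,u') + \thit(u,v)-\thit(\pi,v))/2 \leq \tmeet(u,v) \leq 2(\max_{u'\in V}\thit(\pi,u') + \thit(u,v)-\thit(\pi,v)),
\]
with the corollaries $\tmeet\leq 4\thit$ for any graph and $\thit/2\leq\tmeet\leq 2\thit$ for vertex-transitive graphs.

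\textbf{Plan.} The approach is to mimic the continuous-time argument of \cite[Proposition~14.5]{AF14}, replacing the continuous-time identity with its discrete-time analogue and absorbing a factor of $2$ each time a crude bound is used (which is why the constants degrade from the sharp continuous-time version). First I would set up the standard ``difference-of-hitting-times is a martingale-like quantity'' machinery: for a single lazy random walk $(X_t)$, consider the function $h(x) := \thit(x,v) - \thit(\pi,v)$, which is harmonic off $v$ in the sense that $\E{h(X_{t+1})\mid X_t=x} = h(x)-1$ for $x\neq v$ and the one-step drift is controlled, so that $h(X_t)+t$ is a martingale up to the hitting time of $v$. The key object, following Aldous--Fill, is to run the two independent walks $(X_t)$ (started at $u$) and $(Y_t)$ (started at $v$) and track $\thit(X_t,Y_t)$-type quantities; by symmetry and the harmonicity, the process $\thit(X_t, Y_t) - \thit(\pi, Y_t) + (\text{time})$ behaves like a supermartingale/submartingale (up to the factor-$2$ slack coming from laziness and from bounding $\thit(x,y)+\thit(y,x)$), and optional stopping at the meeting time $\tmeet(u,v)$ yields both inequalities. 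Concretely: for the upper bound, I would show $\E{\thit(X_{t\wedge\tmeet}, Y_{t\wedge\tmeet})} $ plus an appropriate multiple of $t\wedge\tmeet$ is a supermartingale, so that $\tmeet(u,v)$ is bounded by the initial value $\thit(u,v)$ plus a ``reset'' term involving $\max_{u'}\thit(\pi,u')$; for the lower bound, the reverse inequality with $\min_{u'}\thit(\pi,u')$.

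\textbf{Key steps in order.} (1) Record the discrete-time identities $\thit(\pi,v) = \sum_{t\geq 0}(p^t_{v,v}-\pi(v))/\pi(v)$ and the one-step harmonic relation for $h(x)=\thit(x,v)$ (standard; cf.\ \cite{AF14}), being careful about the lazy chain. (2) Define the two-walk process and identify the (super/sub)martingale; the cleanest route is to use the symmetry $\tmeet(u,v)=\tmeet(v,u)$ and the fact that from the meeting point, the ``remaining'' structure is controlled by the stationary-averaged hitting times. (3) Apply optional stopping (justified since $\tmeet(u,v)<\infty$ with finite expectation — itself needing a quick a priori bound, e.g.\ from $\tmeet\leq\tcoal\leq\tcoal(S_0)=O(\tmeet\log|S_0|)$ is circular, so instead use the trivial bound that two lazy walks meet in finite expected time because the product chain is an irreducible finite Markov chain and the diagonal is hit in finite expected time). (4) Rearrange to obtain the two-sided bound with the factor $2$. (5) Deduce $\tmeet\leq 4\thit$ by bounding $\max_{u'}\thit(\pi,u')\leq\thit$ and $\thit(u,v)-\thit(\pi,v)\leq\thit$. (6) For vertex-transitive $G$, note $\thit(\pi,w)$ is independent of $w$ and in fact $\thit(\pi,w)=\thit^{\text{avg}}$ is comparable to $\thit$ (indeed, by vertex-transitivity and the random target lemma, $\thit(\pi,w)$ equals a constant and $\thit = \Theta(\thit(\pi,w))$ — more precisely $\thit(\pi,w)\le\thit\le 2\thit(\pi,w)$ type inequalities hold), so the $\thit(u,v)-\thit(\pi,v)$ term and the stationary-averaged terms collapse, giving $\thit/2\leq\tmeet\leq 2\thit$.

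\textbf{Main obstacle.} The delicate point is getting the martingale bookkeeping exactly right in discrete time with a \emph{lazy} chain: in continuous time the hitting-time function is exactly harmonic with unit drift, but in discrete time one must be careful that $\thit(x,v)$ decreases by exactly $1$ in expectation per step off $v$, and the laziness (probability $1/2$ of staying put) does not actually break this — it only rescales — but the interaction term when \emph{both} walks move simultaneously is where a genuine factor of $2$ slack enters (two independent walks each lazy meet ``half as fast'' as one walk hitting a fixed target, heuristically). I expect the bulk of the work to be in carefully writing the supermartingale $M_t := \thit(X_{t\wedge\tmeet},Y_{t\wedge\tmeet}) - \thit(\pi,Y_{t\wedge\tmeet}) + \max_{u'}\thit(\pi,u') + \tfrac12(t\wedge\tmeet)$ (and its submartingale counterpart with $\min$ and $2(t\wedge\tmeet)$), verifying the one-step inequality $\E{M_{t+1}\mid\mathcal F_t}\leq M_t$ using reversibility and the harmonic identity, and then justifying optional stopping; everything after that is routine algebra and the two easy corollaries.
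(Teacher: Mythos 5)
Your high-level strategy --- import the continuous-time Aldous--Fill bound (\cite[Proposition~14.5]{AF14}) at the cost of constant factors --- is the right one, and you correctly identify the main obstacle. But the proposal does not resolve that obstacle, and the candidate supermartingale you write down does not obviously have the claimed drift. When $X$ and $Y$ both update in a single step, the one-step change of $\thit(X_t,Y_t)-\thit(\pi,Y_t)$ has no clean bound: $h(x)=\thit(x,v)$ is harmonic in $x$ (for fixed $v$) with unit drift, but as a function of the \emph{second} argument $\thit(x,\cdot)-\thit(\pi,\cdot)$ equals the fundamental-matrix expression $\thit(\pi,\cdot) - Z_{\cdot,x}/\pi_x$, whose one-step increment under the walk in $y$ is not $-\tfrac12$ (nor any constant) and depends on $x$. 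So verifying $\mathbb{E}[M_{t+1}\mid\mathcal F_t]\leq M_t$ for your $M_t$ is not ``routine algebra''; it is precisely where the argument must live, and it is not even clear that the inequality holds for that $M_t$.

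The paper's proof uses a device your outline misses entirely: it introduces a \emph{sequential} pair of walks $(X_t,Y_t)$ in which $X$ is only allowed to move on even time-steps (and is forced to be lazy on odd ones) while $Y$ moves only on odd steps. At every single step exactly one walk moves towards a fixed target, so this chain behaves exactly like the continuous-time setting where only one particle jumps at a time, and \cite[Proposition~3.3]{AF14} applies verbatim to give
\[
\min_{u'\in V}\thit(\pi,u') \;\leq\; \tmeet^{\mathrm{seq}}(u,v) - \bigl(\thit(u,v)-\thit(\pi,v)\bigr) \;\leq\; \max_{u'\in V}\thit(\pi,u').
\]
The factor-$2$ slack then enters by a completely separate coupling step: with $X'_t:=X_{2t}$ and $Y'_t:=Y_{2t}$, a parallel meeting at step $t$ forces a sequential meeting by step $2t$, so $\tmeet^{\mathrm{seq}}(u,v)\leq 2\tmeet(u,v)$; conversely, conditioned on the sequential walks colliding at step $2t$ or $2t+1$, laziness gives the coupled parallel walks probability at least $1/4$ of colliding at the corresponding even step, so $\tmeet(u,v)\leq 2\tmeet^{\mathrm{seq}}(u,v)$.

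The gap in your proposal is therefore a missing reduction from simultaneous-move dynamics to one-walk-at-a-time dynamics, and the supermartingale you propose does not supply it. Once that reduction (or an equivalent one) is in hand, the remainder of your outline is sound: the a priori finiteness of $\mathbb{E}[\tmeet(u,v)]$ on the irreducible product chain justifies optional stopping; $\max_{u'}\thit(\pi,u')\leq\thit$ together with $\thit(u,v)-\thit(\pi,v)\leq\thit$ gives $\tmeet\leq 4\thit$; and on a vertex-transitive graph $\thit(\pi,\cdot)$ is constant so the bracketed term collapses to $\thit(u,v)$ exactly, yielding $\thit/2\leq\tmeet\leq 2\thit$ after the factor-$2$ transfer.
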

\begin{proof}

We define a pair of chains $((X_t)_{t\geq 0},(Y_t)_{t\geq 0})$ with \emph{arbitrary} start vertices $X_0,Y_0 \in V$, called  \emph{sequential} random walks, by 
\begin{align*}
	X_{t+1} &= \begin{cases}
	  v \in N(X_t) \mbox{  w.p. $\frac{1}{2|N(X_t)|}$}  & \mbox{if $t$ is even} \\
	X_t					&  \mbox{otherwise}	
	\end{cases},\\
	\intertext{ and }
		Y_{t+1} &= \begin{cases}
			 v \in N(Y_t) \mbox{  w.p. $\frac{1}{2|N(Y_t)|}$}				&  \mbox{if $t$ is odd}	\\
	 Y_t  & \mbox{otherwise} 
	\end{cases}.
\end{align*}
In particular, for odd $t$ (even $t$, respectively)  the random-walk is lazy meaning $X_{t+1}=X_t$ (and $Y_{t+1}=Y_t$, respectively).

Consider two ``non-sequential'' random walks
$(X'_t)_{t\geq 0}$ and $(Y'_t)_{t\geq 0}$ with $X'_0=X_0$ and $Y'_0=Y_0$.
We will couple their decisions with the walks $(X_t)_{t\geq 0}$ and $(Y_t)_{t\geq 0}$, by setting
$X'_t=X_{2t} $ and $Y'_t=Y_{2t}$.
Due to this coupling and since each random walk is lazy w.p. $1/2$,
\[ 
\Pr{X'_{t+1} = Y'_{t+1} \,\mid\,    X_{2t}=Y_{2t}}=\Pr{X_{2t+2}=Y_{2t+2} \,\mid\,    X_{2t}=Y_{2t}} \geq 1/4,
\]
and
\[ 
\Pr{X'_{t+1} = Y'_{t+1} \,\mid\,    X_{2t+1}=Y_{2t+1}}=\Pr{X_{2t+2}=Y_{2t+2} \,\mid\,    X_{2t+1}=Y_{2t+1}} =1/2
\]

Let $\tmeet^{seq}(u,v)$ be the meeting time of the sequential chains $X_t$ and $Y_t$, \ie
\[
\tmeet^{seq}(u,v)=\min\{ t\geq 0~\mid~ X_t=Y_t , X_0=u , Y_0=v\}
\]
and $\tmeet^{seq}=\max_{u,v} \tmeet^{seq}(u,v)$.
We seek to relate $\tmeet$ with $\tmeet^{seq}$. Clearly,
$\tmeet^{seq}/2 \leq \tmeet $ since a meeting of $X'_t=Y'_t$ implies that
 $X_{2t}=Y_{2t}$.
For an upper bound on $\tmeet$
recall that $X'_t$ and $Y'_t$ meet, \ie  $X'_t=Y'_t$ w.p. at least $1/4$ whenever  $X_{2t-2}=Y_{2t-2}$ or $X_{2t-1}=Y_{2t-1}$.
Hence, by independence $\tmeet = \max_{u,v} \tmeet(u,v) \leq 4 (\tmeet^{seq}/2) = 2 \tmeet^{seq}.$
We conclude,
\begin{equation}\label{Honigkuchen}
\tmeet^{seq}(u,v)/2 \leq \tmeet(u,v) \leq 2\tmeet^{seq}(u,v).	
\end{equation}

We proceed by deriving upper and lower bounds on $\tmeet^{seq}$, which  gives us bounds on $\tmeet$. We will make use of the following statement that is a weaker version of the original statement \citet[Proposition 3.3]{AF14}.
For all $u,v\in V$ we have
\[
\min_{u'\in V} \thit( \pi, u') \leq \tmeet^{seq}(u,v) - (\thit(u,v)-\thit(\pi, v))\leq \max_{u'\in V} \thit(\pi, u')
\]
Using \eqref{Honigkuchen} we derive,
\[ 
\tmeet(u,v) \leq 2\tmeet^{seq}(u,v) \leq 2(\thit(u,v)-\thit(\pi, v) + \max_{u'\in V} \thit(\pi, u')),
\]
and
\[
\tmeet(u,v) \geq  \tmeet^{seq}(u,v)/2 \geq (\min_{u'\in V} \thit( \pi, u') +  (\thit(u,v)-\thit(\pi, v)))/2, 
\]
which yields the first part of the claim.
For vertex transitive chains we get using  $ \thit(\pi, u)=  \thit(\pi, u')$ for all $u,u'\in V$ and thus
\[
	\tmeet^{seq}(u,v)= \thit(u,v).
\]
Thus, putting everything together and fixing $u,v\in V$ to be the nodes maximizing $\thit(u,v)$, we derive 
\[ 
\thit =  \thit(u,v) = \tmeet^{seq}(u,v) \leq  \tmeet^{seq} \leq 2\tmeet.
\]
Similarly,
\[ 
	\tmeet \leq   2\tmeet^{seq}= 2 \max_{u,v} \tmeet^{seq}(u,v) \leq 2\thit.
\]
This yields \autoref{pro:relatingmeetandhit}.
\end{proof}

\lfnote{checkout if we can get Proposition 14.6 - relates it to the variation threshold}

\subsection{Proof of \autoref{thm:spectral}}
The proof follows from \autoref{thm:awesomeregularhit} and \autoref{thm:cooperimproved}.

\section{Applications to Concrete Topologies}\label{sec:special}

Here we derive $\tmeet$ and $\tcoal$ on specific topologies. Note that more general bounds for certain graph classes like regular graphs or vertex-transitive graphs have been stated earlier, see, \eg \autoref{thm:hittingtime} or \autoref{sec:meeting}.

\subsection{$2$-Dimensional~Grids/Tori and Paths/Cycles}

Next we apply our machinery to the $2$-dimensional grid and the $2$-dimensional torus.
For the continuous case a manual approach to bound meeting and coalescence times can be found in \cite{C89}.
Thanks to our general results, we can not only easily derive the correct bound on $\tmeet$, but also on $\tcoal$.
First, we recall the following well-known fact that for $2$-dim.~grid and torus: For any integer $t = O(n)$, \lfnote{add a reference} 
\begin{align}
  p_{u,u}^t = \pi(u) + \Omega( t^{-1} ), \label{eq:spectralgrid}
\end{align}
which can be derived, \eg by using the central limit theorem. Further, $\tmix = \Theta(n)$, and combining these two results, we immediately obtain
\[
  C_{\min} := \min_{u \in V} \sum_{t=0}^{\tmix} \sum_{v \in V} (p_{u,v}^t)^2 \geq \frac{1}{2} \min_{u \in V} \sum_{t=0}^{\tmix} p_{u,u}^{2t} = \Omega(\log n).
\]
Thus, by \autoref{thm:meetingtime}.$(iii)$, $\tmeet = \Omega(n \log n)$. For the upper bound, we apply  \autoref{thm:hittingtime} together with
the well-known bound $\thit=O(n\log n)$ to derive $\tcoal =O(\thit)=O(n\log n)$.

For cycles or paths, the corresponding formula to \eq{spectralgrid} is, 
for $t=O(n^2)$,
\begin{align*}
  p_{u,u}^t = \pi(u) + \Omega( t^{-1/2} ).
\end{align*}
 Hence $C_{\min} = \Omega(n)$, and therefore the third statement of \autoref{thm:meetingtime} implies $\tmeet = \Omega(n^2)$. For the upper bound, we apply  \autoref{thm:hittingtime} together with
$\thit=O(n^2)$ to derive $\tcoal =O(\thit)=O(n^2)$. Alternatively, the upper bound on $\tcoal$ could be also shown by using $\tmix = O(n^2)$ and applying the third statement \autoref{thm:cooperimproved}.

\subsection{$d$-Dimensional~Grids and Tori, $d \geq 3$}

Here the bounds on $\tmeet$ and $\tcoal$ follow immediately from our general results. First, for any regular graphs we have $\tmeet = \Omega(n)$ (\autoref{thm:meetingtime}.iii). Further, it is well-known that $\thit = O(n)$~(\eg~\cite{LPW06}), and the result follows by $\tcoal=O(\thit)$ shown in~\autoref{thm:hittingtime}. Alternatively, we could also use $\tmeet=O(\thit)$ (\autoref{pro:relatingmeetandhit}) to deduce $\tmeet=O(n)$. Combining this with the fact that $\tmix=O(n^{2/d})$~\cite{AF14}, \autoref{thm:mixtradeoff} yields the correct bound $\tcoal = O(n)$.

\subsection{Hypercubes}
Tight bounds for the hypercube can be obtained through different tools we provide.

Firstly, it follows trivially from \autoref{thm:hittingtime}:
Since the  hypercube is regular (in fact, it is even vertex-transitive) it suffices to consider the hitting time.
 We have  $\thit=O(n)$ (see \eg~\cite{L93}) and
recall that $\thit=\Omega(n)$ by \autoref{thm:meetingtime}. Hence applying~ \autoref{thm:hittingtime} yields $\tcoal=\Theta(\thit)=\Theta(n)$.

Alternatively, we could also use the more elementary bound $\tmeet \leq 2 \thit$ by \autoref{pro:relatingmeetandhit} to conclude $\tmeet=O(n)$. Since it is a well-known fact that $\tmix = O(\log n \cdot \log \log n)$ \cite{LPW06}, we obtain by~\autoref{thm:mixtradeoff} that $\tcoal = \Theta(\tmeet) = \Theta(n)$. 

\subsection{(Regular) Expanders}

It is not surprising that on regular expander graphs, we have $\tcoal = \Theta(n)$ and there is a multitude of approaches to establish this (for instance, the result is a consequence of the main result by \cite{CEOR13}). With regard to our bounds, the easiest route is to follow the arguments for the hypercube: Combine the result $\thit = O(n)$ (\eg~\cite{BK89}) together with our bound $\tcoal = O(\thit)$ (\autoref{thm:hittingtime}). The lower bound $\tcoal \geq \tmeet = \Omega(n)$ holds for any regular graph.

\subsection{Real World Graph Models}\label{sec:realworld}

There is a variety of different graph models for ``real world'' networks. In this subsection we  demonstrate that  random walks coalesce quickly on these graphs by establishing several  bounds on $\tcoal$ which are sublinear in $n$.

First note that common features of real world graph models are (i) a power law degree distribution with exponent $\beta \in (2,3)$ and (ii) high expansion, \ie $1-\lambda_2$ is not too large, and hence $\tmix=O(\log n)$. Notice that (i) $\beta \in (2,3)$ implies that w.h.p. we have $\Delta=O(n^{1-\epsilon})$, and hence $\| \pi \|_2^2 \leq \max_{u \in V} \pi(u) \leq n^{-\epsilon}$, for $\varepsilon >0$. 

For the sake of concreteness, let us take a specific model by Gkantsidis, Mihail and Saberi~\cite{GMS03}, which was also analyzed by Cooper \etal~\cite{CEOR13}. In this model, for some $\alpha \in (2,3)$ we generate a random graph which has $\Theta(n/d^{\alpha})$ vertices of degree $d$ and an eigenvalue gap $1/(1-\lambda_2) = O(\log^2 n)$. Cooper \etal~\cite{CEOR13} derived the general bound $\tcoal=O(\frac{1}{1-\lambda_2} \cdot (\|\pi\|_2^2 + \log^ 4n))$, which implies $\tcoal = O(n^{(\alpha-1)/2} \cdot \log^2 n)$ - a sublinear bound on the coalescing time. However, this leaves open how close $\tcoal$ and $\tmeet$ are.

Combining  \autoref{thm:mixtradeoff} with the fact that $\tmix = O( \log n/(1-\lambda_2))=O(\log ^3 n)$, we immediately obtain $\tcoal = \Theta(\tmeet)$, without having to know the actual value of $\tmeet$.\footnote{That being said, deriving the correct bound on $\tmeet$ is an interesting open problem. So far, it seems rather difficult to use one of our ``off-the-shelf'' bounds or the results from~\cite{CEOR13}. One potential route towards a tight bound may involve stronger bounds on $R_{\max}$, as suggested by the second upper bound on $\tmeet(u,v)$ in~\autoref{thm:meetingtime}.}

More generally, we have the following result, saying that we have $\tcoal = \Theta(\tmeet)$ whenever $\tmix$ is slightly smaller than $1/\| \pi \|_2^2$:

\begin{theorem}\label{thm:sloppy} \NOTE{T: This corresponds to the Oliveira Bound. Maybe it could be moved to Sec 3, or dropped altogether?}
Let $G=(V,E)$ be any graph. Then,
\[
  \tcoal = O \left( \tmeet \cdot \left(1 + \sqrt{ \tmix \cdot \| \pi \|_2^2} \cdot \log n \right) \right).
\]
In particular, whenever $ \tmix \cdot \log^2 n \leq 1/\| \pi \|_2^2 $, we have $\tcoal = \Theta(\tmeet)$.
\end{theorem}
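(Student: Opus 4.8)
The statement is essentially a corollary of Theorem~\ref{thm:mixtradeoff} once one recalls the standard lower bound $\tmeet = \Omega\!\left(1/\|\pi\|_2^2\right)$. So the first step is simply to invoke Theorem~\ref{thm:mixtradeoff}, which holds for \emph{every} graph and gives
\[
  \tcoal = O\!\left( \tmeet \left( 1 + \sqrt{\tfrac{\tmix}{\tmeet}} \cdot \log n \right) \right).
\]
The second step is to establish $\tmeet \geq c/\|\pi\|_2^2$ for an absolute constant $c>0$. This follows by combining $\tmeet \geq \tavgmeet$ (Lemma~\ref{lem:avgmeet}) with $\tavgmeet \geq C_{\min}/(64\,\|\pi\|_2^2) \geq 1/(64\,\|\pi\|_2^2)$, which is Theorem~\ref{thm:meetingtime}.$(iii)$ together with the trivial bound $C_{\min}\geq 1$. (A self-contained alternative: two independent stationary walks collide at each fixed step with probability exactly $\|\pi\|_2^2$, so over $\tau=\lfloor 1/(2\|\pi\|_2^2)\rfloor$ steps the expected number of collisions is at most $1/2$; by Markov they meet within $\tau$ steps with probability at most $1/2$, hence $\tavgmeet \geq \tau/2$.)

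\textbf{Combining.} Given $1/\tmeet \leq C\,\|\pi\|_2^2$, substitute this bound inside the square root appearing in Theorem~\ref{thm:mixtradeoff}:
\[
  \sqrt{\frac{\tmix}{\tmeet}} = \sqrt{\tmix}\cdot\sqrt{\frac{1}{\tmeet}} \leq \sqrt{C}\cdot\sqrt{\tmix\cdot\|\pi\|_2^2}.
\]
Plugging this in yields $\tcoal = O\!\left(\tmeet\left(1 + \sqrt{\tmix\cdot\|\pi\|_2^2}\cdot\log n\right)\right)$, which is the first claim. For the ``in particular'' part, if $\tmix\cdot\log^2 n \leq 1/\|\pi\|_2^2$ then $\tmix\cdot\|\pi\|_2^2 \leq 1/\log^2 n$, so $\sqrt{\tmix\cdot\|\pi\|_2^2}\cdot\log n \leq 1$ and the bracket is $O(1)$, giving $\tcoal = O(\tmeet)$. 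Since $\tcoal \geq \tmeet$ always holds, we conclude $\tcoal = \Theta(\tmeet)$.

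\textbf{Main obstacle.} There is essentially no real obstacle here: the entire analytic content sits in Theorem~\ref{thm:mixtradeoff}, and the only extra ingredient is the elementary inequality $\tmeet = \Omega(1/\|\pi\|_2^2)$. The one thing to be slightly careful about is which normalization and constant one uses for that lower bound, but because the conclusion is stated with $O(\cdot)$ (and $\Theta(\cdot)$ in the special case), this is immaterial. Worth noting in the write-up is that this recovers, in terms of expected coalescence time, the regime covered by Oliveira's mean-field condition, while being strictly more general since $\tmeet \geq 1/\|\pi\|_2^2 \geq 1/\pi_{\max}$.
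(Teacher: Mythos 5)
Your proof is correct and follows essentially the same route as the paper's: invoke \autoref{thm:mixtradeoff}, substitute the lower bound $\tmeet = \Omega(1/\|\pi\|_2^2)$ obtained from \autoref{thm:meetingtime}.$(iii)$ (with $C_{\min}\geq 1$), and observe $\tcoal\geq\tmeet$ for the "in particular" clause. The self-contained first-moment derivation of $\tavgmeet = \Omega(1/\|\pi\|_2^2)$ that you sketch as an aside is a nice addition but does not change the argument.
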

\begin{proof}
First, by the third statement of \autoref{thm:meetingtime}, we have 
$
 \tmeet \geq \frac{1}{9 \| \pi \|_2^2}.
$
Inserting this into~\autoref{thm:mixtradeoff} yields the upper bound.
The lower bound for the setting $ \tmix \cdot \log^2 n \leq 1/\| \pi \|_2^2 $ trivially holds since $\tcoal\geq \tmeet$.
\end{proof}
It is worth comparing this result with the bound derived by Cooper \etal~\cite{CEOR13}:
\begin{align}
    \tcoal = O \left( \frac{1}{1-\lambda_2} \cdot \left( \frac{1}{\| \pi \|_2^2} + \log^4 n \right)   \right). \label{eq:cooper}
\end{align}
The advantage of \eqref{eq:cooper} is that requires relatively little knowledge about $G$; only $\frac{1}{1-\lambda_2}$ and $\frac{1}{\| \pi \|_2^2}$ (which is equivalent to knowing the degree distribution) are needed. 
One potential drawback of the bound in \eq{cooper} however, is
that it involves the product of two factors $\frac{1}{1-\lambda_2}$ and $\frac{1}{\| \pi \|_2^2}$, each of which is a lower bound on the meeting time on its own.
For instance for regular graphs, by \autoref{thm:cooperimproved}, we immediately obtain that $\tcoal = O(\tmeet \log n) = O \left(n \log n \cdot \sqrt{\frac{1}{1-\lambda_2} \cdot \log(\frac{1}{1-\lambda_2})} \right)$. As a consequence, for regular graphs, our bound improves over the bound in \eqref{eq:cooper} whenever $\frac{1}{1-\lambda_2} \geq \log^{2+\epsilon} n$ for an arbitrarily small constant $\epsilon>0$. 

It is also interesting to consider an alternative graph model for real world networks, proposed by \citet{MPS06}. Also in this model, the degree distribution has the same Power law with exponent $\alpha \in (2,3)$, but there is a stronger bound on the spectral gap, $\frac{1}{1-\lambda_2}=O(1)$ \cite{MPS06}. Hence \autoref{thm:sloppy} implies $\tcoal=\Theta(\tmeet)$.\fnote{to argue this formally we might want to derive a bound on pisuared. } Further, thanks to \autoref{pro:frederik} (or alternatively, the bound by Cooper \etal~\eqref{eq:cooper}) we get the explicit bound $\tcoal = O( 1/\| \pi \|_2^2)$, which is asymptotically tight due to the trivial lower bound $\tmeet = \Omega(1 / \| \pi \|_2^2)$~(\autoref{thm:meetingtime}).

\subsection{Binary Trees}

In this subsection, we derive a lower bound $\tmeet = \Omega(n \log n)$ for complete binary trees. Unfortunately, this bound does not follow directly from our general results and a manual analysis is required. To some extent, this is due to the structural difference between nodes close to the leaves and nodes close to the root. While a collision close to the leaves triggers $\Theta(n \log n)$ expected additional collisions, a collision near the root triggers only $\Theta(n)$ additional collisions. 

Our proof consists of the following two steps. In \autoref{subsec:one}, we first provide a lower bound on the probability that a random walk starting from any nodes $u \in V$ is on a leaf after $O(\log n)$ steps. We also show that any $t$-step probability $p_{u,v}^t$ is at the most return probability for a leaf. Both results shown in \autoref{lem:treeloop} are derived by projecting the random walk on the tree to a random walk on a weighted path of length $\log_2 n - 1$. 

In \autoref{subsec:two}, we proceed to analyzing the expected number of collisions between two random walks in $n$ steps. The main component is \autoref{lem:hutmitdreiecken}, establishing that this number is at least $\Omega(\log n)$ provided the walks start from the same vertex not too far from the root. This result is complemented by a union-bound type argument in \autoref{lem:evente}, showing that it is unlikely that two random walks collide on a vertex close to the root. Combining the two results and applying them to \eqref{eq:central} establishes the desired lower bound $\tmeet = \Omega(n \log n)$. 

For the other bounds on $\tmeet$ and $\tcoal$, we combine $\tcoal=O(\thit)$ (\autoref{thm:hittingtime}) with the well-known fact $\thit = O(n \log n)$~(cf.~\cite{AF14}) to obtain $\tcoal=O(n \log n)$. Together with the established lower bound, this shows that $\tmeet$ and $\tcoal$ are both of order $\Theta(n \log n)$.

 \subsubsection{Bounds on the $t$-step probabilities}\label{subsec:one}
 
 We assume that the complete binary tree has $\log_2 n - 1$ levels, \ie there are $n/2$ leaves and the total number of nodes is $n-1$.
We define $\mathcal{L} \subseteq V$ to be the set of leaves. For the analysis, it will be helpful to relate a random walk on the binary tree to a corresponding random walk on a weighted path $\tilde G$ of length $\log_2 n -1$ with nodes $\tilde V= \{1,2,\ldots,\log_2 n-1\}$, 
where each vertex on the path corresponds to all vertices in the binary tree on the same level. 
Let $Q$ denote the $(\log_2 n-1) \times (\log_2 n-1)$
transition matrix of the corresponding weighted random walk. For $i,j \in \tilde V$ we have
\[
q_{ij} =\begin{cases} \ifrac{1}{2} & \text{ if $i =j$ } \\
		\ifrac{1}{3} & \text{ if $j > i, i\neq 1$ } \\
				\ifrac{1}{2} & \text{ if $j > i, i = 1$ } \\
	\ifrac{1}{6} & \text{ if $j < i, i\neq \log_2 n -1$ } \\
	\ifrac{1}{2} & \text{ if $j < i, i= \log_2 n -1$ } .
\end{cases}
\]
Let $\tilde \pi$ denote the stationary distribution of this process. Since the random walk on $\tilde{G}$ is also lazy, \autoref{lem:loop} implies that $q_{\log_2 n-1, \log_2 n-1}^t \geq \frac{1}{4}$ for all $t \geq 0$.   
Define $\tau_{u,\mathcal{L}} := \min_{v \in \mathcal{L}} \tau_{u,v}$  to be the first time-step  a leaf is visited, where the walk starts at $u$; $\tau_{u,\mathcal{L}}=0$ if $u$ is a leaf. We will frequently use the following two simple facts about random walks on binary trees:

\begin{lemma}[\cite{Mo73}]\label{lem:treefact}
Let $G$ be any tree, and $u$ and $v$ be two adjacent nodes. Then $\thit(u,v) = 2 \cdot n_{u,v} - 1$, where $n_{u,v}$ is the number of vertices in the subtree containing $u$ obtained by deleting the edge $\{u,v\}$.  
\end{lemma}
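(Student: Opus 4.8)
The plan is to establish the identity by induction on $n_{u,v}$, exploiting the one structural feature that makes trees special: between any two vertices there is a unique path. Write $T_u$ for the component containing $u$ after deleting the edge $\{u,v\}$, so that $n_{u,v}=|V(T_u)|$. The key observation I would use repeatedly is that for any vertex $w\in V(T_u)$, every trajectory from $w$ to $v$ must traverse the edge $\{u,v\}$ and hence visit $u$ beforehand; therefore $\thit(w,v)=\thit(w,u)+\thit(u,v)$.

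First I would dispatch the base case $n_{u,v}=1$: then $u$ has no neighbour inside $T_u$, so $v$ is the unique neighbour of $u$, the walk steps to $v$ immediately, and $\thit(u,v)=1=2\cdot 1-1$. For the inductive step, let $w_1,\dots,w_k$ be the neighbours of $u$ lying in $T_u$, let $T_{w_i}$ be the component of $w_i$ obtained by deleting $\{u,w_i\}$, and set $n_i:=|V(T_{w_i})|$; then $\deg(u)=k+1$ and $n_{u,v}=1+\sum_{i=1}^{k}n_i$. A first-step analysis at $u$, combined with the unique-path identity above, gives
\[
 \thit(u,v)=1+\frac{1}{k+1}\sum_{i=1}^{k}\bigl(\thit(w_i,u)+\thit(u,v)\bigr).
\]
Since $T_{w_i}$ is exactly the $w_i$-side component of the edge $\{w_i,u\}$ and is a strictly smaller tree instance, the induction hypothesis yields $\thit(w_i,u)=2n_i-1$. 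Solving the displayed equation for $\thit(u,v)$ then gives $\thit(u,v)=(k+1)+\sum_{i=1}^{k}(2n_i-1)=1+2\sum_{i=1}^{k}n_i=2n_{u,v}-1$, closing the induction.

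As a sanity check I would invoke the commute-time identity: for an edge of a tree on $n$ vertices (unit resistances, effective resistance $1$) we have $\thit(u,v)+\thit(v,u)=2|E|=2(n-1)$, which is consistent with $(2n_{u,v}-1)+(2n_{v,u}-1)=2n-2$ since $n_{u,v}+n_{v,u}=n$. I do not expect a genuine obstacle here; the argument is a short induction. The only points needing care are (i) verifying that the pieces $T_{w_1},\dots,T_{w_k}$ together with $\{u\}$ partition $V(T_u)$, so that the recursion really decreases the induction parameter, and (ii) the laziness bookkeeping: the statement as quoted (following \cite{Mo73}) refers to the simple non-lazy walk, so when it is later applied one multiplies every hitting time by $2$ to match the paper's lazy convention.
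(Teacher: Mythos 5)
Your inductive argument is correct. The paper does not actually prove this lemma---it cites it to Moon~\cite{Mo73} and uses it as a black box---so there is no in-text proof to compare against. Your proof is the standard one: the unique-path property $\thit(w_i,v)=\thit(w_i,u)+\thit(u,v)$ is exactly the right ingredient, the decomposition $T_u=\{u\}\sqcup T_{w_1}\sqcup\cdots\sqcup T_{w_k}$ is correct (those sets are pairwise disjoint because $u$ is a cut vertex of $T_u$), and the strong induction on $n_{u,v}$ is valid because $n_i<n_{u,v}$ whenever $k\ge 1$, while the base case $k=0$ is exactly $n_{u,v}=1$. Solving the first-step recursion gives $\thit(u,v)=1+2\sum_i n_i=2n_{u,v}-1$ as claimed. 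The commute-time check is a nice confirmation, and your caveat about the lazy convention is apt: the paper works with lazy walks throughout, so every appearance of $\thit$ elsewhere carries an extra factor of $2$ relative to this formula; the one place the lemma is invoked (\lemref{hutmitdreiecken}) only needs an $\Omega(t)$ lower bound, so the factor of $2$ is immaterial there, but it is good that you noticed the mismatch.
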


\begin{lemma}\label{lem:treeloop}
Let $G$ be a complete binary tree, and let $u$ be an arbitrary node. Then the following statements hold: there is a constant $c_1 > 0$, so that
\begin{enumerate}
\item  for any $t \geq c_1 \log_2 n$, $p_{u,\mathcal{L}}^{t} \geq \frac{1}{5}$. Moreover, if $u$ is a leaf, then the same inequality holds for all $t \geq 0$.
\item For any vertex $v \in V$, $p_{u,v}^t \leq 7 \cdot p_{w,w}^{t- 2c_1 \log _2 n}$, where $w$ is any leaf.
\end{enumerate}
\end{lemma}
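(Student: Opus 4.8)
The plan is to transfer everything to the weighted path $\tilde G$ on $\log_2 n - 1$ vertices, since a random walk on the binary tree projects exactly onto the weighted random walk with transition matrix $Q$ (a vertex on level $i$ has $2/3$ of its non-lazy mass going to level $i+1$ and $1/3$ going to level $i-1$, except at the root and leaf levels). The key point is that $\tilde G$ has a stationary distribution $\tilde\pi$ which puts \emph{constant} mass on the leaf-level vertex $\log_2 n - 1$: indeed $\tilde\pi(i)$ is proportional to the total degree mass on level $i$, which is $\Theta(2^i)$, so $\tilde\pi(\log_2 n -1) = \Theta(1)$. This constant stationary mass on the last coordinate, combined with the fact that on $\tilde G$ one reaches the last coordinate in $O(\log n)$ steps with constant probability, is what drives both parts.

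For part (1): first I would establish that $\tsep(\tilde G) = O(\log n)$, or more directly, that for any start vertex $i \in \tilde V$ and any $t \geq c_1 \log_2 n$ (for a suitable constant $c_1$) the walk on $\tilde G$ satisfies $q_{i,\log_2 n-1}^{t} \geq \tilde\pi(\log_2 n-1)/2 = \Omega(1)$. This follows from a mixing estimate on the weighted path: the path has $O(\log n)$ vertices and a spectral gap bounded below by a constant over $\mathrm{poly}(\log n)$ — actually one can avoid spectral estimates entirely and argue combinatorially that from any level the walk drifts toward the leaves (the bias $2/3$ vs $1/3$ pushes it there) and hence hits level $\log_2 n -1$ within $O(\log n)$ steps with constant probability, then stays there for one more step with probability $\geq 1/2$ by laziness; iterating over $O(\log n / \log n) = O(1)$ blocks gives constant probability. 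Since being on level $\log_2 n-1$ in $\tilde G$ is exactly being on a leaf in $G$, this gives $p_{u,\mathcal L}^t = q_{i,\log_2 n-1}^t \geq \Omega(1)$, and choosing $c_1$ large makes the constant at least $1/5$. For a leaf $u$, i.e. $i = \log_2 n -1$, we use $q_{\log_2 n-1,\log_2 n-1}^t \geq 1/4 \geq 1/5$ for all $t$, which is Lemma~\ref{lem:loop}(iii) applied to $\tilde G$ (its stationary mass there is $\geq 1/4$ since the leaf level carries at least a quarter of the total edge mass — or simply cite $\tilde\pi(\log_2 n -1) \geq 1/4$ from a direct computation of the degree ratios).

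For part (2): fix a leaf $w$. I want $p_{u,v}^t \leq 7\, p_{w,w}^{t - 2c_1\log_2 n}$. Write $s := 2c_1 \log_2 n$. By reversibility on the tree, $p_{u,v}^t = \frac{\pi(v)}{\pi(u)} p_{v,u}^t$, and by symmetry of the levels it suffices to bound $p_{u,v}^t$ when $v$ is chosen to maximize; the worst case will be $v$ a leaf (leaves carry the least stationary mass). Then decompose: $p_{u,v}^t = \sum_{x} p_{u,x}^{s} p_{x,v}^{t-s}$. Using part (1), with probability $\geq 1/5$ the walk is on a leaf at time $s$ (if $t \geq s$), and by the projection $p_{x,v}^{t-s}$ for $x,v$ leaves equals $q_{\log_2 n-1,\log_2 n-1}^{t-s}$ times the conditional probability of landing on the specific leaf $v$ among all leaves — which, by symmetry of the complete binary tree, is at most a constant times $p_{w,w}^{t-s}$. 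The cleanest route: argue $p_{u,v}^t \leq \max_x p_{x,v}^{t} $ and then that for $v$ a leaf, $p_{x,v}^t$ is maximized (up to a factor like $7$) by $x = v$, because the level-projected chain $q^t_{\cdot,\log_2 n-1}$ is at most $q^t_{\log_2 n-1,\log_2 n-1}$ (return probability to the most-distant point is a local max along the path — provable since $q_{j,j}^t$ is non-increasing and the last coordinate mixes last), and then distributing among the $n/2$ leaves uniformly by the tree's symmetry gives an extra factor that exactly cancels. The constant $7$ is slack to absorb: (i) the factor $\leq 5$ loss from part~(1) when reconstituting, (ii) the ratio $\tilde\pi$ vs uniform-on-leaves, (iii) boundary effects at the root level.

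\textbf{Main obstacle.} The delicate step is part (2): controlling $p_{u,v}^t$ for \emph{all} pairs $u,v$ uniformly by a single return probability $p_{w,w}^{t-s}$ at a leaf, with an absolute constant. The projection handles the "radial" coordinate cleanly, but one must also argue that conditioned on being at a given level, the random walk on the complete binary tree is uniform over the vertices of that level — this is true by the automorphism symmetry of the complete binary tree, and I would state it as a preliminary observation. The remaining worry is the comparison $q_{j,j}^t \leq q_{\log_2 n -1,\log_2 n-1}^t$ along the weighted path; if that inequality is not literally true for small $t$, the factor $7$ (and possibly shifting by $2c_1\log_2 n$ rather than $c_1 \log_2 n$) gives enough room to push through a weaker "$\leq 7\times$" version via part~(1) applied to the reversed walk. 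I expect the whole argument to go through with these constants, but verifying the level-uniformity and the radial monotonicity carefully is where the real work lies.
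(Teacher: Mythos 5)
Your high-level scaffolding (project onto the weighted path $\tilde G$, exploit the outward drift, use symmetry among leaves) matches the paper's, but both parts have gaps in the key step that actually closes the argument.

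In part (1), the piece that ties things together is the non-increasing return probability $q_{\mathcal{L},\mathcal{L}}^{r}\ge 1/4$ for \emph{all} $r\ge 0$. Having the walk hit the leaf level at some time $s\le c_1\log_2 n$ only helps if you can then lower-bound the probability of being at the leaf level again at time $t$, i.e.\ after the remaining $t-s$ steps. Your ``stays there one more step by laziness'' followed by ``iterating over $O(1)$ blocks'' does not accomplish this for arbitrary $t-s$; the correct move (which the paper makes) is to write $p_{u,\mathcal L}^t \ge \sum_{s} \Pr{\tau_{u,\mathcal L}=s}\, q_{\mathcal L,\mathcal L}^{t-s}$ and invoke $q_{\mathcal L,\mathcal L}^{t-s}\ge 1/4$. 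You mention this fact, but only for the ``$u$ a leaf'' case; it is what makes the first-hit decomposition close in general.

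In part (2) there is a more substantive gap. The decomposition $p_{u,v}^t=\sum_x p_{u,x}^s p_{x,v}^{t-s}$ is an identity, so restricting to $x\in\mathcal L$ with ``probability $\ge 1/5$'' only gives a lower bound on part of the sum, not the needed upper bound on all of it. The paper's route is to use that the first-hit time $\tau_{u,\mathcal L}$ is at most $c_1\log_2 n$ with probability $\ge 1-n^{-2}$ (the Chernoff/drift estimate you allude to for part (1)); this yields the \emph{upper} bound
$p_{u,v}^t\le \max_{t-c_1\log_2 n\le s\le t}\max_{w\in\mathcal L}p_{w,v}^s+n^{-2}$,
and crucially this trick is then applied a \emph{second time} after flipping $p_{w,v}^s$ to $p_{v,w}^s$ via reversibility (this is where the factor $2$ in $2c_1\log_2 n$ and part of the constant $7$ come from). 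The ``radial monotonicity'' $q_{j,j}^t\le q_{\log_2 n-1,\log_2 n-1}^t$ you flag as the delicate step is actually not needed in the paper's argument and may well be false for small $t$; the first-hit-then-reverse decomposition sidesteps it entirely and replaces the problem with comparing $p_{w,w'}^s$ to $p_{w,w}^s$ for leaves $w,w'$, which follows by symmetry (plus the non-increasing return probability from Lemma~\ref{lem:loop}). Your ``cleanest route'' via $p_{u,v}^t\le\max_x p_{x,v}^t$ is also circular: the maximum is attained with $x$ playing the role of $u$, so you have not reduced anything. I would suggest reorganizing part (2) around the two applications of the ``hit a leaf in $O(\log n)$ steps w.h.p.'' bound with a reversibility flip in between.
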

\begin{proof}
To derive a lower bound on $p_{u,\mathcal{L}}^t$, we consider the contracted binary tree
$\tilde G$. 
Recall that $\tau_{u,\mathcal{L}}$ is the random variable of the first time-step at which the  random walk visits a leaf where the random walk starts at $u$; $\tau_{u,\mathcal{L}}=0$ if $u$ is a leaf. By conditioning on the first visit to a leaf, 
\begin{align*}
 p_{u,\mathcal{L}}^t &\geq \sum_{s=0}^{2c_1 \log_2 n} \Pr{ \tau_{u,\mathcal{L}} = s} \cdot q_{\mathcal{L},\mathcal{L}}^{t-s} 
 			\geq \frac{1}{4} \cdot \left(1 - \Pr{ \tau_{u,\mathcal{L}} > 2c_1 \log_2 n } \right),
 \end{align*}
 since the chain $Q$ can be seen as a projection of $P$ to the line.
Our next claim is that $\Pr{ \tau_{u,\mathcal{L}} > c_1 \log_2 n} \leq n^{-2}$, provided that the constant $c_1 > 0$ is sufficiently large.
This can be derived by coupling the random walk on a binary tree, starting from the root, with a random walk on the integers, starting from zero and waiting until the random walk reaches the vertex $\log_2 n - 1$. Using a Chernoff bound for $X:=\sum_{i=1}^{c_1 \log_2 n} X_i$, with $\Pr{X_i=+1} = 1/3$, $\Pr{X_i=-1} = 1/6$ and $\Pr{X_i =0} = 1/2$, we conclude that $\Pr{ \tau_{u,\mathcal{L}} > c_1 \log_2 n} \leq n^{-2}$, which implies the first statement. 
 
To prove the second statement, consider the first $c_1 \log_2 n$ steps of a random walk starting at $u$. Similarly as before,
\begin{align*}
  p_{u,v}^t &\leq \sum_{s=0}^{c_1 \log_2 n} \Pr{ \tau_{u,\mathcal{L}} = s} \cdot \max_{w \in \mathcal{L}} p_{w,v}^{t-s} + \Pr{ \tau_{u,\mathcal{L}} > 4 \log_2 n}.
\end{align*}
As seen above, $\Pr{ \tau_{u,\mathcal{L}} > c_1 \log_2 n} \leq n^{-2}$ and therefore 
\begin{align}
 p_{u,v}^t &\leq \max_{t-c_1 \log_2 n \leq s \leq t} ~\max_{w \in \mathcal{L}} p_{w,v}^s + n^{-2}. \label{eq:peach}
\end{align}
Let us now compare $p_{w,v}^t$ to $p_{v,w}^t$. Since the random walk is time-reversible, we have
\begin{align*}
  p_{w,v}^t \cdot \pi(w) = p_{v,w}^t \cdot \pi(v),
\end{align*}
and hence \begin{align*}
 p_{u,v}^t &\leq \max_{t-c_1 \log_2 n \leq s \leq t}~ \max_{w \in \mathcal{L}} 3 \cdot p_{v,w}^s + n^{-2}.
\end{align*}
Applying \eq{peach} to each $p_{v,w}^{s}$, we conclude that
\begin{align*}
 p_{u,v}^t &\leq 6 \cdot \max_{t-2c_1 \log_2 n\leq s \leq t} \max_{w,w' \in \mathcal{L}} p_{w,w'}^{s} + 4 n^{-2}.
\end{align*}
Further, by symmetry $p_{w,w'}^{s}$ is maximized if $w=w'$, so that
\begin{align*}
 p_{u,v}^t &\leq 6 \cdot \max_{t-2c_1 \log_2 n\leq s \leq t} p_{w,w}^{s} + 4 n^{-2},
\end{align*}
where $w$ is any leaf. Applying \autoref{lem:loop}, it follows that the maximum is attained for $s=t-2 c_1 \log_2 n$ and $p_{w,w}^s \geq \frac{1}{2n-2}$, which implies the second statement.
\end{proof}

\subsubsection{Establishing the Lower Bound on the Meeting Time}\label{subsec:two}

We  now prove that the meeting time on binary trees is $\Omega(n \log n)$. The intuition for this is as follows. While two random walks of length $\Theta(n \log n)$ will lead to $\Theta(\log n)$ expected collisions, it turns out that the distribution of collisions is poorly concentrated. In fact we will prove that, conditional on the existence of at least one collision, the expected number of total collisions is $\Omega(\log n)$. This will imply the desired lower bound on the meeting time. A slight complication is that the collision could occur on different nodes, which is why we will first bound the probability for a collision to occur close to the root.

Let us define $U$ to be the set of all nodes that have distance at least $\frac{1}{2} \log_2 n$ from the root. Note that $|V \setminus U| \leq 2 \cdot \sqrt{n}$.
Further, let $\mathcal{E}$ denote the event that two random walks starting from the stationary distribution of length $n \log_2 n$ meet on a vertex in $V \setminus U$.

\begin{lemma}\label{lem:evente}
We have $\Pr{ \mathcal{E} } \leq  n^{-1/3}$.
\end{lemma}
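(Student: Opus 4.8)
The plan is to bound $\Pr{\mathcal{E}}$ via a union bound over all vertices $v \in V \setminus U$ and all time-steps $t \in [0, n\log_2 n]$, using the identity \eqref{eq:central} in the form $\Pr{\cross \text{ on } v} = \E{Z_v}/\E{Z_v \mid Z_v \geq 1}$, where $Z_v$ counts collisions on $v$. The numerator is easy: since both walks start from $\pi$, the probability of a collision at $v$ at a fixed step is exactly $\pi(v)^2$, so $\E{Z_v} = (n\log_2 n) \cdot \pi(v)^2$. Summing over $v \in V \setminus U$ and noting $\pi(v) \leq 2/n$ for all $v$ (actually $\pi(v) = \deg(v)/(2|E|)$ with $|E| = n-2$, so $\pi(v) \le 3/(2(n-2))$), and $|V\setminus U| \le 2\sqrt n$, we get $\sum_{v \in V \setminus U}\E{Z_v} = O(\log_2 n \cdot \sqrt n \cdot n^{-2} \cdot n) = O(\sqrt n \cdot n^{-1} \log n) = o(1)$ — wait, that already gives $o(1)$ directly. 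Let me be more careful: $\sum_{v\in V\setminus U} \E{Z_v} \le n\log_2 n \cdot 2\sqrt n \cdot (2/n)^2 = 8 \log_2 n / \sqrt n$, which is $o(1)$, hence already $o(n^{-1/3})$? No: $8\log_2 n/\sqrt n = o(n^{-1/3})$ is false since $n^{-1/2}\log n = o(n^{-1/3})$ is true. So in fact $\Pr{\mathcal E} \le \sum_v \E{Z_v} \le 8\log_2 n/\sqrt n \le n^{-1/3}$ for $n$ large.

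So the core of the proof is simply: condition on both starting positions being drawn from $\pi$; for each vertex $v \in V \setminus U$, the expected number of collisions on $v$ over $n\log_2 n$ steps is $(n \log_2 n)\pi(v)^2$; bound $\pi(v) \le 2/n$; use $|V\setminus U| \le 2\sqrt n$ (the number of internal nodes within distance $\frac12\log_2 n$ of the root is at most $2^{\frac12\log_2 n + 1} = 2\sqrt n$); and apply Markov's inequality / the first-moment bound $\Pr{\mathcal E} \le \sum_{v\in V\setminus U}\E{Z_v}$. The resulting bound $8\log_2 n/\sqrt n$ is $\le n^{-1/3}$ for all sufficiently large $n$.

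The main obstacle — or rather the only subtlety — is bookkeeping: making sure the definition of $U$ is used correctly so that $|V\setminus U| = O(\sqrt n)$, and confirming that the collision probability at a fixed step is exactly $\|\text{restricted to }v\|$, i.e. $\Pr{X_t = v = Y_t} = \Pr{X_t=v}\Pr{Y_t=v} = \pi(v)^2$ by independence and stationarity. One should also note that the statement as phrased refers to walks ``starting from the stationary distribution''; since the walk is lazy and $\pi$ is stationary, $X_t \sim \pi$ for every $t$, so the per-step collision probability is constant in $t$, which is what makes the expectation computation trivial. There is no need to analyze $\E{Z_v \mid Z_v \ge 1}$ here because the first-moment bound alone suffices; the conditional-expectation refinement of \eqref{eq:central} is only needed for the complementary event (collisions deep near the leaves), handled in the subsequent lemma. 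I would therefore write the proof as: fix the stationary starting distributions, write $\Pr{\mathcal E} \le \sum_{v \in V\setminus U}\sum_{t=0}^{n\log_2 n}\Pr{X_t = v}\Pr{Y_t=v} = \sum_{v\in V\setminus U}(n\log_2 n + 1)\pi(v)^2$, plug in the two bounds, and simplify to obtain $\Pr{\mathcal E} \le n^{-1/3}$ for large $n$.
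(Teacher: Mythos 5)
Your argument is correct and is essentially identical to the paper's proof: a union bound over vertices in $V\setminus U$ and time-steps, using stationarity to bound the per-step collision probability by $\pi(v)^2\le (2/n)^2$, and $|V\setminus U|\le 2\sqrt n$, giving $\Pr{\mathcal E}\le 8\log_2 n/\sqrt n\le n^{-1/3}$ for large $n$. The initial detour through \eqref{eq:central} and $\E{Z_v\mid Z_v\ge 1}$ is unnecessary, as you yourself noticed; the first-moment bound is all that is used here, and the conditional-expectation machinery is reserved for the complementary event near the leaves.
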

\begin{proof}
By the Union Bound,
$
\Pr{ \mathcal{E} } \leq 
\sum_{t=1}^{n \log_2 n} \sum_{u \in V \setminus U} \pi(u)^2 = n \log_2 n \cdot 2 \sqrt{n} \cdot  \left( \frac{2}{n} \right)^2 \leq \frac{8 \log_2 n}{\sqrt{n}}.
$
\end{proof}

\begin{lemma}\label{lem:hutmitdreiecken}
For any node $u \in U$ and any $2 \sqrt{n} \leq t \leq n/200$, we have $\sum_{v \in \mathcal{L}} (p_{u,v}^t)^2 = \Omega({1}/{t} )$. 
\end{lemma}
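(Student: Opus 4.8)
The plan is to establish a lower bound on the ``return mass to the leaves'' for a random walk started at any vertex $u \in U$, i.e., a vertex at distance at least $\frac12 \log_2 n$ from the root, run for $t$ steps with $2\sqrt n \le t \le n/200$. The key quantity $\sum_{v \in \mathcal L} (p_{u,v}^t)^2$ should be thought of as the probability that two independent walks started at $u$ both land on the \emph{same} leaf at time $t$. First I would pass to the contracted weighted path $\tilde G$ on $\{1,\dots,\log_2 n -1\}$ and its transition matrix $Q$ introduced just above \autoref{lem:treeloop}; a walk on the binary tree projects to a walk on $\tilde G$ where level $\ell$ carries stationary mass $\tilde\pi(\ell) \asymp 2^{\ell}/n$ (up to the leaf/root corrections), so the bottom level $\log_2 n - 1$ has $\tilde\pi \asymp 1$. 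The point is that $\sum_{v\in\mathcal L}(p^t_{u,v})^2$ is, by symmetry among leaves, comparable to $\frac{1}{n/2}\cdot \big(\Pr[\text{walk from }u\text{ is at a leaf at time }t]\big)^2$ \emph{if the walk were uniform over leaves}, but more carefully it equals $\E[\,\mathbf 1_{X_t \in \mathcal L}\cdot p^{t}_{u,X_t}\,]$ averaged appropriately; the cleanest route is Cauchy--Schwarz: $\sum_{v\in\mathcal L}(p^t_{u,v})^2 \ge \big(\sum_{v\in\mathcal L}p^t_{u,v}\big)^2 / |\mathcal L| = (p^t_{u,\mathcal L})^2/(n/2)$. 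This already gives $\Omega(1/n)$ by \autoref{lem:treeloop}.(1), which is too weak by a factor of $t/n$; so a genuinely sharper argument is needed.

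The sharper argument is to \emph{not} spread the walk over all $n/2$ leaves, but only over the $\Theta(t)$ leaves reachable in time $t$ — or rather, to exploit that at time $t$ the walk on $\tilde G$ is concentrated on levels within $O(\log(n/t))$ of the bottom in a way that makes the leaf-level occupation $\Theta(1)$. The plan: show that for $u\in U$ and $t$ in the stated range, the projected walk satisfies $q^t_{\mathrm{level}(u),\,\log_2 n-1} = \Omega(1/\log(n/t))$... actually more useful is the second-moment form. I would argue as follows. Condition on the first time $s \le c_1\log_2 n$ that the tree-walk hits a leaf (this happens w.h.p. by the Chernoff estimate in the proof of \autoref{lem:treeloop}); from that leaf, the walk restricted near the leaves behaves like a walk on the integers reflected appropriately, and the number of distinct leaves it can occupy at time $t-s$ with non-negligible probability is $\Theta(t)$ (the walk travels $\Theta(\log(\text{distance}))$ up the tree, but the number of leaves in the accessible subtree is polynomial in $t$, in fact $\Theta(t)$ when measured against the escape-to-root barrier; here $t \le n/200$ ensures the walk does not equilibrate over all leaves). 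Concretely, for a walk started at a leaf $w$, one has $\sum_{v\in\mathcal L}(p^t_{w,v})^2 = p^{2t}_{w,w}\cdot(\text{correction})$; and by \autoref{lem:treeloop}.(2) combined with a direct estimate, $p^{2t}_{w,w}$ itself is $\Omega(1/t)$ for $t$ up to $\Theta(n)$, because the projected walk's return probability to the bottom level decays like $1/\sqrt{\text{time}}$ only until the walk feels the top of the path, which takes $\Theta(n)$ steps, and more precisely the relevant ``local'' return to the \emph{set} of leaves from a fixed leaf is governed by a birth-death chain with drift whose return-to-a-point probability is $\Theta(1/t)$. Then transferring from "started at a fixed leaf $w$" to "started at $u \in U$" costs only the $O(\log n)$-step detour to reach $\mathcal L$ and the constant factors from \autoref{lem:treeloop} parts (1) and (2): $p^t_{u,v} \le 7\,p^{t-2c_1\log_2 n}_{w,w}$ is an upper bound on individual entries, but for the lower bound on $\sum_{v\in\mathcal L}(p^t_{u,v})^2$ I would instead write, for $t' = t - c_1\log_2 n$,
\[
\sum_{v\in\mathcal L}(p^t_{u,v})^2 \ge \Pr[\tau_{u,\mathcal L}\le c_1\log_2 n]^2 \cdot \min_{w\in\mathcal L}\ \min_{t'' \in [t', t]}\ \sum_{v\in\mathcal L}(p^{t''}_{w,v})^2 \cdot c
\]
for an absolute constant $c$, using that restricted to the leaf-set the walk from different leaves at a level-wise symmetric position has comparable leaf-occupation profiles; this reduces everything to the leaf-started estimate.

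So the real technical core — and the main obstacle — is proving $\sum_{v\in\mathcal L}(p^t_{w,v})^2 = \Omega(1/t)$ for a leaf $w$ and $2\sqrt n \le t \le n/200$. The clean way is via the contracted chain and its return structure: one shows $q^{2t}_{L,L} = \Theta(1/t)$ where $L = \log_2 n - 1$ is the leaf-level, using that $Q$ restricted to the portion of the path near $L$ (before the walk would escape to level $1$, which takes $\Theta(n)$ steps since the "volume" of the path up to level $1$ is $\Theta(n)$ in the $\tilde\pi$-measure) behaves like a reflected random walk on $\mathbb Z_{\ge 0}$ with downward drift, for which the return probability to $0$ at time $t$ is $\Theta(1/t)$ (geometric return-time distribution to the reflecting end). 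Then I must descale: $q^{2t}_{L,L}$ on the \emph{path} counts the probability of being on \emph{any} leaf, while $\sum_{v\in\mathcal L}(p^t_{w,v})^2$ counts being on the \emph{same} leaf as another copy. The relation is $q^{2t}_{L,L} = \sum_{v \in \mathcal L} p^t_{w,\cdot}$-mass to level $L$; and $\sum_{v\in\mathcal L}(p^t_{w,v})^2 \ge \big(q^t_{L,L}\big)^2 / (\text{effective number of occupied leaves})$. The effective number of occupied leaves at time $t$ is $\Theta(t)$ (the walk from $w$ has, with constant probability given it stays in the leaf-region, explored a subtree of $\Theta(t)$ leaves — this uses $t\le n/200$ so it hasn't explored all $n/2$), giving $\sum_{v\in\mathcal L}(p^t_{w,v})^2 \ge \Omega(1)^2/\Theta(t) = \Omega(1/t)$. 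Making "effective number of occupied leaves is $O(t)$" precise — e.g. via $\sum_{v\in\mathcal L}p^t_{w,v} \le 1$ together with a lower bound on $\max_v p^t_{w,v}$, or via a direct first/second moment on the range of the walk over leaves — is where I expect to spend the most care, and the $t \le n/200$ hypothesis is exactly what keeps this count at $O(t)$ rather than $O(n)$.
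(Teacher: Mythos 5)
Your high-level strategy is the right one and matches the paper's in spirit: both walks start at $u$, reach a leaf quickly, and the collision probability at time $t$ should be governed by the \emph{effective support} of the leaf distribution, which you correctly identify as $\Theta(t)$ leaves rather than $n/2$. The hypothesis $t \le n/200$ indeed serves to keep this support from saturating. But the proof has a genuine gap in exactly the place you flag: you never actually establish that the walk is confined to $O(t)$ leaves with constant probability, and the avenues you sketch for doing so are not clearly viable. A lower bound on $\max_v p^t_{w,v}$ only yields $\sum_v (p^t_{w,v})^2 \ge (\max_v p^t_{w,v})^2$, and the max should itself be $\Theta(1/t)$, so this gives only $\Omega(1/t^2)$. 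A range estimate could in principle work, but you give no mechanism to control the tail of the number of distinct leaves visited, and this is precisely where all the difficulty lives.

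There are also some incorrect claims along the way. The projected chain $Q$ on the contracted path has a \emph{drift toward} the leaf level (from an interior level, probability $1/3$ toward the leaves versus $1/6$ toward the root), so it is positive recurrent at $L = \log_2 n - 1$ and $q^{2t}_{L,L} \to \tilde\pi(L) = \Theta(1)$; it does \emph{not} decay like $1/\sqrt{t}$ as you assert. You are conflating the return probability of the contracted chain to the leaf \emph{level} (which is $\Theta(1)$) with the return probability of the tree walk to a single \emph{leaf} (which should indeed be $\Theta(1/t)$ in the relevant range, but this is what needs proof, not an input). Separately, the binary tree is not regular (leaf degree $1$, internal degree $3$, root degree $2$), so $\sum_v (p^t_{w,v})^2 = p^{2t}_{w,w}$ is false; reversibility gives $p^{2t}_{w,w} = \sum_v (p^t_{w,v})^2 \cdot \pi(w)/\pi(v)$, and the internal-node contributions carry different weights, so the relation to $\sum_{v\in\mathcal{L}}(p^t_{w,v})^2$ requires an argument, not just a ``correction.''

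The paper's execution of the key step is concrete: after the walk reaches a leaf $w$ without returning to $u$, pick the vertex $z$ on the $w$-to-root path at distance $\log_2(100t)$ from $w$. By the standard tree hitting-time formula (\autoref{lem:treefact}), $\thit(w,z) = \sum_{i=1}^{\log_2(100t)}(2^i - 1) \geq 49t$. A geometric tail bound on $\hit(w,z)$ (obtained by dividing time into epochs and applying Markov plus \autoref{lem:treeloop}) is then rearranged, Paley--Zygmund style, to give $\Pr[\hit(w,z) \ge \tfrac1{10}\thit(w,z)] \ge c_2 > 0$; since $\tfrac1{10}\thit(w,z) > t$, the walk is confined to the subtree rooted at $z$ through time $t$ with constant probability, and that subtree has $O(t)$ leaves. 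Conditioned on confinement, the leaf distribution is uniform over the subtree below the highest vertex visited, and since the two independent walks both started at $u$, the two corresponding leaf sets are nested, which pins the collision probability at $\ge 1/|\mathcal{L}_2| = \Omega(1/t)$. It is this hitting-time/anti-concentration mechanism, together with the nested-leaf-set observation to avoid a lossy Cauchy--Schwarz, that your sketch is missing.
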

\begin{proof}
Recall that $\sum_{v \in \mathcal{L}} (p_{u,v}^t)^2$ is the probability of two non-interacting, independent random walks starting from $u$ to meet at the same leaf at time $t$. 

Our first claim is that with probability at least $c_1 > 0$, both random walks reach a leaf before returning to $u$ within $4 \log_2 n$ steps. To prove this claim, recall that with probability at least $1-2 n^{-2}$, both random walks reach a leaf before step $4 \log_2 n$. Secondly, by \cite[Proposition 2.3]{L93}, applied to the collapsed binary tree  $\tilde{G}$, where node $u$ is at level $\ell$ and the leafs are in level $\log_2 n - 1$, it follows that the probability that a random walk starting at $u$ visits a leaf before returning to $u$ is
\[
  p :=\frac{1}{(\thit^{\tilde{G}}(\ell,\log_2 n-1) + \thit^{\tilde{G}}(\log_2 n-1,\ell)) \cdot \pi^{\tilde{G}}_{\ell}}.
\]
Further, $\thit^{\tilde{G}}(\ell, \log_2 n - 1) = O(\log n)$ and $\thit^{\tilde{G}}(\log_2 n -1, \ell) \leq \min_{w \in \mathcal{L}} \thit(w,u) \leq 2^{\log_2 n - \ell+2}$ by \autoref{lem:treefact} and $\pi^{\tilde{G}}_{\ell} \leq 2^{-\log_2 n +\ell +2}$, where $\dist(\mathcal{L},u) = \min_{w \in \mathcal{L}} \dist(w,u)$. Hence $p$ is at least some constant $>0$. 
Hence with probability at least $p^2 - 2n^{-1}$, both random walks reach a leaf before time $4 \log_2 n$ without returning to $u$.

Consider now the original binary tree, and one of the two random walks starting from a leaf $w$ at some time $\in [1,4 \log_2 n]$ up until time step $t - 4 \log_2 n$. Consider the shortest path from $w$ to the root, and let $z$ be a node that is on this shortest path and has distance $\log_2(100t)$ from $w$. Applying \lemref{treefact},  it follows that $\thit(w,z) =  \sum_{i=1}^{\log_2(100t)} 2^{i} - 1 \geq 49t$. By Markov's inequality, 
\begin{align}
\Pr{ \hit(w,z) \geq 2 \thit(w,z)} \leq 1/2. \label{eq:markoff}
\end{align}
 Now divide the random walk into consecutive epochs of length $2 \thit(w,z) + 4 \log_2 n$. Combining \eqref{eq:markoff} and \autoref{lem:treeloop} it follows that the random walk will visit the vertex $z$ in each epoch with probability at least $\frac{1}{2}$, conditional on having not visited the vertex $z$ in any of the previous epochs. Therefore for any integer $\lambda \geq 1$,
 \begin{align*}
 \Pr{ \hit(w,z) \geq \lambda \cdot 3 \cdot\thit(w,z) ) } \leq \Pr{ \hit(w,z) \geq \lambda \cdot (2 \cdot \thit(w,z) + 4 \log_ 2n ) } \leq 2^{-\lambda}, 
 \end{align*}
 where the first inequality holds since $\thit(w,z) = \Omega(t) = \omega(\log n)$.
 Hence,
 \begin{align*}
  \thit(w,z)  \leq \frac{1}{20} \thit(w,z) + \Pr{ \hit(w,z) \geq \frac{1}{20} \thit(w,z)} \cdot c_1 \cdot 3 \thit(w,z)  + \sum_{\lambda=c_1}^{\infty} 2^{-\lambda} \cdot \lambda \cdot 3 \thit(w,z),
 \end{align*}
 and 
it follows that by choosing the constant $c_1 > 0$ large enough, there is a constant $c_2=c_2(c_1) > 0$ so that
 \begin{align*}
   \Pr{ \hit(w,z) \geq \frac{1}{10} \thit(w,z)} \geq c_2.
 \end{align*}
 Hence with probability at least $c_2 > 0$, the random walk does not reach the node $z$ before time $t$. Further, with probability at least $1-n^{-2}$, the random walk visits a leaf, say, $w'$, before step $t$, say at step $s$, and therefore by \autoref{lem:treeloop}, the random walk is at a leaf at step $t$ with probability at least $(1-n^{-2}) \cdot \frac{1}{5}$. 
 
 \NOTE{T: Usually $F_t$ is the filtration and not an event...}
 Now define $\mathcal{F}_1 := \left\{  \hit(w,z) \geq t \right\}$ and $\mathcal{F}_2 := \left\{ X_t \in \mathcal{L}  \right\}$. Clearly, the events $\mathcal{F}_1$ and $\mathcal{F}_2$ are positively correlated so that
 \begin{align*}
  \Pr{ \mathcal{F}_1 \cap \mathcal{F}_2 } &\geq \Pr{ \mathcal{F}_1 } \cdot \Pr{ \mathcal{F}_2} \geq c_2 \cdot \left(1 - n^{-2} \right) \cdot \frac{1}{5}.
 \end{align*}

Combining all the events, we conclude that 
with constant probability $c_3 > 0$
both random walks are on a leaf at step $t$ and have never left the subtree with root $z$. For one walk, the distribution will be uniform over all the leafs within a subtree whose root is the vertex closest to the root ever visited. Hence let $\mathcal{L}_1$ be all the leafs that have a non-zero probability to be visited at step $t$ by the first random walk, and $\mathcal{L}_2$ similarly. W.l.o.g. let $|\mathcal{L}_1| \leq |\mathcal{L}_2|$ and observe that $\mathcal{L}_1 \subseteq \mathcal{L}_2$ since both walks start at the same node.  Therefore,
\begin{align*}
 \sum_{v \in \mathcal{L}} (p_{u,v}^t)^2 
 &\geq \left( p^2 - 2 n^{-1} \right) \cdot c_3 \cdot \sum_{v \in \mathcal{L}_1 \cap \mathcal{L}_2 }  \frac{1}{|\mathcal{L}_1|} \cdot \frac{1}{|\mathcal{L}_2|} \geq \left( p^2 - 2 n^{-1} \right) \cdot c_3 \cdot \frac{1}{|\mathcal{L}_2|} \geq \left( p^2 - 2 n^{-1} \right) \cdot c_3 \cdot \frac{1}{10t},
\end{align*}
where the last inequality holds since we are conditioning on the event that none of the two random walks reaches the vertex $z$.
\end{proof}

\begin{theorem}\label{thm:treelower}
For the binary tree it holds that $\tmeet=\Omega(n \log n)$.	
\end{theorem}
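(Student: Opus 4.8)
The plan is to apply the master identity \eqref{eq:central} with two random walks $(X_t)_{t\ge 0}$ and $(Y_t)_{t\ge 0}$ starting from the stationary distribution $\pi$, run for $\tau := n\log_2 n$ steps, and counting the collisions $Z := \sum_{t=0}^{\tau-1}\mathbf 1_{X_t=Y_t}$. By \eqref{eq:central}, $\Pr{Z\ge 1} = \E{Z}/\E{Z\mid Z\ge 1} = \tau\,\|\pi\|_2^2 / \E{Z\mid Z\ge 1}$. Since $\|\pi\|_2^2 = \Theta(1/n)$ for the binary tree, $\E{Z} = \Theta(\log n)$, so the whole argument reduces to showing $\E{Z\mid Z\ge 1} = \Omega(\log n)$ as well, which would give $\Pr{Z\ge 1} = O(1)$ bounded away from $1$ and hence $\tmeet^\pi = \Omega(n\log n)$; the bound $\tmeet \ge \tmeet^\pi$ (\autoref{lem:avgmeet}) then finishes the theorem.

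First I would split the collision count by location: write $Z = Z_{U} + Z_{V\setminus U}$, where $Z_{V\setminus U}$ counts collisions on the set $V\setminus U$ of nodes within distance $\frac12\log_2 n$ of the root, and $Z_U$ counts the rest. By \autoref{lem:evente}, $\Pr{Z_{V\setminus U}\ge 1} \le n^{-1/3}$, so conditioning on $Z\ge 1$ we have, up to a negligible additive term, that the first collision happens at some vertex $u\in U$ at some time $s$. By the Markov property, from that point the continuation of both walks is a pair of independent walks started from the same vertex $u\in U$, and $\E{Z\mid Z\ge 1}$ is at least (a constant fraction of — accounting for the symmetric first-half/second-half trick as in \autoref{lem:fresenius}) $\min_{u\in U}\sum_{t=0}^{\tau'} \sum_{v\in V}(p_{u,v}^t)^2$ for $\tau' = \Theta(n)$, which is at least $\min_{u\in U}\sum_{t=0}^{\tau'}\sum_{v\in\mathcal L}(p_{u,v}^t)^2$, restricting to leaf collisions.

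The heart of the matter is then \autoref{lem:hutmitdreiecken}, which gives $\sum_{v\in\mathcal L}(p_{u,v}^t)^2 = \Omega(1/t)$ for every $u\in U$ and every $2\sqrt n \le t \le n/200$. Summing this over $t$ from $2\sqrt n$ to $n/200$ yields $\sum_t \Omega(1/t) = \Omega(\log(n/200) - \log(2\sqrt n)) = \Omega(\log n)$, which is exactly the lower bound on $\E{Z\mid Z\ge 1}$ we need. I would carry out the bookkeeping carefully: one must make sure that the conditioning event ``$Z\ge 1$, first collision at $u\in U$ at time $s$'' leaves enough of the time window $[s, s+n/200]$ inside $[0,\tau]$ — since $s \le \tau = n\log_2 n$ and we only need a window of length $n/200$, and since (as in the proof of \autoref{lem:ape}/\autoref{lem:fresenius}) at least a constant fraction of the time the first collision occurs in the first half of the walk, this is fine, though it needs to be stated cleanly. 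Plugging $\E{Z} = \Theta(\log n)$ and $\E{Z\mid Z\ge 1} = \Omega(\log n)$ into \eqref{eq:central} gives $\Pr{Z\ge 1} \le c < 1$ for $\tau = n\log_2 n$, so two stationary walks fail to meet within $n\log_2 n$ steps with constant probability, hence $\tmeet^\pi = \Omega(n\log n)$ and $\tmeet = \Omega(n\log n)$.

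The main obstacle is not any single estimate but rather assembling the conditioning argument rigorously: the event $\{Z\ge 1\}$ conditions on the joint trajectory up to the first collision, so I must argue (via the strong Markov property at the first collision time, which is a stopping time) that the residual expected collision count is bounded below by the ``same-start'' quantity $\sum_t\sum_{v\in\mathcal L}(p_{u,v}^t)^2$ uniformly over the collision vertex $u\in U$ and collision time $s$, and that restricting attention to $u\in U$ (rather than all of $V$) costs only the small probability $\Pr{Z_{V\setminus U}\ge 1}\le n^{-1/3}$ established in \autoref{lem:evente}. Once that structural point is in place, \autoref{lem:hutmitdreiecken} and the harmonic-sum computation do the rest with room to spare.
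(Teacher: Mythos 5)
Your proposal is essentially the paper's proof: use the identity \eqref{eq:central}, split collisions by location into the set $U$ of vertices far from the root and the small complement $V\setminus U$, dispose of the latter via \autoref{lem:evente}, and lower-bound the residual expected collision count via the strong Markov property at the first collision, \autoref{lem:hutmitdreiecken}, \autoref{lem:fresenius}, and the harmonic sum $\sum_{t=2\sqrt n}^{n/200}\Omega(1/t)=\Omega(\log n)$.

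One small inaccuracy worth flagging: you claim that conditioning on $\{Z\ge1\}$, the first collision lands in $U$ ``up to a negligible additive term'' of size $n^{-1/3}$ from \autoref{lem:evente}. The actual conditional probability of a $V\setminus U$-collision is $\Pr{Z_{V\setminus U}\ge1}/\Pr{Z\ge1}\le n^{-1/3}/\Pr{Z\ge1}$, which is not a priori small since $\Pr{Z\ge1}$ is exactly what you are trying to bound. This is repairable (if $\Pr{Z\ge1}<2n^{-1/3}$ you are done immediately; otherwise the conditional probability is $\le1/2$), but the paper avoids the issue altogether by defining $Z$ to count \emph{only} $U$-collisions, so that $\Pr{Z_U\ge1}=\E{Z_U}/\E{Z_U\mid Z_U\ge1}\le 4c/c_4$ applies with no contamination, after which a plain union bound with \autoref{lem:evente} finishes. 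That is the cleaner route to take.
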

\begin{proof}
We first only consider collisions on nodes in $U$ by two random walks $(X_t)_{t \geq 0}$, $(Y_t)_{t \geq 0}$ starting from the stationarity distribution.
More formally, we are interested in the random variable
\[
  Z := \sum_{t=1}^{c n \log_2 n} \sum_{v \in V \setminus U} \mathbf{1}_{X_t=Y_t=v}.
\]
By linearity of expectations,
\[
 \E{Z} \leq \sum_{t=1}^{c n \log_2 n} \sum_{v \in V} \Pr{X_t=Y_t=v} \leq c n \log_2 n \cdot \sum_{v \in V} \pi(v)^2 \leq n \log_2 n \cdot n \cdot\left( \frac{2}{n} \right)^2 = 4c \log_2 n,
\]
and clearly,
\begin{align*}
 \Pr{ Z \geq 1} &= \frac{ \E{Z } }{ \E{Z \, \mid \, Z \geq 1}}.
\end{align*}
Hence to derive an upper bound on $\Pr{ Z \geq 1}$, we will derive a lower bound on $\E{Z \,\mid\, Z \geq 1}$. In order to this, it will be helpful if we can work under the assumption that the first collision occurs in the first half of the walk. To this end, let $Z_1$ be the indicator random variable that is $1$ if a collision appears before time $\frac{1}{2} n \log_2 n$ and $Z_2$ be the indicator random variable if a collision appears after time $\frac{1}{2} n \log_2 n$. 
Applying \autoref{lem:fresenius}, we get  $\Pr{Z_1 \geq 1 \, \mid \, Z \geq 1} \geq \frac{1}{2}$. Thus,
\begin{align*}
\E{ Z \, \mid \, Z \geq 1 }
&\geq \Pr{ Z_1 \geq 1 \, \mid \, Z \geq 1} 
\cdot \E{ Z \, \mid \, Z_1 \geq 1 } \\
&\geq \frac{1}{2} \cdot
\sum_{t=0}^{\tmix} \sum_{v \in V} \left(p_{u,v}^t \right)^2 \\ &\geq \frac{1}{2} \cdot \sum_{t=2 \sqrt{n}}^{\tmix} \sum_{v \in \mathcal{L}} (p_{u,v}^t)^2 \geq \sum_{t=2 \sqrt{n}}^{\tmix} \Omega(1/t  ) = c_4 \log_2 n,
\end{align*} 
where $c_4 > 0$ is a constant (the penultimate inequality is due to \autoref{lem:hutmitdreiecken}).

Consequently, for the modified process where collisions are only allowed on nodes in $U$
\begin{align*}
 \Pr{ Z \geq 1} \leq \frac{4c \log_2 n}{c_4 \log_2 n}. 
\end{align*}
Hence by choosing $c = \min\{c_4/8,1 \}$, it follows that $\Pr{ Z \geq 1} \leq \ifrac{1}{2}$.

This implies that with probability at least $\ifrac{1}{2}$, no collision occurs on nodes in $U$ before time $c n \log_2 n$. Furthermore, by \autoref{lem:evente}, with probability at least $1-n^{-1/3}$ there is no collision on nodes in $V \setminus U$. Therefore, with probability at least $\ifrac{1}{2} - n^{-1/3}$, there is no collision among two random walks before time $c n \log_2 n$, and we have shown that $\tmeet = \Omega(n \log n)$.
\end{proof}

\subsection{Star}

Clearly, the coalescing time is $\Theta(\log n)$ which could be easily shown by a direct analysis. For the sake of completeness, we point out that the upper bound also follows from \autoref{lem:beer} and the fact that $\tmeet=O(1)$. The matching lower bound follows from the general bound $\tcoal=\Omega(\log n)$, holding for {\em any graph} (see~\autoref{lem:lognlower}).

\fi

\end{document}